\theoremstyle{plain}
\numberwithin{equation}{section}
\newtheorem{theorem}{Theorem}[section]
\newtheorem{corollary}[theorem]{Corollary}
\newtheorem{lemma}[theorem]{Lemma}
\newtheorem{proposition}[theorem]{Proposition}
\newtheorem{remark}[theorem]{Remark}
\newtheorem{definition}[theorem]{Definition}
\newtheorem{example}[theorem]{Example}
\newtheorem{conjecture}[theorem]{Conjecture}
\newtheorem{construction}[theorem]{Construction}
\def \GTNN {{Gr^{\mbox{\tiny TNN}} (k,n)}}
\def \GTP {{Gr^{\mbox{\tiny TP}} (k,n)}}
\def \DKP {{\mathcal D}_{\textup{\scriptsize KP},\Gamma}}
\def \DS {{\mathcal D}_{\textup{\scriptsize S},\Gamma_0}}
\def \S {{\mathcal S}_{\mathcal M}^{\mbox{\tiny TNN}}}
\def \Sprime {{\mathcal S}_{ \overline{\mathcal M}}^{\mbox{\tiny TNN}}}
\title[Kasteleyn theorem, geometric signatures and KP-II divisors]{Kasteleyn theorem, geometric signatures and KP-II divisors on planar bipartite networks in the disk}
\author{Simonetta Abenda}
\address{Dipartimento di Matematica and Alma Mater Research Center on Applied Mathematics, Universit\`a di Bologna, ITALY; 
INFN, sez. di Bologna, Italy
}
\email{simonetta.abenda@unibo.it
}
\thanks{
This research has been partially supported by GNFM-INDAM and RFO University of Bologna.}
\begin{document}

\begin{abstract}
{Maximal minors of Kasteleyn sign matrices on planar bipartite graphs in the disk count dimer configurations with prescribed boundary conditions, and the weighted version of such matrices provides a natural parametrization of the totally non--negative part of real Grassmannians \cite{PSW,Lam1, Lam2,Sp, AGPR}. In this paper we provide a geometric interpretation of such variant of Kasteleyn theorem: a signature is Kasteleyn if and only if it is geometric in the sense of \cite{AG4}. We apply this geometric characterization to explicitly solve the associated system of relations and provide a new proof that the parametrization of positroid cells induced by Kasteleyn weighted matrices coincides with that of Postnikov boundary measurement map.
Finally we use Kasteleyn system of relations to associate algebraic geometric data to KP multi-soliton solutions. Indeed the KP wave function solves such system of relations at the nodes of the spectral curve if the dual graph of the latter represents the soliton data. Therefore the construction of the divisor is automatically invariant, and finally it coincides with that in \cite{AG3,AG5} for the present class of graphs.}

\medskip \noindent {\sc{2010 MSC.}} 05C90, 14H70, 14M15, 37K40.

\noindent {\sc{Keywords.}} Totally non-negative Grassmannians, positroid cells, planar bipartite networks in the disk, duality, almost perfect matching, Kasteleyn signatures, M--curves, KP hierarchy, real soliton and finite-gap solutions.
\end{abstract}
\maketitle

\tableofcontents
\section{Introduction}
Kasteleyn \cite{Kas1} and Temperley--Fisher \cite{TF} started the study of the dimer model by computing the number of dimer configurations on a rectangular grid. This result was then generalized by Kasteleyn \cite{Kas2} who related the number of perfect matchings on a finite planar graph to the square root of the determinant of a matrix. More recently dimer models on planar bipartite periodic graphs have appeared in mathematical literature because of their relation to combinatorics, algebraic geometry and quantum integrable systems \cite{EKLP, CEP, CKP, KO, KOS, KG}. 

Also dimer models on planar bipartite graphs in the disk present interesting features because of their connection to total positivity, toric geometry, theoretical physics and integrable systems \cite{PSW, Sp, Lam1, Lam2, AGPR}. A dimer configuration on such a graph $\mathcal G$ is an almost perfect matching, {\sl i.e.} a subset of edges such that each internal vertex is used exactly once whereas the boundary vertices may or may not be used. In \cite{PSW} it was pointed out the existence of a bijection between almost perfect matchings and perfect orientations of $\mathcal G$, and toric geometry was used to investigate the topology of totally non--negative Grassmannians.  The connection between dimer models on planar bipartite graphs in the disk and totally non--negative Grassmannians was then studied in \cite{Lam1, Lam2}.

The variant of Kasteleyn theorem relevant in such setting is the following one \cite{Sp}: for a planar bipartite graph in the disk with boundary vertices of the same color, there exists a sign matrix such that its
maximal minors give the number of almost perfect matchings with prescribed boundary conditions. Moreover, if one assigns positive weights to the edges of the graph, the maximal minors of the weighted version of the sign matrix are the Pl\"ucker coordinates of the point in the totally non-negative Grassmannian obtained from Postnikov boundary measurement map \cite{Sp}. 

The proof in \cite{Sp} is topological, whereas in this paper we focus on the explicit representation and the geometric nature of Kasteleyn matrices, since we are interested in their application to integrable systems. We prove that the total signature of a face depends only on the number of edges bounding it, and that they are realized by the geometric signatures introduced in \cite{AG4} on directed plabic graphs.
This geometric representation of Kasteleyn signatures is relevant both to solve Kasteleyn system of relations using the Talaska flows \cite{Tal2} and to provide an alternative proof that the parametrization of positroid cells in terms of the maximal minors of Kasteleyn sign matrices coincides with that of the image of Postnikov boundary measurement map.

Then, we use Kasteleyn system of relations to assign algebraic geometric data to the family of real regular multi--line solutions of the Kadomtsev--Petviashvili (KP) integrable hierarchy introduced in \cite{Mat, FN} and studied in \cite{Mal,BK, BPPP, CK, KW2014, AG1, A1, AG3, AG5}. At this aim we use the approach to degenerate finite--gap spectral theory proposed in \cite{Kr3} and applied 
to Toda and KP integrable systems in \cite{KV, AG1, A1, AG3, Nak1, AG5, BEN}. We remark that our final goal is the search of effective methods to detropicalize spectral curves and obtain KP solutions fulfilling \cite{DN} as done in \cite{AG2} for soliton data in $Gr^{\mbox{\tiny TP}}(2,4)$. Moreover, the inverse problem of constructing soliton KP solutions from real regular spectral data on reducible nodal curves was solved in \cite{A1} in a special case (see also \cite{Nak1}), but remains an open problem in general. In \cite{AFMS} they propose an algebraic method based on Dubrovin threefold for the KP equation to study tropicalization of algebraic curves and construct KP soliton solutions from spectral data on reducible nodal curves: therefore it would be relevant to relate the two approaches when reality and regularity conditions hold.

\smallskip

Finally, systems of relations have been introduced in \cite{Lam2} to provide a mathematical framework for the construction of scattering amplitudes in $N=4$ SYM theory \cite{AGP2}, since such systems explicitly realize the totally non--negative part of any positroid cell via the amalgamation of the little positive Grassmannians, $Gr^{\mbox{\tiny TP}}(1,3)$ and $Gr^{\mbox{\tiny TP}}(2,3)$. We are convinced that a geometric approach to the problem may give a new insight to this matter, and that systems of relations may be fruitfully applied also for other problems in mathematical or theoretical physics related to totally non--negative Grassmannians.

\smallskip

\textbf{Outline of the main results}
Let $\mathcal G = (\mathcal B \cup \mathcal W, \mathcal E)$ be a reduced planar bipartite graph in the disk  with boundary vertices of equal color, where $\mathcal{B}, \mathcal{W}$ are the sets of black, white vertices of the graph respectively, and $\mathcal{E}$ is the set of edges. In this paper, we call Kasteleyn an edge signature $\sigma: \mathcal E \mapsto \{ \pm 1 \}$ such that the face signature $\sigma(\Omega)$ fulfills Kasteleyn condition for any finite face $\Omega$:
\begin{equation}\label{eq:1}
\sigma(\Omega) \equiv  \prod_{e \in \partial \Omega} \sigma(e)= (-1)^{\frac{|\Omega|}{2}+1}, 
\end{equation}
where $|\Omega|$ denotes the number of edges bounding $\Omega$.  Then, in Theorem \ref{theo:kast} we show that the $|\mathcal B|\times |\mathcal W|$ Kasteleyn sign matrix $K^{\sigma}$ associated to such data fulfills Speyer variant \cite{Sp} of Kasteleyn theorem:
\begin{itemize}
\item The maximal minors of $K^{\sigma}$ indexed by the boundary dimer configurations share the same sign and their absolute value is the number of almost perfect matchings for the given boundary conditions;
\item If one fixes a Kasteleyn signature as in (\ref{eq:1}) and assigns positive edge weights to the graph, the maximal minors of the weighted Kasteleyn matrix $K^{\sigma,wt}$ defined in (\ref{eq:kas_wt_entries}) are the Pl\"ucker coordinates of the point in the totally non--negative Grassmannian given by the boundary measurement map introduced in \cite{Pos}.
\end{itemize}

\smallskip

Here we choose black boundary vertices on the reduced bipartite graph, and characterize Kasteleyn system of relations associated to a signature fulfilling (\ref{eq:1}) (see Section \ref{sec:kas_rel}) motivated by Speyer representation of the Kasteleyn weighted matrix (see Formula (\ref{eq:mat_A})). However, we remark that the equal color of the boundary vertices is just a technical assumption which simplifies both the representation of Kasteleyn matrices and the solutions to the induced system of relations. 

For the systems of relations on bipartite graphs we use the representation introduced in \cite{AGPR}. More precisely, 
for any signature fulfilling (\ref{eq:1}) and for an edge weighting $t_{bw} : \mathcal E \mapsto \mathbb{C}^*$ we call Kasteleyn the system $(v^{(k)}= \{v_b^{(k)}: b \in \mathcal{B}\}, R_w)$, where:
\begin{itemize}
\item $v^{(k)}_b$ is an element in some vector space $V$ assigned to the black vertex $b\in 	\mathcal B$;
\item $R_w$ is the linear relation at the white vertex $w\in \mathcal W$ represented by the $w$--th column of the weighted Kasteleyn matrix $K^{\sigma,wt}$: $R_w (v^{(k)}) \equiv \sum_{b\in \mathcal B} K^{\sigma,wt}_{bw} v^{(k)}_{b} =0$, where $K^{\sigma,wt}_{bw}=\sigma (\overline{bw}) t_{bw}$, and $\overline{bw}$ denotes an edge.
\end{itemize}
Theorem \ref{theo:kast} implies that the system has maximal rank equal to the number of white vertices, and its kernel is $(n-k)$--dimensional.
In particular, if $V=\mathbb{C}^{n-k}$, the system induces an isomorphism between dual positroid varieties (Theorem \ref{theo:sol_kas_sys_1}). If $V=\mathbb{R}^n$ and one fixes natural boundary conditions at $(n-k)$ boundary vertices, then the solution of the linear system at the remaining $k$ boundary vertices provides a representative matrix of the network $(\mathcal G, t_{bw})$ (Theorem \ref{theo:sol_kas_sys_2}).

We remark that the Kasteleyn signatures introduced in \cite{AGPR} differ from those fulfilling (\ref{eq:1}) at the external faces of the graph. A natural connection to total positivity also holds in their approach and has applications to discrete integrable dynamical systems such as the pentagram map \cite{Schw}, $Q$--nets \cite{DS, BS} and discrete Darboux maps \cite{Schi}. 
In Section \ref{sec:AGPR} we review the main results in \cite{AGPR} and explicitly describe the transformation which relates the two types of Kasteleyn signatures.

\smallskip

Then in Section \ref{sec:geom_sign} we provide a geometric construction of Kasteleyn signatures. Indeed we show that Kasteleyn signatures are equivalent to the geometric signatures introduced in \cite{AG4} for the class of graphs under consideration. In \cite{AG4} a geometric signature is uniquely and explicitly assigned to the edges of a planar bicolored (plabic) graph using two geometric indices: the local winding number and the intersection number. These indices are ruled by perfect orientations and gauge ray directions which behave as gauge transformations; therefore there exists a unique equivalence class of geometric signatures on $\mathcal G$ \cite{AG4}.
The explicit relation between the equivalence class of Kasteleyn and that of geometric signatures is given in Theorem \ref{theo:main}:
If $\mathcal G$ is a reduced bipartite graph, $\sigma$ a Kasteleyn signature on $\mathcal G$ satisfying (\ref{eq:1}), and $\epsilon^{(g)}$ a geometric signature on $\mathcal G$ as in Definition \ref{def:geo_sign_expl}, then
the Kasteleyn face signature $\sigma(\Omega)$, 
\[
\sigma(\Omega)= \prod_{e \in \partial \Omega} \sigma(e),
\] 
and the geometric face signature $\epsilon^{(g)}(\Omega)$,
\begin{equation}\label{eq:2}
\epsilon^{(g)}(\Omega) = \sum_{e\in\partial\Omega} \epsilon^{(g)}(e).
\end{equation}
are related as follows on each finite face $\Omega$:
\begin{equation}\label{eq:3}
\sigma(\Omega) = 
(-1)^{\epsilon^{(g)}(\Omega)}.
\end{equation}

Such geometric characterization is then used to explicitly solve Kasteleyn system of relations $(v^{(k)}, R_w)$ using Talaska flows.
Flows and conservative flows on directed graphs, originally introduced in \cite{Tal2} to compute the boundary measurement map, were used in \cite{AG4} to construct the explicit solution to the system of relations for geometric signatures. Therefore they also provide the explicit solution to Kasteleyn system of relations at all internal vertices (Theorem \ref{theo:kas_geo_sys}).

In \cite{AG4} geometric signatures are defined on the more general class of perfectly oriented plabic graphs such that every edge belongs to at least one directed path starting and ending at the boundary of the disk. We remark that this condition is exactly the one for which the gauge freedom in the definition of the geometric signature is fixed by the gauge transformations at the vertices of the graph. Since plabic graphs are not necessarily bipartite, in the setting of \cite{AG4, AG5}, (\ref{eq:3}) becomes the definition of Kasteleyn signature, and (\ref{eq:1}) is replaced by the following relation
\begin{equation}\label{eq:4}
\sigma(\Omega) \equiv  \prod_{e \in \partial \Omega} \sigma(e)= (-1)^{\frac{n_w(\Omega)}{2}+1}, \quad \quad \mbox{ for any finite face } \Omega,
\end{equation}
where $n_w(\Omega)$ is the number of internal white vertices bounding $\Omega$. We conjecture that signatures on planar non--bipartite graphs in the disk satisfying (\ref{eq:4}) may be also given a statistical mechanical interpretation. 

\smallskip

The maximal minors of the Kasteleyn weighted sign matrices are the Pl\"ucker coordinates of points in totally non--negative Grassmannians $Gr^{\mbox{\tiny TNN}} (k,n)$, and such parametrization is equivalent to that of Postnikov boundary measurement map \cite{Sp}. 
In the present setting Theorems \ref{theo:main} and \ref{theo:kas_geo_sys} imply an alternative proof of such equivalence (Corollary \ref{cor:pos_sp_equiv}).

\smallskip

Then in Section \ref{sec:comb} we apply Kasteleyn system of relations to the spectral problem for the KP--II real regular multi--line solitons. In such setting: 
\begin{itemize}
\item The point $[A]$ represented by the network $(\mathcal G, t_{bw})$ and the real ordered phases $\mathcal K =\{ \kappa_1< \cdots <\kappa_n\}$ uniquely identify a real regular multi--line soliton solution of the KP--II equation (see \cite{CK, KW2014});
\item The solution to the direct spectral problem provides a rational curve $\Gamma_0$, with marked points $\kappa_1,\dots,\kappa_n$, $P_0$, a spectral coordinate $\zeta$ such that $\zeta^{-1} (P_0)=0$, and a degree $k$ non--special divisor $\DS =\{P_1,\dots,P_k\}$ such that $\zeta(P_j) \in [\kappa_1,\kappa_n]$, $\forall j\in [k]$ \cite{Mal};
\item In a natural normalization, the KP wave function $ \psi (P, \vec x)$, where $P\in \Gamma\backslash \{P_0\}$ and $\vec x$ is a finite set of KP times, has the following property: the vector 
$({ \psi } (\kappa_1, \vec x), \dots, { \psi } (\kappa_n, \vec x) )$ defines un untrivial flow in the plane orthogonal to $[A]$ as the KP times $\vec x$ evolve (Lemma \ref{lem:psi_kappa}).
\end{itemize}
As pointed out in \cite{AG1} the mismatch between the dimension of the soliton variety and that of the divisor data on $\Gamma_0$ does not allow to reconstruct the soliton solution from the divisor $\DS$ (inverse spectral problem). Then,
to complete the KP divisor, we follow the approach in \cite{AG3, AG5} based on the degenerate finite gap theory on reducible curves introduced in \cite{Kr3}: we prove that it is possible to fix an initial time $\vec x_0$ and extend the spectral data from $(\Gamma_0, P_0, \DS(\vec x_0))$ to $(\Gamma, P_0, \DKP(\vec x_0))$, with $\Gamma$ a reducible $\mathtt M$--curve, in such a way that:
\begin{itemize}
\item The divisor $\DKP(\vec x_0)$ is non--special, and its restriction to $\Gamma_0$ is $\DS(\vec x_0)$;
\item $\DKP$ satisfies Dubrovin--Natanzon reality and regularity conditions \cite{ DN}: there is exactly one divisor point in each oval except in the one containing $P_0$.
\end{itemize}

The key problem is to choose the spectral curve $\Gamma$ so that the wave function takes equal values at pairs of double points for all KP times. Lemma \ref{lem:psi_kappa} implies that Kasteleyn system of relations gives the answer to such question provided that
\begin{itemize}
\item The spectral curve $\Gamma$ has dual graph $\mathcal G$ and we identify $\Gamma_0$ with the boundary of the disk;
\item At the boundary vertices $b_j$, we choose boundary conditions $v^{(k)}_{b_j}\equiv  \psi (\kappa_j, \vec x)$ in Kasteleyn system of relations, and then we assign the solution $v^{(k)}_{b} =v^{(k)}_{b} (\vec x) $ at the internal black vertex $b$ to the wave function $ \psi$ at the corresponding double points of $\Gamma$ at time $\vec x$.
\end{itemize} 

The normalized wave function is then extended to the whole $\Gamma \backslash \{P_0\}$ meromorphically in the spectral parameter. The reality and regularity property of the divisor on $\Gamma$ follows from the total non--negativity of the soliton data (Theorem \ref{theo:pos_div}). 
Finally the KP divisor on $\Gamma$ constructed using Kasteleyn system of relations is equivalent to that in \cite{AG5}. In that paper the geometric system of relation was used on a more general class of graphs. However the present construction is technically simpler, and automatically ensures the independence of the KP divisor on the gauge freedoms of the graph since it is done on undirected networks.

\smallskip

\textbf{Acknowledgements} I thank Pavlo Pylyavskyy for stimulating my interest in Kasteleyn theorem, and Petr Grinevich for several useful discussions. I also thank the referee of the paper for valuable remarks.

\section{Totally non-negative Grassmannians and almost perfect matchings on planar bipartite graphs in the disk }\label{sec:GTNN}

\begin{figure}
\centering{\includegraphics[width=0.32\textwidth]{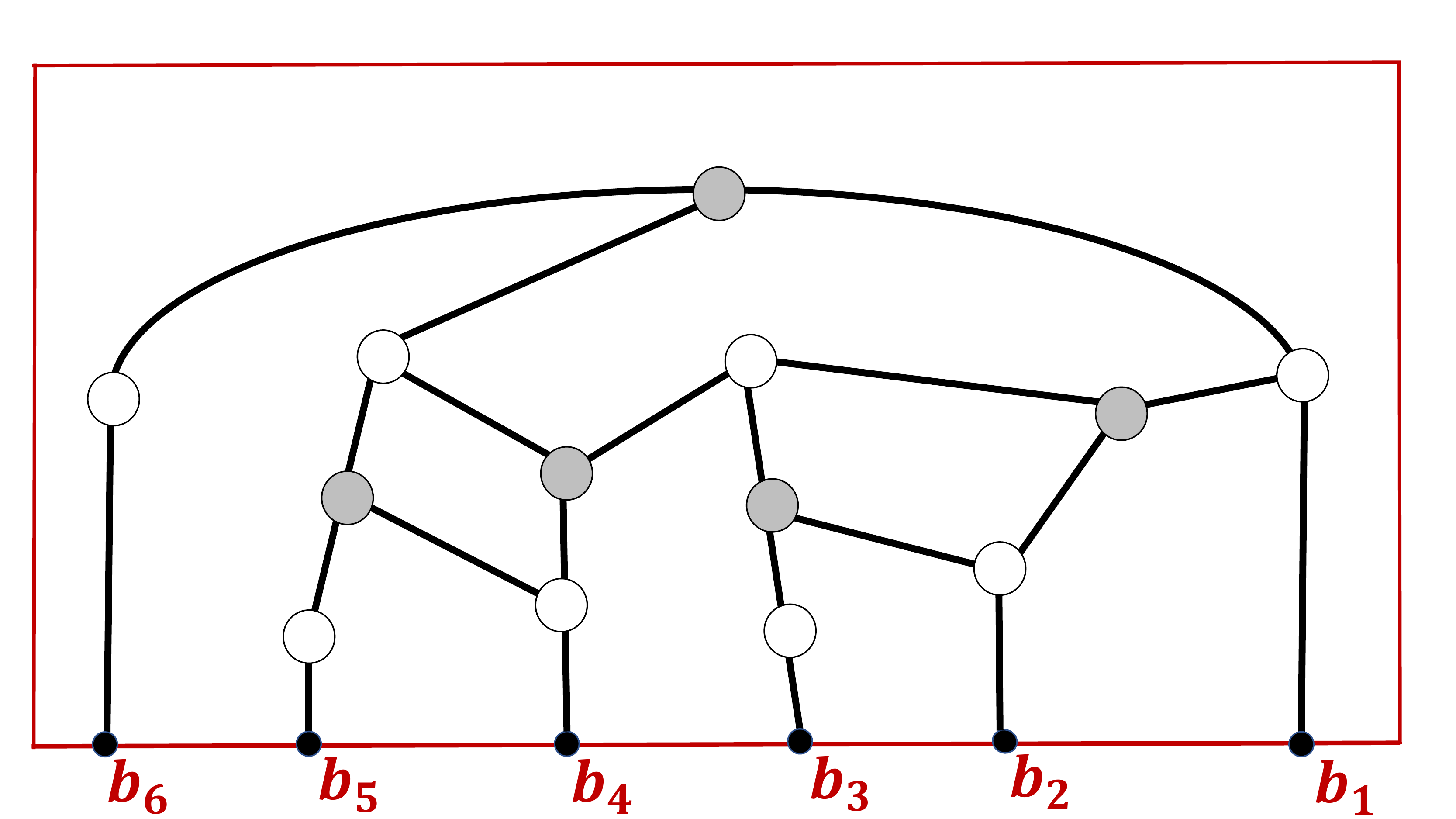}
\centering\includegraphics[width=0.32\textwidth]{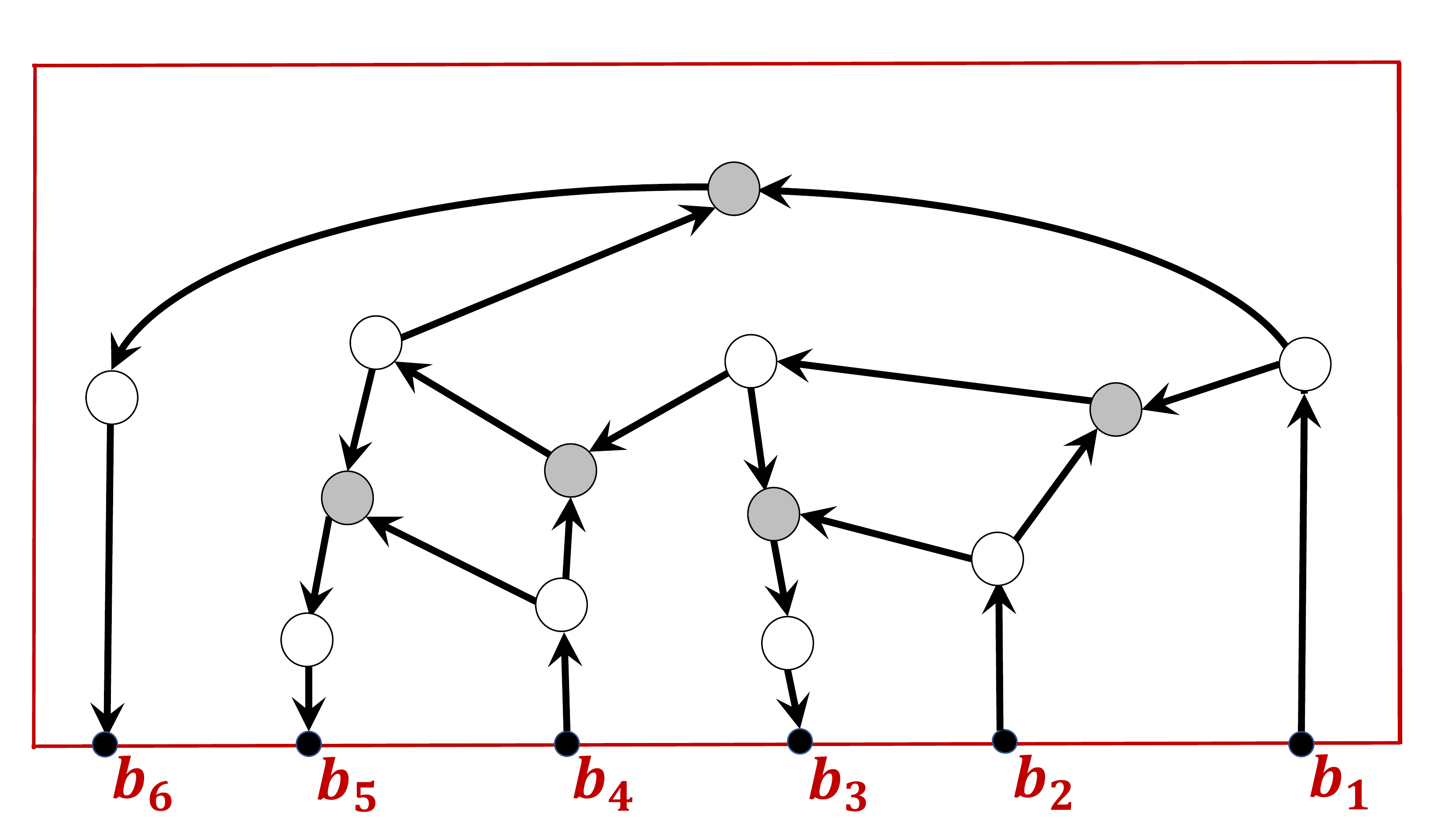}
\centering\includegraphics[width=0.32\textwidth]{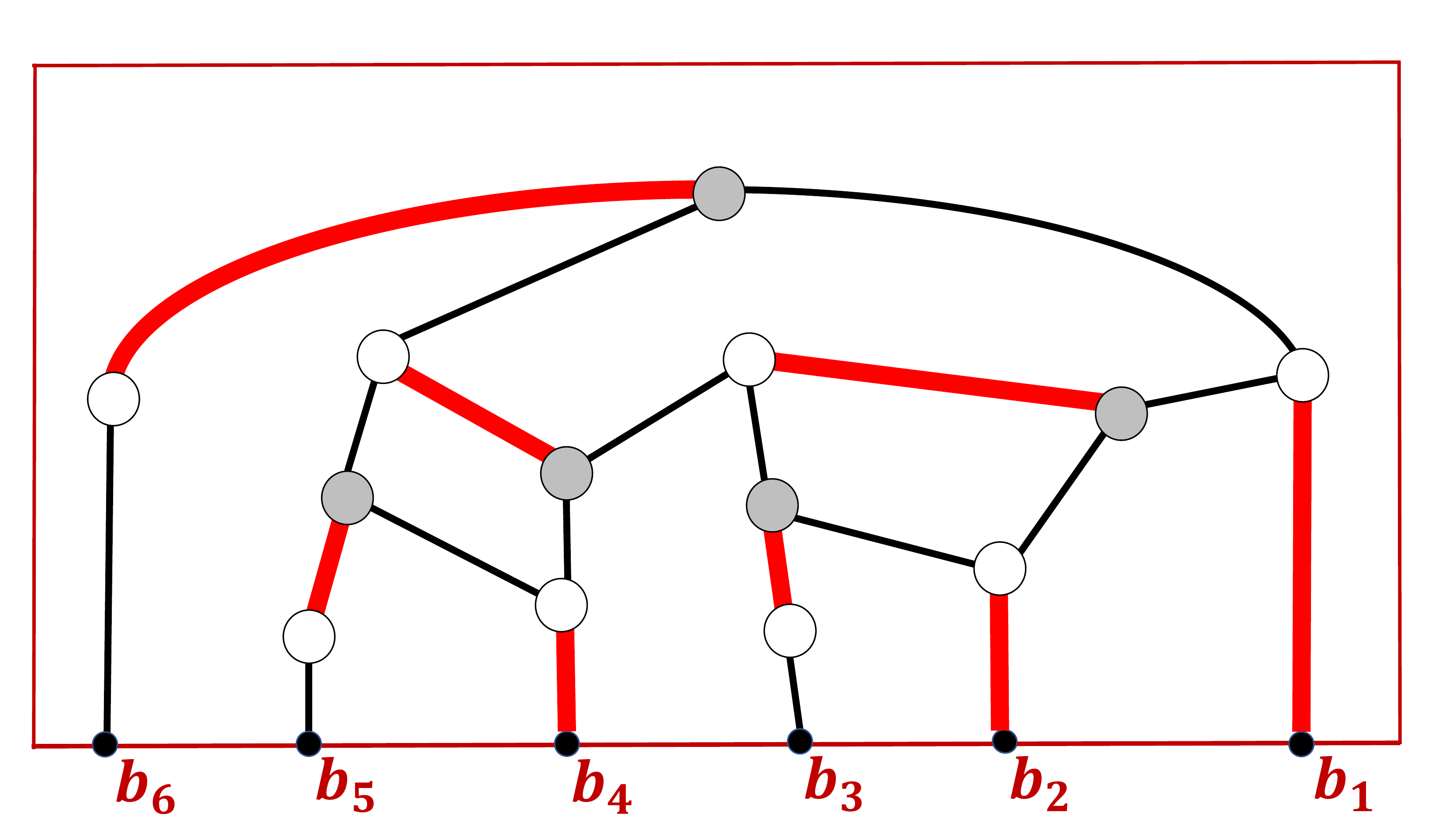}
\vspace{-.3 truecm}
\caption{\small{\sl A planar bipartite graph in the disk [left], a perfect orientation on it [center] and an almost perfect matching [right]. The boundary of the disk is colored red in all Figures.}}\label{fig:graph}}
\end{figure}

In this Section we briefly review the properties of totally non--negative Grassmannians necessary in the following Sections and, in particular, their relation to almost perfect matchings on planar bipartite graphs in the disk. We mainly follow \cite{Pos, Lam2} and \cite{PSW}. We remark that totally non negative Grassmannians are a special case of the generalization of classical positivity to generalized partial flag varieties by Lusztig \cite{Lus}.

\begin{definition}\textbf{Totally non-negative Grassmannian \cite{Pos}.}
Let $Mat^{\mbox{\tiny TNN}}_{k,n}$ denote the set of real $k\times n$ matrices of maximal rank $k$ with non--negative maximal minors $\Delta_I (A)$. Let $GL_k^+$ be the group of $k\times k$ matrices with positive determinants. Then the totally non-negative Grassmannian $\GTNN$ is 
\[
\GTNN = GL_k^+ \backslash Mat^{\mbox{\tiny TNN}}_{k,n}.
\]
\end{definition}

In the theory of totally non-negative Grassmannians an important role is played by the positroid stratification. Each cell in this stratification is defined as the intersection of a Gelfand-Serganova stratum \cite{GS,GGMS} with the totally non-negative part of the Grassmannian. More precisely:
\begin{definition}\textbf{Positroid stratification \cite{Pos}.} Let $\mathcal M$ be a matroid i.e. a collection of $k$-element ordered subsets $I$ in $[n]$, satisfying the exchange axiom (see, for example \cite{GS,GGMS}). Then the positroid cell $\S$ is defined as
$$
\S=\{[A]\in \GTNN\ | \ \Delta_{I}(A) >0 \ \mbox{if}\ I\in{\mathcal M} \ \mbox{and} \  \Delta_{I}(A) = 0 \ \mbox{if} \ I\not\in{\mathcal M}  \}.
$$
If $\S$ is not empty, $\mathcal M$ is called a positroid (totally non--negative matroid).
The positroid variety of $\mathcal M$ is
$$
\Pi (\mathcal M) =\{[A]\in Gr(k,n)\ |  \  \Delta_{I}(A) = 0 \ \mbox{if} \ I\not\in{\mathcal M}  \}. 
$$
\end{definition}

The definition means that a positroid $\mathcal M$ is a realizable matroid represented by $k\times n$ real matrices with positive maximal minors when the ordered column set $I=\{ 1\le i_1 < \cdots < i_k \le n\} \in\mathcal M$, and zero maximal minors otherwise. 

Every positroid cell is homeomorphic to an open ball of finite dimension \cite{Pos}. 
The combinatorial classification of all non-empty positroid cells was obtained in \cite{Pos}, where in particular the equivalence classes of planar bicolored (plabic) graphs representing positroid cells were classified, and an explicit and relevant minimal parametrization of each positroid cell $\S$ was obtained using reduced plabic networks in the disk.

The class of graphs considered throughout this paper are reduced planar bipartite graphs in the disk on which there exists at least one almost perfect matching; such condition is equivalent to the requirement that the graph possesses a perfect orientation \cite{PSW}. We recall that a \textbf{perfect orientation} $\mathcal O$ of a bicolored graph $\mathcal G$ is a choice of directions of its edges such that each black internal vertex $b$ is incident to exactly one edge directed away from $b$, and each white internal vertex $w$ is incident to exactly one edge directed towards $w$. A graph is perfectly orientable if it possesses a perfect orientation.

\begin{definition}\textbf{Planar bipartite graph in the disk}\label{def:graph}
We call planar bipartite graph in the disk an undirected planar graph drawn inside a disk, $\mathcal G =(\mathcal V,\mathcal E)$, with finite vertex set $\mathcal V$ and finite edge set $\mathcal E$, such that:
\begin{enumerate}
\item Internal vertices are strictly inside the disk and are either black or white;
\item There are $n>0$ boundary vertices on the boundary of the disk labeled $b_1,\dots, b_n$, in clockwise order. Boundary vertices share the same color and have degree one;
\item Each edge in $\mathcal G$ joins two vertices of different color;
\item $\mathcal G$ is perfectly orientable.
\end{enumerate}
In the following we denote $\mathcal B$ and $\mathcal W$ respectively the set of black and of white vertices.
\end{definition}
In Figure \ref{fig:graph} we show a planar bipartite graph in the disk. 

Let $(\mathcal G, \mathcal O)$ denote the directed graph $\mathcal G$ with a perfect orientation $\mathcal O$. Then the \textbf{source set} $I=I(\mathcal O)\subset [n]$ is the set of $i$ such that $b_i$ is a boundary source of $(\mathcal G, \mathcal O)$, similarly $b_j$ is a boundary sink for all $j\in \bar I$.

\begin{proposition}\cite{Pos}
Let $\mathcal G$ be as in Definition \ref{def:graph}. Then all of its perfect orientations have source sets of equal cardinality and $\mathcal G$ is called of type $(k,n)$ if source sets have cardinality $k$.

Moreover, let $\mathcal M_{\mathcal G}$ be the collection of the $k$--subsets $I(\mathcal O)$ of its perfect orientations:
\begin{equation}\label{def:matroid_orient}
\mathcal M_{\mathcal G} \, = \, \{ I(\mathcal O) \, | \,\mathcal O \mbox{ is a perfect orientation of } \mathcal G \, \}.
\end{equation} 
Then $\mathcal M_{\mathcal G}$ is a positroid.
\end{proposition}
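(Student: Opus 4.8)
The plan is to treat the two assertions separately, proving the cardinality statement by a direct edge count and the positroid statement by exhibiting a totally non--negative realization. Assume the boundary vertices are black, the other coloring being symmetric, and write $n_w$ for the number of (necessarily internal) white vertices and $n_b$ for the number of internal black vertices. Fix a perfect orientation $\mathcal O$ and let $a$ denote the number of edges oriented from a black to a white vertex. Since the graph is bipartite, $a$ equals both the number of outgoing edges at black vertices and the number of incoming edges at white vertices. Each internal white vertex receives exactly one incoming edge and every edge incident to a white vertex issues from a black one, so counting from the white side gives $a=n_w$. Counting from the black side, each internal black vertex contributes exactly one outgoing edge, while a degree--one boundary black vertex contributes an outgoing edge precisely when it is a source; hence $a=n_b+|I(\mathcal O)|$. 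Comparing the two counts yields $|I(\mathcal O)|=n_w-n_b$, a quantity independent of $\mathcal O$, so every source set has the common cardinality $k:=n_w-n_b$ and $\mathcal G$ is of type $(k,n)$.

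For the second assertion I would use the bijection of \cite{PSW} between perfect orientations and almost perfect matchings: under it the source set $I(\mathcal O)$ is exactly the set of boundary vertices used by the corresponding matching $M$, so that $\mathcal M_{\mathcal G}$ is the collection of $k$--subsets $I\subseteq[n]$ admitting an almost perfect matching with boundary $I$. The strategy is then to realize $\mathcal M_{\mathcal G}$ as the support of a point of $\GTNN$, since by the very definition of the positroid stratification the support of a totally non--negative point is a positroid. To this end I would assign a generic positive weight $w_e>0$ to each edge and form, for each $k$--subset $I$, the weighted matching sum $P_I=\sum_{M:\,\partial M=I}\prod_{e\in M}w_e$, a subtraction--free expression in the weights that is strictly positive when $I\in\mathcal M_{\mathcal G}$ and identically zero otherwise. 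It then remains to produce a matrix $A(w)\in Mat^{\mbox{\tiny TNN}}_{k,n}$ whose maximal minors equal the $P_I$ up to a common positive factor; the support of $[A(w)]$ is then exactly $\mathcal M_{\mathcal G}$, which is therefore a positroid.

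The main obstacle is precisely this last realization step: one must verify that the family $\{P_I\}$ satisfies the Pl\"ucker relations, arises as the vector of maximal minors of an honest matrix, and that those minors all share a common sign. This is the genuine total--positivity content of the statement, and it is delivered by Postnikov's boundary measurement map together with the combinatorial flow expansion of \cite{Tal2} (equivalently, in the present paper, by the Kasteleyn--matrix description of Theorem \ref{theo:kast}); the sign coherence of the minors, which is exactly what upgrades \emph{matroid} to \emph{positroid}, is what a Kasteleyn signature is designed to guarantee. Once this is in place the matroid property of $\mathcal M_{\mathcal G}$ needs no separate verification, being automatic as the collection of bases of a realizable matroid, and the total non--negativity of $A(w)$ makes it a positroid.
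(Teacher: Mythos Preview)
The paper does not supply its own proof of this proposition: it is stated with attribution to \cite{Pos} and used as background. There is therefore no proof in the paper to compare against.

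Your argument for the first assertion is correct and self--contained; the double edge count is exactly the standard way to see that $|I(\mathcal O)|$ depends only on the graph. For the second assertion your strategy is sound and honestly flagged: reducing to the support of a point of $\GTNN$ is the right move, and identifying the realization step as the genuine content is accurate. Two small comments. First, invoking Theorem~\ref{theo:kast} of the present paper here would be logically out of order, since that theorem is proved later and the proposition you are establishing belongs to the preliminary material; the clean external references are \cite{Pos} (boundary measurement) or \cite{Lam2} (the dimer parametrization, restated here as Theorem~\ref{theo:dimer_par}), and either suffices. Second, once you cite \cite{Pos} for the realization you are essentially reproducing the paper's own treatment, namely to quote the result rather than reprove it; your added value is the explicit cardinality count and the clear articulation of why the positroid conclusion follows from the existence of a totally non--negative point with the prescribed support.
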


The graph in Figure \ref{fig:graph} represents the positroid cell $\S\subset Gr^{\mbox{	\tiny TNN}} (3,6)$ such that $I \not \in {\mathcal M}$ if and only if $I=\{ 1,2,3\}$. Indeed it is easy to check that there is no perfect orientation only for the source set $b_1,b_2,b_3$.

\begin{figure}
\centering{\includegraphics[width=0.33\textwidth]{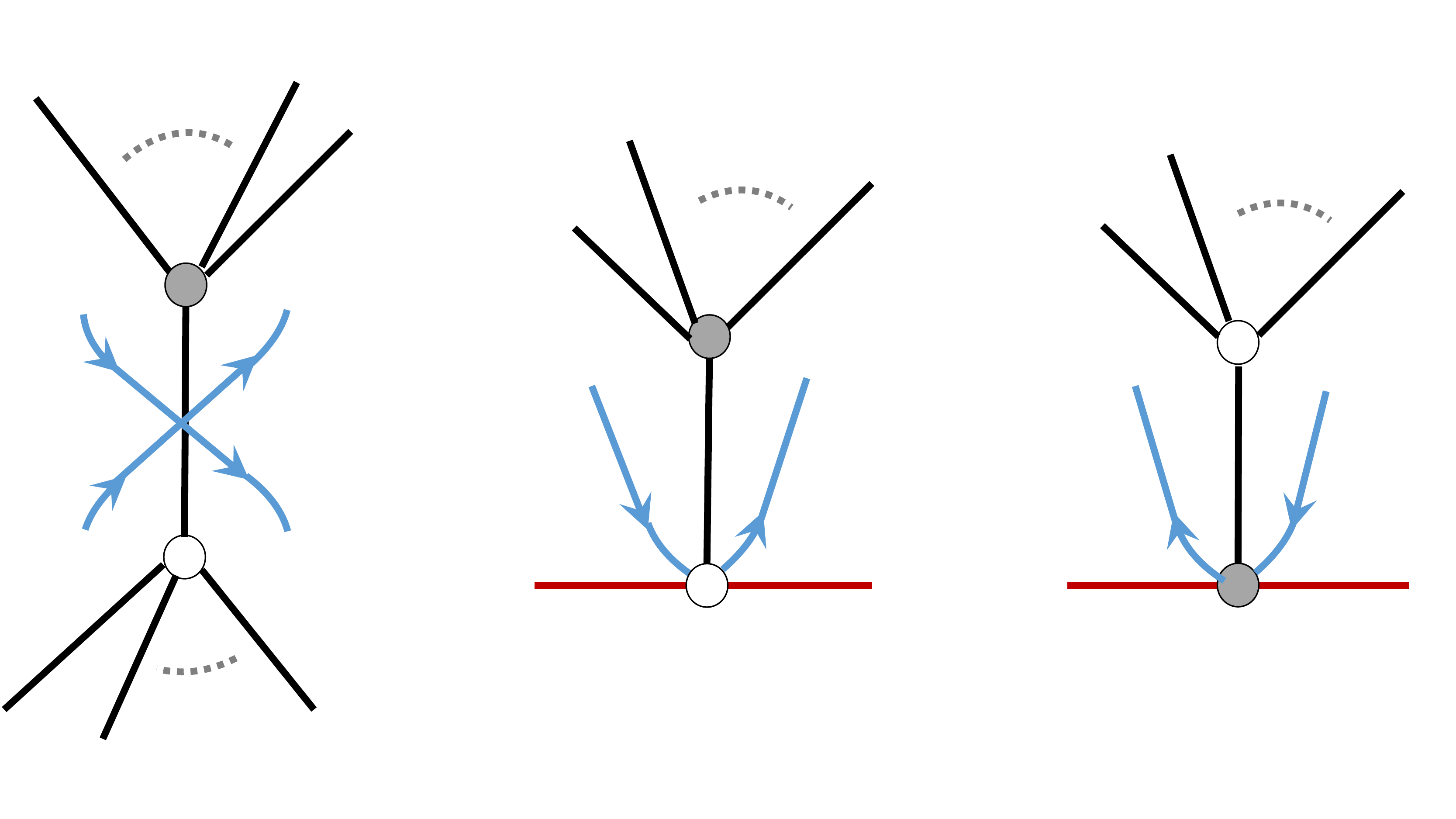}
\vspace{-.5 truecm}
\caption{\small{\sl Strands at internal and at boundary edges.}}\label{fig:strand_rule}}
\end{figure}

\smallskip

\begin{definition}\textbf{Reduced planar bipartite graph in the disk}\label{def:graph_red}
A graph $\mathcal G$ as in Definition \ref{def:graph} is \textbf{reduced} if moreover
\begin{enumerate}
\item Every component of $\mathcal G$ contains at least one boundary vertex;
\item Every internal vertex of degree 1 is adjacent to a boundary vertex:
\item There is at most one edge sharing a pair of vertices $b$, $w$;
\item The number of faces of $\mathcal G$ is minimal among graphs with the same positroid.
\end{enumerate}
\end{definition}

\begin{figure}
  \centering{\includegraphics[width=0.33\textwidth]{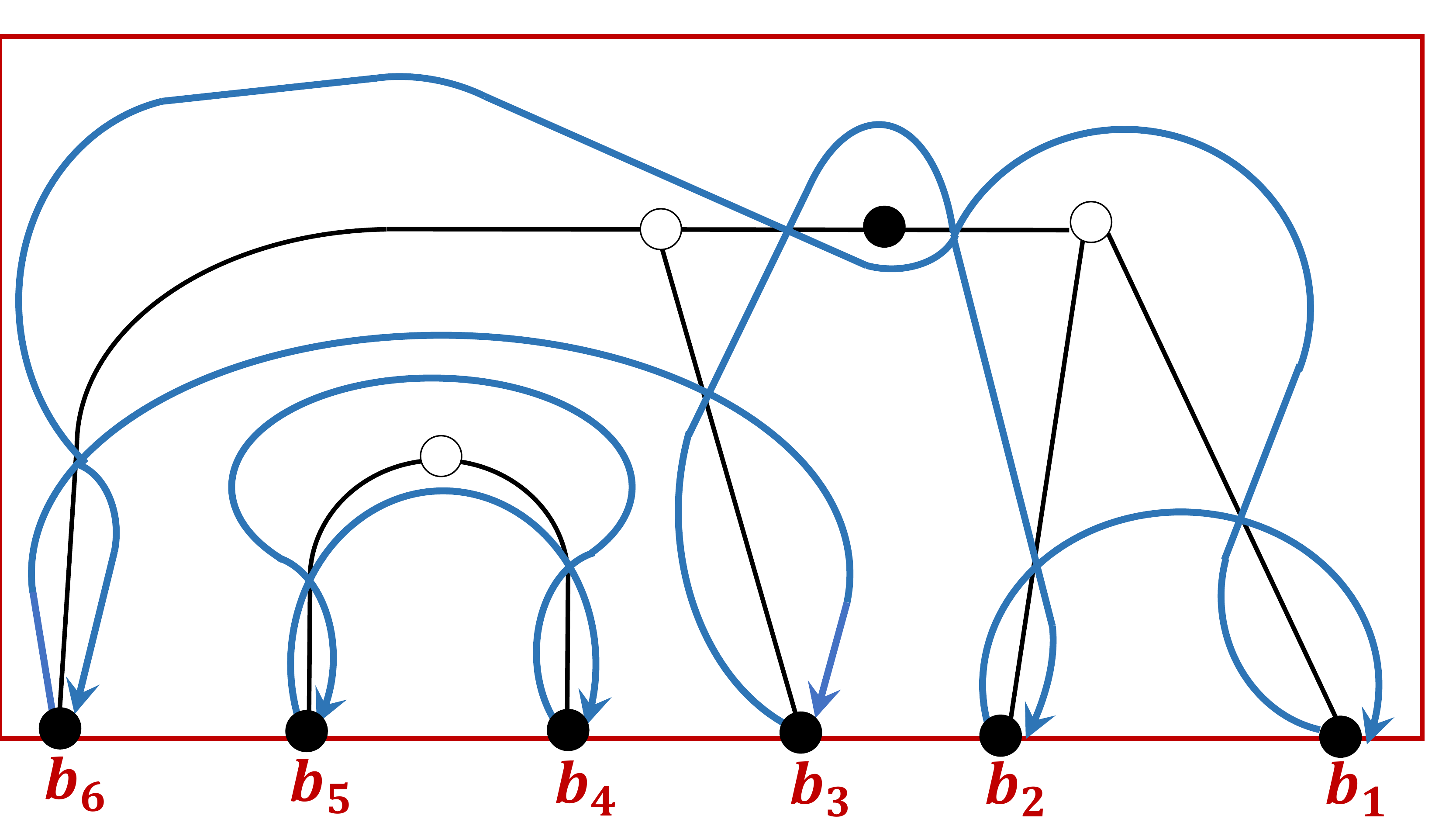}
	\includegraphics[width=0.4\textwidth]{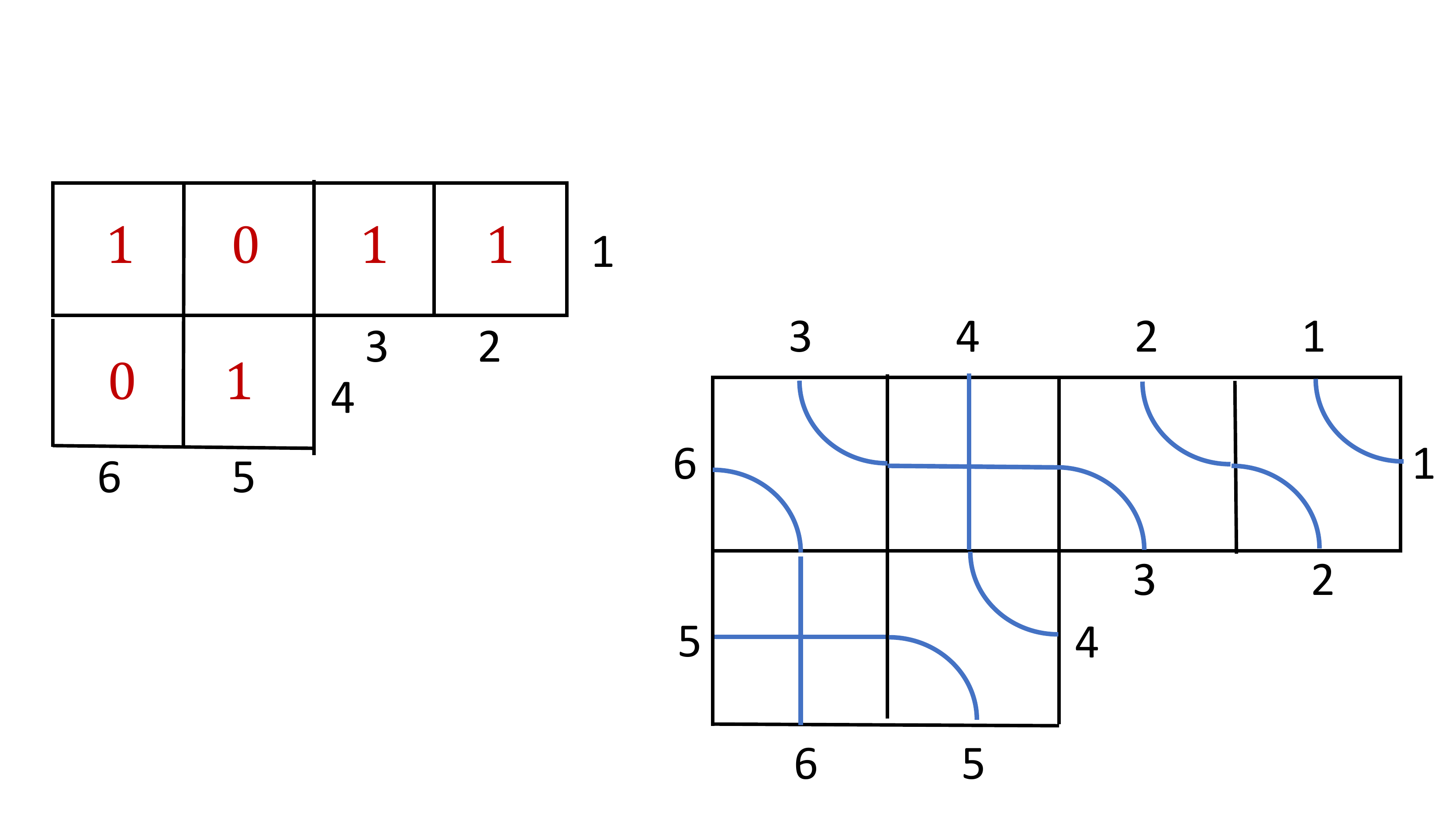}
	\vspace{-.3 truecm}
  \caption{\small{\sl The graph and the Le--diagram both represent the same positroid cell in $Gr^{\mbox{\tiny TNN}}(2,6)$ since they share the same value of the decorated permutation $\pi=(6,1,2,5,4,3)$.  }\label{fig:strand_Gr26}}}
\end{figure}

Positroids in $\GTNN$ are in bijection with the following objects \cite{Pos}: \textbf{decorated permutations} on $n$ letters with $k$ weak excedances and with \textbf{Le--diagrams} of type $(k,n)$. 

The decorated permutation $\pi=\pi(\mathcal M)$ may be computed using oriented strands on a reduced bipartite graph $\mathcal G= \mathcal G(\mathcal M)$ \cite{Pos}: each internal edge intersects transversely two strands at the midpoint, whereas at the boundary edges the strands terminate at the boundary vertex as shown in Figure \ref{fig:strand_rule}. Then $\pi(i)$ is the boundary destination of the strand starting at the boundary vertex $i$. In Figure \ref{fig:strand_Gr26} [left] the reduced graph corresponds to $\pi=(6,1,2,5,4,3)$. The excedances $\pi(1)$ and $\pi(4)$ imply that the lexicographically minimal base in the corresponding matroid $\mathcal M$ is $\{1, 4\}$, therefore $\mathcal G$ represents a positroid cell in $Gr^{\mbox{\tiny TNN}}(2,6)$. 

A Le--diagram $L$ of type $(k,n)$ is a Young diagram in the $k\times (n-k)$ rectangle together with a filling of $0,1$s such that there is no $0$ which has a 1 above it in the same column and a 1 to the left in the same row \cite{Pos}. The total number of $1$s is the dimension of the corresponding positroid cell $\S$ \cite{Pos}. If one labels the southeast border of the Le--diagram $L$ with the numbers $1,2,\dots,n$ starting from the northeast corner, then the labels of the vertical edges of the border form the lexicographically minimal base of the positroid $\mathcal M$ represented by $L$. In Figure \ref{fig:strand_Gr26} [center] we show an example: the Le-diagram represents a four-dimensional positroid cell in $Gr^{\mbox{\tiny TNN}} (2,6)$ with lexicograhically minimal base $\{1, 4\}$.

One may compute $\pi$ also using Le-diagrams \cite{KW2014}: first one replaces each 1 with an elbow and each 0 with a cross. Then one labels the northeast destination of each pipe with the same label of its southeast starting point. $\pi$ is then computed as follows: if $i$ labels a vertical edge on the southeast border then $\pi(i)$ is the label on the same row on the west border; if $i$ labels an horizontal edge on the southeast border then $\pi(i)$ is the label on the same column on the northern border (see Figure \ref{fig:strand_Gr26} [right]). 

\begin{figure}
  \centering{\includegraphics[width=0.37\textwidth]{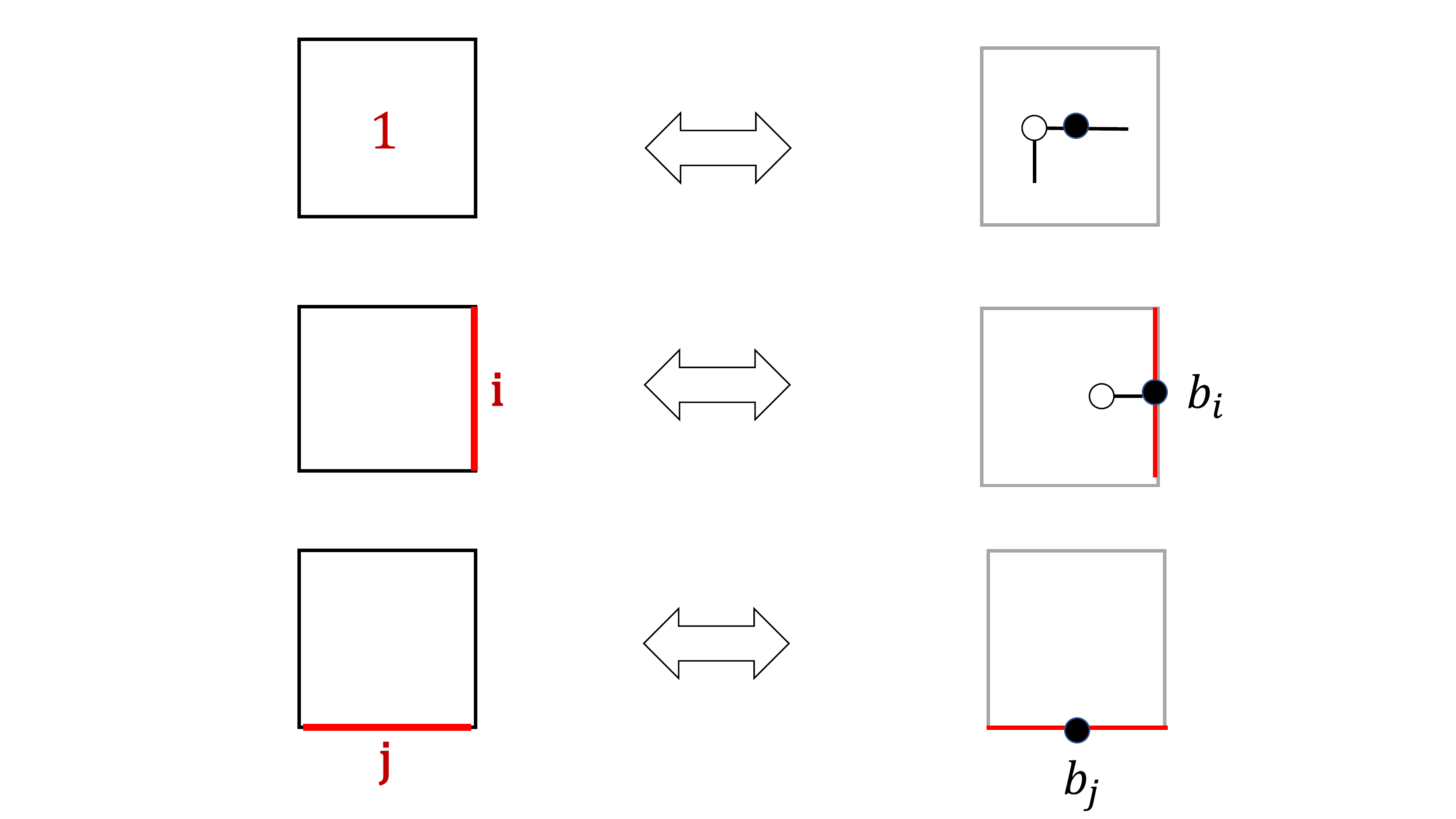}
	\includegraphics[width=0.45\textwidth]{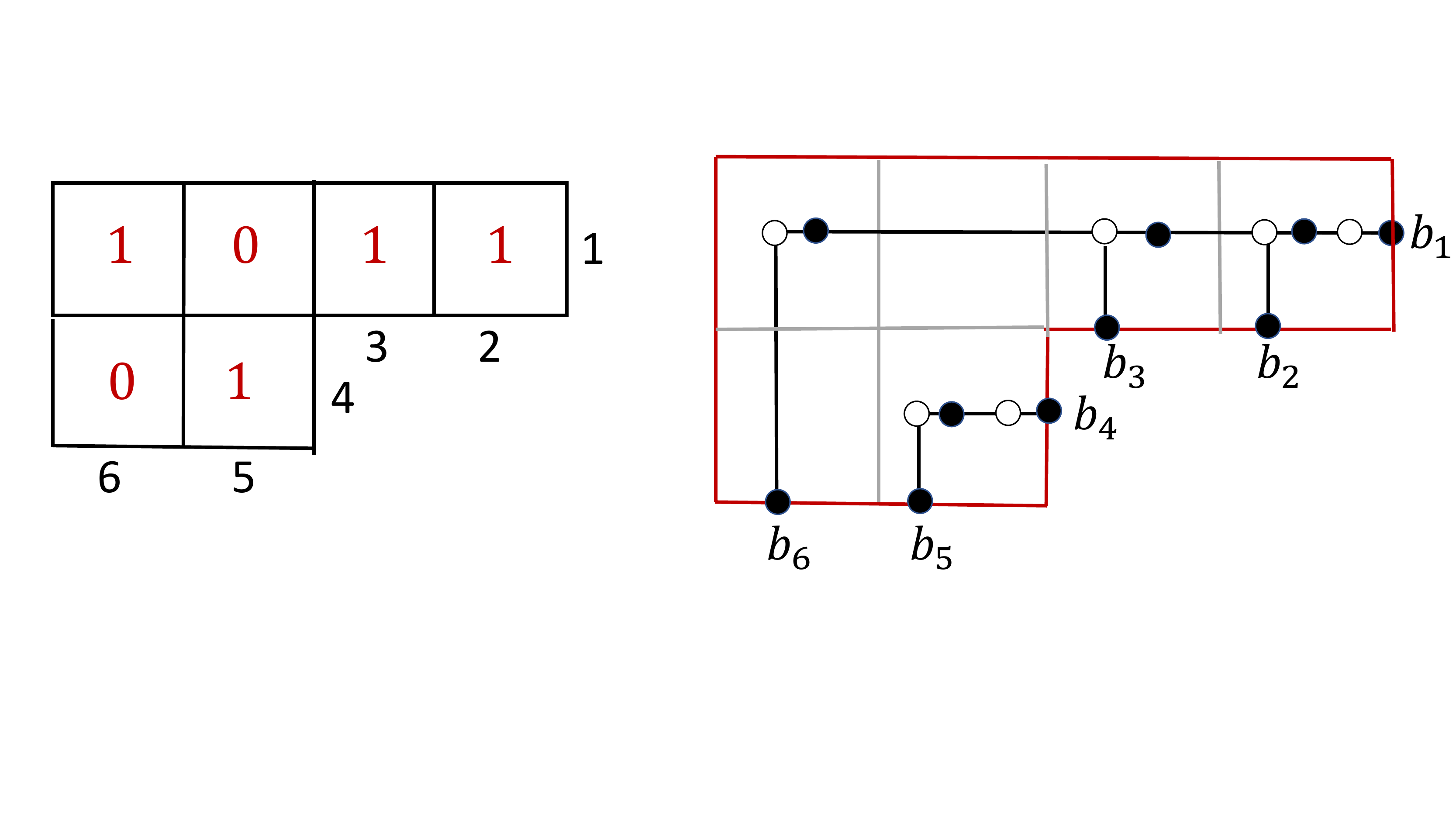}
	\vspace{-.3 truecm}
  \caption{\small{\sl Left: the rules to construct the Le--graph from the Le--diagram. Right: Le--diagram and Le--graph for the Example of Figure \ref{fig:strand_Gr26}.}\label{fig:Le_graph}}}
\end{figure}

\smallskip

\textbf{The Le--graph} Each positroid cell $\S$ is represented by at least one reduced graph \cite{Pos}. 
The bipartite Le-graph representing the positroid cell $\S$ is a reduced graph constructed directly from the Le-diagram $L=L(\S)$ as follows \cite{Pos}. It is obtained putting a black vertex in the middle of each segment of the southeast border of $L$; if the border segment is vertical, one also adds a white vertex next to it and a horizontal edge; finally, in the middle of each Le-box filled by 1  one inserts a hook with a white vertex on the left and a black vertex on the right (see Figure \ref{fig:Le_graph} [left]). Then, for each box filled by 1, the horizontal half--edge starting at the black vertex is prolonged to the nearest white vertex on the right, and the vertical half--edge starting at the white vertex is prolonged to its nearest black vertex downwards. The boundary of the Young diagram is the boundary of the disk. In Figure \ref{fig:Le_graph} [right] we construct the Le--graph for the positroid cell of Figure \ref{fig:strand_Gr26}.

\smallskip

\textbf{Irreducible positroid cells} An irreducible positroid cell $\S$ in $Gr^{\mbox{\tiny TNN}}(k,n)$ corresponds to a positroid $\mathcal M$ with the following additional property: for any $j\in [n]$, there exist $I, J\in \mathcal M$ such that $j\in I$ and $j\not\in J$.  Then the Le--diagram $L(\S)$ does not contain either rows or columns filled by $0$s, whereas $\pi(\S)$ is a derangement, {\sl i.e.} a permutation of $n$ letters with $k$ excedances and no fixed points. Bipartite graphs representing an irreducible cell $\S$ do not possess isolated boundary vertices. The reduced graphs representing $\S$ possess $g+1$ faces where $g$ is the dimension of $\S$.

\smallskip

Finally let us recall the natural duality transformation of positroids. 
\begin{definition}\textbf{Duality transformations between positroids and positroid cells}\label{def:dual_cell}
Given a positroid $\mathcal M$ of $k$--element subsets in $[n]$, its dual is the positroid $\overline{\mathcal M}$ of $(n-k)$-- element subsets in $[n]$ such that 
\begin{equation}\label{eq:dual_matroid}
I\in \mathcal M \quad\quad \iff \quad\quad \bar I \in \overline{\mathcal M}.
\end{equation}
If $\S\subset \GTNN$ is the positroid cell represented by $\mathcal M$, then we denote $\Sprime\subset Gr^{\mbox{\tiny TNN}}(n-k,n)$ its dual positroid cell. Similarly if $\Pi_{\mathcal M}\subset Gr(k,n)$ is the positroid variety represented by $\mathcal M$, then we denote $\Pi_{\overline{\mathcal M}} \subset Gr(n-k,n)$ the dual positroid variety represented by $\overline{\mathcal M}$.
\end{definition}

If $\pi$ is the derangement representing $\mathcal M$, then the derangement $\bar \pi$ representing $\overline{\mathcal M}$ is $\bar \pi = \pi^{-1}$ \cite{Pos}. 

\begin{figure}
  \centering{\includegraphics[width=0.33\textwidth]{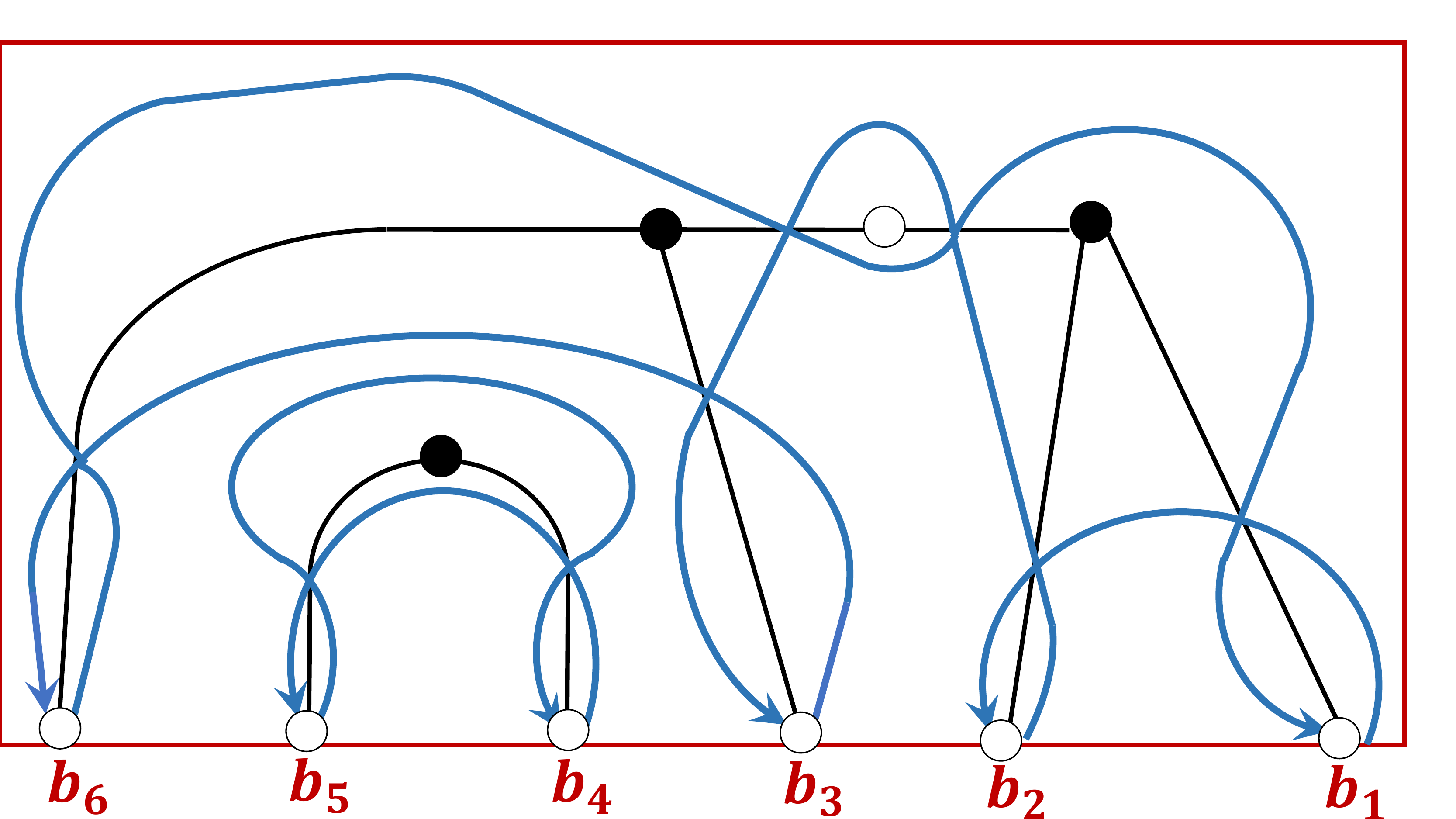}
	\includegraphics[width=0.35\textwidth]{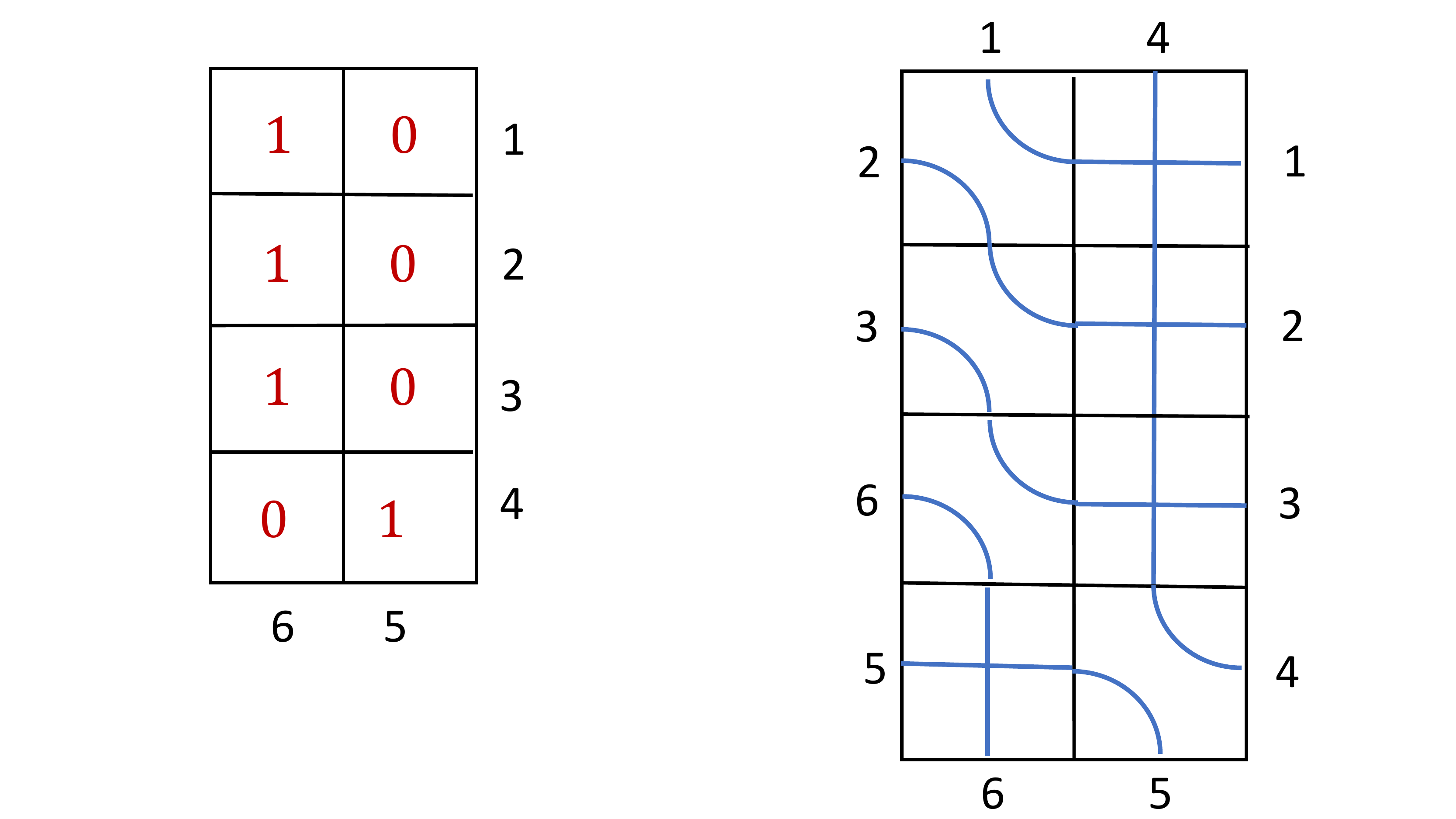}
	\vspace{-.3 truecm}
  \caption{\small{\sl The graph, the derangement and the Le-diagram of the positroid cell in $Gr^{\mbox{\tiny TNN}}(4,6)$ dual to that in Figure \ref{fig:strand_Gr26}.}\label{fig:strand_Gr26_dual}}}
\end{figure}
If $\mathcal G$ is a planar bipartite graph in the disk representing $\S$, then the graph $\overline{\mathcal G}$ obtained inverting the color of all vertices including those at the boundary represents $\Sprime$.

Since dual positroid cells have the same dimension, it is possible to introduce bijections between $\S$ and $\Sprime$ which allow to parametrize $\Sprime$ starting from a parametrization of $\S$. A natural bijection which preserves the total non--negativity property is associated to the transposition of Kasteleyn matrices in Proposition \ref{prop:dual}. A different duality relation between the positroid varieties $\Pi_{\mathcal M}$ and $\Pi_{\overline{\mathcal M}}$ is constructed in Theorem \ref{theo:sol_kas_sys_1} solving Kasteleyn system of relations at the boundary vertices when the vector space is $\mathbb{C}^{n-k}$.

In Figure \ref{fig:strand_Gr26_dual} we show the effect of the duality transformation of positroids for the cell of Figure \ref{fig:strand_Gr26}. The dual cell is four--dimensional in $Gr^{\mbox{\tiny TNN}}(4,6)$ with derangement $\bar{\pi} = \pi^{-1}=\left( 2,3,6,5,4,1 \right) $.

\smallskip

\begin{definition}\textbf{Almost perfect matchings on $\mathcal G$}\label{def:matching}
An almost perfect matching of $\mathcal G=(\mathcal V, \mathcal E)$ is a collection $M$ of edges of $\mathcal G$ that contains exactly once each internal vertex of $\mathcal G$ and each boundary vertex at most once. For an almost perfect matching $M$ its boundary $\partial M$ is defined as follows
\[
\partial M = \{ i \in [n] \, : \, \mbox{ the black boundary vertex } b_i \in M \} \; \cup \; \{ i \in [n] \, : \, \mbox{ the white boundary vertex } b_i \not\in M \}.
\]
\end{definition}

In Figure \ref{fig:graph} [right] we show an almost perfect matching.

If the boundary vertices are colored black, $|\mathcal B|-|\mathcal W|=n-k$ and each almost perfect matching uses $k$ boundary vertices. If the boundary vertices are colored white, $|\mathcal W|-|\mathcal B|=k$ and the set of boundary vertices used in each matching is $n-k$. If we release the requirement that boundary vertices share the same color, the boundary of each matching of $\mathcal G$ has size
\[
k = \# (\mbox{white vertices}) - \#(\mbox{black vertices}) + \# (\mbox{black boundary vertices}).
\]

In \cite{PSW} perfect orientations of $\mathcal G$ are shown to be in bijection with almost perfect matchings in $\mathcal G$. Therefore the following statement holds true.

\begin{proposition}\cite{PSW}
Let $\mathcal G$ be as in Definition \ref{def:graph} and let $\mathcal M_{\mathcal G}$ be the positroid of its perfect orientations. Then $I\in \mathcal M_{\mathcal G}$ if and only if there exists an almost perfect matching $M$ in $\mathcal G$ with $\partial M= I$.
\end{proposition}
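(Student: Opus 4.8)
The plan is to promote the cited bijection between perfect orientations and almost perfect matchings of $\mathcal G$ to a statement about boundary data, by tracking how the source set of an orientation is read off from the associated matching. I would first make the bijection explicit using the bipartite structure as follows. Given a perfect orientation $\mathcal O$, declare an edge $\overline{bw}$ to be matched precisely when it is oriented from its black endpoint to its white endpoint, and set $M(\mathcal O)$ to be the collection of matched edges. Since each internal black vertex has a unique out-edge (necessarily of the form $b\to w$) and each internal white vertex has a unique in-edge (necessarily of the form $b\to w$), the set $M(\mathcal O)$ covers every internal vertex exactly once; as boundary vertices have degree one, each is covered at most once, so $M(\mathcal O)$ is an almost perfect matching in the sense of Definition \ref{def:matching}.

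Conversely, given an almost perfect matching $M$, I would orient each matched edge from black to white and each unmatched edge from white to black. Checking the perfect-orientation conditions is then routine: an internal black vertex lies on exactly one matched edge, giving a unique out-edge, while all its remaining edges point inward, and an internal white vertex lies on exactly one matched edge, giving a unique in-edge. These two assignments are mutually inverse, recovering the bijection of \cite{PSW}.

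It remains to identify $I(\mathcal O)$ with $\partial M(\mathcal O)$. For a black boundary vertex $b_i$, being a source means its unique edge is oriented outward, i.e.\ $b_i\to w$, which is exactly the condition $b_i\in M(\mathcal O)$; for a white boundary vertex $b_i$, being a source means its unique edge is oriented $b_i\to b$, hence \emph{unmatched}, i.e.\ $b_i\notin M(\mathcal O)$. In either coloring this matches the corresponding clause in the definition of $\partial M$, so $\partial M(\mathcal O)=I(\mathcal O)$. Consequently $I\in\mathcal M_{\mathcal G}$, i.e.\ $I=I(\mathcal O)$ for some perfect orientation $\mathcal O$, holds if and only if $I=\partial M$ for some almost perfect matching $M$, which is the assertion.

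The main obstacle is essentially bookkeeping: one must verify that the single definition of $\partial M$ in Definition \ref{def:matching} correctly reproduces the source set under both color conventions for the boundary vertices, since the matched/unmatched status of a boundary edge encodes ``source'' in opposite ways for black and white boundary vertices. Everything else follows directly from bipartiteness and the degree-one property of the boundary vertices.
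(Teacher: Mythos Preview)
Your argument is correct and is essentially the standard one. Note, however, that the paper does not actually supply a proof of this proposition: it simply cites \cite{PSW} for the bijection between perfect orientations and almost perfect matchings and then states the proposition as an immediate consequence. What you have written is precisely the explicit unwinding of that bijection together with the boundary bookkeeping, which is exactly how one would reconstruct the omitted details; there is no alternative approach in the paper to compare against.
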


\smallskip

In this paper we are interested in networks of graph $\mathcal G$ with real positive weights assigned either to the edges of $\mathcal G$ or to its faces. In \cite{Pos} a natural minimal parametrization of each given positroid cell $\S\subset \GTNN$ is obtained in terms of face weights on reduced graphs representing $\S$. 

\textbf{Notation for edges on undirected and on directed graphs} If the graph is undirected, the edge $e$ connecting the black vertex $b$ and the white vertex $w$  will be denoted $e=\overline{bw}$. If the graph is directed, the edge $e$ starting at the vertex $u$ and ending at the vertex $v$, will be denoted $e=\overrightarrow{uv}$.

\textbf{Terminology for faces}
A face $\Omega$ is internal if its boundary has empty intersection with the boundary of the disk, otherwise it is an external face. There is a unique external face including the boundary segment from $b_n$ to $b_1$ clockwise and we call it the infinite face. All other faces are called finite. 

If $\mathcal G$ possesses $g+1$ faces, we label the finite faces $\Omega_i$, $i\in [g]$, and denote $\Omega_0$ the infinite face.
The same labeling rule applies to face weights.

\textbf{Networks}
A network $\mathcal N = (\mathcal G, f)$ is a graph $\mathcal G$ as in Definition \ref{def:graph} with $g+1$ faces and a choice of non zero face weights on the finite faces $f(\Omega_i)=f_i \not =0$, $i\in [g]$. The weight of the infinite face $\Omega_0$ is then $f(\Omega_0)= (\prod_{i\in [g]} f_i)^{-1}$ (see Figure \ref{fig:graph_2}[left] for an example).

There is a natural way to pass from the face weights to the edge weights on undirected or directed networks.

\begin{figure}
\centering{\includegraphics[width=0.32\textwidth]{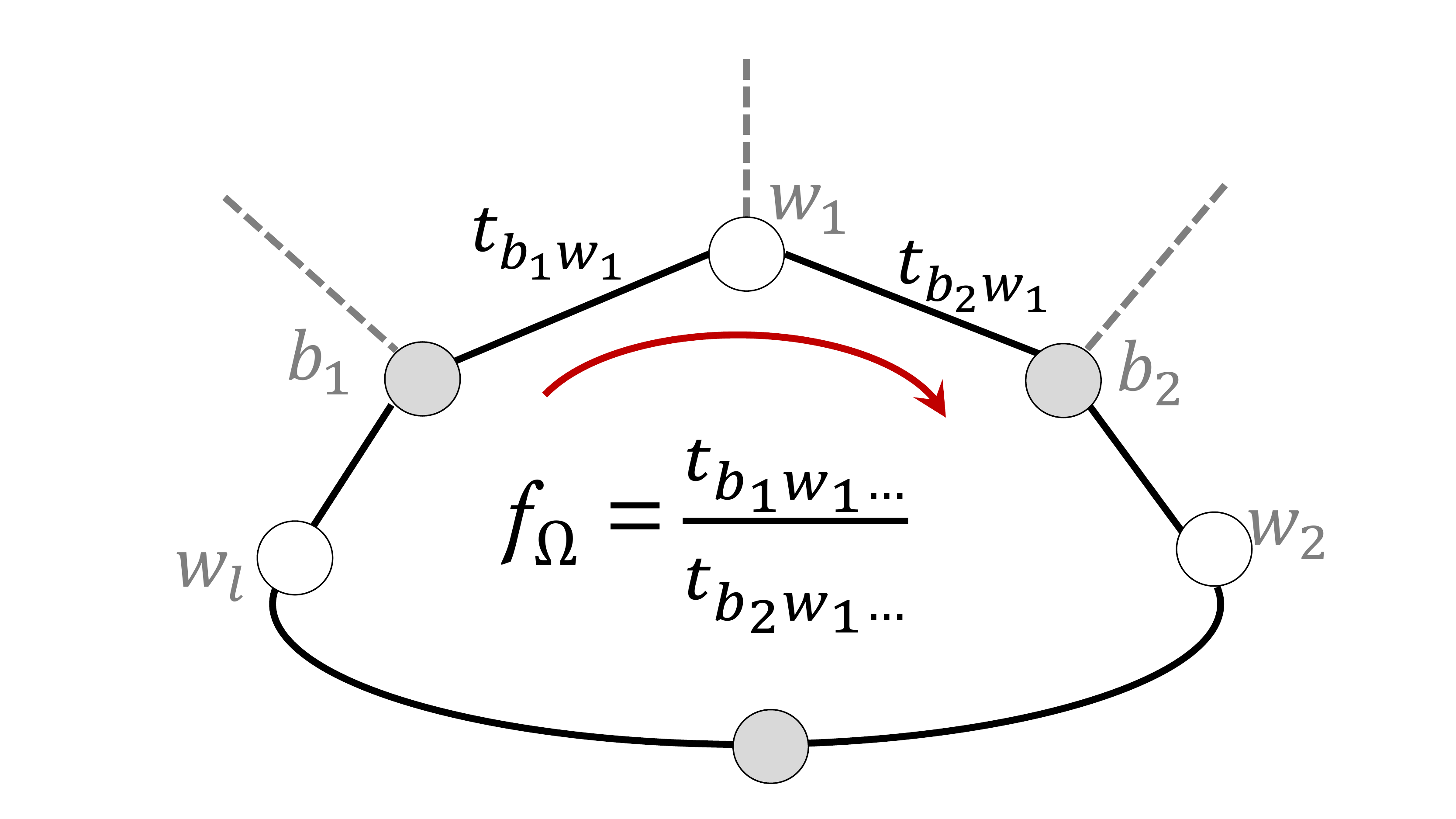}
\hfill
\includegraphics[width=0.32\textwidth]{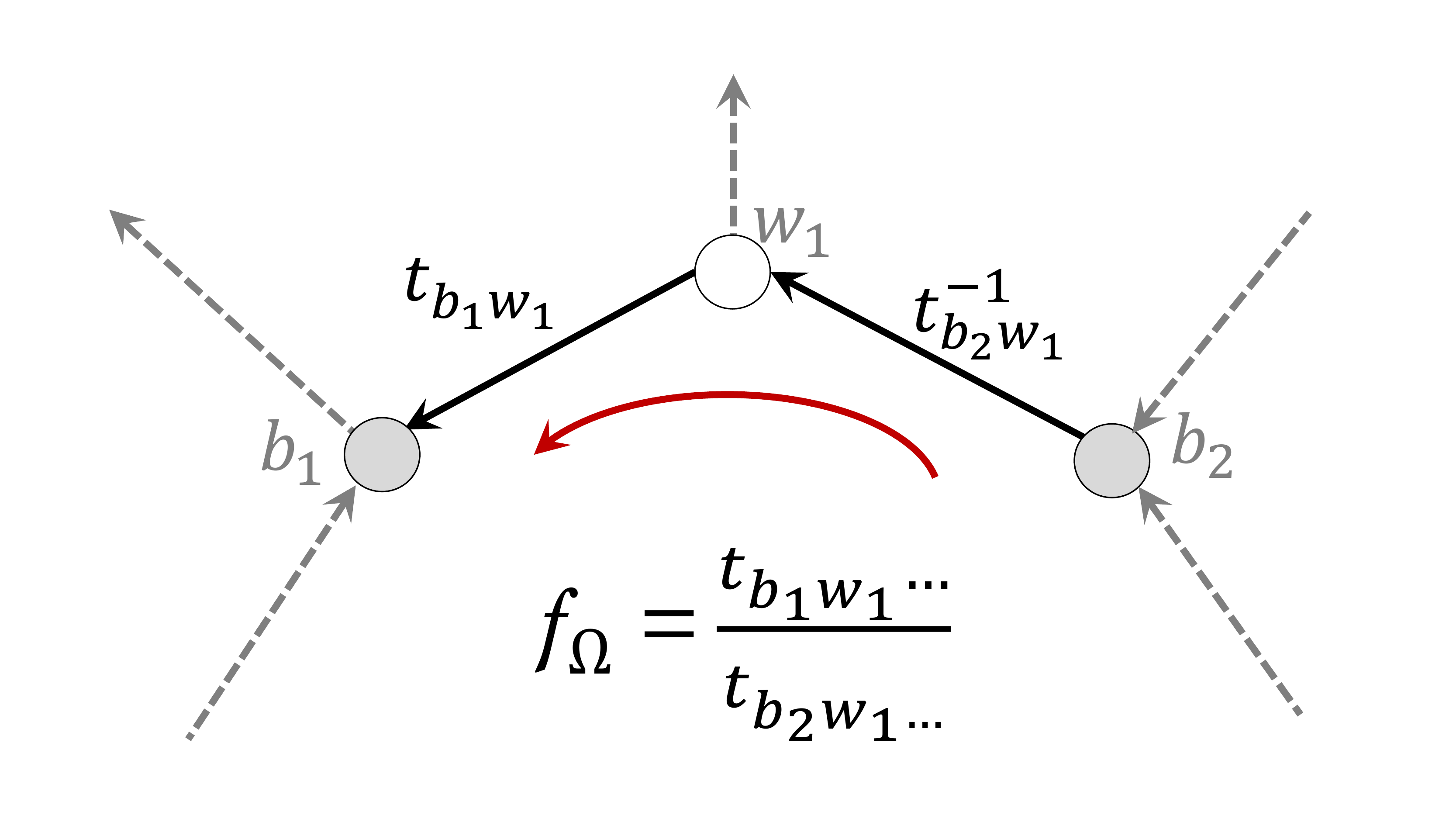}
\hfill
\includegraphics[width=0.32\textwidth]{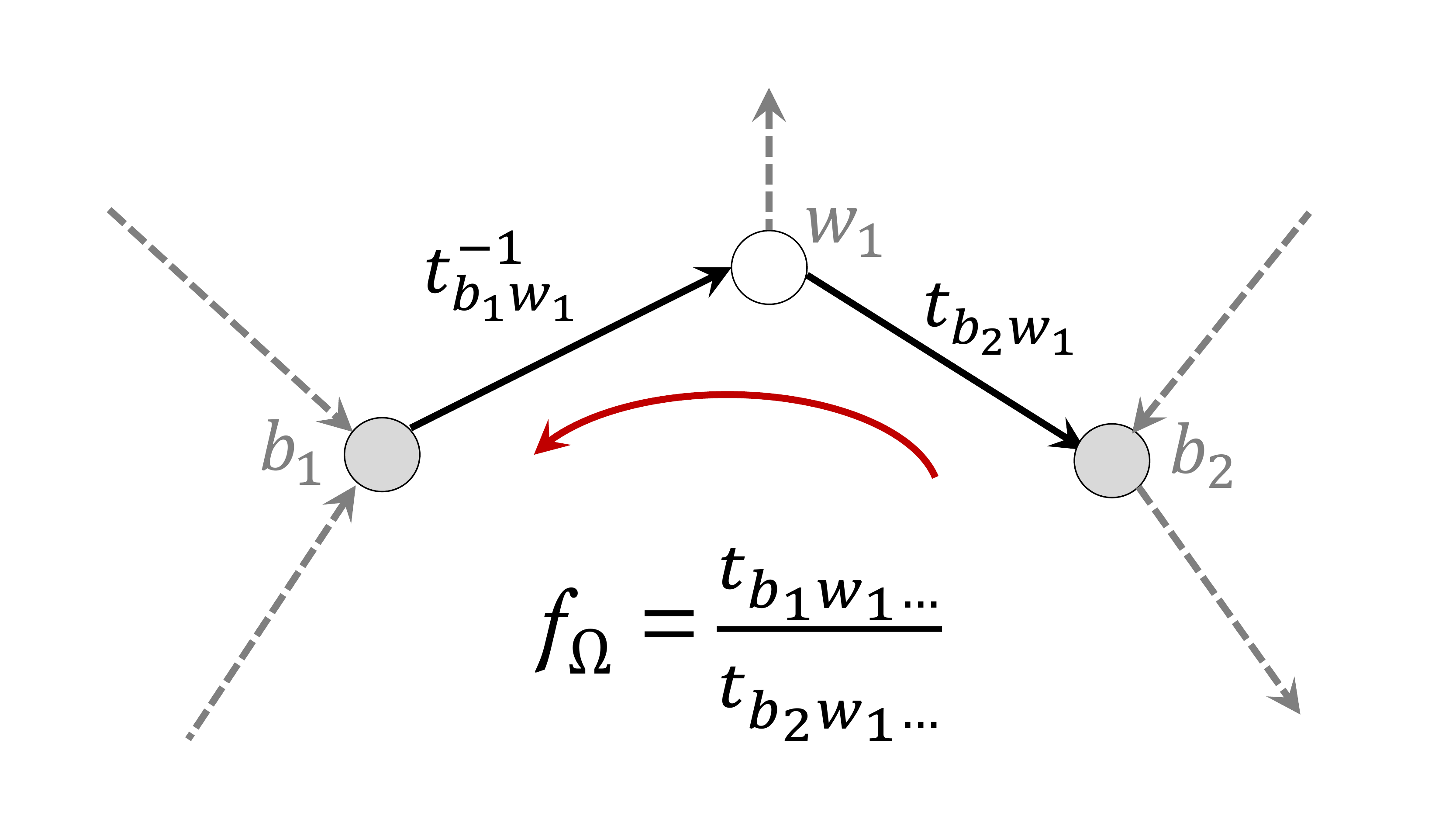}
\vspace{-.3 truecm}
\caption{\small{\sl The rule of transformation between face and edge weights for undirected graphs [left] and for directed ones [center and right].}}\label{fig:weight_rule}}
\end{figure}

\textbf{The rule for assigning weights on undirected and on directed networks} If $\mathcal G$ is undirected, let us label in clockwise order the vertices bounding a given face $\Omega$, $b_1,w_1,b_2,\dots,b_l,w_l$. Then the relation between the face weight $f_{\Omega}$ and the edge weights $t_{b_iw_j}$ is
\begin{equation}\label{eq:face_weights}
f_{\Omega} = \frac{\prod_{i=1}^l t_{b_iw_i} }{\prod_{i=1}^l t_{b_iw_{i-1}}},
\end{equation}
with obvious modifications if $\Omega$ is an external face (see Figure \ref{fig:weight_rule} [center]).
If $\mathcal G$ is directed, the face weight is obtained multiplying the edge weights for the edges bounding $\Omega$ and directed anticlockwise, and dividing the edge weights for the edges bounding $\Omega$ and directed clockwise (see Figure \ref{fig:weight_rule} [right]). 
These two rules may be easily combined in an explicit transformation between edge weights for undirected and directed graphs with equal face weights:
\begin{equation}\label{eq:edge_weights}
t_{\overrightarrow{uv}} = \left\{ \begin{array}{ll}
t_{bw}, & \mbox{ if } u=w, \; v=b;\\
\noalign{\medskip}
t_{bw}^{-1},& \mbox{ if } u=b, \; v=w.
\end{array}
\right.
\end{equation}
Finally if the directed edge $e=\overrightarrow{uv}$ has weight $t_e$, then the directed edge $e^{\prime} =\overrightarrow{vu}$ has weight $t_{e^{\prime}} = t_e^{-1}$. We illustrate these rules in Figure \ref{fig:graph_2}, where we only write edge weights different from 1.

\begin{figure}
\centering{\includegraphics[width=0.32\textwidth]{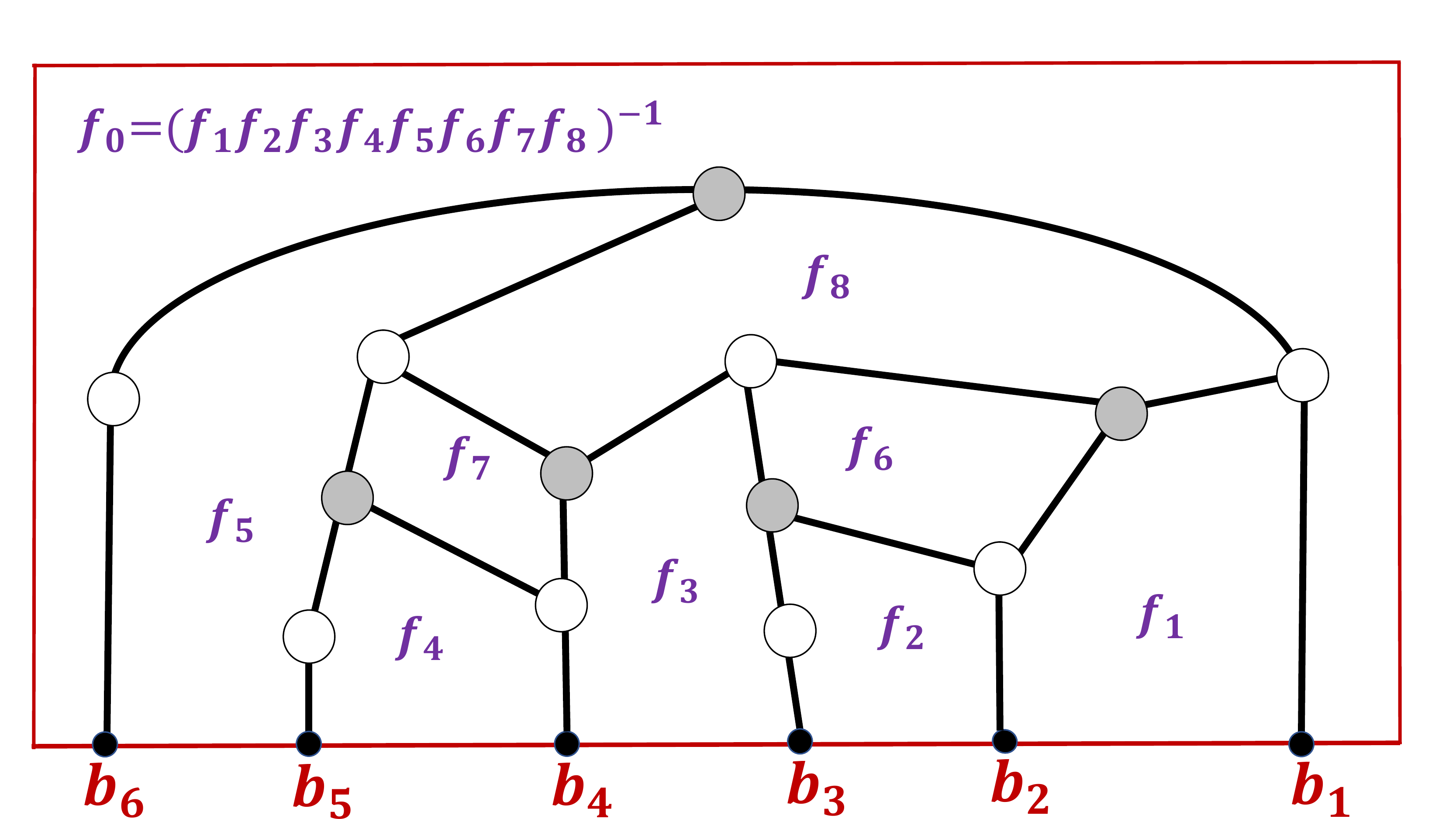}
\includegraphics[width=0.32\textwidth]{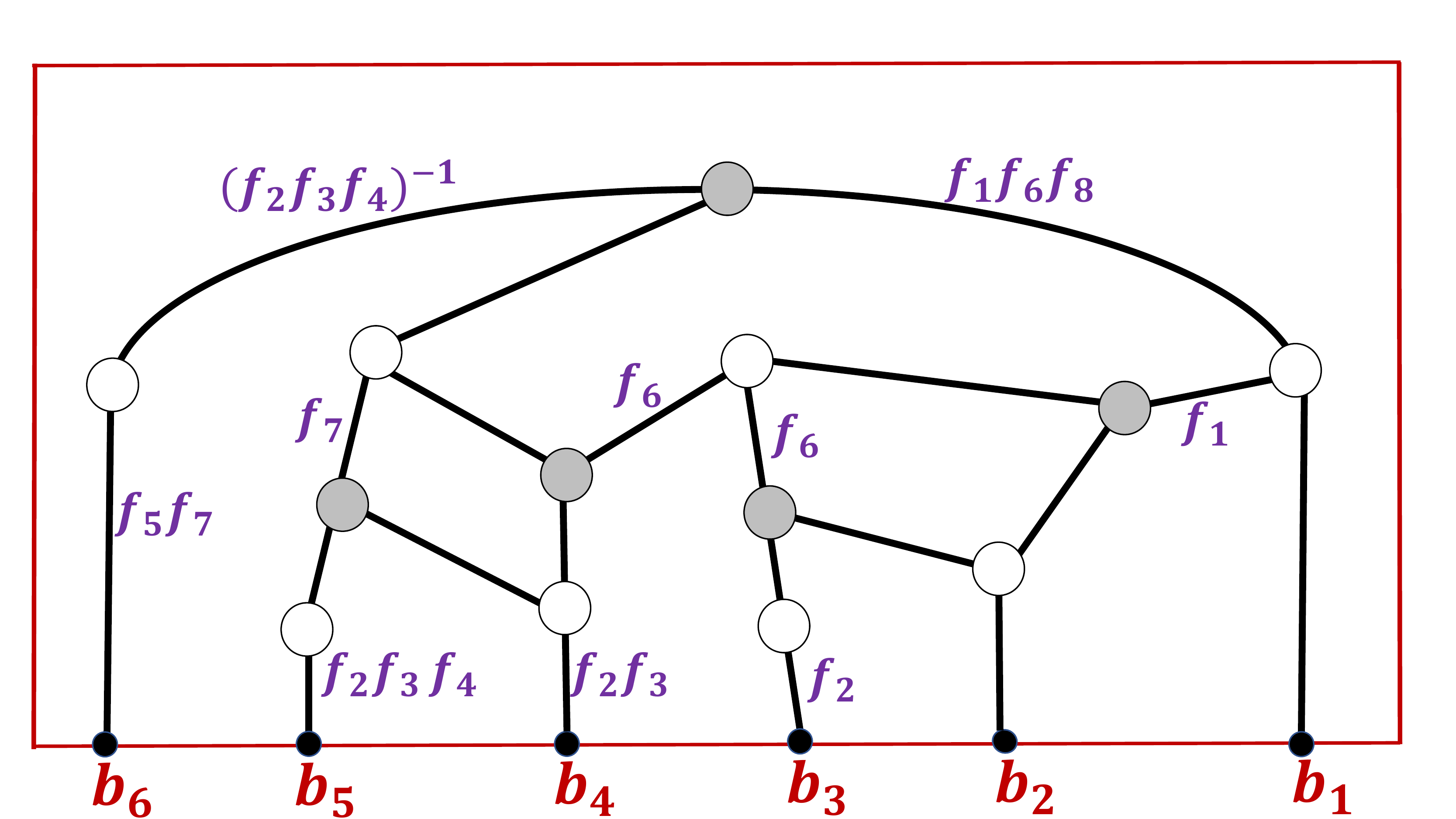}
\includegraphics[width=0.32\textwidth]{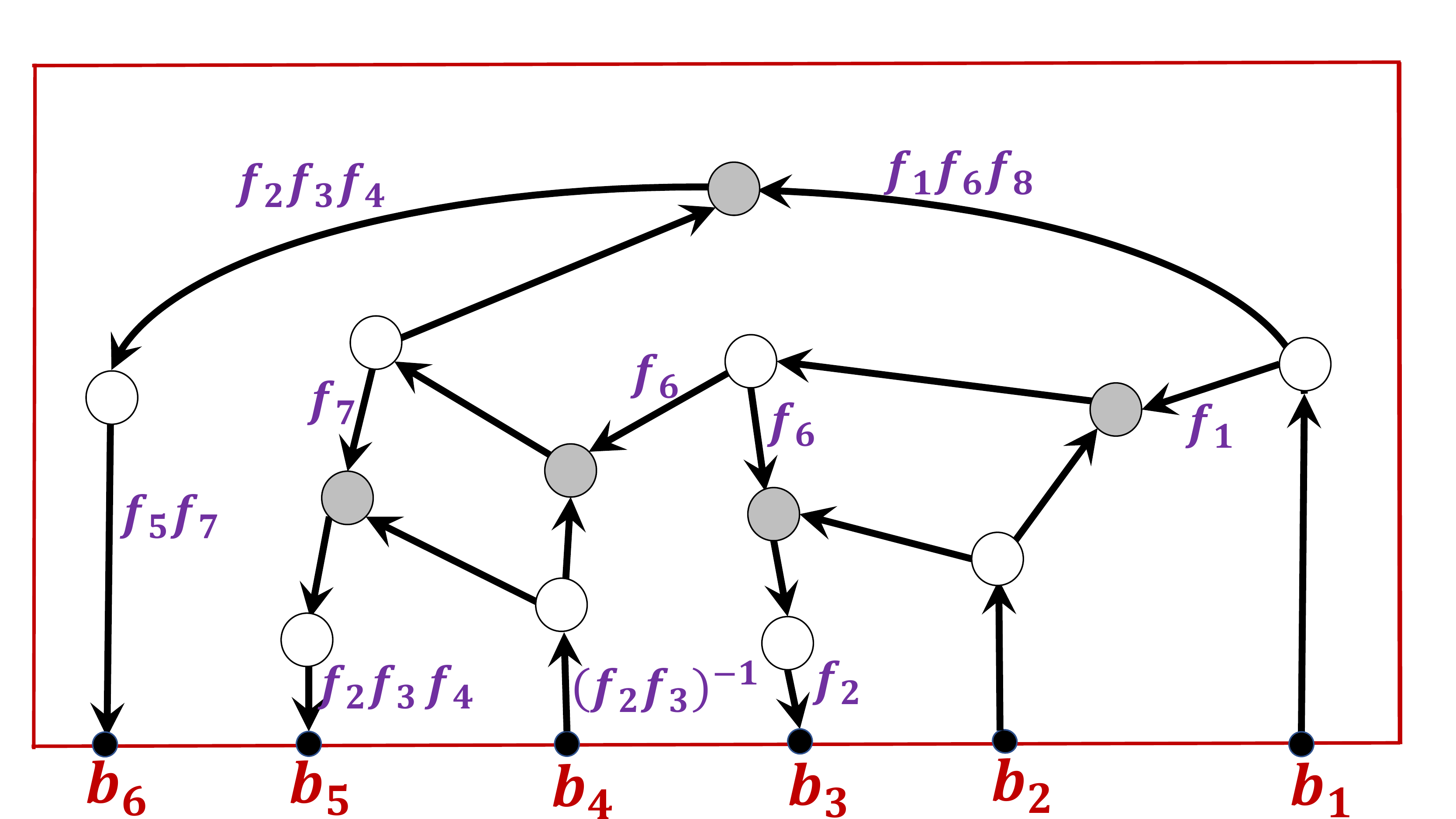}
\vspace{-.3 truecm}
\caption{\small{\sl A planar bipartite network in the disk with face weights [left], an equivalent network sharing the same undirected graph with edge weights satisfying (\ref{eq:face_weights}) [center] and an equivalent directed network satisfying (\ref{eq:edge_weights}) [right]. Unmarked edges carry unit weights.}}\label{fig:graph_2}}
\end{figure}

If the face weights are all real and positive (complex non--zero), there is more than one way to assign real positive (complex non--zero) edge weights to the graph following the above rules. We illustrate the weight gauge transformation only in the case of positive weights; in the case of complex non--zero weights it is sufficient to replace everywhere $c>0$ with $c$ complex non-zero in the formulas (\ref{eq:weight_gauge_und}) and (\ref{eq:weight_gauge_dir}).

\textbf{The weight gauge for undirected and for directed networks} If the graph is undirected and $t_e>0$, $e\in \mathcal E$, is a solution to the system (\ref{eq:face_weights}) at the faces of $\mathcal G$, then the following gauge transformation at an internal vertex $v$ gives another solution for any given $c>0$:
\begin{equation}\label{eq:weight_gauge_und}
t_e^{\prime} = \left\{ \begin{array}{ll}
c t_e, & \mbox{ if } v \mbox{ bounds } e,\\
t_e, &\mbox{otherwise}.
\end{array}
\right.
\end{equation}
Two reduced networks sharing the same graph for two sets of real positive edge weights represent the same point in the totally non--negative Grassmannian if and only if the edge weights can be obtained by composing transformations (\ref{eq:weight_gauge_und}) at the internal vertices of the graph. We remark that on unreduced graphs there is extra gauge freedom \cite{Pos}.

If the graph is directed and we assign a positive number $c_v$ to each internal vertex $v$ and the directed edge $e=\overrightarrow{uv}$ has initial weight $t_e$ then the gauge equivalent network has edge weight $t_e^{\prime}$ with
\begin{equation}\label{eq:weight_gauge_dir}
t_e^{\prime} = c_uc_v^{-1} t_e.
\end{equation}
Again, two networks on a perfectly orientated reduced graph are equivalent if and only if their edge weights are related by (\ref{eq:weight_gauge_dir}).

\smallskip

In \cite{Pos}, for any given oriented planar network in the disk, the formal \textbf{boundary measurement map} is defined as  
\begin{equation}\label{eq:bound_meas_map}
M_{ij} := \sum\limits_{P:b_i\mapsto b_j} (-1)^{\mbox{\tiny Wind}(P)} wt(P),\quad \quad i\in I, \; j\in \bar I,
\end{equation}
where $I$ is the base for the given orientation, the sum is over all directed paths $P$ from the source $b_i$ to the sink $b_j$, $wt(P)$ is the product of the edge weights of $P$ (counting multiplicities if an edge appears more than once in $P$), and $\mbox{Wind}(P)$ is its topological winding index (see \cite{Pos}). These formal power series sum up to subtraction free rational expressions in the weights \cite{Pos} and, for directed networks, their explicit expression in function of flows and conservative flows is provided in \cite{Tal2}. 

Let $I$ be the base inducing the orientation of the network $\mathcal N=(\mathcal G, \mathcal O(I), f)$ used in the computation of the boundary measurement map. Then (see \cite{Pos}), for each choice of positive edge weights associated to $f$, the image of the boundary measurement map is the point $[A^{bmm}]\in \S \subset \GTNN$ represented by the \textbf{boundary measurement matrix} $A\equiv A^{bmm}$ such that:
\begin{itemize}
\item The submatrix $A_I$ in the column set $I$ is the identity matrix;
\item The remaining entries $A^r_j = (-1)^{\sigma(i_r,j)} M_{ij}$, $r\in [k]$, $j\in \bar I$, where $\sigma(i_r,j)$ is the number of elements of $I$ strictly between $i_r$ and $j$.
\end{itemize}
The point $[A^{bmm}]\in \S$ is a function of the face weights $f$, and is independent on both the perfect orientation of $\mathcal G$ and the weight gauge \cite{Pos}. Finally, if $\S$ is an irreducible positroid cell, then the reduced row echelon matrix contains neither zero columns nor rows with just the pivot entry different from zero. 

\begin{figure}
\centering{\includegraphics[width=0.32\textwidth]{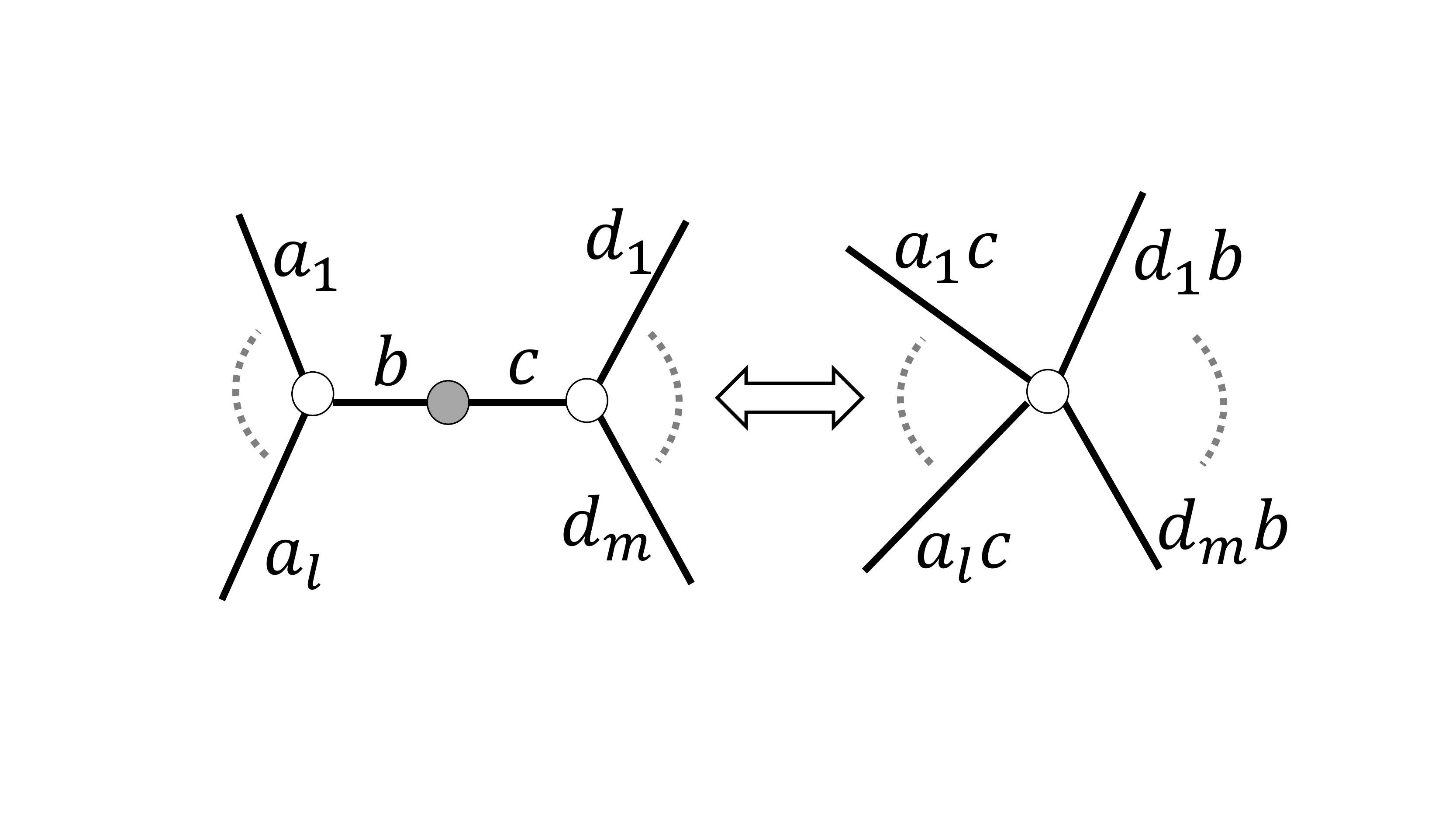}
\includegraphics[width=0.32\textwidth]{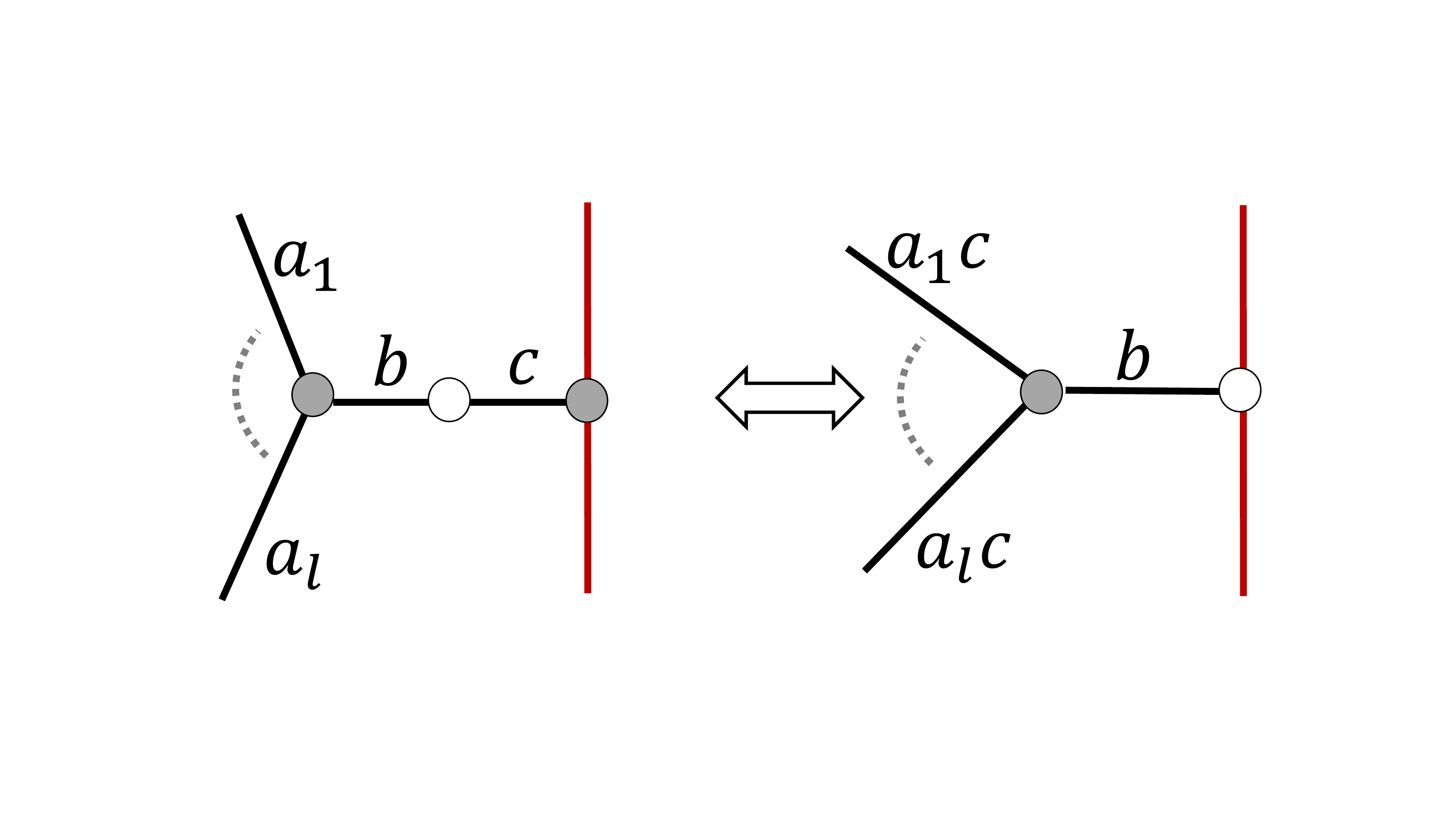}
\hfill
\includegraphics[width=0.33\textwidth]{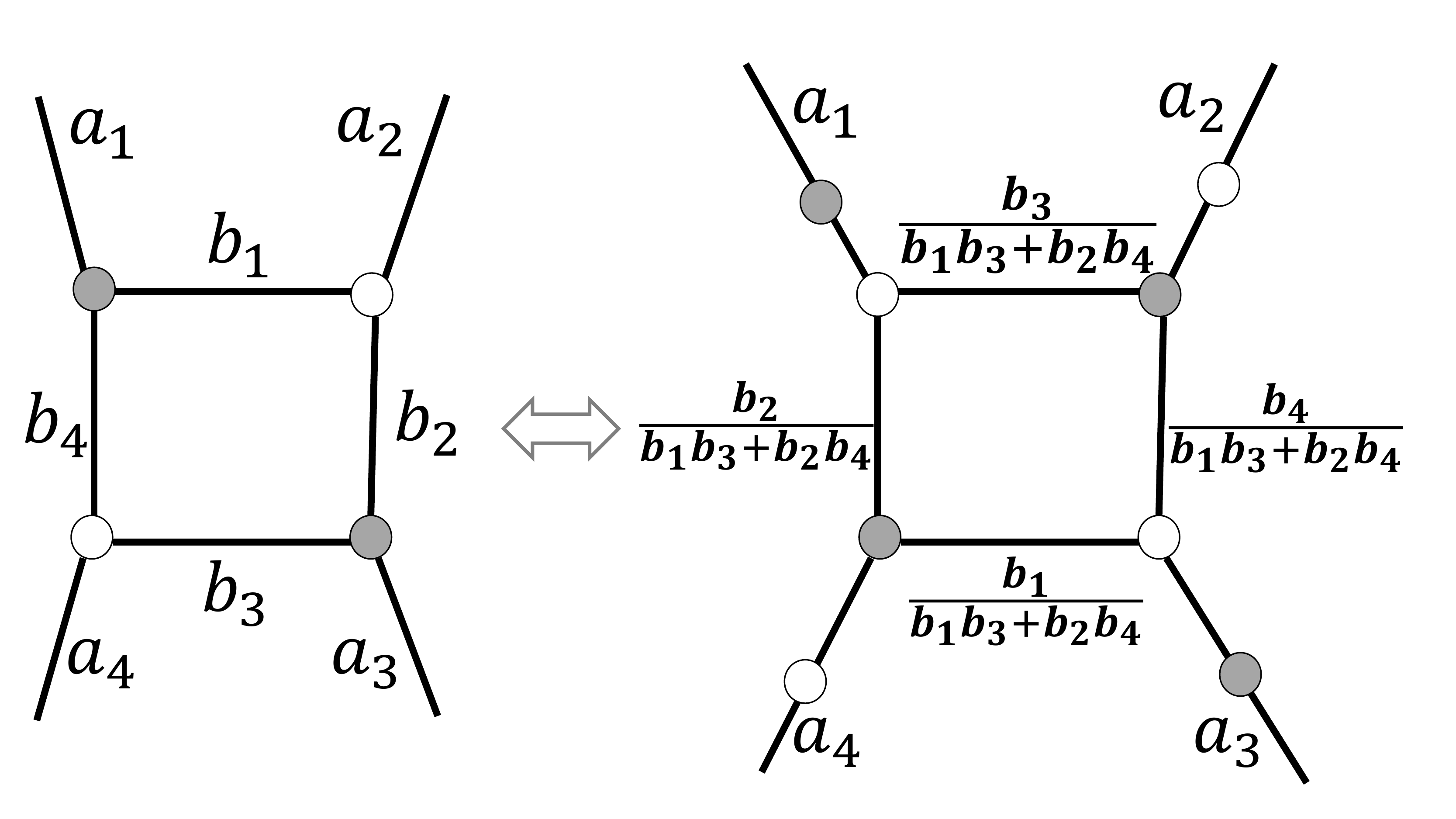}
\vspace{-.3 truecm}
\caption{\small{\sl Left: the contraction/expansion of a degree 2 black vertex; center: the removal/addition of a white boundary--adjacent vertex; right: the square move. Unlabeled edges carry unit weight.}}\label{fig:move}}
\end{figure}

In \cite{Pos} the graphs representing the same positroid $\mathcal M$ are classified. In this equivalence class a special role is played by reduced graphs. Indeed, if the graph is reduced, the boundary measurement map modulo the weight gauge equivalence is a homeomorphism from $\mathbb{R}_+^g$ to $\S$, where $g$ is the dimension of the positroid cell represented by $\mathcal G$ \cite{Pos}. 

\smallskip

Postnikov also classifies the network transformations which preserve the value of the boundary measurement map $[A^{bmm}]$. Since we use only reduced networks, we just need to define the actions of moves; for the reductions see \cite{Pos}. The possible moves for undirected bipartite networks are:
\begin{enumerate}
\item \textbf{Contraction/expansion of a vertex} Any degree 2 internal vertex not adjacent to a boundary vertex can be deleted and the two adjacent vertices merged (see Figure \ref{fig:move} [left]). With the inverse operation one splits an internal vertex into two vertices and inserts a degree 2 vertex of opposite color assigning unit weight to the new edges. 
\item \textbf{Removal/addition of a boundary--adjacent vertex} Any degree 2 internal vertex adjacent to the boundary may be removed, the boundary vertex changes color and the two edges become a single edge as in Figure \ref{fig:move} [center]. With the inverse operation one adds a degree 2 vertex in the middle of a boundary--adjacent edge, changes the color of the boundary vertex and assigns unit weight to the new edge.
\item \textbf{Square move} It is the transformation shown in Figure \ref{fig:move} [right] and is the only untrivial one since face weights are changed.
\end{enumerate}
The graph in Figure \ref{fig:strand_Gr26} [left] is equivalent to the Le--graph in Figure \ref{fig:Le_graph} via the contraction of bivalent vertices.
The corresponding formulas for the same moves on directed bipartite graphs may be easily obtained using (\ref{eq:edge_weights}).
We remark that by repeated expansions one may always arrive to graphs with vertex degrees no more than 3. 

\smallskip

An alternative characterization of the boundary measurement map using almost perfect matchings on bipartite graphs is provided in \cite{Lam2} where its equivalence with the boundary measurement map is proven using the characterization of the latter in terms of flows on directed graphs by Talaska \cite{Tal2} and the bijection between flows and almost perfect matchings proven in \cite{PSW}. 
 
\begin{theorem} \cite{Lam2} \label{theo:dimer_par}
Let $\{ t_e \}_{e\in \mathcal E}$ be the edge weights on the undirected bipartite graph $\mathcal G$.
Then each matching $M\subset \mathcal E$ defines a monomial 
\[
t^M = \prod_{e\in M} t_e.
\]
For any $k$--element subset $I\subset [n]$ define the partition function $D_I$ as the sum of the monomials for the matchings with boundary $I$:
\begin{equation}\label{eq:dimer_part}
D_I = \sum_{M \; : \; \partial M = I} t^M.
\end{equation}
By definition $D_I > 0$ if $I\in \mathcal M(\mathcal G)$, and zero otherwise. In particular, if $t_e=1$, for all edges $e$, then $D_I$ is the number of almost perfect matchings $M$ in $\mathcal G$ such that $\partial M =I$, which we denote $\Delta(\mathcal G, I)$.

Then the collection $\{ D_I\, : \, I \in \mathcal M(\mathcal G)\}$ are the Pl\"ucker coordinates of a point $[A^{dimer}]\in\S$ with $\mathcal M = \mathcal M(\mathcal G)$.  Weight gauge equivalent networks and move--reduction equivalent networks are mapped to the same point in $\S$.

Finally, for any choice of positive face weights on the graph, $[A^{dimer}]$ coincides with $[A^{bmm}]$, the value of Postnikov boundary measurement map.
\end{theorem}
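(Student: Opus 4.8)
The plan is to deduce every assertion from the flow description of the boundary measurement map, using the bijection between almost perfect matchings and flows of \cite{PSW} as the bridge between the dimer partition functions $D_I$ and Postnikov's matrix $A^{bmm}$. First I would fix once and for all a reference perfect orientation $\mathcal{O}_0$ of $\mathcal{G}$ with source set $I_0$, and let $M_0$ be the almost perfect matching it determines, so that $\partial M_0 = I_0 \in \mathcal{M}(\mathcal{G})$. For any $k$-subset $I$, every matching $M$ with $\partial M = I$ produces, via the symmetric difference $M \triangle M_0$, a collection of vertex-disjoint directed paths and cycles in $(\mathcal{G},\mathcal{O}_0)$ (a flow from $I_0$ to $I$ in Talaska's sense), and this assignment is a bijection onto such flows. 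The first task is to verify that it is weight-preserving up to the common factor $t^{M_0}$: applying the edge-weight dictionary \eqref{eq:edge_weights} between undirected and directed networks, the weight of the flow $M \triangle M_0$ equals $t^{M}/t^{M_0}$, whence $D_I = t^{M_0}\cdot \sum_{F : I_0 \to I} wt(F)$, the sum running over flows.

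Next I would invoke Talaska's evaluation of the boundary measurement matrix \cite{Tal2}: for the orientation $\mathcal{O}_0$, the maximal minor $\Delta_I(A^{bmm})$ equals, up to the sign dictated by Postnikov's construction, the weighted sum of flows from $I_0$ to $I$ divided by the weighted sum of conservative flows (loop systems) of the network. The essential point is that the winding signs $(-1)^{\mathrm{Wind}(P)}$ in the formal definition \eqref{eq:bound_meas_map} collapse, after summation, into this subtraction-free flow expression; this is precisely the content of \cite{Tal2}, and it is what makes $\Delta_I(A^{bmm})$ comparable with the manifestly subtraction-free $D_I$. Combining the two steps, both $\Delta_I(A^{bmm})$ and $D_I$ are proportional to $\sum_{F : I_0 \to I} wt(F)$, with proportionality constants independent of $I$. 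Since Plücker coordinates are defined only up to a common nonzero scalar, this yields $[A^{dimer}] = [A^{bmm}]$ as points of the Grassmannian.

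The remaining assertions then follow quickly. Positivity of all $t_e$ makes each $D_I$ a sum of positive monomials, so $D_I > 0$ exactly when some matching with boundary $I$ exists; by the Proposition of \cite{PSW} recalled above this occurs precisely when $I \in \mathcal{M}(\mathcal{G})$, and setting $t_e \equiv 1$ recovers the count $\Delta(\mathcal{G},I)$. Hence $\{D_I\}$ is supported on $\mathcal{M} = \mathcal{M}(\mathcal{G})$ and, being the Plücker coordinates of $[A^{bmm}]\in\S$, places $[A^{dimer}]$ in $\S$. Invariance under the weight gauge \eqref{eq:weight_gauge_und} and under the moves of Figure \ref{fig:move} is inherited directly from the corresponding properties of $[A^{bmm}]$ established in \cite{Pos} once $[A^{dimer}]=[A^{bmm}]$ is known; alternatively it can be checked intrinsically, since a vertex gauge multiplies every $t^M$ by the same power of the scalar, and each move induces the expected rational change in the $D_I$.

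The main obstacle I anticipate is the bookkeeping of signs and normalizations in the middle step: one must confirm that the winding signs in \eqref{eq:bound_meas_map}, together with Postnikov's relative signs $(-1)^{\sigma(i_r,j)}$ in the boundary measurement matrix, reorganize into Talaska's flow sums with no residual sign depending on $I$. Equivalently, one must check that the bijection $M \mapsto M \triangle M_0$ respects not only weights but also the orientation data needed to match flows with the correct minors, independently of the reference choice $\mathcal{O}_0$. Once this sign coherence is secured, the identity $[A^{dimer}] = [A^{bmm}]$ is immediate, and all the listed properties drop out.
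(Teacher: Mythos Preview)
Your proposal is correct and follows precisely the strategy the paper attributes to \cite{Lam2}: the bijection between almost perfect matchings and flows from \cite{PSW}, combined with Talaska's flow formula \cite{Tal2} for the minors of $A^{bmm}$, yields the identification $[A^{dimer}]=[A^{bmm}]$ up to a global scalar. The paper does not supply its own proof of this theorem, citing it as a result of \cite{Lam2}, and your outline (including the honest flag about sign bookkeeping) matches that approach.
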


Pl\"ucker coordinates are redundant coordinates because of Pl\"ucker relations; and there exists a minimal number of $D_{I_l}$, $l\in [g]$, where $g$ equals the dimension of $\S$, such that all other Pl\"ucker coordinates may be expressed as subtraction free rational expressions of the $D_{I_l}$. Such a set of Pl\"ucker coordinates forms a totally positive base in the sense of Fomin and Zelevinsky \cite{FZ}. An explicit totally positive base was constructed in \cite{Tal1} using Le--diagrams, whereas almost perfect matchings on reduced planar bipartite networks are used in \cite{MaSc} (see also \cite{MuSp}) for the same purpose.

\section{Kasteleyn matrices on planar bipartite networks in the disk}\label{sec:kas}

In this Section, given $\mathcal G$, a planar bipartite reduced graph  in the disk representing an irreducible positroid cell $\S\subset \GTNN$, we introduce a Kasteleyn signature on $\mathcal G$: such a signature is an assignment of $\pm 1$ to the edges fulfilling (\ref{eq:sign_cond}).
We then prove that this signature realizes the variant of Kasteleyn theorem in \cite{Sp}: 1)
maximal minors of the $|\mathcal B|\times |\mathcal W|$ Kasteleyn matrix share the same sign and count the number of almost perfect matchings of $\mathcal G$ with given boundary conditions; 2) equivalence classes of weighted Kasteleyn matrices provide a parametrization of $\S$. 
We remark that the transpose of a Kasteleyn matrix is a Kasteleyn matrix of a point in the dual cell $\Sprime$.

\begin{remark}
The reduced property of the graph is a sufficient condition to avoid zero elements in the systems of relations studied throughout the paper. Instead the irreducibility of the positroid cell just simplifies the overall construction.
\end{remark}
\begin{figure}
\centering{\includegraphics[width=0.37\textwidth]{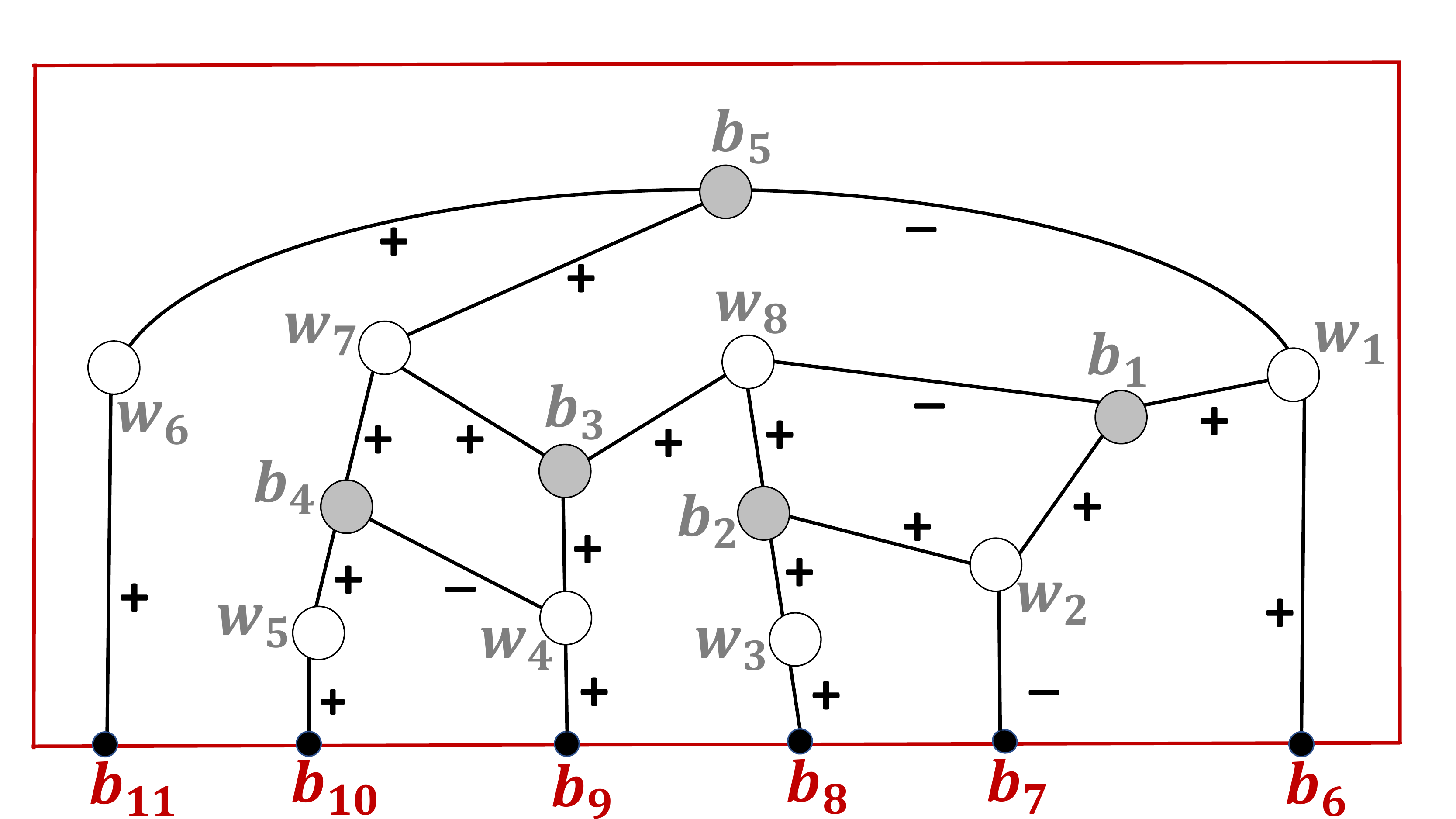}
\vspace{-.3 truecm}
\caption{\small{\sl A Kasteleyn signature for the reduced planar bipartite graph in the disk of Figure \ref{fig:graph}. Labels of the boundary vertices have been changed to be consistent with the notation of Definition \ref{def:kas_mat}.}}\label{fig:kas_sign}}
\end{figure}

\begin{definition}\textbf{Kasteleyn signature on $\mathcal G$}\label{def:kas_sign}
Let $\mathcal G= (\mathcal V=\mathcal B\cup \mathcal W, \mathcal E)$ be a reduced bipartite planar graph in the disk with boundary vertices of equal color as in Definition \ref{def:graph_red}. Assume that $\mathcal M(\mathcal G)$ is irreducible. A function $\sigma : \mathcal E \mapsto \{ \pm 1\}$ is a Kasteleyn signature if, for any finite face $\Omega$, 
\begin{equation}\label{eq:sign_cond}
\sigma(\Omega) = (-1)^{\frac{|\Omega|}{2}+1},
\end{equation}
where $|\Omega|$ denotes the number of edges bounding the face $\Omega$, and $\sigma(\Omega)$ is the total signature  of the face $\Omega$ of $\mathcal G$, that is the product over all edges $e\in \partial \Omega$ of the edge signature $\sigma(e)$:
\begin{equation}\label{eq:sign_face}
\sigma(\Omega) = \prod_{e \in \partial \Omega} \sigma(e).
\end{equation}
\end{definition}

In Figure \ref{fig:kas_sign} we show a Kasteleyn signature for the reduced graph of Figure \ref{fig:graph}.

\begin{remark} The number of edges bounding a finite external face is always even because the graph is bipartite and all boundary vertices share the same color.
\end{remark}

\begin{remark} In \cite{AGPR} it is called Kasteleyn a signature which satisfies (\ref{eq:sign_cond}) at the internal faces and depends also on the number of boundary vertices at the external faces. We compare the properties of the two signatures in Section \ref{sec:AGPR}.
\end{remark}

Next Proposition is the restatement of a classical Lemma by Kasteleyn \cite{Kas2} in the present setting. 

\begin{proposition}\label{prop:kas_sign_exist}
Kasteleyn signatures exist on reduced planar bipartite graphs in the disk with boundary vertices of equal color.
\end{proposition}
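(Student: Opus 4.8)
The plan is to recast (\ref{eq:sign_cond}) as an $\mathbb{F}_2$--linear system and to solve it by a planarity argument. Writing $\sigma(e)=(-1)^{s(e)}$ with $s\colon\mathcal E\to\mathbb{F}_2$, the Kasteleyn condition becomes
\[
\sum_{e\in\partial\Omega}s(e)\;=\;\frac{|\Omega|}{2}+1\pmod 2,\qquad \Omega\ \text{a finite face},
\]
a system that is well posed because $|\Omega|$ is even for every finite face: even for internal faces since $\mathcal G$ is bipartite, and even for finite external faces as recorded in the Remark above. It therefore suffices to prove that the evaluation map $s\mapsto\big(\sum_{e\in\partial\Omega}s(e)\big)_{\Omega}$ is onto $\mathbb{F}_2^{\,g}$ for \emph{arbitrary} right--hand side, i.e. that the edge--boundary vectors $\chi_{\partial\Omega}\in\mathbb{F}_2^{\mathcal E}$ of the finite faces are linearly independent; then the specific right--hand side above, in particular, is attained and a Kasteleyn signature exists.

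To turn the finite faces into honest cycles I would first close up the disk. Let $\hat{\mathcal G}$ be obtained from $\mathcal G$ by adjoining the $n$ boundary arcs $a_i=\overline{b_ib_{i+1}}$ (indices mod $n$) and embedding everything on the sphere, with the complement of the disk as one extra \emph{outer} face. Because every connected component of $\mathcal G$ touches the boundary circle $a_1,\dots,a_n$, the graph $\hat{\mathcal G}$ is connected, so by Euler's formula its $\mathbb{F}_2$ cycle space has dimension $(\#\text{faces})-1$; the full face--boundaries span it subject to the unique relation that their total sum vanishes. Deleting the outer face, the full boundaries $\hat\partial\Omega$ of the faces of $\mathcal G$ inside the disk (the internal, the finite external, and the infinite face) are thus a basis, hence linearly independent in $\mathbb{F}_2^{\hat{\mathcal E}}$.

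It then remains to transfer this independence back to the graph edges. Suppose a set $S$ of finite faces satisfies $\sum_{\Omega\in S}\chi_{\partial\Omega}=0$ in $\mathbb{F}_2^{\mathcal E}$. Lifting to $\hat{\mathcal G}$, the cycle $z=\sum_{\Omega\in S}\hat\partial\Omega$ restricts to $0$ on $\mathcal E$, so it is supported on the arcs alone; being a cycle and the arcs forming the single circle $b_1\cdots b_n$, it is either $0$ or the whole circle $\sum_i a_i$. If $z=0$, independence of the $\hat\partial\Omega$ forces $S=\emptyset$. If $z=\sum_i a_i=\hat\partial(\text{outer})$, then $\sum_{\Omega\in S\cup\{\text{outer}\}}\hat\partial\Omega=0$, and the only relation on the sphere is the total one, so $S\cup\{\text{outer}\}$ would have to be the set of \emph{all} faces; this forces $S$ to contain the infinite face, contradicting $S\subseteq\{\text{finite faces}\}$. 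Hence $S=\emptyset$, the vectors $\chi_{\partial\Omega}$ are independent, and the system is solvable.

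The step I expect to be delicate is exactly this last transfer: independence of the finite--face boundaries must survive the projection that forgets the arc coordinates. It does so precisely because the infinite face is omitted from the prescribed conditions — that omission is what prevents the arc--supported cycle $z$ from being the full boundary circle, and hence what makes the Kasteleyn condition solvable with the infinite face left free to absorb the global parity.
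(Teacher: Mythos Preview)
Your proof is correct and takes a genuinely different route from the paper's. The paper proceeds constructively: it first builds an explicit Kasteleyn signature on the Le--graph by assigning $+1$ to all horizontal edges and then fixing the vertical edge signs one face at a time (processing the Le--diagram bottom-up and right-to-left), and for any other reduced graph it transports the signature along a sequence of Postnikov moves (contraction/expansion, square move, boundary vertex addition/removal), giving explicit transformation rules for each move. Your argument instead recasts existence as solvability of an $\mathbb{F}_2$--linear system and settles it by a topological dimension count via the sphere closure $\hat{\mathcal G}$ --- essentially the classical Kasteleyn existence argument adapted to the disk, with the key observation that omitting the infinite face from the constraints is precisely what makes the system consistent.

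The trade-offs: your approach is shorter, self-contained, and does not rely on Postnikov's move classification of reduced graphs; in fact it works for any planar graph in the disk whose components all meet the boundary. The paper's approach, though heavier, is explicitly constructive and --- more importantly for its later purposes --- simultaneously establishes how Kasteleyn signatures transform under moves (Figure~\ref{fig:move_1} and formula~(\ref{eq:kas_sign_transf})), facts that are used again in Lemma~\ref{lem:kas_point}.
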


\begin{proof}
If $\mathcal G$ is the Le--graph, one assigns $\sigma(e)=1$ to any horizontal edge $e$ included those corresponding to the lexicographically minimal base of the positroid $\mathcal M$ represented by the graph. Then there remain exactly $g$ vertical edges, where $g$ is both the number of finite faces of $\mathcal G$ and the dimension of $\S$. Exactly one such vertical edge $e_i$ is the NW boundary of the finite face $\Omega_i$, $i\in [g]$. Then, starting from the last row of the Le--diagram and proceeding from right to left and bottom up, one chooses $\sigma(e_i)=\pm 1$ so that (\ref{eq:sign_cond}) is fulfilled for any finite face.

If the reduced graph $\mathcal G$ is move equivalent to the Le--graph via a finite sequence of contraction/expansions at internal vertices and square moves, one can obtain a Kasteleyn signature on $\mathcal G$ using the transformation of face signatures under the action of the moves illustrated in Figure \ref{fig:move_1}. 

\begin{figure}
\centering{\includegraphics[width=0.35\textwidth]{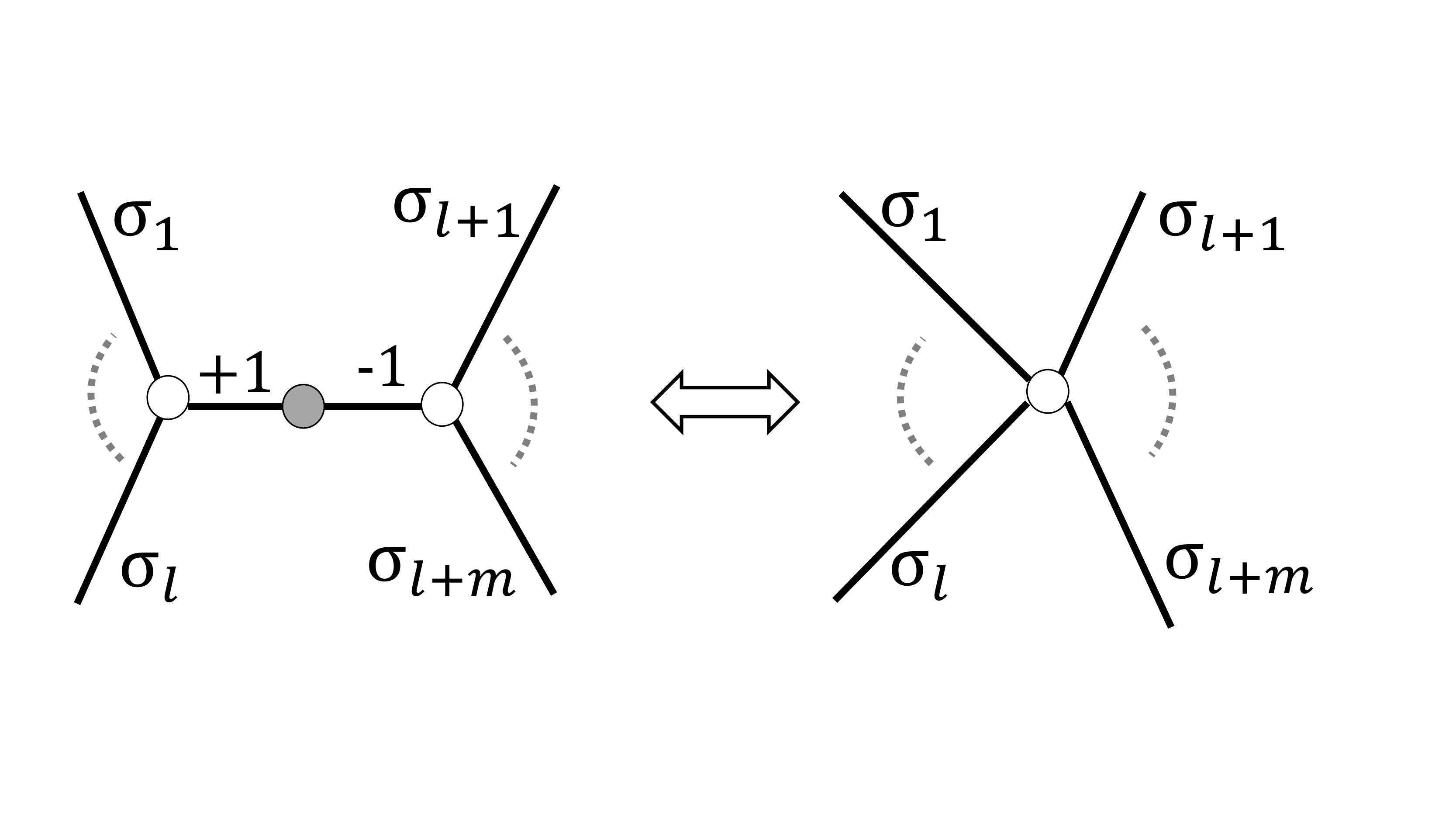}
\hspace{ .5 truecm}
\includegraphics[width=0.35\textwidth]{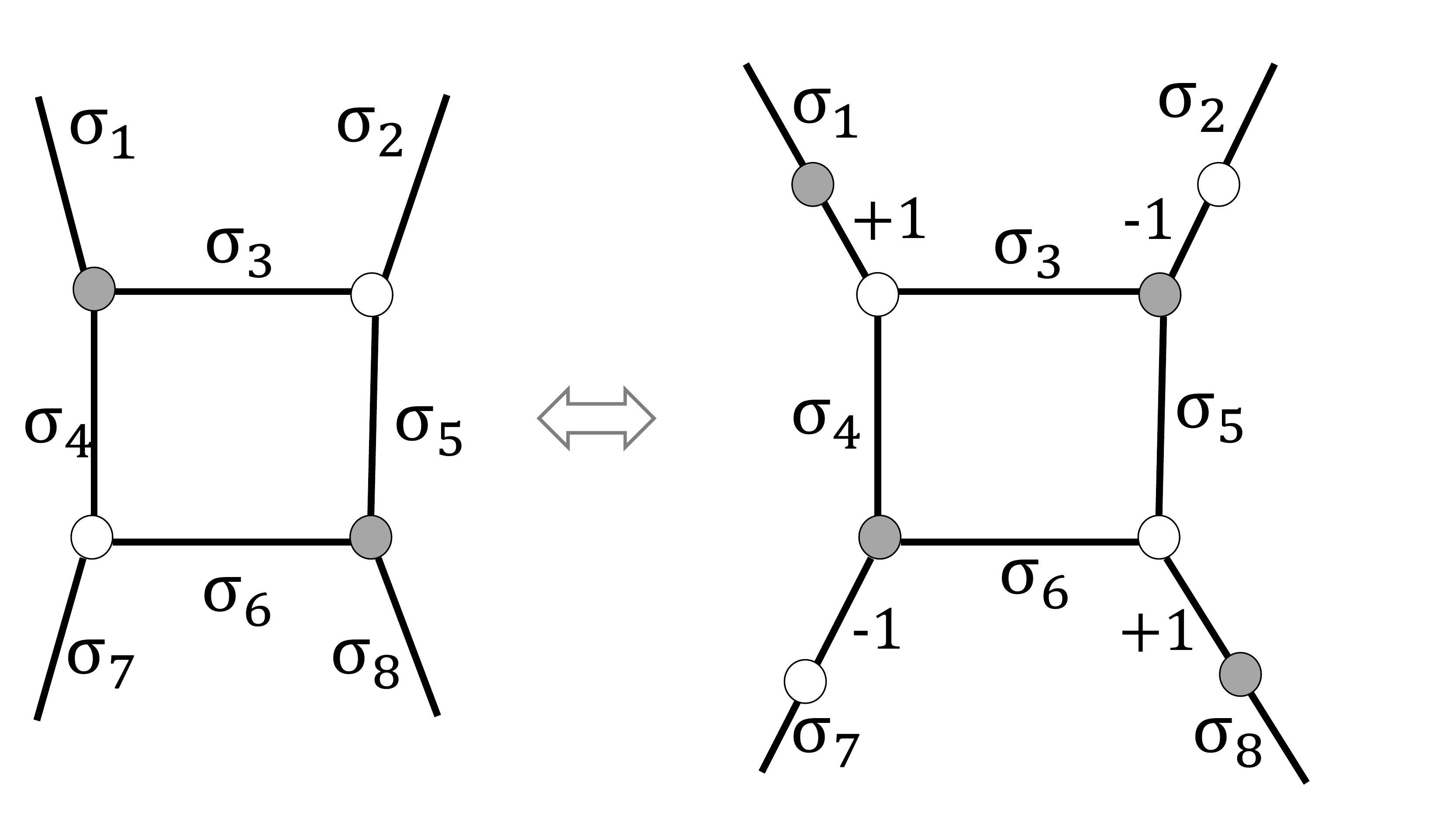}
\vspace{-.3 truecm}
\caption{\small{\sl The transformation of signatures for the contraction/expansion of a degree 2 vertex [left] and for the square move [right].}}\label{fig:move_1}}
\end{figure}

The removal/addition of a boundary--adjacent vertex must involve all boundary vertices in order to keep the even parity property of the external faces. If one adds a boundary-adjacent vertex next to each boundary vertex and call $e_j$ the edge added to the boundary vertex $b_j$, $j\in [n]$. If $\sigma$ is the Kasteleyn signature on the initial graph, then the signature $\sigma^{\prime}$ on the transformed graph such that
\begin{equation}\label{eq:kas_sign_transf}
\sigma^{\prime}(e) =\left\{ \begin{array}{ll} \sigma(e), & \mbox{ if } e \mbox{ is an edge common to both graphs};\\
(-1)^{j-1}, & \mbox{ if } e = e_j, \quad j	\in [n],
\end{array}
\right.
\end{equation}
is Kasteleyn. We illustrate an example of such move in Figures \ref{fig:ex_Gr26} and \ref{fig:example_Gr46_dual} [left].
The explicit transformation in the case of removal of boundary--adjacent vertices follows along similar lines.
\end{proof}

\begin{remark}
In Section \ref{sec:geom_sign} we illustrate an alternative method to construct Kasteleyn signatures using the geometric signatures introduced in \cite{AG4}.
\end{remark}

There is of course not a unique Kasteleyn signature for a given graph $\mathcal G$. As in the case of lattices with periodic boundary conditions \cite{Ken1}, a vertex gauge transformation fully characterizes equivalent signatures on plabic graphs in the disk. 
 
\begin{definition}\textbf{Equivalent Kasteleyn signatures}
Two signatures $\sigma$ and $\tilde \sigma$ on $\mathcal G$ are equivalent if they both satisfy Definition \ref{def:kas_sign}.
\end{definition}

\begin{definition}\textbf{Gauge transformation between Kasteleyn signatures}\label{def:eq_sign}
A function $\alpha : \mathcal V\mapsto \{ \pm 1\}$ is a gauge transformation if it takes the same value at all boundary vertices $b_{j}$:
\begin{equation}\label{eq:equiv_bound}
\alpha (b_j) = \alpha (b_1),  \quad\quad j\in [n].
\end{equation}
\end{definition}

\begin{proposition}\textbf{Equivalent Kasteleyn signatures}\label{prop:equiv_sign} 
Two signatures $\sigma$ and $\tilde \sigma$ on $\mathcal G$ are equivalent if and only if there is a gauge transformation $\alpha : V\mapsto \{ \pm 1\}$ such that at any edge $e=\overline{bw}$,
\begin{equation}\label{eq:equiv_cond}
\tilde 	\sigma (e) = \alpha (b) \sigma (e) \alpha(w).
\end{equation}
\end{proposition}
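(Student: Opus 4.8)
The plan is to prove the two implications separately, working throughout with the auxiliary $\pm1$ function $\rho(e):=\sigma(e)\tilde\sigma(e)$ on $\mathcal E$, so that the asserted relation $\tilde\sigma(e)=\alpha(b)\sigma(e)\alpha(w)$ becomes simply $\rho(\overline{bw})=\alpha(b)\alpha(w)$. For the ``if'' direction I would assume $\tilde\sigma=\alpha\sigma\alpha$ for a gauge transformation $\alpha$. Since this expression is unchanged when $\alpha$ is replaced by $-\alpha$, I may normalise $\alpha(b_1)=1$, and then (\ref{eq:equiv_bound}) forces $\alpha\equiv 1$ on every boundary vertex. For a finite face $\Omega$ I would then compute, writing $b_e,w_e$ for the endpoints of $e$,
\[
\frac{\tilde\sigma(\Omega)}{\sigma(\Omega)}=\prod_{e\in\partial\Omega}\alpha(b_e)\,\alpha(w_e).
\]
Walking once around $\partial\Omega$, each internal vertex on $\partial\Omega$ is incident to exactly two of its edges, so its $\alpha$--value occurs twice and, being $\pm1$, contributes $1$; the only vertices occurring once are the degree--one boundary vertices (present solely when $\Omega$ is external), and these all carry $\alpha=1$. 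Hence $\tilde\sigma(\Omega)=\sigma(\Omega)$ and $\tilde\sigma$ satisfies (\ref{eq:sign_cond}) as well. I would also note, as an alternative to the normalisation, that the number of boundary vertices on a finite external face is even, because each maximal sub-arc of $\partial\Omega$ consisting of graph edges joins two black boundary vertices and hence has even length.

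For the ``only if'' direction I would assume that both $\sigma$ and $\tilde\sigma$ satisfy (\ref{eq:sign_cond}) and multiply the two face conditions, obtaining for every finite face
\[
\prod_{e\in\partial\Omega}\rho(e)=\sigma(\Omega)\,\tilde\sigma(\Omega)=\big((-1)^{\frac{|\Omega|}{2}+1}\big)^2=1 .
\]
Thus $\rho$ has trivial holonomy around each finite face. Because $\mathcal G$ is drawn in a disk, any simple cycle lies in the interior (it cannot pass through a degree--one boundary vertex) and its Jordan interior excludes the infinite face $\Omega_0$, so it bounds a union of finite faces only; consequently the product of $\rho$ along any closed walk is $1$. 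I would then fix $b_1$, set $\alpha(b_1)=1$, and define $\alpha(v)=\prod_{e\in P}\rho(e)$ along any path $P$ from $b_1$ to $v$. The vanishing holonomy makes this well defined --- this is the discrete Poincar\'e lemma over $\mathbb{Z}/2$ on the simply connected disk --- and by construction $\alpha(b)\alpha(w)=\rho(\overline{bw})$, i.e. $\tilde\sigma=\alpha\sigma\alpha$; on a disconnected graph I would repeat the construction on each component with base value $1$.

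The one remaining point, which I expect to be the main obstacle, is to verify that this $\alpha$ obeys the gauge constraint (\ref{eq:equiv_bound}), namely $\alpha(b_j)=1$ for every boundary vertex. The plan here is a telescoping argument: choose a path $P$ from $b_1$ to $b_j$ inside the graph and close it with the boundary arc $\gamma$ running from $b_j$ back to $b_1$ through $b_{j-1},\dots,b_2$, that is the arc avoiding the gap between $b_n$ and $b_1$. The closed curve $P\cup\gamma$ bounds a region $D$ whose faces are all finite, since it excludes $\Omega_0$. Multiplying the identity $\prod_{e\in\partial\Omega}\rho(e)=1$ over all faces $\Omega\subset D$, every graph edge interior to $D$ is shared by two such faces and cancels; in particular the single edge issuing from each boundary vertex $b_i$ with $1<i<j$ runs into the interior of $D$ and cancels, while $\gamma$ contributes no graph edges at all. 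What survives is exactly $\prod_{e\in P}\rho(e)=\alpha(b_j)$, which must therefore equal $1$. This yields $\alpha(b_j)=\alpha(b_1)$ for all $j$ and completes the construction. The delicate steps are thus the well-definedness of $\alpha$ (holonomy control) and, above all, the choice of the enclosing region so that only finite faces, and hence only valid face relations, are used.
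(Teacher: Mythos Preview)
Your proof is correct and follows essentially the same approach as the paper's: the paper merely states that the nontrivial direction ``follows as in \cite{Ken1} identifying the boundary vertices,'' which is precisely the discrete Poincar\'e lemma over $\mathbb{Z}/2$ you carry out. The only difference is cosmetic: where the paper collapses all boundary vertices to a single vertex (so that Kenyon's closed--surface argument applies verbatim and the gauge constraint $\alpha(b_j)=\alpha(b_1)$ becomes tautological), you instead keep the boundary vertices distinct and verify the constraint by your telescoping argument over the region $D$ bounded by $P$ and the finite boundary arc --- this is exactly what ``identifying the boundary vertices'' unpacks to.
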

In one direction the proof is obvious, in the other direction the proof follows as in \cite{Ken1} identifying the boundary vertices.

A Kasteleyn matrix $K^{\sigma}$ is associated to any given Kasteleyn signature $\sigma$. 

\begin{definition}\textbf{Kasteleyn sign matrix}\label{def:kas_mat}
Let $\mathcal G=(\mathcal V=\mathcal B\cup \mathcal W,\mathcal E)$ be a given planar bipartite network in the disk with boundary vertices of equal color. Let $|\mathcal B|$, $|\mathcal W|$ respectively denote the number of black and of white vertices. Let $\sigma$ be a Kasteleyn signature for $\mathcal G$. 
Following \cite{Sp}, we label the white vertices from 1 to $|\mathcal W|$, and the black vertices from 1 to $|\mathcal B|$, so that the boundary vertices share the highest labels of their color and are labeled in increasing order clockwise. Then the $|\mathcal B|\times |\mathcal W|$ Kasteleyn sign matrix $K^{\sigma}$ associated to such data is  
\begin{equation}\label{def:kas_entries}
K^{\sigma}_{ij} = \left\{ \begin{array}{ll}
\sigma (e) & \mbox{ if the edge } e \mbox{ joins } b_i \mbox{ and } w_j,\\
0 & \mbox{ if there is no edge joining } b_i \mbox{ and } w_j.
\end{array}
\right.
\end{equation}
If one assigns positive weights to the edges of $\mathcal G$, $t: \mathcal E\mapsto \mathbb{R}^+$, the weighted Kasteleyn matrix
$K^{\sigma, wt}$, is
\begin{equation}\label{eq:kas_wt_entries}
K^{\sigma,wt}_{ij} = \left\{ \begin{array}{ll}
\sigma (e) t_e & \mbox{ if } e=\overline{b_iw_j},\\
0 & \mbox{ otherwise.}
\end{array}
\right.
\end{equation}
\end{definition}

In the next Theorem we show that $K^{\sigma}$ satisfies the variant of Kasteleyn theorem proven in \cite{Sp}:
the maximal minors of $K^{\sigma}$ indexed by the boundary dimer configurations share the same sign and count the number of almost perfect matchings of $\mathcal G$. To fix ideas we choose the black color for the boundary vertices.

\begin{theorem}\label{theo:kast}\textbf{The number of almost perfect matchings and the minors of $K^{\sigma}$}
Let $\mathcal G$ be a planar bipartite graph in the disk with black boundary vertices representing the positroid cell $\S\subset \GTNN$. Let $N$ be the number of internal black vertices of $\mathcal G$, so that $|\mathcal W|=N+k$. Assume a labeling of vertices such that boundary vertices are labeled clockwise in increasing order $b_{N+1},\dots, b_{N+n}$.
Let $\sigma$ be a Kasteleyn signature on $\mathcal G$ and $K^{\sigma}$ be its Kasteleyn sign matrix. For any $k$ element subset $I$ of $\partial G$, let $K_I$ be the submatrix of $K^{\sigma}$ using all columns, the first $N$ rows and the additional $k$ rows indexed by $I$. 
Then 
\begin{enumerate}
\item For any pair of $k$--element subsets $I,J$ the determinants of the submatrices $K_I, K_J$ share the same sign,
\begin{equation}\label{eq:sign_K_min}
\det (K_I) \, \cdot \, \det (K_J)\, \ge \, 0;
\end{equation}
\item The absolute value of $\det (K_I)$ is $\Delta(\mathcal G, I)$, the number of almost perfect matchings $M$ in $\mathcal G$ such that $\partial M =I$,
\begin{equation}\label{eq:apm_count}
\Delta(\mathcal G, I) = |\det (K_I)|.
\end{equation}
\end{enumerate}
\end{theorem}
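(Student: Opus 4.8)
The plan is to expand each maximal minor $\det(K_I)$ as a signed sum over matchings, to prove a Kasteleyn cycle lemma guaranteeing that within a single minor all terms carry the same sign, and then to propagate sign agreement across distinct minors. First I would write the permutation expansion of $\det(K_I)$. The rows of the square matrix $K_I$ correspond to the $N$ internal black vertices together with the $k$ boundary vertices indexed by $I$, and its columns to all $|\mathcal W|=N+k$ white vertices; a nonzero term of the expansion is a bijection $\pi_M$ between these black and white vertices through edges of $\mathcal G$, i.e. exactly an almost perfect matching $M$ with $\partial M=I$ in the sense of Definition \ref{def:matching}. Since the entries of $K^{\sigma}$ are signs, each such term equals $\mathrm{sgn}(\pi_M)\prod_{e\in M}\sigma(e)=\pm1$, so that
\[
\det(K_I)=\sum_{M:\,\partial M=I}\mathrm{sgn}(\pi_M)\prod_{e\in M}\sigma(e),
\]
and $|\det(K_I)|\le\Delta(\mathcal G,I)$, with equality precisely when all terms share one sign. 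Thus (\ref{eq:apm_count}) and the intra--$I$ part of (\ref{eq:sign_K_min}) reduce to a single sign--coherence statement.

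The technical heart is a Kasteleyn cycle lemma: for any simple cycle $C$ of length $2\ell$ in $\mathcal G$ enclosing $V_{\mathrm{int}}$ vertices one has $\prod_{e\in C}\sigma(e)=(-1)^{\ell+V_{\mathrm{int}}+1}$. I would prove it by observing that, since every edge strictly interior to $C$ bounds exactly two of the finite faces $\Omega_1,\dots,\Omega_r$ enclosed by $C$ while every edge of $C$ bounds exactly one, the product of edge signs around $C$ equals $\prod_{i=1}^r\sigma(\Omega_i)$; substituting the face condition (\ref{eq:sign_cond}) yields the exponent $r+\tfrac12\sum_i|\Omega_i|=r+\ell+E_{\mathrm{int}}$, which by Euler's relation $V_{\mathrm{int}}-E_{\mathrm{int}}+r=1$ for the enclosed disk is congruent to $\ell+V_{\mathrm{int}}+1$ modulo $2$.

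For part (2) I would compare two matchings $M_1,M_2$ with $\partial M_1=\partial M_2=I$. Their symmetric difference is a disjoint union of alternating cycles $C_j$ of lengths $2\ell_j$; each vertex strictly inside $C_j$ is matched by $M_1$ to another interior vertex (cycle vertices are already used along $C_j$ and planarity forbids matching to the exterior), so the number $V_j$ of enclosed vertices is even. The cycle $C_j$ contributes an $\ell_j$--cycle to $\pi_{M_1}^{-1}\pi_{M_2}$, hence a permutation--sign factor $(-1)^{\ell_j-1}$, while the cycle lemma gives a signature factor $(-1)^{\ell_j+V_j+1}$; their product is $(-1)^{2\ell_j+V_j}=(-1)^{V_j}=1$. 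Summing over $j$ shows $M_1$ and $M_2$ give equal signed contributions, establishing sign coherence and hence (\ref{eq:apm_count}).

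Part (1) across different boundaries is the step I expect to be the main obstacle. By connectivity of the base--exchange graph of the positroid $\mathcal M(\mathcal G)$ it suffices to treat $I$ and $J$ with $|I\triangle J|=2$; then for matchings $M_I,M_J$ the symmetric difference consists of alternating cycles (handled as above) together with a single alternating path joining the two boundary vertices of $I\triangle J$. Closing this path along the boundary arc of the disk produces a closed curve to which a boundary version of the cycle lemma applies, now invoking the even parity of the finite external faces and the clockwise labeling $b_{N+1},\dots,b_{N+n}$. The delicate point is to match three separate sign contributions --- the permutation sign of the path, the signature product along it evaluated through the enclosed (possibly external) finite faces, and the sign incurred by selecting the rows indexed by $I$ rather than $J$ in forming $K_I$ and $K_J$ --- and to verify that they cancel uniformly; carrying out this bookkeeping so that the outcome is independent of the particular path chosen is where the argument is most technical.
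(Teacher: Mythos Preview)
Your proposal is correct and follows essentially the same classical Kasteleyn argument as the paper: expand the minor as a signed sum over matchings, compare two matchings with the same boundary via their symmetric difference (a union of alternating cycles), and for distinct boundaries reduce to a single swap and analyze the additional alternating path. The paper is terser---it simply invokes Definition~\ref{def:kas_sign} where you spell out the cycle lemma via Euler's formula---and for the path in part~(1) it identifies the two boundary endpoints directly to form a cycle ``with flat curvature'' rather than closing along the boundary arc as you propose; either closure works, and the paper's version sidesteps some of the external-face bookkeeping you flag as delicate.
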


\begin{proof}
The proof is a straightforward adaptation of the original Kasteleyn Theorem \cite{Kas2, Ken1} to the present setting. First we check (\ref{eq:apm_count}). Let ${\mathcal B}_i$ and ${\mathcal B}_I$ respectively denote the set of internal black vertices of $\mathcal G$ and that of the boundary vertices indexed by $I$. By definition
\[
\det K_I = \sum_h \mbox{sign}\, (h) \prod_{w\in {\mathcal W}} K_{h(w),w},
\]
where the summation is over all bijections $h:\mathcal W \mapsto {\mathcal B}_i \cup {\mathcal B}_I$. Since $K_{h(w),w} \not =0$ if and only if there is an edge joining
$w$ and $h(w)$, the non zero terms contributing to the determinant correspond exactly to the almost perfect matchings $\pi$ using the boundary vertices ${\mathcal B}_I$. Therefore
\[
\det K_I = \sum_{\pi} \mbox{sign}\, (\pi) \prod_{(b,w)\in \pi} K_{b,w} = \sum_{\pi} \mbox{sign}\, (\pi) \prod_{(b,w)\in \pi} \sigma (\overline{bw}).
\]
Now $\mbox{sign}\, (\pi) \prod_{(b,w)\in \pi} \sigma (\overline{bw})$ is the same for all matchings sharing the same boundary vertices. Indeed, if $\pi,\pi^{\prime}$ are two such matchings, then their union is a set of simple cycles and double edges and the statement easily follows using Definition \ref{def:kas_sign}.

Next let us prove (\ref{eq:sign_K_min}). It is sufficient to check the formula in the case $J= I 	\backslash \{ i \} \cup \{j\}$. Let $\pi, \pi^{\prime}$ be two almost perfect matchings respectively for the boundary sets $I$, $J$. Then $\pi \cup \pi^{\prime}$ is a Temperley--Lieb subgraph, that is the union of simple cycles, double edges and a path from $b_{N+i}$ to $b_{N+j}$. If $j=i+ 1$, then the statement follows identifying the two boundary vertices so to obtain a cycle with flat curvature. In the case $j=i+l$, with $l >1 $, then the boundary vertices $b_{N+i+1},\dots b_{N+j-1}$ are either used by both $\pi$ and $\pi^{\prime}$ or not used. Then again identifying $b_{N+i}$ and $b_{N+j}$ we obtain a cycle with flat curvature and the statement follows.
\end{proof}

\begin{remark}\label{rem:lab_pos} Since the sign of the minors changes by exchanging two consecutive rows, in the rest of the paper we assume a labeling of the internal vertices such that $\det (K_I)\ge 0$ for all $k$--element subsets. 
\end{remark}

\begin{remark}\textbf{The point in the totally non--negative Grassmannian associated to $K^{\sigma,wt}$}
Given a Kasteleyn signature $\sigma$ and a positive edge weighting on the graph $\mathcal G$, the maximal minors of the weighted Kasteleyn matrix $K^{\sigma,wt}_I$ are different from zero if and only if $I\in \mathcal M$, where $\mathcal M$ is the positroid of $\S$ represented by $\mathcal G$. Therefore, following \cite{Sp}, 
the minors $K^{\sigma,wt}_{I}$ are the Pl\"ucker coordinates of a point $[A^{\sigma, wt}]\in \S\subset \GTNN$ uniquely identified by the condition that for any $k$--element subset $I\subset [n]$ 
\begin{equation}\label{eq:plu_part}
\det (A^{\sigma, wt}_I )= \det K^{\sigma, wt}_I.
\end{equation}
\end{remark}

\begin{lemma}\label{lem:kas_point}
The point $[A^{\sigma, wt}]$ is the same for gauge equivalent edge weightings on $\mathcal G$. Moreover, if $\sigma^{\prime}$ is another Kasteleyn signature in the equivalence class for $\mathcal G$ and the face weights are kept fixed, then $[A^{{\sigma^\prime}, wt}] = [A^{\sigma, wt}]$, that is for a given graph it is a function of the face weights $f$. Finally if the networks $({\mathcal G}^{\prime}, f^{\prime})$ and $(\mathcal G, f)$ are move equivalent, then for any Kasteleyn signature $\sigma$ on $\mathcal G$ and $\sigma^{\prime}$ on ${\mathcal G}^{\prime}$, the points in the Grassmannian constructed  using the respective Kasteleyn matrices coincide:  $[A^{{\sigma^\prime}, wt}({\mathcal G}^{\prime}, f^{\prime})] =[A^{\sigma, wt}({\mathcal G}, f)]$.
Therefore given the (move equivalence class of the) network $(\mathcal G, f)$, there exists a unique point in $\Pi_{\mathcal M}$ which we denote 
\[
[A^{kas}]\equiv [A^{\sigma, wt}],
\] 
such that the Pl\"ucker coordinates of its representative matrix $A^{\sigma, wt}$ are defined in (\ref{eq:plu_part}).
\end{lemma}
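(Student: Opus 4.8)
The plan is to reduce the whole statement to the single identification $[A^{\sigma,wt}]=[A^{dimer}]$, where $[A^{dimer}]$ is the point produced by the dimer partition function in Theorem \ref{theo:dimer_par}. Once this is in hand, the three invariance assertions (gauge, signature, move) together with the final uniqueness claim follow immediately by inheriting the corresponding properties of $[A^{dimer}]$ already established in Theorem \ref{theo:dimer_par}.

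First I would prove the key identity by repeating the determinant expansion in the proof of Theorem \ref{theo:kast}, now retaining the edge weights. Expanding $\det K^{\sigma,wt}_I$ over bijections $h:\mathcal W\to \mathcal B_i\cup\mathcal B_I$ exactly as before, the only nonzero terms correspond to almost perfect matchings $\pi$ with $\partial\pi=I$, and one obtains
\[
\det K^{\sigma,wt}_I = \sum_{\pi:\partial\pi=I}\mathrm{sign}(\pi)\prod_{(b,w)\in\pi}\sigma(\overline{bw})\,t_{\overline{bw}}.
\]
By the Kasteleyn argument of Theorem \ref{theo:kast}, the quantity $\mathrm{sign}(\pi)\prod_{(b,w)\in\pi}\sigma(\overline{bw})$ is constant over all matchings with the same boundary $I$; call it $\epsilon_I\in\{\pm1\}$. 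Hence $\det K^{\sigma,wt}_I=\epsilon_I D_I$ with $D_I$ the dimer partition function of Theorem \ref{theo:dimer_par}. Evaluating at unit weights, where $D_I=\Delta(\mathcal G,I)>0$ for $I\in\mathcal M$, part (1) of Theorem \ref{theo:kast} forces all the $\epsilon_I$ with $I\in\mathcal M$ to coincide, while for $I\notin\mathcal M$ both $\det K^{\sigma,wt}_I$ and $D_I$ vanish. Thus there is a single sign $\epsilon$ with $\det K^{\sigma,wt}_I=\epsilon D_I$ for every $I$ (and $\epsilon=+1$ under the labeling convention of Remark \ref{rem:lab_pos}). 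Since the Pl\"ucker coordinates of $[A^{\sigma,wt}]$ are by definition the $\det K^{\sigma,wt}_I$, see (\ref{eq:plu_part}), and these differ from the $D_I$ only by a global sign, the two collections define the same point of $Gr(k,n)$, establishing $[A^{\sigma,wt}]=[A^{dimer}]$.

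I would then read off the three claims. Gauge invariance of the edge weights is inherited directly, since Theorem \ref{theo:dimer_par} states that weight--gauge equivalent networks map to the same $[A^{dimer}]$. For independence of the signature at fixed face weights I note that $D_I$ does not involve $\sigma$ at all; applying the key identity to a second Kasteleyn signature $\sigma'$ gives $\det K^{\sigma',wt}_I=\epsilon'D_I$ with a uniform sign $\epsilon'$, so $[A^{\sigma',wt}]$ and $[A^{\sigma,wt}]$ both equal $[A^{dimer}]$ and hence coincide. Move invariance follows the same way: Theorem \ref{theo:dimer_par} asserts that move--reduction equivalent networks share the same $[A^{dimer}]$, and since $[A^{\sigma,wt}(\mathcal G,f)]=[A^{dimer}(\mathcal G,f)]$ and $[A^{\sigma',wt}(\mathcal G',f')]=[A^{dimer}(\mathcal G',f')]$, the two points agree. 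The existence of a well--defined $[A^{kas}]$ depending only on the move class of $(\mathcal G,f)$ is then a restatement of these three invariances, and its membership in $\Pi_{\mathcal M}$ is clear since all the vanishing Pl\"ucker coordinates are those with $I\notin\mathcal M$.

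The only genuinely new computation is the first step, and its single delicate point is checking that the weighting does not spoil the uniform--sign property: one must be sure that $\mathrm{sign}(\pi)\prod_{(b,w)\in\pi}\sigma(\overline{bw})$ is really independent of $\pi$ within each boundary class, so that the signs factor out globally rather than mixing with the weights. This is precisely the content of the Kasteleyn condition (\ref{eq:sign_cond}) used through the simple--cycle/double--edge decomposition of $\pi\cup\pi'$ in Theorem \ref{theo:kast}; it transfers verbatim to the weighted setting because along double edges the weights $t_{\overline{bw}}$ are common to $\pi$ and $\pi'$ and along the alternating cycles they pair up, so only the signs are compared. Everything else is a formal consequence of Theorem \ref{theo:dimer_par}.
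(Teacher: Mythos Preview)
Your argument is correct, but it takes a different route from the paper. The paper's own proof is a direct one-line verification: the weight gauge (\ref{eq:weight_gauge_und}) multiplies a single row or column of $K^{\sigma,wt}$ by $c$, hence all maximal minors by the same factor; the signature gauge of Proposition \ref{prop:equiv_sign}, $\tilde\sigma(\overline{bw})=\alpha(b)\sigma(\overline{bw})\alpha(w)$, likewise multiplies rows and columns by $\pm1$; and the moves act on signatures according to the explicit rules of Figure \ref{fig:move_1} and (\ref{eq:kas_sign_transf}), so one checks directly that the minors transform compatibly. No appeal to $[A^{dimer}]$ is needed.

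Your approach instead first establishes $\det K^{\sigma,wt}_I=\epsilon\,D_I$ with a uniform sign $\epsilon$, which is essentially Speyer's proof of Theorem \ref{cor:equiv_par}, and then inherits all invariances from the known properties of $[A^{dimer}]$ in Theorem \ref{theo:dimer_par}. This is perfectly valid and arguably more conceptual, but it overshoots: you are effectively proving the stronger statement $[A^{kas}]=[A^{dimer}]$ as a lemma on the way to Lemma \ref{lem:kas_point}, whereas in the paper's logical order Lemma \ref{lem:kas_point} merely guarantees that $[A^{kas}]$ is well defined, and the identification with $[A^{dimer}]$ comes afterwards as the separate Theorem \ref{cor:equiv_par}. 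Your final paragraph worrying about whether the weights spoil the uniform-sign property is unnecessary: the comparison of $\mathrm{sign}(\pi)\prod\sigma$ between two matchings $\pi,\pi'$ is a purely combinatorial statement about signs and does not see the weights at all, so it carries over verbatim from Theorem \ref{theo:kast}.
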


The proof of the above Lemma is trivial taking into account the action of (\ref{eq:weight_gauge_und}) on the edge weights, the characterization of gauge equivalent signatures in Proposition \ref{prop:equiv_sign} and the action of the moves on signatures. 

Given a reduced planar bipartite graph in the disk $\mathcal G$, so far we have illustrated three natural parametrizations of the positroid cell $\S$, where $\mathcal M= \mathcal M(\mathcal G)$. Indeed, the following points in $\S$ are assigned to the move equivalence class of the network $(\mathcal G, f)$:
\begin{enumerate}
\item The point $[A^{bmm}]$ is obtained choosing a perfect orientation on the graph and computing Postnikov boundary measurement map \cite{Pos} (see formula (\ref{eq:bound_meas_map}));
\item The Pl\"ucker coordinates $D_I$ of the point $[A^{dimer}]$ are the weighted matchings with boundary $\partial M= I$ in (\ref{eq:dimer_part}): $D_I = \sum_{M \; : \; \partial M = I} \prod_{e\in M} t_e$ \cite{Lam2};
\item The Pl\"ucker coordinates of the point $[A^{kas}]$ are the minors of a weighted Kasteleyn sign matrix (see (\ref{eq:plu_part})).
\end{enumerate}

In the previous Section we have recalled that $[A^{bmm}] = [A^{dimer}]$ \cite{Lam2}. In \cite{Sp} it is proven that, if $\sigma$ is a signature such that Theorem \ref{theo:kast} holds and $t_{b,w}$ is a positive edge weighting on the graph, then $[A^{kas}]=[A^{dimer}]$.
Therefore the three parametrizations of $\S$ coincide. 

\begin{theorem} \textbf{Parametrization of positroid cells via Kasteleyn weighted matrices}\label{cor:equiv_par} \cite{Sp}
Let $\S\subset \GTNN$ be given and let $\mathcal G$ be a reduced planar bipartite graph with boundary vertices of equal color representing $\S$. Let $f: {\mathcal G}^* \to \mathbb{R}^+$ be a positive face weighting of $\mathcal G$. Let $\mathcal N =(	\mathcal G, f)$ be the corresponding network and let $[A^{bmm}]=[A^{dimer}]\in \S$ be as above.
Let $\sigma$ be a Kasteleyn signature for $\mathcal G$, and let $K^{\sigma, wt}$ be a weighted Kasteleyn matrix representing $\mathcal N$. Let $A^{kas}$ be such that for any $k$--element subset $I	\subset [n]$
\begin{equation}\label{eq:det_A_I}
\det  (A^{kas})_I  = \det K^{\sigma, wt}_I.
\end{equation}
Then 
\begin{equation}\label{eq:A}
[A^{kas}] = [A^{bmm}]=[A^{dimer}].
\end{equation}
\end{theorem}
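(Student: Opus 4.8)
The plan is to reduce the three-way equality to a single comparison. The equality $[A^{bmm}]=[A^{dimer}]$ is already supplied by Theorem \ref{theo:dimer_par}, so it suffices to establish $[A^{kas}]=[A^{dimer}]$. Since a point of the Grassmannian is determined by its Pl\"ucker vector up to a global nonzero scalar, and by (\ref{eq:det_A_I}) the Pl\"ucker coordinates of $[A^{kas}]$ are the maximal minors $\det K^{\sigma,wt}_I$ while those of $[A^{dimer}]$ are the partition functions $D_I$ of (\ref{eq:dimer_part}), it is enough to show that, as $I$ ranges over the $k$--element subsets, the two families $\{\det K^{\sigma,wt}_I\}$ and $\{D_I\}$ differ only by a single common factor independent of $I$.

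First I would expand each weighted minor exactly as in the proof of Theorem \ref{theo:kast}, but now retaining the positive edge weights. Writing the Leibniz expansion of $\det K^{\sigma,wt}_I$ and discarding the vanishing terms, only the almost perfect matchings $\pi$ with $\partial\pi=I$ survive, so that
\[
\det K^{\sigma,wt}_I=\sum_{\pi\,:\,\partial\pi=I}\mathrm{sign}(\pi)\!\!\prod_{(b,w)\in\pi}\!\!\sigma(\overline{bw})\,t_{bw}.
\]
The key input, already proved in Theorem \ref{theo:kast}, is that the prefactor $\mathrm{sign}(\pi)\prod_{(b,w)\in\pi}\sigma(\overline{bw})$ takes one and the same value $\varepsilon_I\in\{\pm1\}$ over all matchings sharing the boundary $I$; this is where the defining condition (\ref{eq:sign_cond}) on the face signature enters, through the cycle/double--edge decomposition of $\pi\cup\pi'$. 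Factoring this constant out yields $\det K^{\sigma,wt}_I=\varepsilon_I\,D_I$.

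It then remains to see that the signs $\varepsilon_I$ do not depend on $I$. This is exactly the content of part (1) of Theorem \ref{theo:kast}: the unweighted minors $\det K_I$ all share the same sign, and with the labeling fixed in Remark \ref{rem:lab_pos} that common sign is nonnegative, forcing $\varepsilon_I=+1$ for every $I\in\mathcal M$ (while both sides vanish for $I\notin\mathcal M$). Hence $\det K^{\sigma,wt}_I=D_I$ for all $I$, the two Pl\"ucker vectors coincide exactly, and therefore $[A^{kas}]=[A^{dimer}]$; combined with Theorem \ref{theo:dimer_par} this gives (\ref{eq:A}). The only genuinely delicate point is this last step---the $I$--independence of the matching sign---but it is supplied verbatim by the sign--consistency statements of Theorem \ref{theo:kast}, so beyond that the argument is only bookkeeping about which scalar relates the two families of minors.
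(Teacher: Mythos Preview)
Your argument is correct and is essentially the approach the paper attributes to \cite{Sp}: the paper does not spell out a proof but simply records (just before the Theorem) that $[A^{bmm}]=[A^{dimer}]$ by Theorem~\ref{theo:dimer_par} and that $[A^{kas}]=[A^{dimer}]$ by Speyer, and your weighted Leibniz expansion combined with the sign--consistency of Theorem~\ref{theo:kast} is exactly that argument made explicit. The only difference worth noting is that the paper also advertises an alternative route (see the Remark following the Theorem and Corollary~\ref{cor:pos_sp_equiv}) via the identification of Kasteleyn signatures with geometric signatures, which yields $[A^{kas}]=[A^{bmm}]$ directly through Theorems~\ref{theo:main} and~\ref{theo:kas_geo_sys}; your approach is more elementary and self--contained, while the geometric--signature route has the advantage of simultaneously providing explicit flow formulas for the solution of the Kasteleyn system of relations.
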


In the following, we simplify notations to $[A]= [A^{kas}] = [A^{bmm}] = [A^{dimer}]$. 

\begin{remark}\textbf{Alternative proof of Theorem \ref{cor:equiv_par} using geometric signatures}
In \cite{AG4} (see also Section \ref{sec:geom_sign}), geometric signatures on directed plabic graphs are introduced and it is proven that they induce Postnikov boundary measurement map for any choice of positive face weights. In Theorem \ref{theo:main} we explain the relation between geometric and Kasteleyn signatures. Therefore Theorem \ref{cor:equiv_par} follows also from the relation between these two signatures (see Corollary \ref{cor:pos_sp_equiv}).
\end{remark}

If one chooses $\mathcal G$ with black boundary vertices, it is easy to reconstruct a representative $k\times n$ matrix of $[A]$ starting from $(K^{\sigma, wt})^T$, the transpose of $K^{\sigma, wt}$ \cite{Sp}: applying row operations one can transform $(K^{\sigma, wt})^T$ into a matrix in block form
\begin{equation}\label{eq:mat_A}
\renewcommand{\arraystretch}{1.5}
  \begin{blockarray}{ccc}
	& N & n\\
    \begin{block}{c(c|c)}
N\,\, & \,\mbox{Id}_N \, & *\,\, \\
\cline{2-3}		
k\,\, & 0          & \, A \,\,\\
    \end{block}
  \end{blockarray}
\end{equation}
without changing any maximal minor. If $A_I$ denotes the maximal minor using all rows of $A$ and the columns indexed by $I$, then
\[
\det A_I = \det K^{\sigma, wt}_I.
\]

\begin{theorem}\label{theo:speyer} \textbf{The representative matrix in $\S\subset\GTNN$ associated to the weighted Kasteleyn matrix} \cite{Sp}
Let $\mathcal N= (\mathcal G, f)$ be a bipartite network, where the graph $\mathcal G=(\mathcal V,\mathcal E)$ has black boundary vertices and represents the positroid cell $\S\subset \GTNN$, and $f$ are positive face weights. Let $\sigma$ be a Kasteleyn signature on $\mathcal G$ and $K^{\sigma, wt}$ be its Kasteleyn weighted matrix as in (\ref{eq:kas_wt_entries}). 
Then the $k\times n$ matrix $A$ defined in (\ref{eq:mat_A}) represents $[A^{kas}]$.
\end{theorem}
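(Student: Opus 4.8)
The plan is to read off the Pl\"ucker coordinates of $A$ directly from the maximal minors of $(K^{\sigma,wt})^T$ and then to invoke (\ref{eq:plu_part}) together with Theorem \ref{cor:equiv_par}. Write $M=(K^{\sigma,wt})^T$. Since the boundary vertices are black and all white vertices are internal, $M$ is an $(N+k)\times(N+n)$ matrix whose rows are indexed by the white vertices and whose columns are indexed by the black vertices, the first $N$ columns corresponding to the internal black vertices and the last $n$ columns to the boundary vertices $b_{N+1},\dots,b_{N+n}$.

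First I would check that the first $N$ columns of $M$ are linearly independent. The matroid $\mathcal M=\mathcal M(\mathcal G)$ is nonempty, so by Theorem \ref{theo:kast} there is $I\in\mathcal M$ with $\det K^{\sigma,wt}_I\neq 0$; equivalently the square submatrix of $M$ on the columns $\{1,\dots,N\}\cup\{N+i:i\in I\}$ is invertible. Hence the first $N$ columns already have rank $N$, and since $M$ has full row rank $N+k$ (again Theorem \ref{theo:kast}), Gaussian elimination on those columns produces an $L\in GL_{N+k}(\mathbb R)$ with
\[
LM=\begin{pmatrix}\mathrm{Id}_N & B\\ 0 & A\end{pmatrix},
\]
where $B$ is $N\times n$ and $A$ is the $k\times n$ block appearing in (\ref{eq:mat_A}); comparing ranks forces $\mathrm{rank}(A)=k$, so $A$ is a genuine representative.

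The key computation is then immediate. For a $k$--subset $I\subset[n]$, the maximal minor of $LM$ on the column set $\{1,\dots,N\}\cup(N+I)$ equals $\det A_I$ by block--triangularity, while the corresponding minor of $M$ is $\det\big((K^{\sigma,wt}_I)^{T}\big)=\det K^{\sigma,wt}_I$, because that submatrix is exactly the transpose of the $K_I$ of Theorem \ref{theo:kast}. By multiplicativity of the determinant on this fixed column set (the restriction of $LM$ to these columns is $L$ times the corresponding square submatrix of $M$), one gets $\det A_I=\det(L)\,\det K^{\sigma,wt}_I$ for every $I$. Thus the Pl\"ucker vector $(\det A_I)_I$ is proportional to $(\det K^{\sigma,wt}_I)_I$ with the single nonzero factor $\det L$, so by (\ref{eq:plu_part}) the matrix $A$ represents $[A^{kas}]$; Theorem \ref{cor:equiv_par} identifies this with $[A^{bmm}]=[A^{dimer}]$, and Remark \ref{rem:lab_pos} guarantees the minors are non--negative, so that $A$ lies in $\S$.

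The main obstacle is conceptual rather than computational: one must be sure that normalizing the internal--black block to $\mathrm{Id}_N$ is legitimate, i.e. that the internal--black columns of $M$ are independent and may serve as pivots, and that this normalization does not disturb the boundary minors. The passage to $\mathrm{Id}_N$ introduces the overall scalar $\det L$, which is immaterial at the level of the Grassmannian point and is absorbed by the $GL_k^+$ equivalence; accordingly the equality $\det A_I=\det K^{\sigma,wt}_I$ asserted before the statement is to be read projectively, after this common rescaling. Everything else is the routine block--determinant bookkeeping indicated above.
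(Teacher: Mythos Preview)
Your proof is correct and follows exactly the approach sketched in the paragraph preceding the theorem: row-reduce $(K^{\sigma,wt})^T$ to the block form (\ref{eq:mat_A}) and read off the Pl\"ucker coordinates from the block-triangular structure. You are in fact slightly more careful than the paper's exposition, correctly tracking the global scalar $\det L$ and noting that the asserted equality $\det A_I=\det K^{\sigma,wt}_I$ is to be understood projectively (the paper's phrase ``without changing any maximal minor'' glosses over this point).
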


\begin{remark}
In Theorem \ref{theo:sol_kas_sys_2} we provide an alternative way to construct a representative matrix of $[A^{kas}]$ using Kasteleyn system of relations.
\end{remark}

Let us illustrate the construction of $K^{\sigma, wt}$ and of $[A^{kas}]$ for the network shown in Figure \ref{fig:ex_Gr26}.

\begin{figure}
  \centering{\includegraphics[width=0.37\textwidth]{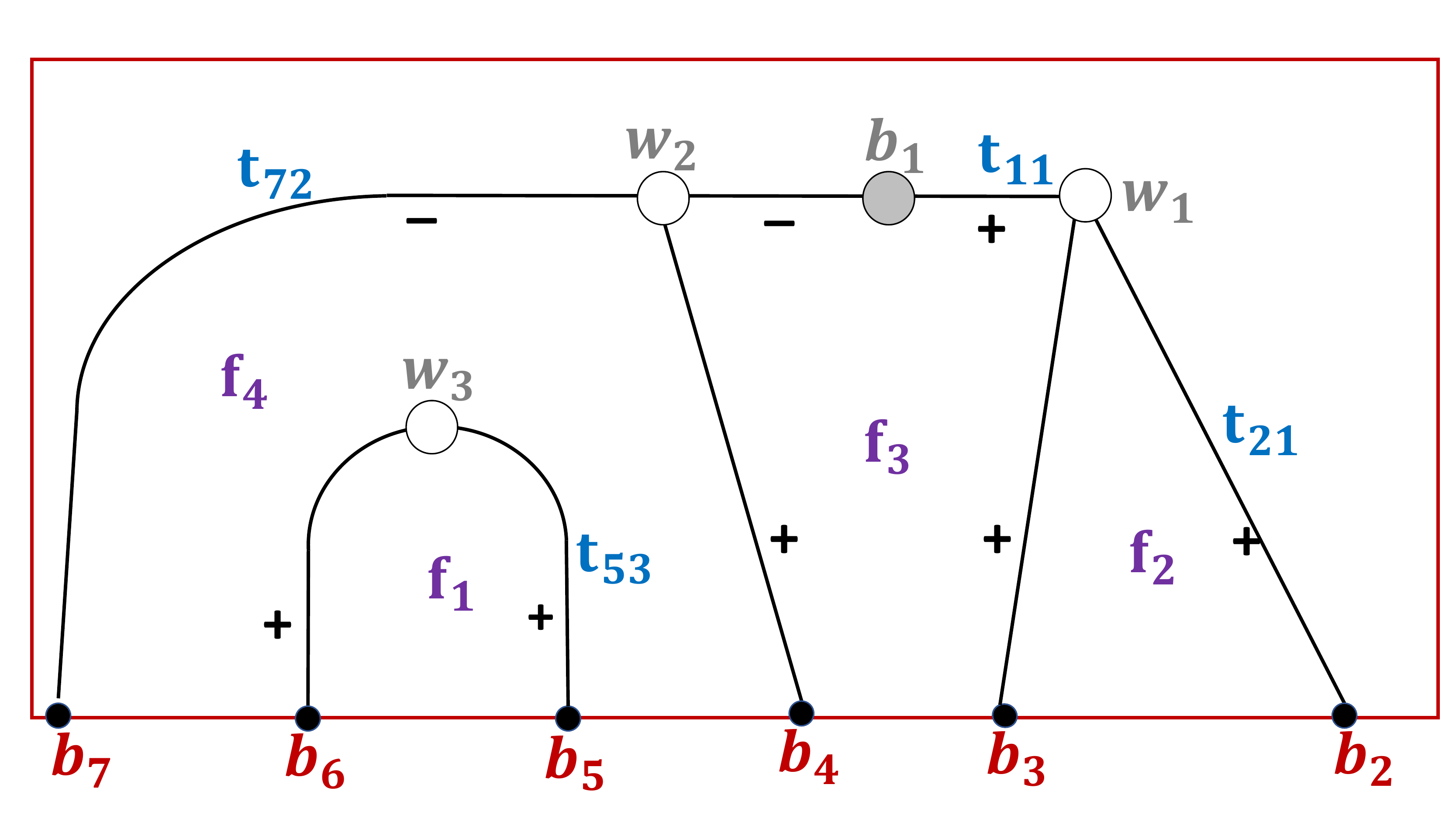}}
	\vspace{-.3 truecm}
  \caption{\small{\sl A choice of Kasteleyn signature and face weights for a network with the reduced graph of Figure \ref{fig:strand_Gr26}. The relation between face and edge weights is as in (\ref{eq:edge_weight_ex}).}\label{fig:ex_Gr26}}
\end{figure}

\begin{example}\label{ex:1}
Let $\mathcal N$ be the reduced bipartite network in Figure \ref{fig:ex_Gr26} representing a point in the positroid cell in $Gr^{\mbox{\tiny TNN}}(2,6)$ of Figure \ref{fig:strand_Gr26} where the face weights $f_1,\dots,f_4$ are assumed positive.
A possible minimal positive edge weight parametrization, $t_{21},t_{11}, t_{53}, t_{72}$, is obtained using (\ref{eq:face_weights})
\begin{equation}\label{eq:edge_weight_ex}
f_1 =t_{53}^{-1}, \quad\quad f_2 = t_{21}^{-1}, \quad\quad f_3 = t_{11}, \quad\quad f_4 = t_{72}t_{53},
\end{equation}
where all other edges carry unit weights.
A Kasteleyn signature satisfying Definition \ref{def:kas_sign} is marked with $\pm$ on the edges of $\mathcal N$ in the Figure. The transpose of the weighted Kasteleyn matrix (\ref{def:kas_mat}) is then
\begin{equation}
(K^{\sigma,wt})^T = \left( 
\begin{array}{ccccccc} 
t_{11} & t_{21} & 1 & 0 & 0 & 0 & 0\\ 
\noalign{\medskip}
-1 & 0 & 0 & 1 & 0 & 0 & -t_{72}\\ 
\noalign{\medskip}
0 & 0 & 0 & 0 & t_{53} & 1 & 0 \end {array} 
\right).
\end{equation}
Upon transforming $(K^{\sigma,wt})^{T}$ into the block form (\ref{eq:mat_A}), 
\begin{equation}\label{eq:ex_Kas_block}
\renewcommand{\arraystretch}{1.5}
  \begin{blockarray}{c|cccccc}
    \begin{block}{(c|cccccc)}
\, 1 & 0 & 0 & -1 & 0 & 0 & t_{72} \\ 
\cline{1-7}		
\, 0 & t_{21} & 1 & t_{11} & 0 & 0& -t_{11}t_{72} \\ 
\, 0 & 0 & 0 & 0 & t_{53} & 1 & 0 \\
    \end{block}
  \end{blockarray}
\end{equation}
the point $[A]\in \S$ is identified by the submatrix $A$ in the SE block; its 
reduced row echelon form is
\begin{equation}\label{eq:A_RREF_ex}
A^{\mbox{\tiny RREF}} = 
 \left( \begin{array}{cccccc} 
1 & t_{21}^{-1} & t_{11} t_{21}^{-1} & 0 & 0 & -t_{11}t_{72}t_{21}^{-1}\\ 
\noalign{\medskip}
0 & 0 & 0 & 1 & t_{53}^{-1} & 0
\end {array}
 \right).
\end{equation}
We remark that, in agreement with Theorem \ref{cor:equiv_par}, $A^{\mbox{\tiny RREF}}$ coincides with the matrix in Postnikov construction \cite{Pos} for the same choice of face weights and the acyclic orientation of $\mathcal G$ with respect to the lexicographically minimal base $\{1, 4\}$.
\end{example}

\smallskip

A natural bijection between dual positroid cells $\S$ and $\Sprime$ (see Definition \ref{def:dual_cell}) is associated to the operation of transposition of Kasteleyn matrices. 

\begin{definition}\textbf{Duality transformation between networks}\label{def:dual_net}
Let $\S\subset \GTNN$ be given and let $\mathcal G$ be a reduced planar bipartite graph with boundary vertices of equal color representing $\S$. Let $f: {\mathcal G}^* \to \mathbb{C}^*$ be a face weighting of $\mathcal G$: for any finite face $\Omega_i$ let $f_i =f(\Omega_i) \in \mathbb{C}^*$ be the face weight of $\Omega_i$. Let $\mathcal N =(	\mathcal G, f)$ be the corresponding network. Then the dual network $\overline{\mathcal N} = (	\overline{\mathcal G}, \bar{f})$ is obtained from $\mathcal N$ by the following transformation:
\begin{enumerate}
\item $\overline{\mathcal G}$ is the dual graph to ${\mathcal G}$ obtained by changing the color of all vertices of ${\mathcal G}$, boundary vertices included;
\item $\bar{f}$ is the reciprocal of the weighting $f$: if $f_i$ is the face weight of $\Omega_i$ in ${\mathcal N}$ then $\bar{f}_i=1/f_i$ is its weight in $\overline{\mathcal N}$.
\end{enumerate}
\end{definition}

In Figure \ref{fig:example_Gr46_dual} we assume that the face weights are real and positive: then the network on the left represents a point in $\S\subset Gr^{\mbox{\tiny TNN}} (2,6)$ where the cell is the same as in Figure \ref{fig:strand_Gr26}, whereas the one on the right represents the dual point in $\Sprime\subset Gr^{\mbox{\tiny TNN}} (4,6)$ for the cell of Figure \ref{fig:strand_Gr26_dual}.
\begin{proposition}\textbf{Weighted Kasteleyn matrices and duality in totally non--negative Grassmannians}\label{prop:dual}
Let $\mathcal G=(\mathcal B\cup \mathcal W, \mathcal E)$ be a reduced planar bipartite graph in the disk with white 
boundary vertices representing the positroid cell $\S\subset \GTNN$. Let $[A]\in \Pi_{\mathcal M} $ be the point represented by $(\mathcal G, f)$.
 
Then  $(\overline{\mathcal G}, 1/f)$ represents a point in $[\bar A]\in\Pi_{\overline{\mathcal M}} \subset Gr (n-k,n)$, where $\Pi_{\overline{\mathcal M}}$ is the dual positroid variety of Definition \ref{def:dual_cell}. In particular, if the face weights $f$ are real and positive, then $[A]\in\S\subset Gr^{\mbox{\tiny TNN}} (k,n)$ and $[\bar A] \in\Sprime\subset Gr^{\mbox{\tiny TNN}} (n-k,n)$, where $\Sprime$ is the dual positroid cell of Definition \ref{def:dual_cell}.

Finally, if $K^{\sigma, wt}$ is a weighted Kasteleyn matrix for the network $(\mathcal G, f)$ and the Kasteleyn signature $\sigma$, then its transpose, $(K^{\sigma, wt})^T$ is a Kasteleyn matrix for the dual network $(\overline{\mathcal G}, 1/f)$. In this case, the transformation of  $K^{\sigma, wt}$ into the block form 
\begin{equation}\label{eq:mat_A_white}
 \renewcommand{\arraystretch}{1.5}
 \begin{blockarray}{ccc}
	& N & n\\
    \begin{block}{c(c|c)}
 N\,\, & \,\mbox{Id}_N \, & * \\
\cline{2-3}		
n-k\,\, & 0          & \, \bar A \,\\
    \end{block}
  \end{blockarray}
\end{equation}
provides a representative matrix $\bar A$ of $[\bar A]$.
\end{proposition}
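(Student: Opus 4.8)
The plan is to realize the dual network as the \emph{same} weighted graph with interchanged vertex colors, and then to deduce every assertion from Theorem \ref{theo:kast} and Theorem \ref{theo:speyer} applied to $\overline{\mathcal G}$. First I would fix edge weights $t_e$ on $\mathcal G$ inducing $f$ through (\ref{eq:face_weights}) and assign the \emph{identical} weights $t_e$ to the edges of $\overline{\mathcal G}$. Two elementary observations drive the argument. Since $\overline{\mathcal G}$ has exactly the same faces as $\mathcal G$, the number $|\Omega|$ is unchanged for every finite face $\Omega$; hence $\sigma$ automatically satisfies (\ref{eq:sign_cond}) on $\overline{\mathcal G}$ and is again Kasteleyn. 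Moreover, interchanging the two colors exchanges numerator and denominator in the face--weight formula (\ref{eq:face_weights}), so the same edge weights $t_e$ produce the reciprocal face weights $1/f$ on $\overline{\mathcal G}$; thus $(\overline{\mathcal G},t)$ is precisely the dual network $\overline{\mathcal N}=(\overline{\mathcal G},1/f)$ of Definition \ref{def:dual_net}.

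Because recoloring interchanges the roles of $\mathcal B$ and $\mathcal W$ while leaving edges, signs and weights untouched, the weighted Kasteleyn matrix of $(\overline{\mathcal G},t)$ is exactly the transpose $(K^{\sigma, wt})^T$. An almost perfect matching is a property of the edge set alone and does not see the coloring, so the matchings of $\overline{\mathcal G}$ are literally those of $\mathcal G$, while by Definition \ref{def:matching} the boundary is complemented, $\partial_{\overline{\mathcal G}}M=\overline{\partial_{\mathcal G}M}$. Hence the boundaries of matchings of $\overline{\mathcal G}$ form exactly the dual positroid $\overline{\mathcal M}$ of (\ref{eq:dual_matroid}). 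Since $\overline{\mathcal G}$ has black boundary vertices, Theorem \ref{theo:kast} applies verbatim to $(K^{\sigma,wt})^T$: its maximal minors are indexed by $(n-k)$--subsets, the minor on $\bar I$ equals (up to a common global sign) the partition function of the matchings with $\partial_{\overline{\mathcal G}}M=\bar I$, and the family is sign--coherent when the $t_e$ are positive.

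Consequently the minors of $(K^{\sigma,wt})^T$ are nonzero precisely on $\overline{\mathcal M}$, and by Lemma \ref{lem:kas_point} together with Theorem \ref{cor:equiv_par} they are the Pl\"ucker coordinates of a point $[\bar A]$. When $f$ is real positive, $1/f$ is positive too, the sign--coherence of Theorem \ref{theo:kast} holds, and $[\bar A]\in \Sprime\subset Gr^{\mbox{\tiny TNN}}(n-k,n)$; the same reasoning applied directly to $\mathcal G$ gives $[A]\in\S$. For general $f\in\mathbb C^*$ one loses sign--coherence, but the vanishing of a minor on $I\notin\mathcal M$ (resp. on $\bar I\notin\overline{\mathcal M}$) is purely combinatorial, since no matching has that boundary; therefore $[A]\in\Pi_{\mathcal M}$ and $[\bar A]\in\Pi_{\overline{\mathcal M}}$. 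Finally, applying Theorem \ref{theo:speyer} to the black--boundary graph $\overline{\mathcal G}$ — whose Kasteleyn matrix is $(K^{\sigma,wt})^T$ — amounts to row--reducing $K^{\sigma,wt}$ into the block form (\ref{eq:mat_A_white}) without altering any maximal minor, and the bottom--right block $\bar A$ is the desired representative of $[\bar A]$.

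The only genuinely delicate point is the bookkeeping behind the two observations of the first paragraph: checking that the color swap really transposes (\ref{eq:face_weights}) to its reciprocal, and that the labelings of $\mathcal G$ and $\overline{\mathcal G}$ — boundary vertices carrying the highest labels of their color in clockwise order — make the quantity $\mbox{sign}(\pi)\prod_{(b,w)\in\pi}\sigma(\overline{bw})$ match under transposition, so that $\det (K^{\sigma,wt})^T_{\bar I}=\pm\det K^{\sigma,wt}_{I}$ with a single global sign. Once this is verified, the complementation $I\leftrightarrow\bar I$ of Pl\"ucker indices is exactly (\ref{eq:dual_matroid}) and no further computation is needed.
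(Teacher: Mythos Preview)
Your argument is correct and follows essentially the same route as the paper's own proof, which is very terse: it simply asserts that the dual graph represents $\Pi_{\overline{\mathcal M}}$, that positivity is preserved, that transposition relates the two Kasteleyn matrices under the inherited labeling, records the minor identity $\det A_I=\det\bar A_{\bar I}=\det K^{\sigma,wt}_I$, and invokes the block form. You have unpacked each of these steps (why $\sigma$ stays Kasteleyn, why recoloring inverts the face weights, why matching boundaries complement) and flagged the labeling/sign bookkeeping that the paper silently absorbs into Remark~\ref{rem:lab_pos}.
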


\begin{figure}
\centering{\includegraphics[width=0.37\textwidth]{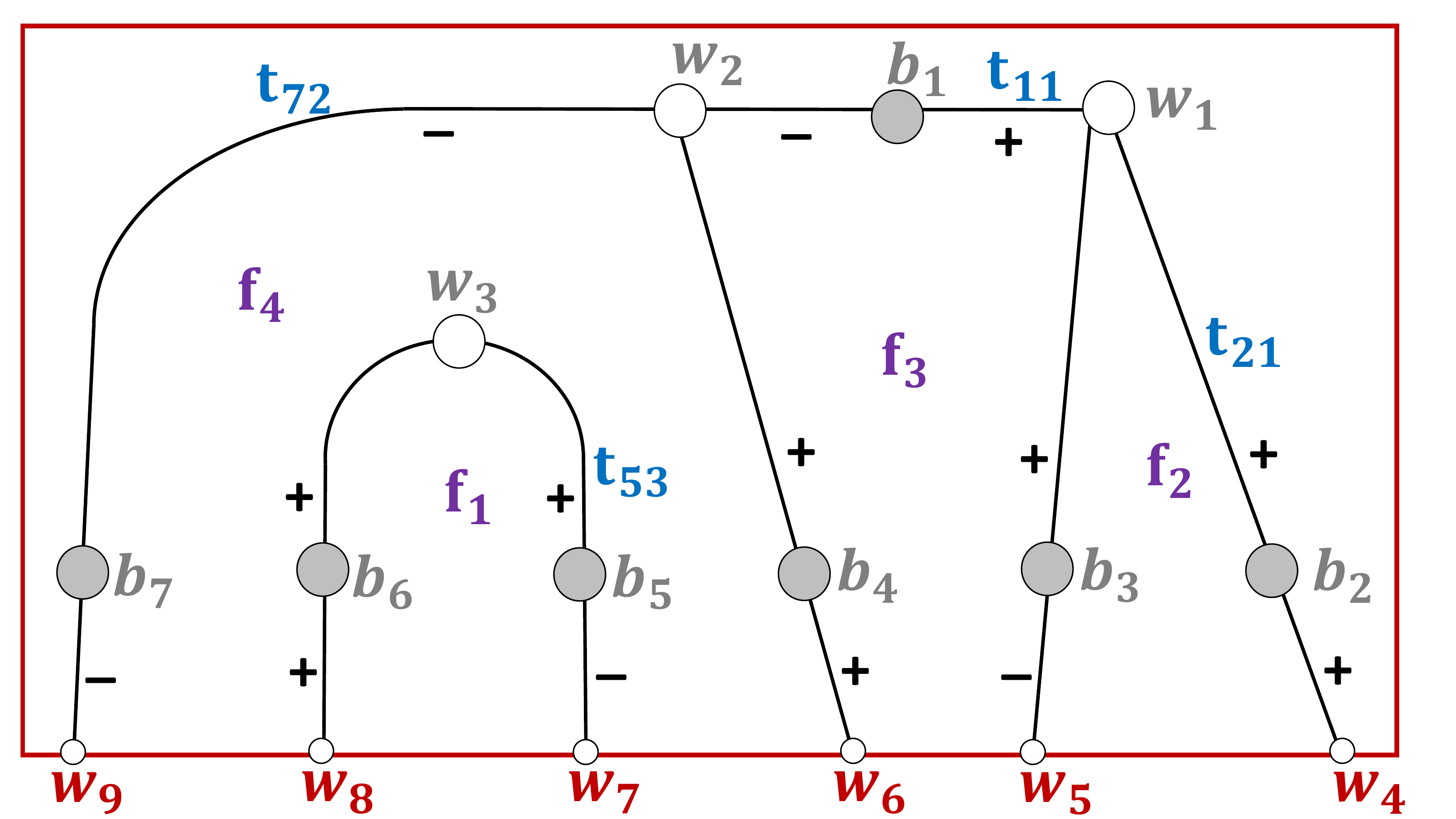}
\hspace{.7 truecm}
\includegraphics[width=0.37\textwidth]{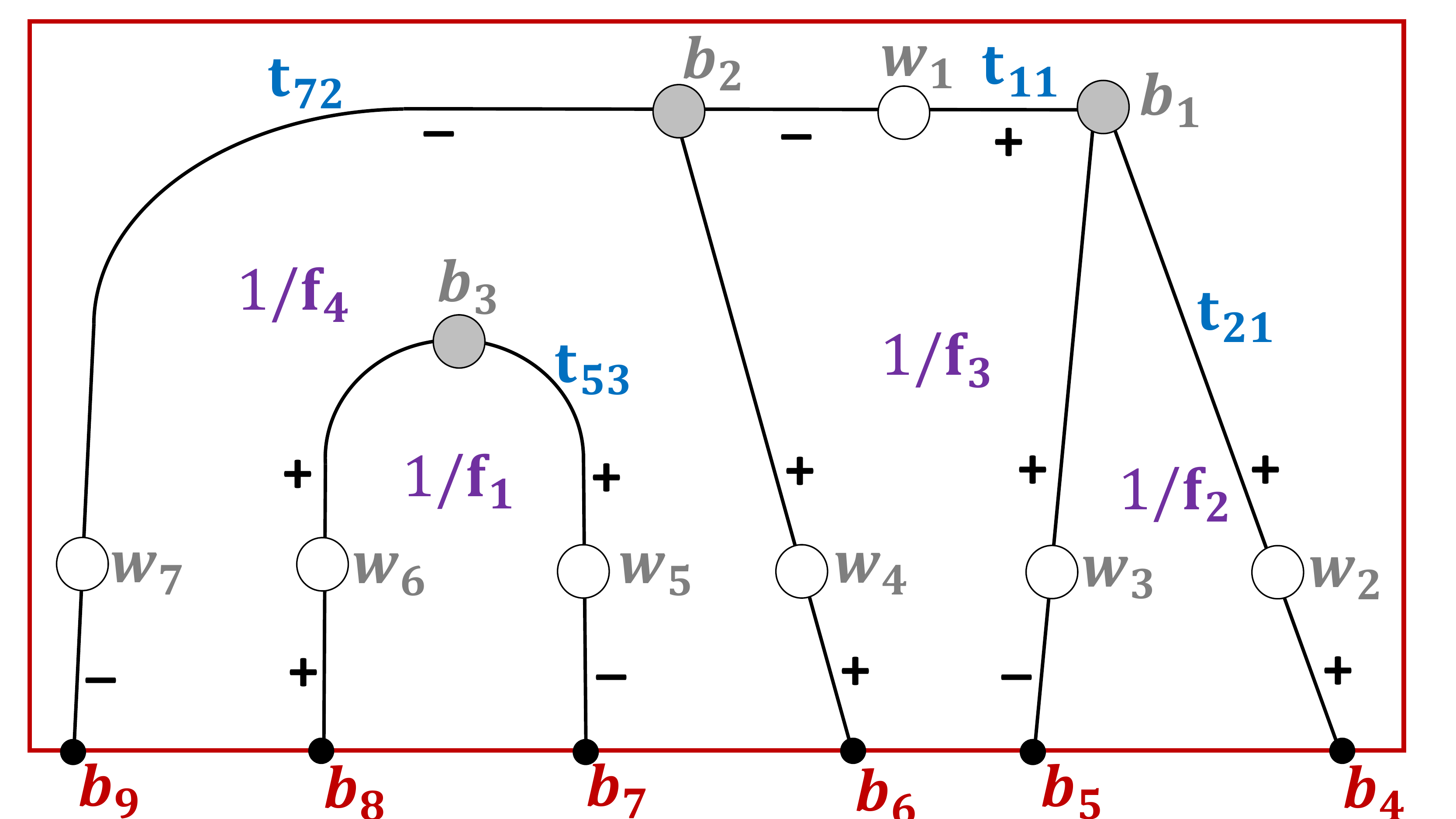}}
\caption{\small{\sl The network on the left represents a point in $Gr^{\mbox{\tiny TNN}}(2,6)$, whereas its dual network on the right represents a point in $Gr^{\mbox{\tiny TNN}}(4,6)$. The Kasteleyn matrix of the dual network is the transpose of the initial one. For the correspondence between Le--diagrams and derangements see Figures \ref{fig:strand_Gr26} and \ref{fig:strand_Gr26_dual}.}\label{fig:example_Gr46_dual}}
\end{figure}

\begin{proof}
The transformed graph represents the dual positroid cell $\Pi_{\overline{\mathcal M}}$, and the face transformation preserves the total non--negativity property. It is also evident that the Kasteleyn weighted matrices of the dual networks are related by transposition (the labeling of the black (respectively white) vertices of $\mathcal G$ becomes the labeling of the white (respectively black) vertices of $\overline{\mathcal G}$). Therefore
\[
\det A_I \; =	\; \det {\bar A}_{\bar I} \; = \; \det ( K^{\sigma, wt}_I),
\]
where $\sigma$ is a Kasteleyn signature for $\mathcal G=(  \mathcal B\cup  \mathcal W, \mathcal E)$, and we assume a labeling of the vertices satisfying Definition \ref{def:kas_mat} and Remark \ref{rem:lab_pos}.
Finally, the transformation of  $K^{\sigma, wt}$ into the block form satisfies (\ref{eq:mat_A_white}).
\end{proof}

This map looks similar to the twist map studied in \cite{MuSp} to relate Postnikov parametrization to the one introduced in \cite{MaSc} using dimer partitions, but this is not so. Indeed in our setting the transformed point $[\bar A]$ belongs to the dual cell $\Sprime\subset Gr^{\mbox{\tiny TNN}} (n-k,n)$, whereas the twist map $\tau$ in \cite{MaSc,MuSp} acts on face weights as in Definition \ref{def:dual_net}, but without changing the graph so that both $[A]$ and $\tau ([A])$ belong to $\S\subset  Gr^{\mbox{\tiny TNN}} (k,n)$.

Let us apply the duality transformation to the Example of the previous Section.

\begin{example}
The network in Figure \ref{fig:example_Gr46_dual}[left] represents the same point in $Gr^{\mbox{\tiny TNN}}(2,6)$ as in Example \ref{ex:1}. As before the correspondence between edge and face weights is
$f_1 =t_{53}^{-1}$, $f_2 = t_{21}^{-1}$, $f_3 = t_{11}$ and $f_4 = t_{72}t_{53}$.
For the labeling of the vertices in the Figure, the transpose of the weighted Kasteleyn matrix (\ref{def:kas_mat}) is 
\begin{equation}
(K^{\sigma,wt})^T = \left( 
\begin{array}{ccccccccc} 
t_{11}  & -1 & 0 & 0 & 0 & 0 & 0 & 0 & 0 \\ 
\noalign{\medskip}
t_{21} & 0 & 0 & 1 & 0 & 0 & 0 & 0 & 0\\ 
\noalign{\medskip}
1 & 0 & 0 & 0 & -1 & 0 & 0 & 0 & 0 \\ 
\noalign{\medskip}
0 & 1 & 0 & 0 & 0 & 1 & 0 & 0 & 0\\ 
\noalign{\medskip}
0 & 0 &t_{53} & 0 & 0 & 0 & -1 & 0 & 0 \\
\noalign{\medskip}
0 & 0 & 1 & 0 & 0 & 0 & 0 & 1 & 0 \\ 
\noalign{\medskip}
0 & -t_{72} & 0 & 0 & 0 & 0 & 0 & 0 & -1
\end{array}\right).
\end{equation}
Using the correspondence $\det A_I = \det K^{\sigma,wt}_I$, a representative matrix of $[A]\in \S$ is $A^{\mbox{\tiny RREF}}$ defined in (\ref{eq:A_RREF_ex}).

The network in Figure \ref{fig:example_Gr46_dual}[right] represents its dual point in $Gr^{\mbox{\tiny TNN}}(4,6)$ and is obtained changing the color of all vertices and inverting the face weights. By construction $(K^{\sigma,wt})^T$ is the Kasteleyn matrix of such dual network. Upon transforming $(K^{\sigma,wt})^{T}$ into the block form, 
\begin{equation}\label{eq:ex_Kas_block_dual}
\renewcommand{\arraystretch}{1.5}
  \begin{blockarray}{ccc|cccccc}
    \begin{block}{(ccc|cccccc)}
1 & 0 & 0 & 0 & -1 & 0 & 0 & 0 & 0 \\ 
0 & 1 & 0 & 0 & 0 & 1 & 0 & 0 & 0\\
0 & 0 & 1 & 0 & 0 & 0 & 0 & 1 & 0\\
\cline{1-9}		
0 & 0 & 0 & 0 & t_{11} & 1 & 0 & 0 & 0\\ 
0 & 0 & 0 & 1 & t_{21} & 0 & 0 & 0 & 0\\
0 & 0 & 0 & 0 & 0 & 0 & -1 & -t_{53} & 0\\ 
0 & 0 & 0 & 0 & 0 & t_{72} & 0 & 0 & -1	\\
    \end{block}
  \end{blockarray}
\end{equation}
the point $[\bar A] \in \Sprime \subset Gr^{\mbox{\tiny TNN}} (4,6)$ is identified by the submatrix $\bar A$ in the SE block; its 
reduced row echelon form is
\begin{equation}
\bar{ A}^{\mbox{\tiny RREF}} = 
 \left( \begin{array}{cccccc} 
1 & 0 & 0 & 0 & 0 & -t_{21}(t_{11}t_{72})^{-1} \\ 
\noalign{\medskip}
0 & 1 & 0 & 0 & 0 & (t_{11}t_{72})^{-1} \\ 
\noalign{\medskip}
0 & 0 & 1 & 0 & 0 & -t_{72}^{-1}  \\ 
\noalign{\medskip}
0 & 0 & 0 & 1 & t_{53} & 0 
\end {array}
 \right).
\end{equation}
\end{example}

\section{Kasteleyn systems of relations}\label{sec:kas_rel}

In this Section we characterize Kasteleyn systems of relations on planar bipartite graphs in the disk with black boundary vertices, and discuss their properties for the natural choices $V=\mathbb{C}^{n-k}, \mathbb{R}^n$. Then in Section \ref{sec:comb} we use Kasteleyn system for $V$ the space of polynomials in a finite number of variables to solve a spectral problem in KP theory. If $V=\mathbb{C}^{n-k}$, the solution at the boundary vertices embeds a duality transformation between positroid varieties different from the one of Proposition \ref{prop:dual} since it does not preserve total positivity, whereas in the case $V=\mathbb{R}^{n}$ the solution at the boundary vertices allows to reconstruct the point in the positroid cell $\S$ represented by the network when edge weights are positive. In Section \ref{sec:AGPR} we compare a different variant of Kasteleyn theorem recently proposed in \cite{AGPR} to the present construction.
We use both the representation of systems of relations on bipartite graphs introduced in \cite{AGPR} and that in \cite{Lam2}.

\begin{definition}\label{def:kas_sys_rel_new}\textbf{Kasteleyn system of relations for black boundary vertices}
Let $\mathcal G=(\mathcal B\cup \mathcal W, \mathcal E)$ be a reduced bipartite graph with black boundary vertices and let $\sigma : \mathcal E\mapsto \{ \pm 1\}$ be a Kasteleyn signature satisfying Definition \ref{def:kas_sign}.
For any given edge weighting $t_{bw} : \mathcal E \mapsto \mathbb{C}^*$ we call Kasteleyn the system $(v^{(k)}=\{v_b^{(k)} : b \in \mathcal{B}\}, R_w)$, where:
\begin{enumerate}
\item $v^{(k)}_b$ is an element in the vector space $V$ assigned to the black vertex $b\in 	\mathcal B$;
\item At any given white vertex $w\in \mathcal W$, the variables $v^{(k)}_b$ satisfy the linear relation
\begin{equation}\label{eq:rel_kas_strong}
R_w (v^{(k)}) \, \equiv \, \sum_{b\in \mathcal B} \, \sigma(\overline{bw}) \, t_{bw} \, v^{(k)}_{b} \, \equiv \, \sum_{b\in \mathcal B} \, K^{\sigma,wt}_{bw} \, v^{(k)}_{b} 	\, =	\, 0, 
\end{equation}
where $K^{\sigma,wt}_{bw}$ is the weighted Kasteleyn matrix defined in (\ref{eq:kas_wt_entries}).
\end{enumerate}
\end{definition}

\smallskip

Since the Kasteleyn matrix has full rank $|\mathcal W|$, the kernel of the linear operator $R_w$ has dimension $n-k$. Therefore
a natural choice for the vector space is $V=\mathbb{C}^{n-k}$.  Let us denote $v^{(k)}_{b_i}$ the variables at the boundary vertices ${b_i}$, $i\in [n]$. Then by construction the following statement holds true.

\begin{proposition}\label{prop:kas_sys}
Let $\mathcal G=(\mathcal B\cup \mathcal W, \mathcal E)$ be a reduced bipartite graph with black boundary vertices such that $\mathcal M(\mathcal G)$ is irreducible. Let $\sigma : \mathcal E\mapsto \{ \pm 1\}$ be a Kasteleyn signature on it. 
Then, for any given edge weighting $t_{bw} : \mathcal E \mapsto \mathbb{C}^*$, there exists a choice of $v^{(k)} = \{ v^{(k)}_{b}\in \mathbb{C}^{n-k}, \, b\in \mathcal B\}$, such that
\begin{enumerate}
\item The vectors at the boundary vertices $\{ v^{(k)}_{b_i}, \, i\in [n] \}$ span $\mathbb{C}^{n-k}$;
\item The system of vectors solves the linear system at the white vertices: $R_w(v^{(k)}) =0$, $w\in \mathcal W$.
\end{enumerate}
\end{proposition}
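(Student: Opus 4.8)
The plan is to reduce the statement to a dimension count performed on the representative matrix $A$ of Theorem \ref{theo:speyer}. First I would record that, by Theorem \ref{theo:kast}, the matrix $K^{\sigma,wt}$ possesses a nonzero maximal minor (any $\det K^{\sigma,wt}_I$ with $I\in\mathcal M(\mathcal G)$, and such $I$ exists since $\mathcal G$ is perfectly orientable); hence $K^{\sigma,wt}$ has full column rank $|\mathcal W|=N+k$, equivalently $(K^{\sigma,wt})^T$ has full row rank. Collecting the relations (\ref{eq:rel_kas_strong}) over all $w\in\mathcal W$ into one system, a solution $v^{(k)}$ is exactly a $V$-valued vector in the kernel of $(K^{\sigma,wt})^T$, a space of dimension $|\mathcal B|-|\mathcal W|=n-k$. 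This already yields the existence of solutions to item 2; the real content is to produce one whose boundary part spans $V$.

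Next I would apply the block form (\ref{eq:mat_A}) of Theorem \ref{theo:speyer}. Since that form is obtained from $(K^{\sigma,wt})^T$ by row operations, it has the same kernel, and ordering the black vertices as (internal, boundary) the system splits into the top $N$ equations, which express each internal black value through the boundary ones via the $N\times n$ block $B$ (the $*$ in (\ref{eq:mat_A})) as $v^{(k)}_{\mathrm{int}}=-B\,v^{(k)}_{\mathrm{bd}}$, and the bottom $k$ equations $\sum_{i=1}^n A_{ri}\,v^{(k)}_{b_i}=0$, $r\in[k]$, a constraint involving the boundary values only. Thus a solution is completely determined by a choice of boundary values $(v^{(k)}_{b_1},\dots,v^{(k)}_{b_n})$; writing these as the columns of an $(n-k)\times n$ matrix $W$, the constraint reads $AW^T=0$, that is, every row of $W$ lies in $\ker A\subset\mathbb C^n$.

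With this description, item 2 is automatic for any admissible boundary choice (back-substitute through the top block), so everything comes down to item 1. Because $A$ has rank $k$, its kernel $\ker A$ has dimension exactly $n-k=\dim V$. Therefore I would simply take the $n-k$ rows of $W$ to be a basis of $\ker A$: then $W$ has full row rank $n-k$, which is precisely the assertion that the boundary vectors $\{v^{(k)}_{b_i}\}_{i\in[n]}$ span $\mathbb C^{n-k}$. Completing $W$ by the internal values $v^{(k)}_{\mathrm{int}}=-B\,v^{(k)}_{\mathrm{bd}}$ produces the required $v^{(k)}$.

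The main (and essentially only) obstacle is the spanning assertion of item 1: one must guarantee that the boundary values can be made to fill all of $V$ while still annihilating every white-vertex relation. The argument shows this is not accidental but forced by the equality $\dim\ker A=n-k=\dim V$, which rests on the full rank of the representative matrix $A$ furnished by Theorem \ref{theo:kast}. Here the reduced and irreducible hypotheses play their role: they ensure that $A$ carries neither a zero column nor a pivot-only row, so that no boundary coordinate is degenerate and a basis of $\ker A$ genuinely distributes nontrivially over the boundary vertices.
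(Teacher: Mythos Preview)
Your argument is correct and follows essentially the same approach as the paper: the paper's own justification is the single sentence preceding the proposition (``Since the Kasteleyn matrix has full rank $|\mathcal W|$, the kernel of the linear operator $R_w$ has dimension $n-k$\ldots\ Then by construction the following statement holds true''), and you have simply unpacked this dimension count via the block form (\ref{eq:mat_A}). One small remark: your final paragraph on the role of irreducibility is not needed for item~1 as stated---once the rows of $W$ form a basis of $\ker A$, the matrix $W$ has full row rank and its columns span $\mathbb C^{n-k}$ regardless of whether individual columns vanish; irreducibility guarantees the stronger (unclaimed) fact that no single boundary vector is zero.
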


\begin{remark}
\textbf{The case of white boundary vertices} A system of relations may be also introduced if $\mathcal G=(\mathcal B\cup \mathcal W, \mathcal E)$ is a reduced bipartite graph with white boundary vertices. Given a Kasteleyn signature $\sigma$ and an edge weighting $t_{bw}$, we call Kasteleyn the system $(\bar v^{(k)}= \{v_w^{(k)}: w \in \mathcal{W}\}, R_b)$, where: $\bar v^{(k)}_w$ is an element in the vector space $V$ assigned to the white vertex $w\in \mathcal W$, and, at any given black vertex $b\in \mathcal B$, the variables $\bar v^{(k)}_w$ satisfy the linear relation
$R_b (\bar v^{(k)}) \equiv \sum_{w\in \mathcal W} \sigma(e) t_{bw} \bar v^{(k)}_{w} \equiv \sum_{w\in \mathcal W} K^{\sigma,wt}_{bw} \bar v^{(k)}_{w} =0$, 
with $K^{\sigma,wt}_{bw}$ as in (\ref{eq:kas_wt_entries}).
All results in this Section hold with obvious modifications in this case as well after replacing $n-k$ with $k$.
\end{remark}

Next let us classify the solutions to the linear system.

\begin{theorem}\textbf{Kernel of $R_w$ and Euclidean duality between positroid varieties}\label{theo:sol_kas_sys_1}
Let $\mathcal G=(\mathcal B\cup \mathcal W, \mathcal E)$ be a reduced bipartite graph with black boundary vertices and let $\sigma : \mathcal E\mapsto \{ \pm 1\}$ be a Kasteleyn signature on it. Let $t_{bw} : \mathcal E \mapsto \mathbb{C}^*$ be a choice of edge weights and let $[A] \in \Pi_{\mathcal M} \subset Gr(k,n)$ be the point represented by the network $\mathcal N = (\mathcal G, t_{bw})$. 
Let $v^{(k)}$ be a solution to the system of relations described in Proposition \ref{prop:kas_sys}. Then 
the $(n-k)\times n$ matrix $\bar A^{o}$ whose columns are the ordered vectors at the boundary vertices,
\begin{equation}\label{eq:matr_orth}
\bar A^{o} = (v^{(k)}_{b_1}, v^{(k)}_{b_2},\dots, v^{(k)}_{b_n})
\end{equation} 
is orthogonal to $[A]$ in the usual sense: if $A=(A^i_j)$, $i\in [k]$, $j\in [n]$, is a representative matrix of $[A]$, then
\begin{equation}\label{eq:orthog}
A \, \cdot \, (\bar A^o)^T\, = \, \left( \sum_{j=1}^n A^i_j \, (\bar A^o)^l_j \right)_{i\in[k], l\in [n-k]} \, = \, 0 .
\end{equation}
Moreover, if $u^{(k)}$ is another solution fulfilling Proposition \ref{prop:kas_sys} for the given choice of edge weights, then  
the $(n-k)\times n$ matrix $(u^{(k)}_{b_1},u^{(k)}_{b_2},\dots, u^{(k)}_{b_n})\in [\bar A^o]$.

Finally the point $[\bar A^o]$ is the point in the dual positroid variety $\Pi_{\overline{\mathcal M}}$, represented by the dual network
$\overline{\mathcal N} = (	\overline{\mathcal G}, \bar f^{\prime})$ obtained from $\mathcal N$ by the following transformation:
\begin{enumerate}
\item $\overline{\mathcal G}$ is the dual graph to ${\mathcal G}$ obtained from it changing the color of all vertices, boundary vertices included;
\item ${\bar f}^{\prime}$ is the face weighting of $\overline{\mathcal G}$ such that, if $f_i$ is the face weight of $\Omega_i$ in ${\mathcal N}$ then $\bar f^{\prime}_i$ is its weight in $\overline{\mathcal N}$, where
\begin{equation}\label{eq:face_tras_dual}
\resizebox{\textwidth}{!}{$ 
\bar {f}^{\prime}_i=\left\{ \begin{array}{ll}
f_i^{-1}, & \mbox{ if } \Omega_i \mbox{ is an internal face},\\
\noalign{\medskip}
(-1)^{b_{\Omega_i}}f_i^{-1}& \mbox{ if } \Omega_i \mbox{ is an external finite face and } 2b_{\Omega_i} \mbox{ is the number of boundary vertices bounding } \Omega_i,
\end{array}\right.
$}
\end{equation}
\begin{equation}\label{eq:bound_vert}
b_{\Omega_i} = \frac{1}{2} \; \# \{ b_j \mbox{ boundary vertex } \; : \;\; b_j\in \partial \Omega_i\textsl{} \, \}.
\end{equation}
\end{enumerate}
\end{theorem}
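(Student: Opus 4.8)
The plan is to extract the orthogonality and the duality directly from the block reduction of $(K^{\sigma,wt})^{T}$ already used in Theorem \ref{theo:speyer}, and then to match the resulting Pl\"ucker signs against the dimer partition functions of the colour--swapped network. First I would record the solution as the $(N+n)\times(n-k)$ matrix $V$ whose row indexed by $b\in\mathcal B$ is $v^{(k)}_{b}$. The relations $R_w(v^{(k)})=0$ read $(K^{\sigma,wt})^{T}V=0$, so every column of $V$ lies in $\ker (K^{\sigma,wt})^{T}$, a subspace of $\mathbb{C}^{N+n}$ of dimension $(N+n)-(N+k)=n-k$ since $K^{\sigma,wt}$ has full column rank by Theorem \ref{theo:kast}. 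As the boundary rows of $V$ are the columns of $\bar A^{o}$ and span $\mathbb{C}^{n-k}$ by Proposition \ref{prop:kas_sys}, the matrix $V$ has rank $n-k$ and its columns form a basis of that kernel. Row--reducing $(K^{\sigma,wt})^{T}$ to the block form (\ref{eq:mat_A}), whose lower--right $k\times n$ block is $A$, preserves this kernel, and reading off the last $k$ rows of the identity $(K^{\sigma,wt})^{T}V=0$ yields exactly $A\,(\bar A^{o})^{T}=0$, which is (\ref{eq:orthog}). If $u^{(k)}$ is a second solution, its associated matrix has columns forming another basis of the same kernel, hence equals $VM$ with $M\in GL_{n-k}$; restricting to the boundary rows shows the new boundary matrix is $M^{T}\bar A^{o}$, i.e.\ the same point of $Gr(n-k,n)$.

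Because $A(\bar A^{o})^{T}=0$ and $\bar A^{o}$ has full rank $n-k$, its rows span the Euclidean orthogonal complement of the row span of $A$, and the classical orthogonal--complement relation for Pl\"ucker coordinates gives $\Delta_{\bar I}(\bar A^{o})=c\,(-1)^{\sum_{i\in I}i}\,\Delta_{I}(A)$ for a nonzero global constant $c$ and every $k$--subset $I$. Since $\Delta_{I}(A)=\det K^{\sigma,wt}_{I}$ vanishes precisely when $I\notin\mathcal M$ by Theorem \ref{theo:kast}, the coordinate $\Delta_{\bar I}(\bar A^{o})$ vanishes precisely when $\bar I\notin\overline{\mathcal M}$; hence $[\bar A^{o}]\in\Pi_{\overline{\mathcal M}}$.

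It remains to identify the representing network. By the transposition argument of Proposition \ref{prop:dual} (with the colours of the two graphs interchanged), $(K^{\sigma,wt})^{T}$ is itself a Kasteleyn weighted matrix for the dual network $(\overline{\mathcal G},1/f)$, whose point $[\bar A]\in\Pi_{\overline{\mathcal M}}$ satisfies $\Delta_{\bar I}(\bar A)=\det K^{\sigma,wt}_{I}=\Delta_{I}(A)$; thus $[\bar A^{o}]$ and $[\bar A]$ differ precisely by the signs $(-1)^{\sum_{i\in I}i}$. I would realise these signs as a change of weighting on $\overline{\mathcal G}$. Since the boundary vertices of $\overline{\mathcal G}$ are white, a matching $M$ with $\partial M=\bar I$ uses exactly the boundary vertices indexed by $I$, hence contains exactly the boundary edges $\{e_{i}:i\in I\}$; flipping the signature on every boundary edge $e_{j}$ with $j$ odd therefore multiplies each such monomial, and so $D_{\bar I}$, uniformly by $(-1)^{\#\{i\in I:\,i\text{ odd}\}}=(-1)^{\sum_{i\in I}i}$. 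Finally this boundary--edge sign flip is a sign change of the adjacent external face weights, and since the $2b_{\Omega_{i}}$ boundary vertices on an external finite face $\Omega_{i}$ are consecutive on the boundary circle, exactly $b_{\Omega_{i}}$ of them are odd, so the weight of $\Omega_{i}$ acquires the factor $(-1)^{b_{\Omega_{i}}}$ while internal faces keep $f_{i}^{-1}$. This is precisely (\ref{eq:face_tras_dual}), and it shows that $\overline{\mathcal N}=(\overline{\mathcal G},\bar f^{\prime})$ represents $[\bar A^{o}]$.

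The step I expect to be the main obstacle is this last sign bookkeeping. One must check that the orthogonality sign is genuinely uniform over all matchings with a fixed boundary, which relies on the degree--one property of the boundary vertices forcing each used boundary vertex to contribute its unique boundary edge, and that the passage from boundary edges to external faces produces exactly the exponent $b_{\Omega_{i}}$ rather than some label--dependent count. Here the consecutiveness of the boundary vertices on each external finite face, giving an equal split into odd and even labels, is precisely what makes the localization to the faces clean; the global constant $c$ and the overall sign are irrelevant because $[\bar A^{o}]$ is a projective point, so only the relative signs among the $\Delta_{\bar I}(\bar A^{o})$ need be matched.
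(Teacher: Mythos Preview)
Your orthogonality and uniqueness arguments coincide with the paper's one--line remark that (\ref{eq:orthog}) follows from the block form (\ref{eq:mat_A}).

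For the identification of the representing network your route is genuinely different. The paper does not stay on $\overline{\mathcal G}$: it passes to the extended graph $\tilde{\mathcal G}$ with white boundary obtained by adding boundary--adjacent vertices (Proposition~\ref{prop:weak_K}), observes that the weak Kasteleyn matrix $\tilde K$ of \cite{AGPR} on $(\tilde{\mathcal G},f)$ is simultaneously an ordinary Kasteleyn matrix on $(\tilde{\mathcal G},f')$ with $f'$ as in (\ref{eq:face_tras}), and then combines Theorem~\ref{prop:AGPR_2} with Proposition~\ref{prop:dual} to identify the block--form output $\tilde A$ both with $[\bar A^o]$ and with the point of the dual network $(\overline{\tilde{\mathcal G}},1/f')$. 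Your direct Pl\"ucker argument --- comparing $[\bar A^o]$ to the transposition dual $[\bar A]$ of Proposition~\ref{prop:dual} via the orthogonal--complement sign $(-1)^{\sum_{i\in I}i}$, and then absorbing that sign into a flip of the odd--indexed boundary edges on $\overline{\mathcal G}$ --- bypasses both the auxiliary graph and the external input from \cite{AGPR}; that is a real simplification.

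The obstacle you flag is the right one, but your resolution of it is not quite correct: the claim that the boundary vertices of an external finite face are consecutive is neither justified by the hypotheses nor needed. What is true (under the standing irreducibility assumption) is that no boundary edge is a peninsula --- otherwise the component of $\mathcal G$ through that edge would have a single boundary vertex, forcing a loop or coloop in $\mathcal M$ --- so the boundary arcs lying in $\Omega_i$ are pairwise non--adjacent and the boundary vertices on $\partial\Omega_i$ split as a disjoint union of pairs $\{b_l,b_{l+1}\}$, one pair per arc $\alpha_l\subset\Omega_i$. Each such pair has exactly one odd index and contributes its two edges $e_l,e_{l+1}$ once each to $\partial\Omega_i$; hence the flip multiplies the face weight by $(-1)^{|\{l:\alpha_l\subset\Omega_i\}|}=(-1)^{b_{\Omega_i}}$, which is (\ref{eq:face_tras_dual}). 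With this replacement for the consecutiveness step your proof is complete.
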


(\ref{eq:orthog}) easily follows using the equivalent representation of the Kasteleyn weighted matrix given in (\ref{eq:mat_A}). We prove that $[\bar A^o]$ is the point in $\Pi_{\overline{\mathcal M}}$ represented by the duality relation between networks described above in
 Section \ref{sec:AGPR} using the weak Kasteleyn signature introduced in \cite{AGPR}. 

We remark that if face weights are positive in the initial network, at least one face weight $\bar {f}^{\prime}_i$ is negative. Therefore the transformation between dual positroid varieties described by (\ref{eq:orthog}) does not preserve the total non--negativity property since $[A]\in \S \subset \GTNN$ is mapped to $[\bar A^o] \in \Pi_{\overline{\mathcal M}} \backslash \Sprime \subset Gr(n-k,n) \backslash Gr^{\mbox{\tiny TNN}} (n-k,n)$. 

\smallskip

There is a second interpretation for the system of relations of Definition \ref{def:kas_sys_rel_new}. Indeed we may freely assign quantities to $n-k$ boundary vertices $b_j$, $j\in \bar I$, for some $I\in \mathcal M (\mathcal G)$, and solve the resulting $|\mathcal W|\times |\mathcal W|$ linear system in the $|\mathcal W|$ unknowns $v^{(k)}_b$, $b\not =b_j$, $j\in \bar I$.

\begin{theorem}\textbf{Reconstruction of [A] using the system of relations} \label{theo:sol_kas_sys_2}
Let $\mathcal G=(\mathcal B\cup \mathcal W, \mathcal E)$ be a reduced bipartite graph with black boundary vertices and let $\sigma : \mathcal E\mapsto \{ \pm 1\}$ be a Kasteleyn signature on it. Let $I =\{ 1\le i_1 < i_2 < \cdots < i_k \le n\} \in \mathcal M \equiv \mathcal M (\mathcal G)$ be a base in the positroid of the graph. 
For any given edge weighting $t_{bw} : \mathcal E \mapsto \mathbb{R}^+$, let
$[A] \in \S \subset \GTNN$ be the point represented by the network $\mathcal N = (\mathcal G, t_{bw})$, and let its representative matrix $A$ be the reduced row echelon one with respect to the base $I$.
Let $v^{(k)}_b$ be $n$--row vectors, and let us assign $E_j$, the $j$-th canonical basis vector, to $v^{(k)}_{b_j}$ at the boundary vertex $b_j$, for any $j\in \bar I$,
\begin{equation}\label{eq:rel_sink_kas}
v^{(k)}_{b_j}=E_{j}, \quad\quad j\in \bar I.
\end{equation}
Then the vectors $v^{(k)}_{b} \not = 0$ for all $b\in \mathcal B$, and
$v^{(k)}_{b_{i_r}}$ at the boundary vertices $i_r\in I$ satisfy
\begin{equation}\label{eq:rel_sou_kas}
v^{(k)}_{b_{i_r}} =  E_{i_r} -A[r] ,
\end{equation}
where $E_{i_r}$ is the $i_r$--th vector of the canonical basis and $A[r]$ is the $r$--th row of A.
\end{theorem}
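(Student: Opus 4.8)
The plan is to recast the whole system of relations as a single matrix null--space condition and then read off the boundary values from the block form (\ref{eq:mat_A}). Stack the unknowns as the rows of a $|\mathcal B|\times n$ matrix $\mathcal V$, the row indexed by $b$ being $v^{(k)}_b$. Then the collection of relations (\ref{eq:rel_kas_strong}) at all white vertices is precisely $(K^{\sigma,wt})^T\,\mathcal V=0$. Since the block form (\ref{eq:mat_A}) is obtained from $(K^{\sigma,wt})^T$ by row operations alone, it has the same null space; moreover I may further row--reduce its lower block so that $A$ becomes the reduced row echelon representative with respect to $I$ without disturbing the $\mathrm{Id}_N$ block, because those operations only combine the last $k$ rows, which already vanish on the first $N$ columns (and two echelon forms for the same base coincide with the matrix $A$ in the statement). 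Splitting $\mathcal V=\begin{pmatrix}\mathcal V_{\mathrm{int}}\\ \mathcal V_{\partial}\end{pmatrix}$ into internal and boundary rows, the condition decouples into $A\,\mathcal V_{\partial}=0$ and $\mathcal V_{\mathrm{int}}=-B\,\mathcal V_{\partial}$, where $B$ denotes the $*$ block. Thus the internal vectors are determined by the boundary ones, and the boundary vectors are constrained only by $\sum_{j=1}^n A^i_j\,v^{(k)}_{b_j}=0$ for each row $i\in[k]$.

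Next I would impose the boundary data (\ref{eq:rel_sink_kas}) and solve. Writing the $i=r$ relation and splitting the sum over $j\in I$ and $j\in\bar I$, the echelon property $A^r_{i_s}=\delta_{rs}$ collapses the $j\in I$ part to the single term $v^{(k)}_{b_{i_r}}$, while on $j\in\bar I$ the prescribed data $v^{(k)}_{b_j}=E_j$ produce $\sum_{j\in\bar I}A^r_j E_j$. Hence $v^{(k)}_{b_{i_r}}=-\sum_{j\in\bar I}A^r_j E_j$. Since $A[r]=\sum_{j}A^r_j E_j=E_{i_r}+\sum_{j\in\bar I}A^r_j E_j$ by the echelon form, this is exactly $v^{(k)}_{b_{i_r}}=E_{i_r}-A[r]$, which is (\ref{eq:rel_sou_kas}). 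Consistently, the resulting boundary matrix $(v^{(k)}_{b_1},\dots,v^{(k)}_{b_n})$ represents the orthogonal point of Theorem \ref{theo:sol_kas_sys_1}.

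It remains to establish the nonvanishing $v^{(k)}_b\neq 0$. At the boundary this is immediate for $j\in\bar I$ since $v^{(k)}_{b_j}=E_j$, and for $i_r\in I$ one has $v^{(k)}_{b_{i_r}}=-\sum_{j\in\bar I}A^r_j E_j$, which vanishes only if the $r$--th echelon row reduces to its pivot; irreducibility of $\mathcal M(\mathcal G)$ forbids such rows, so $v^{(k)}_{b_{i_r}}\neq 0$. For the internal vertices I would first note, by running the same block computation coordinatewise, that every coordinate of $v^{(k)}_b$ indexed by $I$ vanishes: projecting onto such a coordinate yields a homogeneous scalar system with zero data at the $\bar I$ sinks, and the echelon relations $A^r_{i_s}=\delta_{rs}$ then force both the $I$--boundary and the internal values of that coordinate to be zero. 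Hence all $v^{(k)}_b$ lie in the $(n-k)$--dimensional subspace spanned by $\{E_j:j\in\bar I\}$, and nonvanishing at an internal $b$ amounts to exhibiting one sink $j\in\bar I$ whose coordinate is nonzero.

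I expect this last point to be the main obstacle, since a priori the internal coordinates are signed sums that could cancel. The clean resolution is the explicit Talaska--flow representation of the solution (Theorem \ref{theo:kas_geo_sys}): for the Kasteleyn/geometric signature the coordinate of $v^{(k)}_b$ at a sink $b_j$ is a subtraction--free sum of flow weights from $b$ to $b_j$, so with positive edge weights it cannot vanish once a single directed path from $b$ to some sink survives. The reduced property of $\mathcal G$ (Definition \ref{def:graph_red}) guarantees exactly this: every edge, and hence every internal vertex, lies on a directed source--to--sink path, so at least one such coordinate is strictly positive and $v^{(k)}_b\neq 0$ for all $b\in\mathcal B$.
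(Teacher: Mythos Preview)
Your argument follows the same route as the paper: reduce the system to the null--space of $(K^{\sigma,wt})^T$ and read off the boundary values from the block form (\ref{eq:mat_A}); the paper's own proof is a one--line reference to that block form, and your derivation of (\ref{eq:rel_sou_kas}) is a correct unpacking of it.

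One imprecision is worth flagging. You justify the internal nonvanishing by saying the $j$--th coordinate of $v^{(k)}_b$ is a ``subtraction--free sum of flow weights''. That is only guaranteed when the orientation is acyclic (see the second paragraph of Theorem \ref{theo:null}); in general the flow formula (\ref{eq:tal_formula}) carries the signs $(-1)^{\mathrm{wind}(F)+\mathrm{int}(F)}$, which need not agree across flows. The conclusion you want is nonetheless available: Theorem \ref{theo:null} states directly that on a \emph{reduced} graph $(E_b)_j\neq 0$ whenever ${\mathcal F}_{e,b_j}(\mathcal G)\neq\emptyset$, and the reducedness hypothesis provides the required path to some sink. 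So your forward reference to Theorem \ref{theo:kas_geo_sys} (and implicitly Theorem \ref{theo:null}) is the right move, but the reason is that nonvanishing is a stated property there, not that the sum is sign--coherent. There is no circularity: those results rest on Theorem \ref{theo:main} and on \cite{AG4,AG6}, not on the present statement, and the paper itself points to Theorem \ref{theo:kas_geo_sys} in the remark following its proof.
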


\begin{proof}
(\ref{eq:rel_sou_kas}) and the fact that $v^{(k)}_{b} \not = 0$ for all $b\in \mathcal B$ easily follow using the representation of the Kasteleyn matrix in (\ref{eq:mat_A}).
\end{proof}

\begin{remark}\textbf{The explicit solution to the system of relations in Theorem \ref{theo:sol_kas_sys_2}}
In Theorem \ref{theo:kas_geo_sys} we give the explicit solution $v^{(k)}_{b}\in\mathbb{R}^n$ at all vertices using the relation between Kasteleyn and geometric systems of relations, and the solution to the geometric system of relations in terms of flows and conservative flows proved in \cite{AG4}.
\end{remark}

\begin{remark}\textbf{Interpretations of the linear system}
Theorems \ref{theo:sol_kas_sys_1} and \ref{theo:sol_kas_sys_2} provide two possible uses of the linear system: it provides both a natural parametrization of the open dual variety and the parametrization of the positroid variety equivalent to the boundary measurement map.
\end{remark}

\begin{figure}
  \centering{\includegraphics[width=0.37\textwidth]{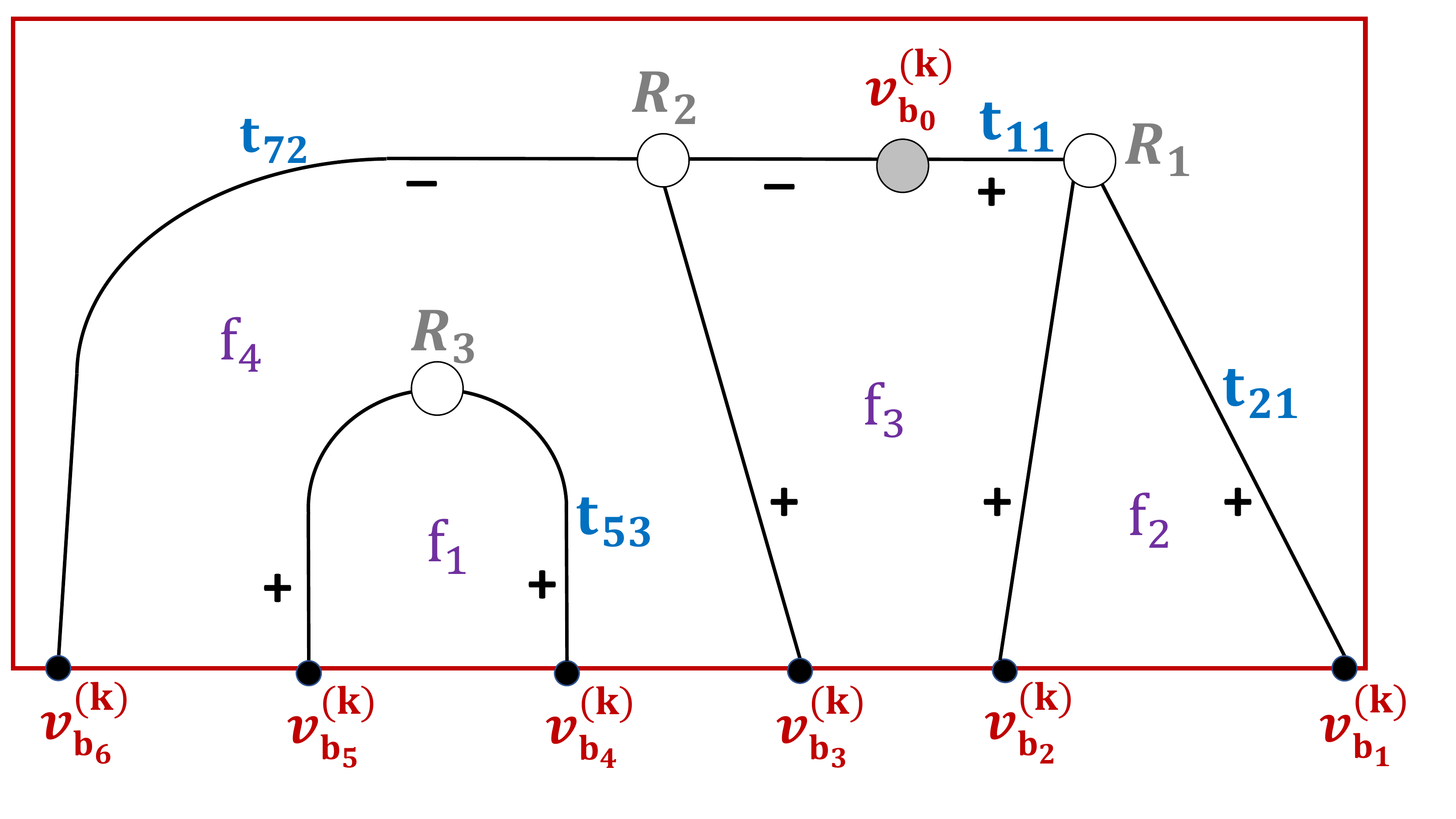}}
	\vspace{-.3 truecm}
  \caption{\small{\sl Kasteleyn system of relation for the network of Example \ref{ex:6}. The relation between face and edge weights is as in (\ref{eq:edge_weight_ex}).}\label{fig:ex_Gr26_sys_rel}}
\end{figure}

\begin{example}\label{ex:6}
Let us solve Kasteleyn system of relation for the network of Example \ref{ex:1} (see Figure \ref{fig:ex_Gr26_sys_rel}).
\begin{enumerate}
\item
First we solve system (\ref{eq:rel_example}) using Theorem \ref{theo:sol_kas_sys_2} so that $v^{(k)}$ are $6$-dimensional row vectors. In such case, we choose a base $I\in \mathcal M$ and assign the canonical basis vectors to $v^{(k)}_{b_j}$, $j\in \bar I$. For instance if $I=(1,4)$ and 
\[
v^{(k)}_{b_2} =(0,1,0,0,0,0), \quad v^{(k)}_{b_3} =(0,0,1,0,0,0), \quad v^{(k)}_{b_5} =(0,0,0,0,1,0), \quad v^{(k)}_{b_6} =(0,0,0,0,0,1),
\]
then we have three linear equations in the unknowns $v^{(k)}_{b_0}, v^{(k)}_{b_1}, v^{(k)}_{b_4}$:
\[
\displaystyle R_j (v^{(k)}) \, =\, 0, \;\,\; j \in [3] \quad \iff \quad 
\left\{ \begin{array}{ll}
t_{11} v^{(k)}_{b_0} + t_{21} v^{(k)}_{b_1} & =	\,  - v^{(k)}_{b_2},\\
\noalign{\medskip}
v^{(k)}_{b_0} & =\, v^{(k)}_{b_3} - v^{(k)}_{b_0} - t_{72} v^{(k)}_{b_6} ,\\
\noalign{\medskip}
t_{53} v^{(k)}_{b_4}& =\, - v^{(k)}_{b_5},
\end{array}\right.
\]
Its solution at the boundary vertices $v^{(k)}_{b_i}$, $i\in I$, satisfies
\[
\begin{array}{c}
v^{(k)}_{b_1} = (0, -t_{21}^{-1},-t_{21}^{-1}t_{11},0,0,t_{21}^{-1}t_{11}t_{72}) =  (1,0,0,0,0,0)-A[1], \\
\noalign{\medskip}
v^{(k)}_{b_4} = (0,0,0,0,0,-t_{53}^{-1},0) = (0,0,0,1,0,0)-A[2],
\end{array}
\]
where $A$ is the reduced row echelon matrix with respect to the base $(1,4)$ represented by the network in the Figure:
\begin{equation}\label{eq:AA}
A = \left( \begin{array}{cccccc} 
1 & t_{21}^{-1} & t_{11} t_{21}^{-1} & 0 & 0 & -t_{21}^{-1}t_{11}t_{72}\\ 
\noalign{\medskip}
0 & 0 & 0 & 1 & t_{53}^{-1} & 0
\end{array}
\right).
\end{equation}
\item
If we use Theorem \ref{theo:sol_kas_sys_1}, the variables $v^{(k)}_b$ are $4$-dimensional column vectors. We have three relations at the white vertices
\begin{equation}\label{eq:rel_example}
\left\{ \begin{array}{l}
\displaystyle R_1 (v^{(k)}) \, = \, t_{11} v^{(k)}_{b_0} + t_{21} v^{(k)}_{b_1} + v^{(k)}_{b_2} \,=\,0,\\
\displaystyle R_2 (v^{(k)}) \, = \, v^{(k)}_{b_3} - v^{(k)}_{b_0} - t_{72} v^{(k)}_{b_6} \,=0\, ,\\
\displaystyle R_3 (v^{(k)}) \, = \, t_{53} v^{(k)}_{b_4} + v^{(k)}_{b_5} \, =\, 0,
\end{array}
\right.
\end{equation}
so that its solution gives the point $[\bar A^o]\in \Pi_{\overline{\mathcal M}}\subset Gr(4,6)$ represented by
\[
\bar A^o = (v^{(k)}_{b_1},v^{(k)}_{b_2},\dots,v^{(k)}_{b_6}) =\left(
\begin {array}{cccccc} 
1 & 0 & 0 & 0 & 0 & t_{21}( t_{11} t_{72})^{-1}\\ 
\noalign{\medskip}
0 & 1 & 0 & 0 & 0 & (t_{11} t_{72})^{-1}\\ 
\noalign{\medskip}
0 & 0 & 1 & 0 & 0 & t_{72}^{-1}\\ 
\noalign{\medskip}
0 & 0 & 0 & 1 & -t_{53} & 0
\end{array}
\right).
\]
and $A \, \cdot \, (\bar A^o)^T=0$ with $A$ is as in (\ref{eq:AA}).
\end{enumerate}
\end{example}

\subsection{Weak Kasteleyn signatures and duality relations in positroid varieties }\label{sec:AGPR} 

In this Section we recall the definition and properties of another variant of Kasteleyn Theorem proposed in \cite{AGPR} naturally connected to Postnikov boundary measurement map. In the following, we call weak the Kasteleyn signature introduced in \cite{AGPR}; this terminology is appropriate since the absolute value of the maximal minors of the Kasteleyn matrix of \cite{AGPR} give the number of almost perfect matchings with prescribed boundary conditions; however they do not share the same sign. Finally we complete the proof of Theorem \ref{theo:sol_kas_sys_1}.

\begin{definition}\textbf{Weak Kasteleyn signature and weak Kasteleyn matrix}\label{def:weak_sign} \cite{AGPR}
Let ${\tilde \sigma}_{bw} = \pm 1$ for each edge of $\mathcal G$. They constitute a weak Kasteleyn signature if the product of the signs along the boundary of each face is 
\[
\prod_{e\in \partial \Omega} {\tilde \sigma}_{e}= \left\{ \begin{array}{ll}
\displaystyle (-1)^{\frac{|\Omega|}{2}+1}, & \mbox{ if } \Omega \mbox{ is an internal face};\\
\noalign{\medskip}
\displaystyle (-1)^{\frac{|\Omega|}{2}+b_{\Omega}+ 1}, & \mbox{ if } \Omega \mbox{ is a finite external face},
\end{array}
\right.
\]
where $\Omega$ is the number of edges bounding $\Omega$ and $b_{\Omega}$ is half the number of boundary vertices bounding $\Omega$.

If one labels the white vertices of $\mathcal G$ from 1 to $|\mathcal W|$, and the black vertices from 1 to $|\mathcal B|$, so that the boundary vertices share the highest labels of their color and are labeled in increasing order clockwise, then the $|\mathcal B|\times |\mathcal W|$ weak Kasteleyn sign matrix ${\tilde K}^{\tilde \sigma}$ associated to such data is  
\begin{equation}\label{def:kas_entries_weak}
{\tilde K}^{{\tilde \sigma}}_{ij} = \left\{ \begin{array}{ll}
{\tilde \sigma} (e) & \mbox{ if  } e=\overline{b_iw_j},\\
0 & \mbox{ otherwise.}
\end{array}
\right.
\end{equation}
If $t: \mathcal E\mapsto \mathbb{C}^*$ is an edge weighting of $\mathcal G$, then the weighted weak Kasteleyn matrix
${\tilde K}^{{\tilde \sigma}, wt}$ is
\begin{equation}\label{eq:kas_wt_entries_weak}
{\tilde K}^{{\tilde \sigma},wt}_{ij} = \left\{ \begin{array}{ll}
{\tilde\sigma} (e) t_e & \mbox{ if } e=\overline{b_iw_j},\\
0 & \mbox{ otherwise.}
\end{array}
\right.
\end{equation}
\end{definition}

Weak Kasteleyn signatures exist \cite{AGPR}. Here we explicitly construct them starting from Kasteleyn signatures satisfying Definition \ref{def:kas_sign}.

\begin{figure}
  \centering{\includegraphics[width=0.37\textwidth]{Example_Gr_26.pdf}
	\hspace{.7 truecm}
	\includegraphics[width=0.37\textwidth]{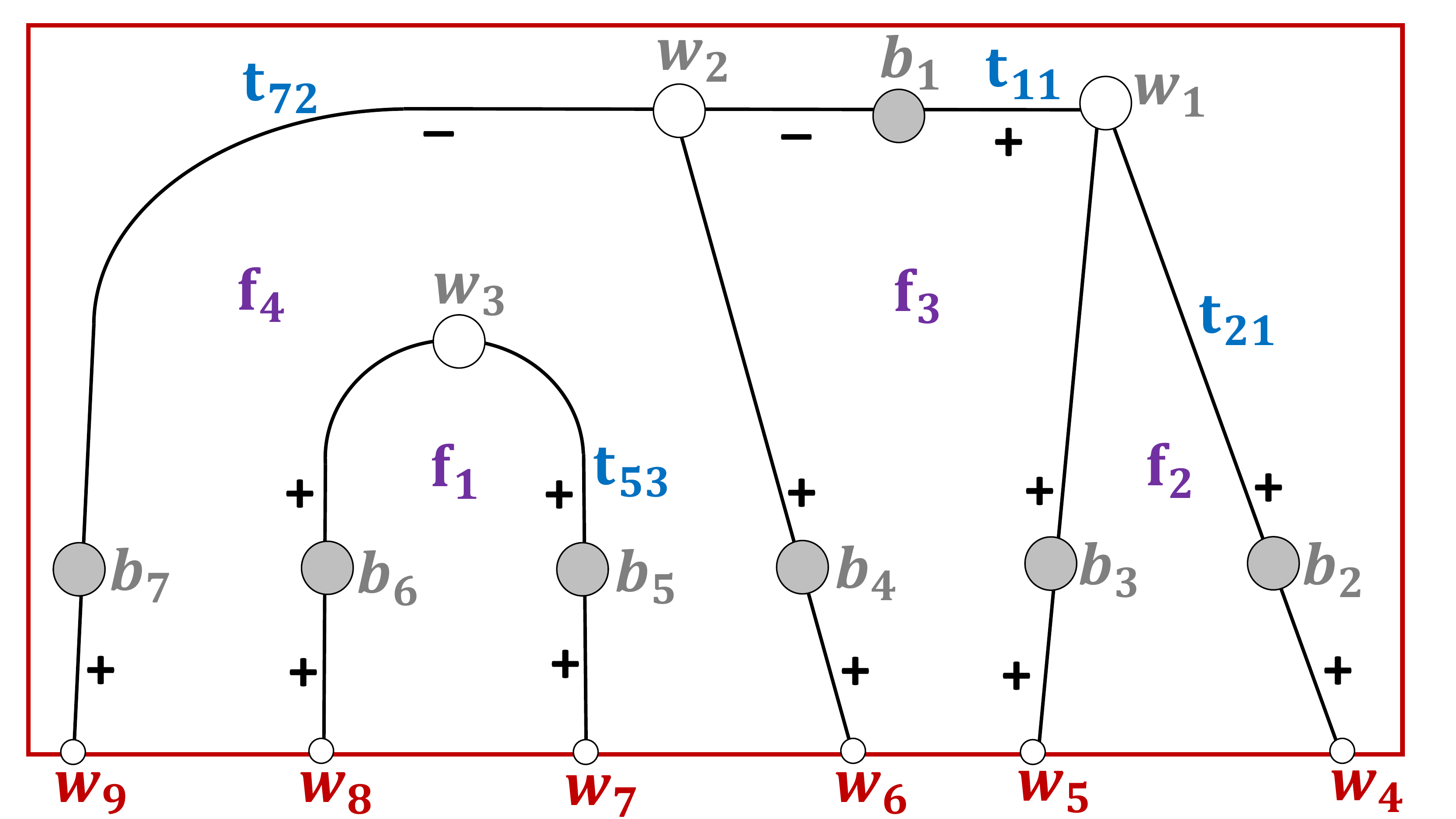}}
	\vspace{-.3 truecm}
  \caption{\small{\sl We illustrate Proposition \ref{prop:weak_K}: the Kasteleyn signature of the graph $\mathcal G$ [left] is transformed into the weak Kasteleyn signature of the graph $\tilde {\mathcal G}$[right].}\label{fig:ex_Gr26_AGPR_1}}
\end{figure}

\begin{proposition}\textbf{Construction of weak Kasteleyn signatures}\label{prop:weak_K}
Let $\mathcal G=(\mathcal B\cup \mathcal W,\mathcal E)$ be a reduced bipartite graph with black boundary vertices representing $\S\subset \GTNN$ and let $\sigma$ be a Kasteleyn signature on it satisfying Definition \ref{def:kas_sign}. Assume that the initial graph $\mathcal G$ has $r=N+k$ white vertices and $N+n$ black vertices. Let $\tilde{\mathcal G}$ be the bipartite graph obtained from $\mathcal G$ 
adding a black boundary--adjacent vertex $b_{N+j}$ next to each boundary vertex, changing the color of the boundary vertices to white, relabeling the boundary vertices $w_{r+j}$, and assigning unit weight to the added edges $e_j = \overline{b_{N+j}w_{r+j}}$, $j\in [n]$. The transformed graph $\tilde {\mathcal G}$ has $N+n$ black vertices and $r+n$ white vertices. Then the signature 
\begin{equation}\label{eq:weak_sign_constr}
{\tilde \sigma}(e) =\left\{ \begin{array}{ll} \sigma(e), & \mbox{ if } e \mbox{ is an edge common to } \mathcal G \mbox{ and } \tilde{\mathcal G};\\
+1 & \mbox{ if } e = e_j, \quad j	\in [n].
\end{array}
\right.
\end{equation}
is a weak Kasteleyn signature on $\tilde{\mathcal G}$.

Finally any weak Kasteleyn signature on $\tilde{\mathcal G}$ is equivalent to that defined in (\ref{eq:weak_sign_constr}) up to the gauge transformation of Definition \ref{def:eq_sign}.
\end{proposition}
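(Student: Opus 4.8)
The plan is to treat the two assertions of Proposition \ref{prop:weak_K} separately. For the first, I would observe that passing from $\mathcal G$ to $\tilde{\mathcal G}$ is exactly the simultaneous application of the boundary--adjacent vertex move of Figure \ref{fig:move} [center] at every boundary vertex: each boundary edge $f_j$ is subdivided by the new black vertex $b_{N+j}$, the old boundary vertex is recoloured white and becomes the degree one vertex $w_{r+j}$, and the new half carrying the boundary vertex receives the edge $e_j$ of unit weight. Since this operation only modifies the boundary region, the internal faces of $\tilde{\mathcal G}$ coincide with those of $\mathcal G$ and no new finite face is created, because subdividing an edge never creates a face. On every internal face $\tilde\sigma$ agrees with $\sigma$, so there the weak internal clause is literally (\ref{eq:sign_cond}) and holds by hypothesis.

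The substance is therefore the external finite faces, and here the crucial input is that on $\mathcal G$ the \emph{strong} condition (\ref{eq:sign_cond}) holds on \emph{all} finite faces, external ones included. A finite external face $\Omega$ of $\mathcal G$ is bounded by $2b_\Omega$ boundary vertices, each of degree one, so exactly $2b_\Omega$ of its bounding edges are the boundary edges $f_j$; subdividing each adds one edge to $\partial\Omega$, whence $|\tilde\Omega|=|\Omega|+2b_\Omega$, while the number of boundary vertices bounding the face is unchanged by the recolouring. The $2b_\Omega$ newly inserted edges on $\partial\tilde\Omega$ are precisely the $e_j$, each carrying $\tilde\sigma(e_j)=+1$, whereas every other bounding edge keeps its value of $\sigma$; hence $\tilde\sigma(\tilde\Omega)=\sigma(\Omega)$. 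Using (\ref{eq:sign_cond}) on $\mathcal G$ together with $(-1)^{-b_\Omega}=(-1)^{b_\Omega}$ I obtain
\[
\tilde\sigma(\tilde\Omega)=\sigma(\Omega)=(-1)^{\frac{|\Omega|}{2}+1}=(-1)^{\frac{|\tilde\Omega|}{2}-b_\Omega+1}=(-1)^{\frac{|\tilde\Omega|}{2}+b_\Omega+1},
\]
which is exactly the external clause of Definition \ref{def:weak_sign}. This shows that (\ref{eq:weak_sign_constr}) is a weak Kasteleyn signature.

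For the uniqueness statement I would argue as in Proposition \ref{prop:equiv_sign}. Given any two weak Kasteleyn signatures $\tilde\sigma_1,\tilde\sigma_2$ on $\tilde{\mathcal G}$, their ratio $\rho(e)=\tilde\sigma_1(e)\tilde\sigma_2(e)\in\{\pm1\}$ has product $+1$ around every finite face, since $\tilde\sigma_1$ and $\tilde\sigma_2$ satisfy the very same face identities (internal and external alike) and the $b_\Omega$--corrections cancel in the ratio. A signature with trivial holonomy around all bounded faces of a graph in the disk is a coboundary: following \cite{Ken1} and identifying the boundary vertices to a single point makes the underlying surface simply connected, so there is $\alpha:\mathcal V\mapsto\{\pm1\}$ with $\rho(e)=\alpha(b)\alpha(w)$ on each edge $e=\overline{bw}$, and the identification forces $\alpha$ to take a common value on all boundary vertices, i.e. $\alpha$ is a gauge transformation in the sense of Definition \ref{def:eq_sign}. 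Hence $\tilde\sigma_1$ and $\tilde\sigma_2$ differ by such a gauge, and in particular every weak Kasteleyn signature on $\tilde{\mathcal G}$ is gauge equivalent to (\ref{eq:weak_sign_constr}).

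The main obstacle I anticipate is the combinatorial bookkeeping of the external faces under the move: one must check that each boundary vertex contributes exactly one subdivided edge to each incident external face, that this accounts for the full shift $|\tilde\Omega|-|\Omega|=2b_\Omega$, and that the count $b_\Omega$ of boundary vertices bounding the face is genuinely preserved by the recolouring. Once this edge count is pinned down, the internal faces and the coboundary argument are routine.
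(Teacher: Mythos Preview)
Your argument is correct and matches the paper's own proof, which is a one--line remark that the internal faces are unchanged while on each finite external face the total signature is preserved and the edge count increases by $2b_\Omega$; you have simply spelled this out in full. Your additional coboundary argument for the final uniqueness clause is the natural analogue of Proposition~\ref{prop:equiv_sign} and fills in a step the paper leaves implicit.
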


\begin{proof} The number of edges and the signatures are the same at each internal face. At the finite external faces the total signatures are the same but the number of edges differ by $2\, b_{\Omega}$.
\end{proof}

The relation of weak Kasteleyn signatures to the boundary measurement map has been proven in \cite{AGPR}. In next Theorem we prove a weaker version of such result assuming that the graphs are related as in Proposition \ref{prop:weak_K}.

\begin{theorem} \textbf{Weak Kasteleyn signatures and the boundary measurement map}\label{prop:AGPR_1}
Let ${\mathcal G}$ and $\tilde{\mathcal G}$ be as in Proposition \ref{prop:weak_K}.
Let $[A]\in \Pi_{\mathcal M}\subset Gr(k,n)$ be the point represented by both networks $(\mathcal G, f)$ and $(\tilde{\mathcal G}, f)$. Let $t\, : \, \mathcal G \mapsto \mathbb{C}^*$ be an edge weighting in the equivalence class represented by $f$, and let $\tilde t$ be the edge weighting on $\tilde{\mathcal G}$ such that ${\tilde t}_e =t_e$ for any edge common to ${\mathcal G}$ and $\tilde{\mathcal G}$, and ${\tilde t}_{e_j}=1$, $j\in [n]$. 

Let $\sigma$ be a Kasteleyn signature on $\mathcal G$ and ${\tilde \sigma}$ be the weak Kasteleyn signature on $\tilde{\mathcal G}$ fulfilling (\ref{eq:weak_sign_constr}). Let ${\tilde K} \equiv {\tilde K}^{{\tilde \sigma},wt}$ be the corresponding weighted weak Kasteleyn matrix. Then, for any $k$--element subset $I=\{1\le i_1< i_2<\dots < i_k\le n\}$ the Pl\"ucker coordinates of $ A$ are
\begin{equation}\label{eq:minor_rel}
\det A_I= (-1)^{\delta(I)} \det {\tilde K}_{\mathcal W\backslash I},
\end{equation}
where $\delta(I) = (i_1-1)+(i_2-2)+\cdots + (i_k-k)$ and ${\tilde K}_{\mathcal W\backslash I}$ is the minor of $\tilde K$ containing all columns except those corresponding to the boundary vertices in $I$, and all rows.
\end{theorem}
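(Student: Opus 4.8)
The plan is to realize $\tilde K$ as $K^{\sigma,wt}$ with $n$ very simple columns appended, and then to collapse $\det\tilde K_{\mathcal W\setminus I}$ onto $\det K^{\sigma,wt}_I$ by a single generalized Laplace expansion, so that the whole content of (\ref{eq:minor_rel}) reduces to the identity $\det A_I=\det K^{\sigma,wt}_I$ of Theorem \ref{theo:speyer} together with one sign computation.

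First I would record the block shape of $\tilde K$. Writing $b_1,\dots,b_N$ for the internal black vertices of $\mathcal G$ and $\beta_1,\dots,\beta_n$ for its boundary ones, the move of Proposition \ref{prop:weak_K} leaves $b_1,\dots,b_N$ untouched, turns each $\beta_j$ into the degree-one white vertex $w_{r+j}$, and creates a black vertex $b_{N+j}$ that inherits the unique boundary edge of $\beta_j$ (same sign, same weight $\tilde t=t$) and is joined to $w_{r+j}$ by $e_j$ with $\tilde\sigma(e_j)=+1$, $\tilde t_{e_j}=1$. Ordering the rows as $b_1,\dots,b_N,b_{N+1},\dots,b_{N+n}$ and the columns as $w_1,\dots,w_{N+k},w_{r+1},\dots,w_{r+n}$, the internal and old-boundary edges reproduce exactly $K^{\sigma,wt}$ in the first $N+k$ columns, while the edges $e_j$ give the last $n$ columns:
\[
\tilde K = \bigl(\, K^{\sigma,wt}\ \big|\ C_0\,\bigr),\qquad C_0=\begin{pmatrix}0_{N\times n}\\ \mbox{Id}_n\end{pmatrix},
\]
the identity block sitting in the boundary-black rows $b_{N+1},\dots,b_{N+n}$.

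Next, deleting the boundary-white columns indexed by $I$ keeps all $N+k$ columns of $K^{\sigma,wt}$ together with the $n-k$ columns of $C_0$ indexed by $\bar I$; each surviving $C_0$-column is the standard basis vector supported on the single boundary-black row $b_{N+j}$, $j\in\bar I$. I would then expand $\det\tilde K_{\mathcal W\setminus I}$ by complementary minors along these $n-k$ columns. Only the term assigning to them the rows $\{b_{N+j}:j\in\bar I\}$ is nonzero; its minor is the identity and contributes $1$, and the complementary minor uses the remaining rows $\{b_1,\dots,b_N\}\cup\{b_{N+i}:i\in I\}$ against the first $N+k$ columns, which is exactly $K^{\sigma,wt}_I$ in the notation of Theorem \ref{theo:kast}. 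Hence $\det\tilde K_{\mathcal W\setminus I}=(-1)^{s(R)+s(C)}\det K^{\sigma,wt}_I$, where $s(R)=\sum_{j\in\bar I}(N+j)$ and $s(C)=\sum_{m=1}^{n-k}(N+k+m)$ are the row- and column-position sums of the expansion. Finally $\det K^{\sigma,wt}_I=\det A_I$ by the block reduction (\ref{eq:mat_A}) of Theorem \ref{theo:speyer}, a purely algebraic identity valid over $\mathbb C^*$.

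The one genuinely fiddly step — and where the exponent $\delta(I)$ is pinned down — is the parity of $s(R)+s(C)$, so the hard part will be checking $s(R)+s(C)\equiv\delta(I)\pmod 2$. Writing $s(R)+s(C)=2(n-k)N+k(n-k)+\sum_{j\in\bar I}j+\tfrac{(n-k)(n-k+1)}{2}$ and substituting $\sum_{j\in\bar I}j=\tfrac{n(n+1)}{2}-\sum_{i\in I}i$, the term $2(n-k)N$ drops mod $2$, while $\sum_{i\in I}i$ matches (mod $2$) the corresponding term in $\delta(I)\equiv\sum_{i\in I}i+\tfrac{k(k+1)}{2}$. Using the triangular-number identity $\tfrac{n(n+1)}{2}=\tfrac{(n-k)(n-k+1)}{2}+k(n-k)+\tfrac{k(k+1)}{2}$, the residual contribution collapses to $2k(n-k)+(n-k)(n-k+1)$, which is even since $(n-k)(n-k+1)$ is a product of consecutive integers. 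Thus $s(R)+s(C)\equiv\delta(I)$, and combined with $\det K^{\sigma,wt}_I=\det A_I$ this yields $\det\tilde K_{\mathcal W\setminus I}=(-1)^{\delta(I)}\det A_I$, equivalently (\ref{eq:minor_rel}).
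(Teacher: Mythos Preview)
Your proof is correct and follows essentially the same approach as the paper: both establish the block decomposition $\tilde K=\bigl(K^{\sigma,wt}\mid \begin{smallmatrix}0\\ \mathrm{Id}_n\end{smallmatrix}\bigr)$ and then reduce $\det\tilde K_{\mathcal W\setminus I}$ to $\det K^{\sigma,wt}_I=\det A_I$. The only difference is that the paper first column-reduces $\tilde K$ to the form (\ref{eq:tildeK_K}) and then declares the minor identity ``straightforward to check'', whereas you bypass the reduction and do a direct Laplace expansion along the surviving $C_0$-columns, working out the sign $(-1)^{\delta(I)}$ explicitly; this makes your version more self-contained but is not a genuinely different route.
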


\begin{proof}
Under the hypotheses, the weak Kasteleyn matrix ${\tilde K}$ for $(\tilde {\mathcal G}, \tilde \sigma)$ is obtained from the Kasteleyn matrix $K$ for $(\mathcal G,\sigma)$ by adding to its right a block containing the identity matrix
\begin{equation}\label{eq:kas_weak_block}
\renewcommand{\arraystretch}{1.5}
{\tilde K} = 
 \left(\begin{array}{@{}c|c@{}}
  K &
  \begin{matrix}
  0 \\
  \hline
  \mbox{Id}_n
  \end{matrix}
\end{array}\right).
\end{equation}
Therefore, if we use (\ref{eq:mat_A}) to transform ${\tilde K}$ into block form, we get
\begin{equation}\label{eq:tildeK_K}
 \renewcommand{\arraystretch}{1.5}
 \begin{blockarray}{cccc}
	& N & k & n\\
    \begin{block}{c(c|c|c)}
 N\,\, & \,\mbox{Id}_N \, & \, 0 \, & \, 0 \, \\
\cline{2-4}		
n\,\, & *          & \,  A^T \, & \mbox{Id}_n \\
    \end{block}
  \end{blockarray} \, ,
\end{equation}
and it is straightforward to check that
\[
\det A_I = \det (-1)^{\delta(I)} \det {\tilde K}_{\mathcal W\backslash I},
\]
where notations are as in (\ref{eq:minor_rel}). Finally  $[A]\in \Pi_{\mathcal M}$ is the point represented by the network $(\mathcal G, f)$ because of Theorem \ref{cor:equiv_par}.
\end{proof}

\begin{corollary}
Under the hypotheses of Theorem \ref{prop:AGPR_1}, the minors of ${\tilde K}$ are real, but do not share the same sign for any given choice of real positive face weights. Moreover, in case of unit weights 
$|\det {\tilde K}_{\mathcal W\backslash I}|$ equals the number of almost perfect matchings $M$ of $\mathcal G$ such that $\partial M =I$.
\end{corollary}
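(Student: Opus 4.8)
The plan is to read off everything from the minor identity (\ref{eq:minor_rel}) already established in Theorem \ref{prop:AGPR_1}, namely $\det {\tilde K}_{\mathcal W\backslash I} = (-1)^{\delta(I)} \det A_I$, and to control the two factors separately. Realness is immediate: the entries of ${\tilde K}^{{\tilde \sigma},wt}$ have the form $\pm t_e$ with $t_e$ real, so every minor is real. For the sign statement I would specialize the hypotheses of Theorem \ref{prop:AGPR_1} to real positive face weights, so that $[A]\in\S\subset\GTNN$ and hence $\det A_I\ge 0$ for every $k$--element subset $I$, with $\det A_I>0$ exactly when $I\in\mathcal M$. Then for $I\in\mathcal M$ the sign of the nonzero minor $\det {\tilde K}_{\mathcal W\backslash I}$ equals $(-1)^{\delta(I)}$, independently of the chosen positive weights.

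It then remains to show that this sign is not constant over $\mathcal M$. First I would rewrite the exponent as $\delta(I)=\sum_{i\in I} i-\tfrac{k(k+1)}{2}$, so that the parity of $\delta(I)$ is governed by $\sum_{i\in I} i \pmod 2$; two bases $I,J$ thus produce opposite signs precisely when $\sum_{i\in I} i\not\equiv\sum_{i\in J} i \pmod 2$. The task is therefore reduced to exhibiting two bases of $\mathcal M$ whose element sums have opposite parity. This is the main obstacle, and I expect it to be the only nontrivial point.

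To settle it I would use the connected-component structure of positroids: $\mathcal M$ splits as a direct sum of connected positroids supported on cyclic intervals of $[n]$, and bases may be chosen independently on each component. Since $\mathcal M$ is irreducible it has neither loops nor coloops, so no component is a singleton; every component is a cyclic interval of size at least two. Such an interval contains two elements of opposite parity unless it is the wrap-around pair $\{n,1\}$ with $n$ odd, and such monochromatic pairs cannot tile the rest of $[n]$ without leaving singleton (hence loop or coloop) components. Hence at least one component $\mathcal M_j$, supported on a cyclic interval $S_j$ meeting both parity classes, is connected of positive dimension; its base polytope is full-dimensional inside $\{\sum_{i\in S_j} x_i = k_j\}$, so the functional $\sum_{i\in S_j,\, i \textup{ odd}} x_i$ is nonconstant on its bases and yields two bases of $\mathcal M_j$ with element sums of opposite parity. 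Fixing bases on the remaining components produces the desired $I,J\in\mathcal M$.

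Finally, for the unit-weight count I would again invoke (\ref{eq:minor_rel}): taking $t_e\equiv 1$ gives $|\det {\tilde K}_{\mathcal W\backslash I}| = |\det A_I| = |\det K^{\sigma,wt}_I|$ by (\ref{eq:det_A_I}), which for unit weights is $|\det K^{\sigma}_I|$. By Theorem \ref{theo:kast}, equation (\ref{eq:apm_count}), this equals $\Delta(\mathcal G,I)$, the number of almost perfect matchings $M$ of $\mathcal G$ with $\partial M = I$, which completes the argument.
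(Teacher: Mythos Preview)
Your argument is correct and follows the paper's (implicit) approach: the corollary is stated there without proof, to be read off directly from (\ref{eq:minor_rel}), and that is exactly what you do for realness, for the sign, and for the unit-weight count via Theorem \ref{theo:kast}. Your treatment of the sign-variation claim is more careful than the paper's; the one step you leave implicit---passing from ``the odd-sum functional is nonconstant on the bases of $\mathcal M_j$'' to ``two bases have element sums of opposite parity''---is justified because a single basis exchange changes that functional by $0$ or $\pm 1$, so its values on bases form a set of consecutive integers and hence hit both parities once it is nonconstant.
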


Next, if one introduces the system of relations for the weak signature, then its kernel provides the point $[A]\in \Pi_{\mathcal M}$ represented by the given network \cite{AGPR}.

\begin{theorem} \textbf{Construction of the representative matrix of the network using the system of relations for the weak Kasteleyn signature} \label{prop:AGPR_2} \cite{AGPR}	
Let $[A] \in \Pi_{\mathcal M} \subset Gr(k,n)$ be the point represented by the network $(\tilde{\mathcal G}, t_{bw})$, where $\tilde{\mathcal G}=(\mathcal B\cup \mathcal W, \mathcal E)$ is a reduced bipartite graph with white boundary vertices, and $t_{bw} : \mathcal E \mapsto \mathbb{C}^*$ an edge weighting. Let
$\tilde \sigma : \mathcal E\mapsto \{ \pm 1\}$ be a weak Kasteleyn signature on $\tilde {\mathcal G}$, and ${\tilde K} \equiv {\tilde K}^{{\tilde \sigma},wt}_{bw}$ be the weighted Kasteleyn matrix for these data. 
Let the weak Kasteleyn system $(\tilde v^{(k)}, {\tilde R}_b)$ be defined as follows:
\begin{enumerate}
\item $\tilde v^{(k)}_w$ is an element in the vector space $\mathbb{C}^k$ assigned to the white vertex $w\in \mathcal W$;
\item At any given black vertex $b\in \mathcal B$, the variables $\tilde v^{(k)}_w$ satisfy the linear relation
\begin{equation}\label{eq:rel_kas_weak}
{\tilde R}_b (\tilde v^{(k)}) \equiv \sum_{w\in \mathcal W} {\tilde\sigma}(e) t_{bw} \tilde v^{(k)}_{w} \equiv \sum_{w\in \mathcal W} {	\tilde K}^{{\tilde \sigma},wt}_{bw} \tilde v^{(k)}_{w} =0.
\end{equation}
\end{enumerate}
Then, there exist solutions to the above system such that the vectors at the boundary vertices $\{ \tilde v^{(k)}_{i}, \, i\in [n] \}$ span $\mathbb{R}^{k}$, and in such case the $k\times n$ matrix $A$ whose columns are the vectors $\tilde v^{(k)}_i$, $i\in [n]$, at the boundary vertices,
\[
A = (\tilde v^{(k)}_1, \tilde v^{(k)}_2,\dots, \tilde v^{(k)}_n),
\]
represents $[A]$. 
\end{theorem}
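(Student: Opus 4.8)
The plan is to reduce the statement to the explicit block form of the weighted weak Kasteleyn matrix already obtained in the proof of Theorem~\ref{prop:AGPR_1}, and then read off the kernel by hand. Since the boundary vertices of $\tilde{\mathcal G}$ are white and have degree one, each contributes a single nonzero entry to its column of $\tilde K$, so $\tilde{\mathcal G}$ is move-- and gauge--equivalent to a graph of the normal form constructed in Proposition~\ref{prop:weak_K}. Invoking the last assertion of that Proposition (every weak Kasteleyn signature is gauge equivalent to the constructed one), together with the invariance of the represented point under the moves of Figure~\ref{fig:move} and under the gauge \eqref{eq:weight_gauge_und}, I may assume without loss of generality that $\tilde K$ has the block structure \eqref{eq:kas_weak_block}, with $K$ the ordinary Kasteleyn matrix of the associated black--boundary graph $\mathcal G$.

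First I would record the dimension count: the weak Kasteleyn matrix $\tilde K$ has maximal rank $|\mathcal B|$, so the space of scalar solutions of $\sum_{w}\tilde K_{bw}x_w=0$ is exactly $(|\mathcal W|-|\mathcal B|)=k$--dimensional. A $\mathbb{C}^k$--valued assignment $\tilde v^{(k)}$ solving \eqref{eq:rel_kas_weak} is the same datum as a choice of $k$ scalar solutions, one for each coordinate; equivalently, it is a $k\times|\mathcal W|$ matrix $V$ whose rows lie in $\ker\tilde K$. Choosing these rows to be a basis of $\ker\tilde K$ produces a $V$ of full rank $k$, and this is precisely the family of solutions for which the boundary vectors will span.

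Next I would solve the kernel explicitly. Transforming $\tilde K$ into the block form \eqref{eq:tildeK_K} and imposing $\tilde K x=0$ forces the first $N$ entries of $x$ to vanish and the last $n$ entries (those indexed by the white boundary vertices) to equal $-A^T$ applied to the free middle block $x^{(k)}\in\mathbb{C}^k$; thus $\ker\tilde K=\{(0,\,x^{(k)},\,-A^T x^{(k)}):x^{(k)}\in\mathbb{C}^k\}$. Letting $x^{(k)}$ run through the standard basis of $\mathbb{C}^k$ makes the boundary block of $V$ equal to $-A$. Hence the $k\times n$ matrix whose columns are the boundary vectors $\tilde v^{(k)}_i$ is exactly $-A$, which represents the same point as $A$; its rank is $k$, so the boundary vectors indeed span the full space, and by Theorem~\ref{cor:equiv_par} this point is the one represented by the network, completing the argument.

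The main obstacle is the reduction of a general white--boundary $\tilde{\mathcal G}$ (with an arbitrary weak signature) to the normal form \eqref{eq:kas_weak_block}: one must check that passing to a move-- and gauge--equivalent graph alters neither the span of the boundary vectors nor the represented point, and that the $\pm1$ gauge relating an arbitrary weak Kasteleyn signature to the constructed one merely rescales each $\tilde v^{(k)}_w$ by a sign, leaving the column span and hence $[A]$ unchanged. A coordinate--free alternative that bypasses this reduction is to argue through Pl\"ucker duality: the row span of $V$ is the orthogonal complement of the row span of $\tilde K$, so $\det V_J=\pm\det\tilde K_{\mathcal W\setminus J}$; restricting $J$ to the boundary columns indexed by $I$ and combining with the minor identity \eqref{eq:minor_rel} of Theorem~\ref{prop:AGPR_1} yields $\det A_I=\pm\det A_I$ up to sign. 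The only delicate point on this route is verifying that the duality signs and the signs $(-1)^{\delta(I)}$ of \eqref{eq:minor_rel} combine into a single $I$--independent constant, which is exactly the bookkeeping one avoids by using the explicit block form.
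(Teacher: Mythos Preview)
The paper does not give its own proof of this statement: Theorem~\ref{prop:AGPR_2} is quoted from \cite{AGPR} and used as a black box (in particular, in the subsequent proof of Theorem~\ref{theo:sol_kas_sys_1}). So there is no proof in the paper to compare against; the relevant question is simply whether your argument is correct.

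Your core computation is correct. Once $\tilde K$ is in the block form \eqref{eq:tildeK_K}, row operations have preserved $\ker\tilde K$, and solving $\tilde Kx=0$ with $x=(x^{(N)},x^{(k)},x^{(n)})$ gives $x^{(N)}=0$, $x^{(n)}=-A^Tx^{(k)}$; stacking the standard basis for $x^{(k)}$ as rows yields $V=[\,0\mid \mathrm{Id}_k\mid -A\,]$, so the boundary block is $-A$, which represents $[A]$. This is exactly the right way to extract the statement from the material already proved in Theorem~\ref{prop:AGPR_1}.

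The one point that deserves a little more than you give it is the reduction to normal form. Proposition~\ref{prop:weak_K} produces $\tilde{\mathcal G}$ from a black--boundary $\mathcal G$ by inserting bivalent black vertices next to the boundary; the converse direction requires that the black vertex adjacent to each white boundary vertex be bivalent, which need not hold for a general reduced $\tilde{\mathcal G}$. You can always arrange this by a vertex expansion (Move~(1) of Figure~\ref{fig:move}) at the relevant black vertex before applying the boundary removal (Move~(2)), but you should say so explicitly. You also need the (easy) observation that the gauge of Definition~\ref{def:eq_sign} multiplies each column of $\tilde K$ by $\pm1$, hence multiplies each $\tilde v^{(k)}_w$ by $\pm1$ and leaves the row span of $V$---and therefore the Grassmann point determined by the boundary columns---unchanged. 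With those two sentences added, your first route is complete; the Pl\"ucker--duality alternative you sketch is also valid but, as you note, the sign bookkeeping makes it no shorter.
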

In particular, if one restricts the map to the real positive octant, the kernels of the corresponding weighted weak Kasteleyn matrices span $\S$. 

Let us now complete the proof of Theorem \ref{theo:sol_kas_sys_1}:

\begin{proof} Let $(\mathcal G, f)$ be the network with black boundary vertices representing $[A]\in \Pi_{\mathcal M}$ in the statement of Theorem \ref{theo:sol_kas_sys_1}, and let $(\tilde G,f)$ be the equivalent network obtained from it adding a black boundary--adjacent vertex next to each boundary vertex, and changing the color of the boundary vertices to white. Let $\sigma$ be a Kasteleyn signature on $\mathcal G$ and $\tilde \sigma$ the weak Kasteleyn signature on $\tilde G$ fulfilling (\ref{eq:weak_sign_constr}).
Then the weak weighted Kasteleyn matrix ${\tilde K}$ on $(\tilde {\mathcal G}, t_{bw}, {\tilde \sigma})$ (see (\ref{def:kas_entries_weak})), where $t_{bw}$ is an edge weighting for the network $(\tilde {\mathcal G}, f)$, has maximal rank by construction and may be put in the block form
\begin{equation}\label{eq:mat_A_AGPR}
 \renewcommand{\arraystretch}{1.5}
 \begin{blockarray}{ccc}
	& N +k & n\\
    \begin{block}{c(c|c)}
 N+k\,\, & \,\mbox{Id}_{N+k} \, & * \\
\cline{2-3}		
n-k\,\, & 0          & \,  \tilde A \,\\
    \end{block}
  \end{blockarray} \, .
\end{equation}
By Theorem \ref{prop:AGPR_2}, $\tilde A$ is orthogonal to $A$:
\begin{equation}\label{eq:ortho_3}
\tilde A . A^T = 0.
\end{equation}
Therefore $[\tilde A]$ and $[\bar A^{o}]$ in Theorem \ref{theo:sol_kas_sys_1} are the same point in $Gr(n-k,n)$ 
\[
[\tilde A] = [\bar A^o].
\]
Next, let ${\mathcal N}^{\prime} = (\tilde {\mathcal G}, f^{\prime})$, be the network such that 
\begin{equation}\label{eq:face_tras}
{f}^{\prime}_i =\left\{ \begin{array}{ll}
f_i, & \mbox{ if } \Omega_i \mbox{ is an internal face},\\
\noalign{\medskip}
(-1)^{b_{\Omega}}f_i& \mbox{ if } \Omega_i \mbox{ is an external finite face with } 2b_{\Omega} \mbox{ boundary vertices},
\end{array}\right.
\end{equation}
and let $\sigma^{\prime}$ be the Kasteleyn signature on $\tilde {\mathcal G}$ satisfying (\ref{eq:kas_sign_transf}), 
\[
\sigma^{\prime}(e) =\left\{ \begin{array}{ll} \sigma(e), & \mbox{ if } e \mbox{ is an edge common to both graphs};\\
(-1)^{j-1}, & \mbox{ if } e = e_j, \quad j	\in [n].
\end{array}\right.
\] 
Then, $t^{\prime}_e = \sigma^{\prime}_e {\tilde \sigma}_e t_e$ is an edge weighting for the network $(\tilde {\mathcal G}, f^{\prime})$, and $\tilde K$ coincides with the Kasteleyn matrix for the data $(\tilde {\mathcal G}, f^{\prime}, \sigma^{\prime})$ (see (\ref{eq:kas_wt_entries})).

A network representing $[\tilde A]=[\bar A^{o}]\in \Pi_{\overline {\mathcal M}}$ is obtained applying Proposition \ref{prop:dual}: start with the network $(	\tilde {\mathcal G}, f^{\prime})$ and apply the duality transformation of Definition \ref{def:dual_net}. Then the network $(\tilde {\mathcal G}^{\prime}, \bar{f^{\prime}})$ represents $[\bar A^{o}]$ where:
\begin{enumerate}
\item $\tilde{\mathcal G}^{\prime}$ is the dual graph to $\tilde {\mathcal G}$ obtained by changing the color of all vertices of $\tilde {\mathcal G}$, boundary vertices included;
\item $\bar{f^{\prime}}_i =(f^{\prime}_i)^{-1} =(-1)^{b_{\Omega_i}} \, (f_i)^{-1}$, where $2b_{\Omega_i}$ is the number of boundary vertices bounding the face $\Omega_i$ and thus it satisfies (\ref{eq:face_tras_dual}).
\end{enumerate}
\end{proof}

\begin{figure}
  \centering{\includegraphics[width=0.37\textwidth]{Example_Gr_26_white_AGPR.pdf}
\hspace{.5 truecm}
	\includegraphics[width=0.37\textwidth]{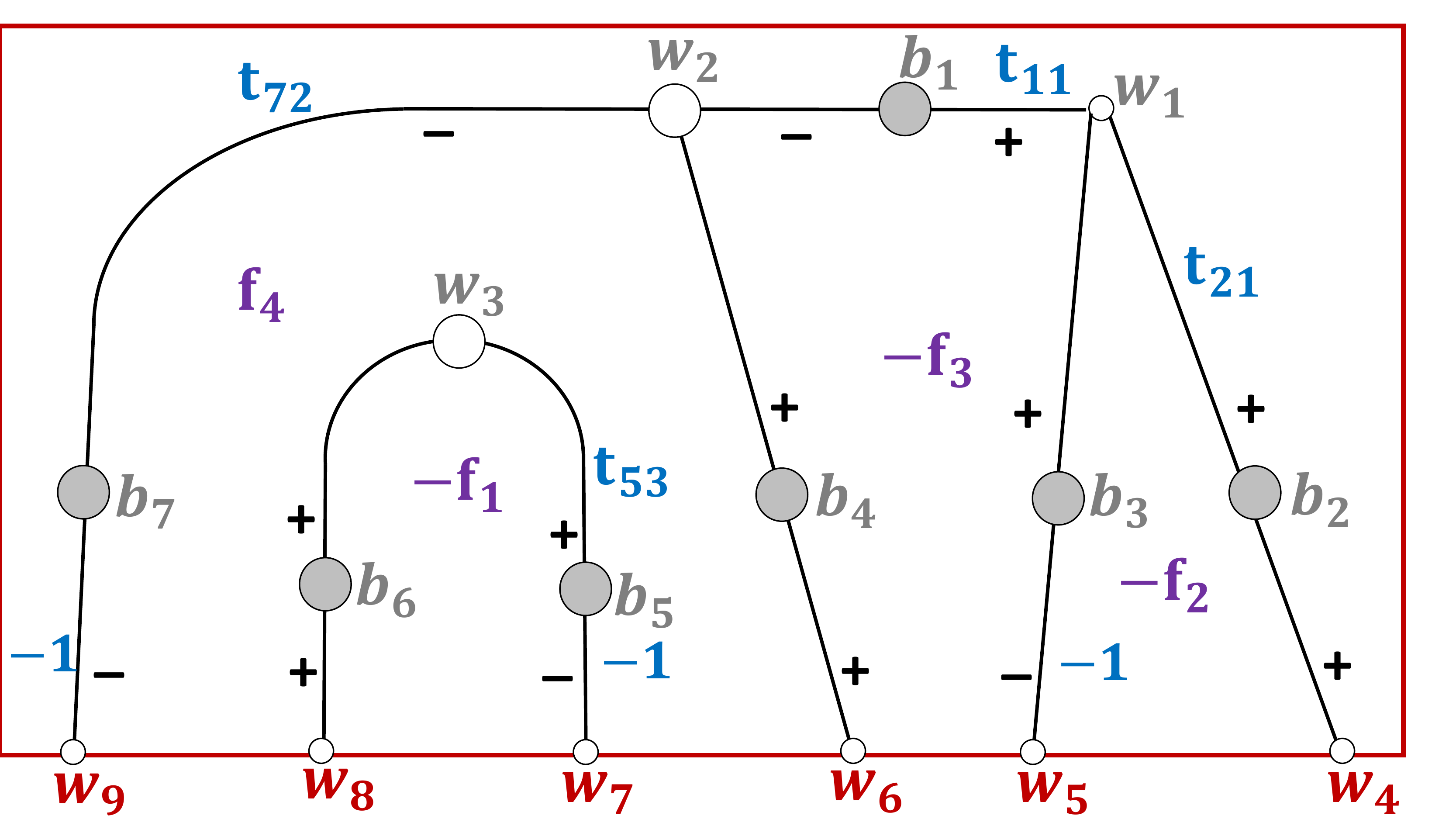}}
	\vspace{-.3 truecm}
  \caption{\small{\sl The signature on the left is weak Kasteleyn, whereas on the right the signature is Kasteleyn. The face weights of the two networks are related by  (\ref{eq:face_tras});  therefore the weighted weak Kasteleyn matrix associated to the network on the left coincides with the weighted Kasteleyn matrix of the network on the right. }\label{fig:ex_Gr26_AGPR_dual}}
\end{figure}

\smallskip

We illustrate Theorem \ref{theo:sol_kas_sys_1} and the duality relations of the two Kasteleyn signatures for the networks in Figure \ref{fig:ex_Gr26_AGPR_dual}. The two networks represent distinct points in the positroid variety $\Pi_{\mathcal M} \subset Gr(2,6)$ indexed by the derangement $\pi = \left( 6,1,2,5,4,3  \right)$: indeed they share the same graph, but have different face weights. Assume $f_i>0$, $i\in [4]$.
The graph on the left has face weights $f= (f_1,f_2,f_3,f_4)$ and has a weak Kasteleyn signature, whereas on the right graph the face weights $f^{\prime} =(f^{\prime}_1,f^{\prime}_2,f^{\prime}_3,f^{\prime}_4)=(-f_1,-f_2,-f_3,f_4)$ satisfy (\ref{eq:face_tras}) and the signature is Kasteleyn. We mark in blue the edge weights different from one on the graphs.

The weighted weak Kasteleyn matrix for the network $(\mathcal G, f)$ and the weighted Kasteleyn matrix for the network $(\mathcal G, f^{\prime})$ coincide and we denote both of them ${\tilde K}$,
\[
{\tilde K} = 
\left( 
\begin {array}{ccccccccc} 
t_{11} & -1 & 0 & 0 & 0 & 0 & 0 & 0 & 0 \\
\noalign{\medskip}
t_{21} & 0 & 0 & 1 & 0 & 0 & 0 & 0 & 0\\ 
\noalign{\medskip}
1 & 0 & 0 & 0 & 1 & 0 & 0 & 0 & 0 \\ 
\noalign{\medskip}
0 & 1 & 0 & 0 & 0 & 1 & 0 & 0 & 0 \\ 
\noalign{\medskip}
0 & 0 & t_{53} & 0 & 0 & 0 & 1 & 0 & 0 \\ 
\noalign{\medskip}
0 & 0 & 1 & 0 & 0 & 0 & 0 & 1 & 0 \\ 
\noalign{\medskip}
0 & -t_{72} & 0 & 0 & 0 & 0 & 0 & 0 & 1 
\end {array} \right)
\]
Applying Theorem \ref{prop:AGPR_2}, a matrix for the point $[A]\in \S \subset Gr^{\mbox{\tiny TNN}}(2,6)$ represented by the network $(G,f)$ is
\[
A = \left(  v_{w_4}, v_{w_5},v_{w_6},v_{w_7},v_{w_8},v_{w_9}  \right) =
 \left( \begin{array}{cccccc} 
1 & t_{21}^{-1} & t_{21}^{-1} t_{11} & 0 & 0 & -t_{21}^{-1} t_{11} t_{72}\\ 
\noalign{\medskip}
0 & 0 & 0 & 1 & t_{53}^{-1} & 0
\end {array}
 \right).
\]
Then Theorem \ref{theo:speyer} provides a matrix for the point $[A^{\prime}]\in \Pi_{\mathcal M} \backslash \S \subset Gr(2,6) \backslash Gr^{\mbox{\tiny TNN}}(2,6)$ represented by the network $(\mathcal G, f^{\prime})$:
\[
A^{\prime} =
 \left( \begin{array}{cccccc} 
1& -t_{21}^{-1} & t_{21}^{-1} t_{11} & 0 & 0 & t_{21}^{-1} t_{11} t_{72}\\ 
\noalign{\medskip}
0 & 0 & 0 & 1 & -t_{53}^{-1} & 0
\end {array}
 \right).
\]
Finally, trasforming ${\tilde K}$ to the block form as in (\ref{eq:mat_A_AGPR}), we obtain the matrix representing $[\tilde A]\in Gr(4,6)$
\[
\tilde A = \left( \begin {array}{cccccc} 
1 & 0 & 0 & 0 & 0 & t_{21}( t_{11} t_{72})^{-1} \\ 
\noalign{\medskip}
0 & 1 & 0 & 0 & 0 & (t_{11} t_{72})^{-1} \\ 
\noalign{\medskip}
0 & 0 & 1 & 0 & 0 & t_{72}^{-1} \\ 
\noalign{\medskip}
0 & 0 & 0 & 1 & -t_{53} & 0
\end {array} \right).
\]
Summarizing, we have the following relations:
\begin{enumerate}
\item $[\tilde A]$ is dual to $[A^{\prime}]$ in the sense of Definition \ref{def:dual_net} and Proposition \ref{prop:dual}, therefore $[\tilde A]\in \Pi_{\overline{\mathcal M}}\backslash \Sprime\subset Gr(4,6) \backslash Gr^{\mbox{\tiny TNN}}(4,6)$ where $\overline{\mathcal M}$ is the dual positroid to $\mathcal M$;
\item $[\tilde A]$ is dual to $[A]$ in the sense of (\ref{eq:ortho_3}), that is $\tilde A A^T=0$;
\item $[\tilde A]\equiv[\bar A^o]$ where $\bar A^o$ is the matrix constructed in Theorem \ref{theo:sol_kas_sys_1}.
\end{enumerate}

\subsection{Lam representation of systems of relations}\label{sec:lam}

In this Section we recall an alternative representation of systems of relations originally introduced in \cite{Lam2} to provide a mathematical framework for the computation of scattering amplitudes on on--shell diagrams for $N=4$ SYM theory \cite{AGP2}. Lam formulation involves variables on directed half--edges, and Kasteleyn system of relations may be equivalently expressed in this form. We shall apply Lam representation in Section \ref{sec:geom_sign} to construct the solutions to Kasteleyn systems of relations.

\begin{definition}\textbf{Lam system of relations} \label{def:lam_sys} \cite{Lam2} Let $(\mathcal G, \mathcal O)$ be a reduced planar bipartite graph in the disk with black boundary vertices and perfect orientation $\mathcal O$. Let $\epsilon: \mathcal E\mapsto \{0,1\}$ be a signature defined on the oriented edges of the graph, and let $t_{uv}$ be the weight of the oriented edge $e=\overrightarrow{uv}$. Then Lam system of relations associated to such signature on the directed network $(\mathcal G, \mathcal O, t_{uv})$ is the following system in the formal half--edge variables $z_{u,e}$:
\begin{enumerate} 
\item For any edge $e=\overrightarrow{uv}$, $z_{u,e} = (-1)^{\epsilon(e)} t_{uv} z_{v,e}$;
\item If $e_i$, $i\in [m]$, are the edges at an $m$-valent white vertex $v$, then $\sum_{i=1}^m z_{v,e_i} =0$;
\item If $e_i$, $i\in [m]$, are the edges at an $m$-valent black vertex $v$, then $z_{v,e_i} =z_{v,e_j}$ for all $i,j\in[m]$.
\end{enumerate}
\end{definition}

Signatures on oriented graphs form equivalence classes with respect to the following gauge equivalence transformation.

\begin{definition}\textbf{Equivalence between edge signatures} \label{def:admiss_sign_gauge}
Let $\epsilon^{(1)}$ and $\epsilon^{(2)}$ be two signatures on the perfectly oriented reduced bipartite graph $(\mathcal G, \mathcal O)$. We say that the two signatures are equivalent if there exists an index $\eta(u) \in \{ 0,1\}$ at each internal vertex $u$ such that 
\begin{equation}\label{eq:equiv_sign}
\epsilon^{(2)} (e) = \left\{\begin{array}{ll}
\epsilon^{(1)} (e) +\eta(u)+\eta(v) \mod 2, & \mbox{ if } e=\overrightarrow{uv} \mbox{ is an internal edge},\\
\epsilon^{(1)} (e) +\eta(u) \mod 2, & \mbox{ if } e=\overrightarrow{uv} \mbox{ is the edge at the boundary sink } v,\\
\epsilon^{(1)} (e) +\eta(v) \mod 2, & \mbox{ if } e=\overrightarrow{uv} \mbox{ is the edge at the boundary source } u.
\end{array}\right.
\end{equation} 
\end{definition}

If a system of relations has full rank for one signature, it has also full rank for any other signature equivalent to it and the solution at the boundary vertices is the same \cite{Lam2}.

Next, let us reformulate Kasteleyn system of relations as a Lam system for half--edge variables.

\begin{proposition}\textbf{Kasteleyn system in Lam form} 	\label{prop:kas_lam_form}
Let $\mathcal G=(\mathcal B\cup \mathcal W, \mathcal E)$ be a reduced planar bipartite graph in the disk with black boundary vertices. Let $\sigma : \mathcal E\mapsto \{ \pm 1\}$ be a Kasteleyn signature on $\mathcal G$, and let $t_{b,w} : \mathcal E \mapsto \mathbb{C}^*$ be an edge weighting on the undirected graph. Let $(v^{(k)}, R_b)$ be a Kasteleyn system of relations for such data as in Definition \ref{def:kas_sys_rel_new} on some vector space $V$. Let $\mathcal O$ be a perfect orientation on $\mathcal G$.
For any edge $e\in \mathcal E$ define 
\begin{equation}\label{eq:sign_lam_kas}
\epsilon^{(k)} (e) = \left\{ \begin{array}{ll}
0 & \quad \mbox{ if } \sigma(e) = 1; \\
1 & \quad \mbox{ if } \sigma(e) = -1,
\end{array}\right.
\end{equation}
and let $z^{(k)}_{u,e}\in V$ be the Lam variables in Definition \ref{def:lam_sys} where we use (\ref{eq:edge_weights}) to pass to the edge weights on the directed graph:
\[
t_{\overrightarrow{uv}} = \left\{ \begin{array}{ll}
t_{bw}, & \mbox{ if } u=w, \; v=b;\\
t_{bw}^{-1},& \mbox{ if } u=b, \; v=w.
\end{array}
\right.
\]
Then Lam system of relations for the signature $\epsilon^{(k)}$ is equivalent to Kasteleyn system of relations for the signature $\sigma$ using the following correspondence:
\begin{enumerate}
\item $z_{b,e}=v^{(k)}_b$ at any given black vertex $b\in \mathcal B$ and for any edge $e$ at $b$;
\item $z_{w,e}=\sigma(e) t_{b,w}v^{(k)}_b$ at any given white vertex $w\in \mathcal W$ and for any edge $e$ at $w$.
\end{enumerate}
Therefore Theorems \ref{theo:sol_kas_sys_1} and \ref{theo:sol_kas_sys_2} may be reformulated for Lam system of relations. In particular, if $V=\mathbb{R}^n$ and we restrict ourselves to equivalence classes of positive edge weights, then the solution to Lam system of relations at the boundary vertices induces Postnikov boundary measurement map.
\end{proposition}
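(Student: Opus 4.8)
The plan is to prove the equivalence by a direct dictionary check: I will show that, under the correspondence $z_{b,e}=v^{(k)}_b$ and $z_{w,e}=\sigma(e)\,t_{bw}\,v^{(k)}_b$ stated in the Proposition, each of the three relations of Definition \ref{def:lam_sys} becomes either an identity or precisely a Kasteleyn relation (\ref{eq:rel_kas_strong}), so that the two systems share the same solution set. Throughout I use that $(-1)^{\epsilon^{(k)}(e)}=\sigma(e)$ by (\ref{eq:sign_lam_kas}), and that a perfect orientation $\mathcal O$ exists by Definition \ref{def:graph}(4).

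First I would dispose of the black--vertex relation of Definition \ref{def:lam_sys}. Since the dictionary assigns $z_{b,e}=v^{(k)}_b$ independently of the edge $e$ incident to $b$, the equality $z_{b,e_i}=z_{b,e_j}$ holds tautologically. Next, the edge relation of Definition \ref{def:lam_sys} is the substantive check, and I would verify it separately for the two ways $\mathcal O$ can orient a given edge $e=\overline{bw}$, using the reciprocal weight rule (\ref{eq:edge_weights}). If $\mathcal O$ orients $e$ as $\overrightarrow{bw}$ then $t_{\overrightarrow{bw}}=t_{bw}^{-1}$ and the relation reads $z_{b,e}=\sigma(e)\,t_{bw}^{-1}z_{w,e}$; substituting the dictionary gives $v^{(k)}_b=\sigma(e)^2\,t_{bw}^{-1}t_{bw}\,v^{(k)}_b=v^{(k)}_b$ because $\sigma(e)^2=1$. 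If instead $\mathcal O$ orients $e$ as $\overrightarrow{wb}$ then $t_{\overrightarrow{wb}}=t_{bw}$ and the relation reads $z_{w,e}=\sigma(e)\,t_{bw}z_{b,e}$, which is exactly the defining equation $z_{w,e}=\sigma(e)\,t_{bw}v^{(k)}_b$ of the dictionary. Finally the white--vertex relation of Definition \ref{def:lam_sys}, summing the dictionary over the edges incident to a white vertex $w$, yields
\[
\sum_{e \ni w} z_{w,e}=\sum_{b\in\mathcal B}\sigma(\overline{bw})\,t_{bw}\,v^{(k)}_b=R_w(v^{(k)}),
\]
so that it is equivalent to the Kasteleyn relation $R_w(v^{(k)})=0$ of (\ref{eq:rel_kas_strong}).

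These three checks produce a bijection between the two solution sets: from a Lam solution, the black--vertex relation lets one define $v^{(k)}_b:=z_{b,\cdot}$, the edge relation then forces $z_{w,e}$ into the dictionary form, and the white--vertex relation gives $R_w=0$; conversely (\ref{eq:rel_kas_strong}) defines $z$ by the dictionary and satisfies all Lam relations. Because the boundary vertices are black, $z_{b_j,e}=v^{(k)}_{b_j}$, so the boundary data of corresponding solutions coincide, and Theorems \ref{theo:sol_kas_sys_1} and \ref{theo:sol_kas_sys_2} transport verbatim to the Lam system. In particular, with $V=\mathbb{R}^n$, positive edge weights, and the canonical boundary conditions of Theorem \ref{theo:sol_kas_sys_2}, the Lam boundary solution reconstructs the same reduced row echelon matrix $A$, which by Theorem \ref{cor:equiv_par} represents $[A^{bmm}]$; hence it induces Postnikov boundary measurement map.

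The only delicate point --- and the step I would write out most carefully --- is that the dictionary is stated without reference to $\mathcal O$, whereas both Lam's edge relation and the directed weights $t_{\overrightarrow{uv}}$ depend on $\mathcal O$. The content of the $\sigma(e)^2=1$ cancellation together with the reciprocal convention (\ref{eq:edge_weights}) is precisely that the edge relation collapses to an identity for \emph{either} orientation of each bipartite edge; this simultaneously establishes the consistency of the dictionary and the independence of the Lam solution from the choice of perfect orientation. If desired, this independence can also be phrased through the gauge equivalence of signatures in Definition \ref{def:admiss_sign_gauge}, but the two--case verification above already renders it manifest.
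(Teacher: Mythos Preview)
Your proof is correct and is exactly the direct dictionary verification that the statement invites; the paper in fact leaves this proposition without an explicit proof, treating the equivalence as immediate from the definitions, so your three checks (black--vertex tautology, edge relation in both orientations via the reciprocal rule (\ref{eq:edge_weights}) and $\sigma(e)^2=1$, and white--vertex relation collapsing to $R_w$) constitute the natural and complete argument.
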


\begin{corollary}\textbf{Invariance of the signature $\epsilon^{(k)}$} Let $\epsilon$ be a signature in the equivalence class of $\epsilon^{(k)}$ on the perfectly oriented graph $(\mathcal G, \mathcal O)$. Then Lam system of relations for the signature $\epsilon$ has full rank for any choice of positive real weights and for any perfect orientation of $\mathcal G$. 

In particular, if  $z_{u,e}= (-1)^{\epsilon (e)} t_{uv} z_{u,e}$ for the oriented edge $e=\overrightarrow{uv}$, then $z_{v,-e}= (-1)^{\epsilon (e)} t_{uv}^{-1} z_{v,-e}$ for the reversed orientation $-e=\overrightarrow{vu}$.
\end{corollary}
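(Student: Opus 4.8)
The plan is to prove the statement in three moves: first establish full rank for the reference data $(\mathcal O,\epsilon^{(k)})$, then propagate it across the gauge equivalence class, and finally remove the dependence on the perfect orientation via the reversal identity in the ``in particular'' clause. For the first move I would apply Proposition \ref{prop:kas_lam_form}: for the signature $\epsilon^{(k)}$ of (\ref{eq:sign_lam_kas}) the Lam system is equivalent to the Kasteleyn system of relations for $\sigma$, whose weighted matrix $K^{\sigma,wt}$ has full rank $|\mathcal W|$ by Theorem \ref{theo:kast} and the remark following it, since a maximal minor $\det K^{\sigma,wt}_I$ is nonzero for every base $I\in\mathcal M$ and $\mathcal M$ is nonempty. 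Hence the Lam system for $(\mathcal O,\epsilon^{(k)})$ has full rank and an $(n-k)$--dimensional boundary kernel, for every positive edge weighting.

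For the second move I would invoke the invariance recalled after Definition \ref{def:admiss_sign_gauge}: if a Lam system has full rank for one signature, the same holds, with the identical solution at the boundary vertices, for every signature in its equivalence class. Since $\epsilon$ is equivalent to $\epsilon^{(k)}$ on $(\mathcal G,\mathcal O)$, this settles the claim for the fixed orientation $\mathcal O$ and every positive weighting.

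The third move rests on the ``in particular'' identity, which I would prove directly. The vertex relations (2)--(3) of Definition \ref{def:lam_sys} do not mention the orientation, so only the edge relation (1) is at issue. For a fixed edge the half--edge variables $z_{u,\cdot},z_{v,\cdot}$ and the value $\epsilon(e)=\epsilon(-e)$ belong to the undirected edge and are unchanged by reorientation, whereas by (\ref{eq:edge_weights}) and the reversal rule $t_{\overrightarrow{vu}}=t_{\overrightarrow{uv}}^{-1}$ the directed weight inverts. Solving $z_{u,e}=(-1)^{\epsilon(e)}t_{uv}z_{v,e}$ for $z_{v,e}$ and using $(-1)^{2\epsilon(e)}=1$ yields $z_{v,-e}=(-1)^{\epsilon(e)}t_{uv}^{-1}z_{u,-e}$, which is exactly relation (1) written for $-e=\overrightarrow{vu}$. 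Thus reversing the orientation of a single edge leaves the entire Lam system invariant, and positivity is preserved because $t_{uv}>0$ if and only if $t_{uv}^{-1}>0$.

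To finish I would note that any perfect orientation of $\mathcal G$ differs from $\mathcal O$ by the reversal of a finite set of edges, so by the third move its Lam system coincides, equation by equation, with the one for $(\mathcal G,\mathcal O)$; the conclusion then follows from the first two moves. The only point demanding care is that the equivalence relation of Definition \ref{def:admiss_sign_gauge} is stated relative to a fixed orientation through its source/sink boundary clauses; I expect this to be the main, though mild, obstacle, resolved precisely by the edge--by--edge identification of the two systems furnished by the reversal identity, which shows that reorientation does not move $\epsilon$ outside the class of $\epsilon^{(k)}$.
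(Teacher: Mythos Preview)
Your proposal is correct and matches the paper's intended reasoning. The paper states this result as a corollary without proof, treating it as an immediate consequence of Proposition~\ref{prop:kas_lam_form} together with the invariance of Lam systems under gauge equivalence cited from \cite{Lam2}; your three moves spell out precisely these ingredients, and your derivation of the reversal identity---using that $\epsilon^{(k)}$ (and hence any $\epsilon$ in its gauge class) is a function of the undirected edge while the directed weight inverts under (\ref{eq:edge_weights})---is exactly the mechanism the paper has in mind for orientation independence.
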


\section{The geometric nature of Kasteleyn signatures}\label{sec:geom_sign}

In this Section we investigate the geometric nature of Kasteleyn signatures by reformulating the results in \cite{AG4,AG6} in the setting of perfectly orientable reduced bipartite graphs, and providing the explicit relation between Kasteleyn and geometric signatures (Theorem \ref{theo:main}). For the wider class of planar bicolored graphs used in \cite{AG4}, formula (\ref{eq:kas_vs_geo}) in Theorem \ref{theo:main} defines the natural candidate for a Kasteleyn signature. Therefore we conjecture that geometric signatures  explicitly realize Kasteleyn signatures for the more general variant of Kasteleyn Theorem in \cite{Sp}, and are therefore naturally connected to dimer models also for the more general class of graphs used in \cite{AG4}.

The main consequence of Theorem \ref{theo:main} is Formula (\ref{eq:expl_sol}) in Theorem \ref{theo:kas_geo_sys} which provides the explicit solution of Kasteleyn system of relations in terms of edge flows and conservative flows on the perfectly oriented network.

\subsection{Loop erased walks, edge flows and the geometric construction of vectors on edges}\label{sec:flows}

In \cite{AG4} the components of the edge vectors have been defined through summations over all walks sharing both the initial edge and the final destination to the boundary, and they have been computed explicitly using loop--erased walks \cite{Fom,Law} and flows \cite{Tal2}. Below we recall these results restricting ourselves to reduced planar bipartite networks in the disk with black boundary vertices. 

In \cite{AG4} the many gauge freedoms on the graph are fixed introducing a gauge ray direction. A geometric signature is then assigned to each path in terms of the summation of the local winding number between consecutive edges, and of the number of intersections of its edges with the gauge rays starting at the boundary sources. 

\begin{definition}\label{def:gauge_ray}\textbf{The gauge ray direction $\mathfrak{l}$.}
A gauge ray direction is an oriented direction ${\mathfrak l}$ with the following properties:
\begin{enumerate}
\item The ray with the direction ${\mathfrak l}$ starting at a boundary vertex points inside the disk; 
\item No internal edge is parallel to this direction;
\item All rays starting at boundary vertices do not contain internal vertices.
\end{enumerate}
\end{definition}
The first property may always be satisfied since one may deform the boundary of the disk so that all boundary vertices lie at a common straight interval. 

Gauge ray directions were used in \cite{GSV} to measure the local winding number. 
The local winding number between a pair of consecutive edges $e_k,e_{k+1}$ measures whether or not the triple $(e_k,\mathfrak{l},e_{k+1})$ is ordered, and the sign depends on whether such ordering is clockwise or counterclockwise.

\begin{figure}
  \centering{\includegraphics[width=0.37\textwidth]{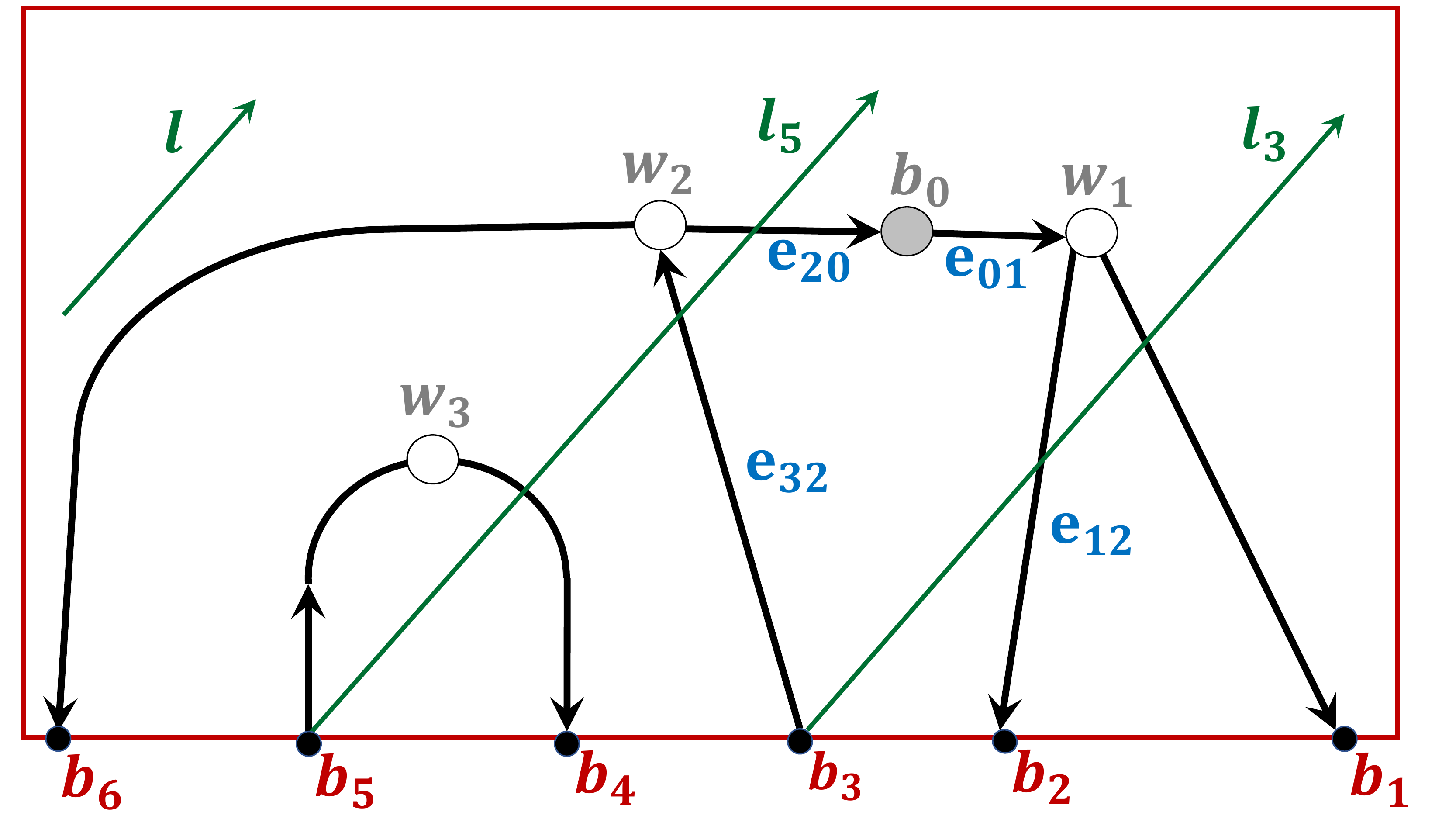}
	\vspace{-.3 truecm}
      \caption{\small{\sl The local winding number at an ordered pair of oriented edges and the intersection number of gauge rays starting at the boundary sources with an oriented edge depend on the choice of gauge ray direction $\mathfrak l$.
					}}\label{fig:wind_int}}
\end{figure}

\begin{definition}\label{def:winding_pair}\textbf{The local winding number at an ordered pair of oriented edges}
For an ordered pair $(e_k,e_{k+1})$ of oriented edges, define
\begin{equation}\label{eq:def_s}
s(e_k,e_{k+1}) = \left\{
\begin{array}{ll}
+1 & \mbox{ if the ordered pair is positively oriented }  \\
0  & \mbox{ if } e_k \mbox{ and } e_{k+1} \mbox{ are parallel }\\
-1 & \mbox{ if the ordered pair is negatively oriented }
\end{array}
\right.
\end{equation}
Then the winding number of the ordered pair $(e_k,e_{k+1})$ with respect to the gauge ray direction $\mathfrak{l}$ is
\begin{equation}\label{eq:def_wind}
wind(e_k,e_{k+1}) = \left\{
\begin{array}{ll}
+1 & \mbox{ if } s(e_k,e_{k+1}) = s(e_k,\mathfrak{l}) = s(\mathfrak{l},e_{k+1}) = 1\\
-1 & \mbox{ if } s(e_k,e_{k+1}) = s(e_k,\mathfrak{l}) = s(\mathfrak{l},e_{k+1}) = -1\\
0  & \mbox{otherwise}.
\end{array}
\right.
\end{equation}
\end{definition}

Next, one counts the intersections of gauge rays with a given path using the rays ${\mathfrak l}_{i_r}$ parallel to $\mathfrak l$ and starting at the boundary source vertices $b_{i_r}$, $r\in[k]$, where $I=\{ i_1< i_2 <\cdots < i_k\}$ is the base of the given perfect orientation (see Figure \ref{fig:wind_int} for an example).

\begin{definition}\label{def:int_number}\textbf{The intersection number at an oriented edge}. 
  Given a perfect orientation ${\mathcal O}(I)$ on the graph and a gauge ray direction, the intersection number
  $\mbox{int}(e)$ for an edge $e$ is the number of intersections of the gauge rays starting at the boundary sources with $e$. For each intersection of $l_s$ with $e$ we assign $+1$ if a pair $({\mathfrak l},e)$ is positively oriented, and $-1$ otherwise. 
\end{definition}

In Figure \ref{fig:wind_int} we illustrate the above definitions. It is straightforward to check that $\mbox{int} (e_{32}) =1$, $\mbox{int} (e_{20}) =\mbox{int} (e_{12}) =-1$, $\mbox{wind}(e_{32}, e_{20})=-1$, $\mbox{wind}(e_{01}, e_{12})=0$.

\smallskip 

Next we adapt the construction of edge vectors in \cite{AG4} to the case of bipartite graphs and introduce a system of edge vectors for all edges $e$ with initial vertex colored black. In \cite{AG4} edge vectors are defined also when the starting vertex is white since the directed graph is not assumed to be bipartite. 

\begin{remark}
In the following, we assign the edge vector at $e=\overrightarrow{bw}$ to its initial black vertex $b$ since there is a unique edge starting at $b$ in a perfectly oriented graph. For this reason we modify the notation of \cite{AG4} and denote edge vectors $E_b$ where $b$ is the initial vertex of the path.
\end{remark}

The $j$--th component of the edge vector $E_b$ is defined through  a (finite or infinite) summation over all walks starting at the given edge $e$ and ending at the same boundary sink $b_j$. 

\begin{definition}\label{def:edge_vector}\textbf{The edge vector $E_b$.} \cite{AG6}
Let $\mathcal G = (\mathcal B\cup \mathcal W, \mathcal E)$ be a perfectly oriented reduced planar bipartite graph in the disk with black boundary vertices. Let $\mathfrak{l}$ be a given gauge ray direction and $t_e$ be a positive edge weighting on $\mathcal G$.
For an oriented edge $e$ with initial black vertex $b$, consider all possible walks (directed paths) $P:b\rightarrow b_{j}$, 
such that the first vertex is $b$ and the end point is the boundary vertex $b_{j}$, $j\in[n]$.
Then the $j$-th component of $E_{b}$ is defined as:
\begin{equation}\label{eq:sum}
\left(E_{b}\right)_{j} = \sum\limits_{P\, :\, e\rightarrow b_{j}} (-1)^{\mbox{wind}(P)+ \mbox{int}(P)} 
wt(P),
\end{equation}
where, for $P= \{ e_1=e, e_2,\dots,e_m\}$,
\begin{enumerate}
\item The \textbf{weight $wt(P)$} is the product of the weights $t_{e_l}$ of all edges $e_l$ in $P$, $w(P)=\prod_{l=1}^m t_{e_l}$;
\item The \textbf{generalized winding number} $\mbox{wind}(P)$ is the sum of the local winding numbers at each ordered pair of its edges
$\mbox{wind}(\mathcal P) = \sum_{k=1}^{m-1} \mbox{wind}(e_k,e_{k+1}),$ 
with $\mbox{wind}(e_k,e_{k+1})$ as in Definition \ref{def:winding_pair}; 
\item $\mbox{int}(\mathcal P)$ is the \textbf{number  of intersections} between the path and the rays ${\mathfrak l}_{i_r}$, $r\in[k]$: $\mbox{int}(\mathcal P) = \sum\limits_{s=1}^m \mbox{int}(e_s)$, where $\mbox{int}(e_s)$ is the number of intersections of gauge rays ${\mathfrak l}_{i_r}$ with $e_s$.
\end{enumerate}
If there is no path from $b$ to $b_{j}$, the $j$--th component of $E_b$ is assigned to be zero.
\end{definition}

In particular, by definition, if $e=\overrightarrow{bb_j}$, with $b_j$ boundary sink, then the vector $E_{b}$ is
\begin{equation}\label{eq:vec_bou_sink}
\left(E_{b}\right)_{k} =  (-1)^{\mbox{int}(e)} w(e) \delta_{jk}.
\end{equation}

Next, (\ref{eq:sum}) is expressed as a summation over equivalence classes of walks using the notion of loop--erased walk. 
Loop--erased walks are extensively used in the study of random walks \cite{Law} and have been reformulated for directed graphs in \cite{Fom} to prove the total non--negativity property of the boundary measurement matrix in terms of infinite summations over edge weights. 
 
\begin{definition}
\label{def:loop-erased-walk}
\textbf{Edge loop-erased walks.} Let $\mathcal G = (\mathcal B\cup \mathcal W, \mathcal E)$ be a perfectly oriented planar bipartite graph in the disk with black boundary vertices.
Let $P$ be a walk given by
$$
V_0=b \stackrel{e=e_1}{\rightarrow} V_1 \stackrel{e_2}{\rightarrow} V_2 \rightarrow \ldots \rightarrow b_j,
$$
where $V_0=b\in \mathcal V$ is the initial black vertex of the edge $e$. The loop-erased part of $P$, denoted $LE(P)$, is defined recursively as 
follows. If $P$ does not pass any edge twice, then $LE(P)=P$.
Otherwise, set $LE(P)=LE(P_0)$, where $P_0$ is obtained from $P$ removing the first
loop it makes; more precisely, given all pairs $l,s$ with $s>l$ and $e_l = e_s$, one chooses the one with the smallest values of $l$ and $s$ and removes the cycle
$$
V_l \stackrel{e_l}{\rightarrow} V_{l+1} \stackrel{e_{l+1}}\rightarrow V_{l+2} \rightarrow \ldots \stackrel{e_{s-1}}\rightarrow V_{s} ,
$$
from $P$.
\end{definition}

\begin{remark}\label{rem:loop}
For initial black vertices, the definition of edge--loop erased walk coincides with the definition of loop--erased walk in \cite{Fom}.
If the initial vertex of the walk is white, the above definition does not coincide with that in \cite{Fom} (see \cite{AG6}).
\end{remark}

With this procedure, to each walk starting at $e=(b_e,w_e)$ and ending at the boundary sink $b_j$, one associates a unique edge loop-erased walk $LE(P)$, where the latter walk is acyclic.
Then one formally reshuffles the summation over 
infinitely many paths starting at $b_e$ and ending at $b_j$ to a summation over the finite number $S$ of equivalence classes  $[LE(P_s)]$, each one consisting of all walks sharing the 
same edge loop-erased walk, $LE(P_s)$, $s\in[ S]$. Let us remark that 
$ \mbox{int}(P)-\mbox{int}(LE(P_s))=0 \,\,(\!\!\!\!\mod 2) $ for any $P\in [LE(P_s)]$, and, moreover, 
$ \mbox{wind}(P)-\mbox{wind}(LE(P_s))$ has the same parity as the number of simple cycles of $P$.
Then, (\ref{eq:sum}) is equivalent to
\begin{equation}\label{eq:sum2}
\left(E_{b}\right)_{j} = \sum\limits_{s=1}^S (-1)^{\mbox{wind}(LE(P_s))+ \mbox{int}(LE(P_s))}\left[ 
\mathop{\sum\limits_{P:e\rightarrow b_{j}}}_{P\in [LE(P_s)] } (-1)^{\mbox{wind}(P)-\mbox{wind}(LE(P_s))  } 
w(P) \right].
\end{equation}

The definitions of flows and conservative flows in \cite{Tal2} have been conveniently adapted in \cite{AG6} to provide the explicit expression of the above summations. The conservative flows are collections of non-intersecting simple loops in the directed graph $\mathcal G$. 
In our setting an edge flow  $F_{e,b_j}$ in ${\mathcal F}_{e,b_j}(\mathcal G)$ is either an edge loop-erased walk $P_{e,b_j}$ starting at the edge $e$ and ending at the boundary sink $b_j$ or the union of $P_{e,b_j}$ with a conservative flow with no common edges with $P_{e,b_j}$. 

\begin{definition}\label{def:cons_flow}\textbf{Conservative flow \cite{Tal2}}. A collection $C$ of distinct edges on $\mathcal G$ is called a conservative flow if 
\begin{enumerate}
\item For each interior vertex $V_d$ in $\mathcal G$ the number of edges of $C$ that arrive at $V_d$ is equal to the number of edges of $C$ that leave from $V_d$;
\item $C$ does not contain edges incident to the boundary.
\end{enumerate}
The set of all conservative flows $C$ in $\mathcal G$ is denoted ${\mathcal C}(\mathcal G)$. 

The \textbf{weight $w(C)$} of the conservative flow $C$ is the product of the weights of all edges in $C$. Unit weight is assigned to
the trivial flow with no edges.
\end{definition}

The following definition of edge flow coincides with the definition of flow in \cite{Tal2} if $e$ starts at a boundary source except for winding and intersection numbers. In \cite{AG6} edge flows are defined also for white vertices.

\begin{definition}\label{def:edge_flow}\textbf{Edge flow at $e$}. \cite{Tal2, AG6} A collection $F_e$ of distinct edges in $\mathcal G$ is called edge flow starting at the edge $e=e_1=\overrightarrow{b_{e_1}w_{e_1}}$ if 
\begin{enumerate}
\item $e\in F_e$;
\item For each interior vertex $v\ne b_{e_1}$, the number of edges of $F_e$ that arrive at $v$ is equal to the number of edges of $F_e$ that leave from $v$;
\item At $b_{e_1}$ the number of edges of $F_e$ that arrive at $b_{e_1}$ is 0;
\item It contains no edge at a boundary source, except possibly $e$ itself.
\end{enumerate}
${\mathcal F}_{e,b_j}(\mathcal G)$ denotes the set of all edge flows $F$ starting at the edge $e$ and ending at the boundary sink $b_j$ in $\mathcal G$. An element $F_{e,b_j}\in{\mathcal F}_{e,b_j}(\mathcal G)$ is the union of an edge loop-erased walk $P_{e,b_j}$ with a conservative flow with no common edges with $P_{e,b_j}$ (this conservative flow may be the trivial one). 
The following triple $(w(F_{e,b_j}),\mbox{wind}(F_{e,b_j}),\mbox{int}(F_{e,b_j}))$ is assigned to $F_{e,b_j}$:
\begin{enumerate} 
\item The \textbf{weight $w(F_{e,b_j})$} is the product of the weights of all edges in $F_{e,b_j}$.
\item The \textbf{generalized winding number} $\mbox{wind}(F_{e,b_j})$ is the winding number of its loop--erased part:
\begin{equation}
\label{eq:wind_flow}
\mbox{wind}(F_{e,b_j}) = \mbox{wind}(P_{e,b_j});
\end{equation}
\item The \textbf{intersection number} $\mbox{int}(F_{e,b_j})$ is the intersection number of its loop--erased part:
\begin{equation}
\label{eq:int_flow}
\mbox{int}(F_{e,b_j}) = \mbox{int}(P_{e,b_j}).
\end{equation}
\end{enumerate}
\end{definition}

\begin{remark} In \cite{AG6} it is proven the following relation between Postnikov topological winding number of a path $\mbox{Wind}(P)$ from the boundary source $b_{i_r}$ and the boundary sink $b_j$, and its generalized winding number $\mbox{wind}(P)$:
\[
\mbox{Wind}(P) + \sigma(i_r,j) = \mbox{wind}(P)+\mbox{int}(P), \mod 2,
\]
where $\sigma(i_r,j)$ is the number of boundary sources strictly between $b_{i_r}$ and $b_j$.
\end{remark}

In \cite{AG6}, Theorem~3.2 in \cite{Tal2} is adapted to prove that the components of $E_b$ defined in (\ref{eq:sum}) are rational expressions in the weights with subtraction-free denominator and an explicit expression for them is provided in terms of edge flows and conservative flows.

\begin{theorem}\label{theo:null}\textbf{Rational representation for the components of vectors $E_b$} \cite{AG6}
Let $({\mathcal G},\mathcal O, \mathfrak l)$ be a reduced planar bipartite graph in the disk representing the irreducible prositroid cell $\S \subset \GTNN$, with perfect orientation $\mathcal O =\mathcal O(I)$, where $I =\{ 1\le i_1< i_2 < \cdots < i_k\le n\}\in \mathcal M$, and gauge ray direction $\mathfrak{l}$. Let $t_e$ be a positive edge weighting on $\mathcal G$ and let $[A]\in \S$ be the point represented by $(\mathcal G, t_e)$.

Then $\left(E_{b}\right)_{j}$, the $j$--th component of the vector $E_b$ in (\ref{eq:sum2}) with $e=\overrightarrow{bw}$, is a rational expression in the edge weights with subtraction-free denominator: 
\begin{equation}
\label{eq:tal_formula}
\left(E_{b}\right)_{j}= \frac{\displaystyle\sum\limits_{F\in {\mathcal F}_{e,b_j}(\mathcal G)} \big(-1\big)^{\mbox{wind}(F)+\mbox{int}(F)}\ wt(F)}{\sum\limits_{C\in {\mathcal C}(\mathcal G)} \ w(C)},
\end{equation}
where notations are as in Definitions~\ref{def:cons_flow} and~\ref{def:edge_flow}. Moreover, if ${\mathcal F}_{e,b_j}(\mathcal G)\not=\emptyset$, then $\left(E_{b}\right)_{j}\not =0$.

In particular, if the graph is acyclically oriented, then the denominator in (\ref{eq:tal_formula}) equals $1$, and the sum $\mbox{wind}(F)+\mbox{int}(F)$ is the same for all $F\in {\mathcal F}_{e,b_j}(\mathcal G)$. Therefore in such case the $j$--th component of $E_b$ is an untrivial polynomial in the edge weights with coefficients sharing equal signs if ${\mathcal F}_{e,b_j}(\mathcal G)\not=\emptyset$. 

Finally, if $b$ is the boundary source $b_{i_r}$, then (\ref{eq:tal_formula}) becomes
\begin{equation}
\label{eq:tal_formula_source}
\left(E_{b_{i_r}}\right)_{j}= \big(-1\big)^{\sigma(i_r,j)}\ \frac{\sum_{F\in {\mathcal F}_{b_{i_r},b_j}(\mathcal G)} \ w(F)}{\sum_{C\in {\mathcal C}(\mathcal G)} \ w(C)} \; =\; (-1)^{\sigma(i_r,j)} M_{ij},
\end{equation}
where $M_{ij}$ is the entry of Postnikov boundary measurement map  with respect to the base $I$ defined in (\ref{eq:bound_meas_map}), and 
$\sigma(i_r,j)$ is the number of elements of $I$ strictly between $i_r$ and $j$.
Therefore, if we assign the $j$-th canonical vectors $E_j$ to the boundary sinks $b_j$, $j\in \bar I$, the edge vectors at the boundary source are
\begin{equation}\label{eq:source_vec}
E_{b_{i_r}} = A[r] - E_{i_r}, \quad \quad r\in [k],
\end{equation}
where $A^r_j = (-1)^{\sigma(i_r,j)} M_{ij}$, $r\in [k]$, $j\in \bar I$, are the entries of the boundary measurement matrix $A$ representing $[A]$, and $E_{i_r}$ is the $i_r$--th canonical basis vector.
\end{theorem}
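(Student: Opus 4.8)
The plan is to obtain the rational formula (\ref{eq:tal_formula}) by reorganizing the loop-erased summation (\ref{eq:sum2}) into a ratio of flow generating functions, following the adaptation of Talaska's Theorem~3.2 in \cite{Tal2} carried out in \cite{AG6}, and then to read off both the nonvanishing statement and the boundary-source specialization from a single parity invariance of the exponent $\mathrm{wind}(P)+\mathrm{int}(P)$. First I would fix a loop-erased walk class $[LE(P_s)]$ and analyze the inner summation in (\ref{eq:sum2}). Every walk $P\in[LE(P_s)]$ is obtained from the acyclic walk $P_s=LE(P_s)$ by reinserting a collection of simple cycles that are vertex-compatible with $P_s$ in the sense of Definitions~\ref{def:cons_flow} and~\ref{def:edge_flow}; a resolvent/Cramer argument (equivalently, expanding $(I-B)^{-1}$ where $B$ is the signed weighted edge matrix) shows that summing $w(P)$ over all such reinsertions produces the loop-erased weight $w(P_s)$ times the reciprocal of the conservative-flow partition function $\sum_{C\in\mathcal C(\mathcal G)}w(C)$, provided the cycle weights enter with the correct signs. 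This promotes each class to a term $(-1)^{\mathrm{wind}(P_s)+\mathrm{int}(P_s)}w(P_s)/\sum_C w(C)$, and assembling the classes yields (\ref{eq:tal_formula}) once one matches numerators: a flow $F\in\mathcal F_{e,b_j}(\mathcal G)$ is exactly the union of a loop-erased path with a disjoint conservative flow, and by (\ref{eq:wind_flow})--(\ref{eq:int_flow}) its exponent equals that of its loop-erased part.

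The main obstacle is the sign bookkeeping that makes the previous paragraph valid, namely the two congruences highlighted before the statement: $\mathrm{int}(P)-\mathrm{int}(LE(P_s))\equiv 0 \pmod 2$ and $\mathrm{wind}(P)-\mathrm{wind}(LE(P_s))$ has the parity of the number of simple cycles removed. For the intersection count I would argue that each simple cycle erased from $P$ is a closed curve strictly interior to the disk (conservative-flow cycles contain no boundary-incident edge by Definition~\ref{def:cons_flow}), so a boundary source lies outside the region it bounds; hence each gauge ray $\mathfrak l_{i_r}$ crosses it an even number of times by the Jordan curve theorem, and deleting the cycle changes $\mathrm{int}$ by an even amount. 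For the winding count I would use that a single simple cycle, traversed once, winds once with respect to $\mathfrak l$ and so contributes an odd number of units to the generalized winding; inserting $c$ cycles therefore shifts $\mathrm{wind}$ by a quantity congruent to $c$ modulo $2$. These two facts are exactly what let the signed cycle reinsertions telescope into the positive partition function $\sum_C w(C)$, each inserted cycle carrying the compensating factor $-1$.

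Finally I would establish the nonvanishing and the boundary-source reduction from a common ingredient: $(-1)^{\mathrm{wind}(P)+\mathrm{int}(P)}$ depends only on the endpoints $b$ and $b_j$. Given two directed paths $P,P'$ from $b$ to $b_j$, their symmetric difference is a union of closed loops, and by the same Jordan-curve parity arguments the quantity $\mathrm{wind}+\mathrm{int}$ is even on each closed loop; hence all loop-erased paths from $e$ to $b_j$ contribute with a common sign. Consequently the numerator of (\ref{eq:tal_formula}) equals this common sign times the subtraction-free sum $\sum_{F}wt(F)$, which is strictly positive for positive weights whenever $\mathcal F_{e,b_j}(\mathcal G)\neq\emptyset$, proving $(E_b)_j\neq 0$; in the acyclic case $\mathcal C(\mathcal G)$ contains only the trivial flow, so the denominator is $1$ and the constant-sign statement is again this parity invariance. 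For $b=b_{i_r}$ a boundary source I would invoke the relation $\mathrm{Wind}(P)+\sigma(i_r,j)\equiv \mathrm{wind}(P)+\mathrm{int}(P)\pmod 2$ recalled before the statement, which converts (\ref{eq:tal_formula}) into Postnikov's expression (\ref{eq:bound_meas_map}) and gives $(E_{b_{i_r}})_j=(-1)^{\sigma(i_r,j)}M_{ij}$, that is (\ref{eq:tal_formula_source}). Since the boundary measurement matrix $A$ satisfies $A^r_{i_r}=1$ and $A^r_j=(-1)^{\sigma(i_r,j)}M_{ij}$ for $j\in\bar I$, while a source of degree one admits no directed path back into $I$, the vector $E_{b_{i_r}}$ has components $(-1)^{\sigma(i_r,j)}M_{ij}$ on $\bar I$ and $0$ on $I$, which is exactly $A[r]-E_{i_r}$, establishing (\ref{eq:source_vec}).
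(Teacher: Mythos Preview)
The paper does not supply its own proof of this theorem: it is quoted from \cite{AG6} and no argument appears in the text. So there is nothing to compare at the level of technique; I can only comment on your outline on its own merits.

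Your derivation of (\ref{eq:tal_formula}) from (\ref{eq:sum2}) is the standard Talaska route and is fine as a sketch: reinsert cycles into each loop-erased class, use that each cycle contributes a factor $-1$ to the sign and a positive weight, and recognize the resulting geometric series as the reciprocal of the conservative-flow partition function. The identification of the numerator with the sum over edge flows via (\ref{eq:wind_flow})--(\ref{eq:int_flow}) is also correct. The boundary-source specialization (\ref{eq:tal_formula_source}) and (\ref{eq:source_vec}) follow exactly as you say from the congruence $\mathrm{Wind}(P)+\sigma(i_r,j)\equiv\mathrm{wind}(P)+\mathrm{int}(P)\pmod 2$ recalled just before the theorem.

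There is, however, a real gap in your nonvanishing argument. You claim that $(-1)^{\mathrm{wind}(P)+\mathrm{int}(P)}$ depends only on the endpoints, by writing the symmetric difference of two simple paths as a union of closed loops and arguing that $\mathrm{wind}+\mathrm{int}$ is even on each loop. But this argument, as you have phrased it, uses nothing about the graph being \emph{reduced}; if it were valid verbatim it would prove $(E_b)_j\neq 0$ whenever $\mathcal F_{e,b_j}\neq\emptyset$ on \emph{any} perfectly oriented bipartite graph. That directly contradicts the remark following the theorem, which says that on reducible graphs one can have $(E_b)_j=0$ with $\mathcal F_{e,b_j}\neq\emptyset$. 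So either your parity-invariance step is false in general, or it secretly uses the reduced hypothesis in a way you have not identified. Concretely, the ``symmetric difference is a union of closed loops'' step is delicate for directed paths that share vertices but not edges, and the winding contribution of an undirected closed curve built from pieces of two differently oriented paths is not the same object as the winding of a directed simple cycle; your Jordan-curve reasoning handles $\mathrm{int}$ but not $\mathrm{wind}$ in that situation. You need to locate precisely where reducedness enters (in \cite{AG4,AG6} this is tied to the fact that on a reduced graph the geometric/Kasteleyn face signature is fixed by (\ref{eq:sign_cond}), which is what pins down the parity of $\mathrm{wind}+\mathrm{int}$ along any two homotopic paths), or replace the parity argument by one that does.
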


\begin{remark}
If the graph is reducible, it may happen that $\left(E_{b}\right)_{j}=0$ even if ${\mathcal F}_{e,b_j}(\mathcal G)\not=\emptyset$ \cite{AG4}.
\end{remark}

\begin{figure}
\centering
{\includegraphics[width=0.4\textwidth]{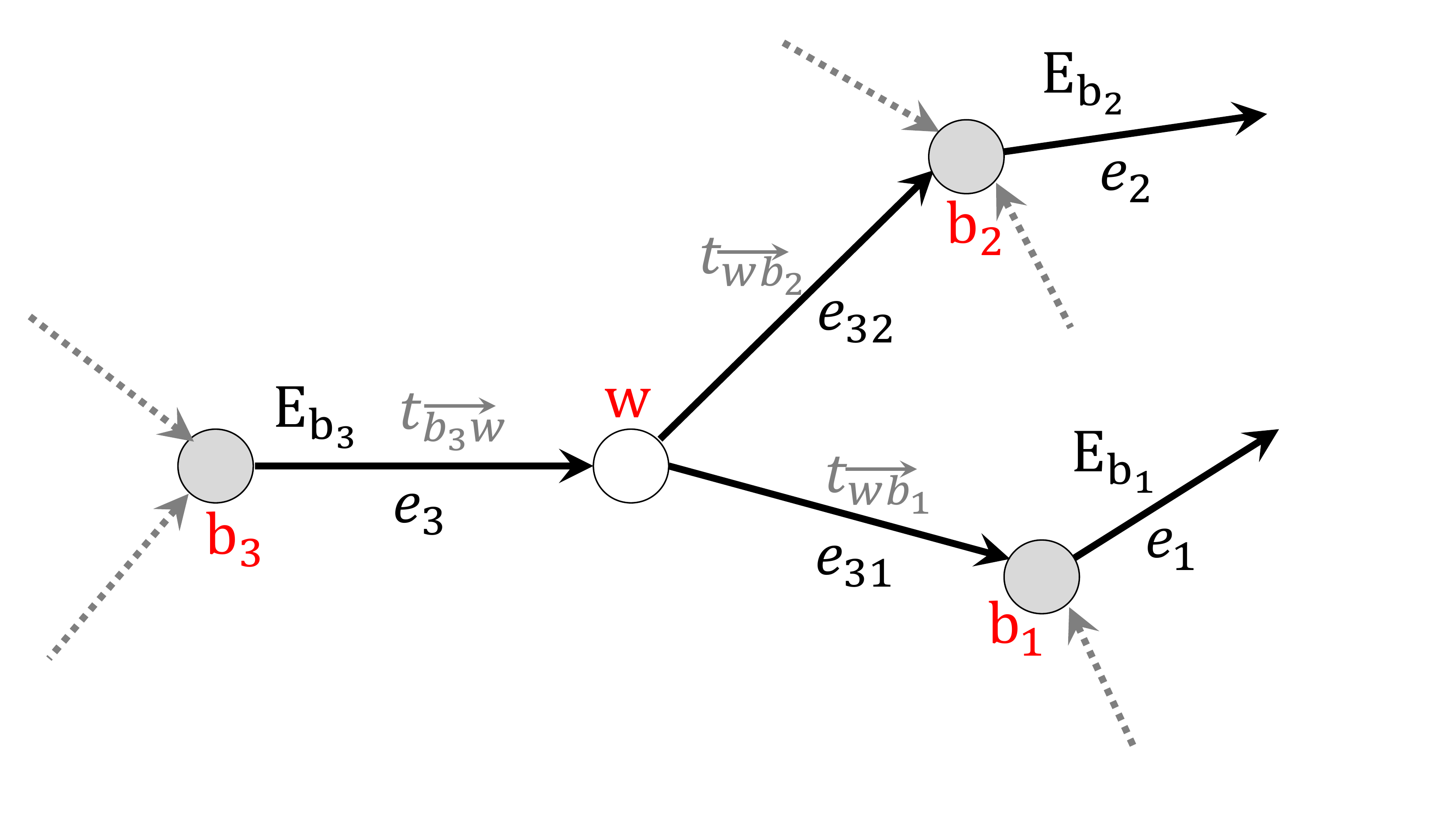}}
\vspace{-.5 truecm}
\caption{\small{\sl The geometric relation at the white vertex $w$ follows from the definition of $E_{b_i}$, $i\in [3]$.}\label{fig:geo_lin}}
\end{figure}

\subsection{The geometric signature and the geometric system of relations}\label{sec:geom}

In this Section we recall the geometric formulation of a signature and of its full rank system of relations following \cite{AG4}, and reformulate it in a form suitable for comparison with Kasteleyn system of relations. Using the notations of Figure \ref{fig:geo_lin}, we remark that a path with initial edge $\overrightarrow{b_3w}$ necessarily passes either through the vertex $b_1$ or $b_2$ (for simplicity we assume $w$ trivalent). Then, by definition, the vectors $E_{b_i}$, $i\in [3]$,  satisfy the following geometric relation
 at the white vertex $w$: 
\begin{equation}\label{eq:triv_white}
\resizebox{\textwidth}{!}{$
\frac{(-1)^{\rm{int}(e_3)}}{t_{\overrightarrow{b_3w}}} E_{b_3} + (-1)^{1+\rm{int}(e_{31})+ \rm{wind}(e_3,e_{31})+\rm{wind}(e_{31},e_{1})} t_{\overrightarrow{wb_1}} E_{b_1} +(-1)^{1+\rm{int}(e_{32})+ \rm{wind}(e_3,e_{32})+\rm{wind}(e_{32},e_{2})} t_{\overrightarrow{wb_2}} E_{b_2} =0.
$}
\end{equation}
If $b_3$ is a boundary source, we write the above formula as
\begin{equation}\label{eq:triv_white_b}
\resizebox{\textwidth}{!}{$
\frac{(-1)^{\rm{int}(e_3)}+1}{t_{\overrightarrow{b_3w}}} \left(-E_{b_3}\right) + (-1)^{1+\rm{int}(e_{31})+ \rm{wind}(e_3,e_{31})+\rm{wind}(e_{31},e_{1})} t_{\overrightarrow{wb_1}} E_{b_1} +(-1)^{1+\rm{int}(e_{32})+ \rm{wind}(e_3,e_{32})+\rm{wind}(e_{32},e_{2})} t_{\overrightarrow{wb_2}} E_{b_2} =0,
$}
\end{equation}
to stress the analogy between (\ref{eq:source_vec}) and the solution to Kasteleyn system of relations in Theorem \ref{theo:sol_kas_sys_2}. 

Following \cite{AG4}, we introduce the following signature on the edges, where notations are consistent with Figure \ref{fig:lin_lam1}. 

\begin{definition}\label{def:geo_sign_expl}\textbf{The geometric signature on $(\mathcal G, \mathcal O, \mathfrak l)$} \cite{AG4}
Let $(\mathcal G , \mathcal O, \mathfrak l)$ be a reduced bipartite graph with black boundary vertices representing the positroid cell $\S \subset \GTNN$, where $\mathcal O$ is a perfect orientation for some base $I\in \mathcal M$ and $\mathfrak l$ is a gauge ray direction. 
We call a signature on $(\mathcal G, O, \mathfrak l)$ geometric if it is equivalent in the sense of Definition \ref{def:admiss_sign_gauge} to the following signature $\epsilon^{(g)}:\mathcal E \mapsto \{0,1\} $ on $(\mathcal G , \mathcal O, \mathfrak l)$: for any edge $e=\overrightarrow{uv}\in \mathcal E$, 
\begin{equation}\label{eq:geo_sign}
\resizebox{\textwidth}{!}{$ 
\epsilon^{(g)}(e) = \left\{ \begin{array}{ll} int(e) \mod 2, & \mbox{ if } v \mbox{ is white};\\
int(e)+1\mod 2, & \mbox{ if } u \mbox{ is a boundary source};\\
1+int(e) +wind(e_1, e) + wind(e,e_5) \mod 2, & \mbox{ if } v \mbox{ is black internal, } e_1 \mbox{ ends at } u \mbox{ and } e_5 \mbox{ starts at } v;\\
1+int(e) +wind(e_1, e) \mod 2, & \mbox{ if } v \mbox{ is a boundary sink and } e_1 \mbox{ ends at } u .																		
\end{array}
\right.
$}
\end{equation}
\end{definition}

In Figure \ref{fig:lam_sign_ex} we compute the geometric signature on the edges of the directed graph of Figure \ref{fig:wind_int}. In \cite{AG4} geometric signatures are defined in the more general case of plabic graphs in the disk with no reference to the color of the boundary vertices. 
\begin{figure}
\centering
{\includegraphics[width=0.4\textwidth]{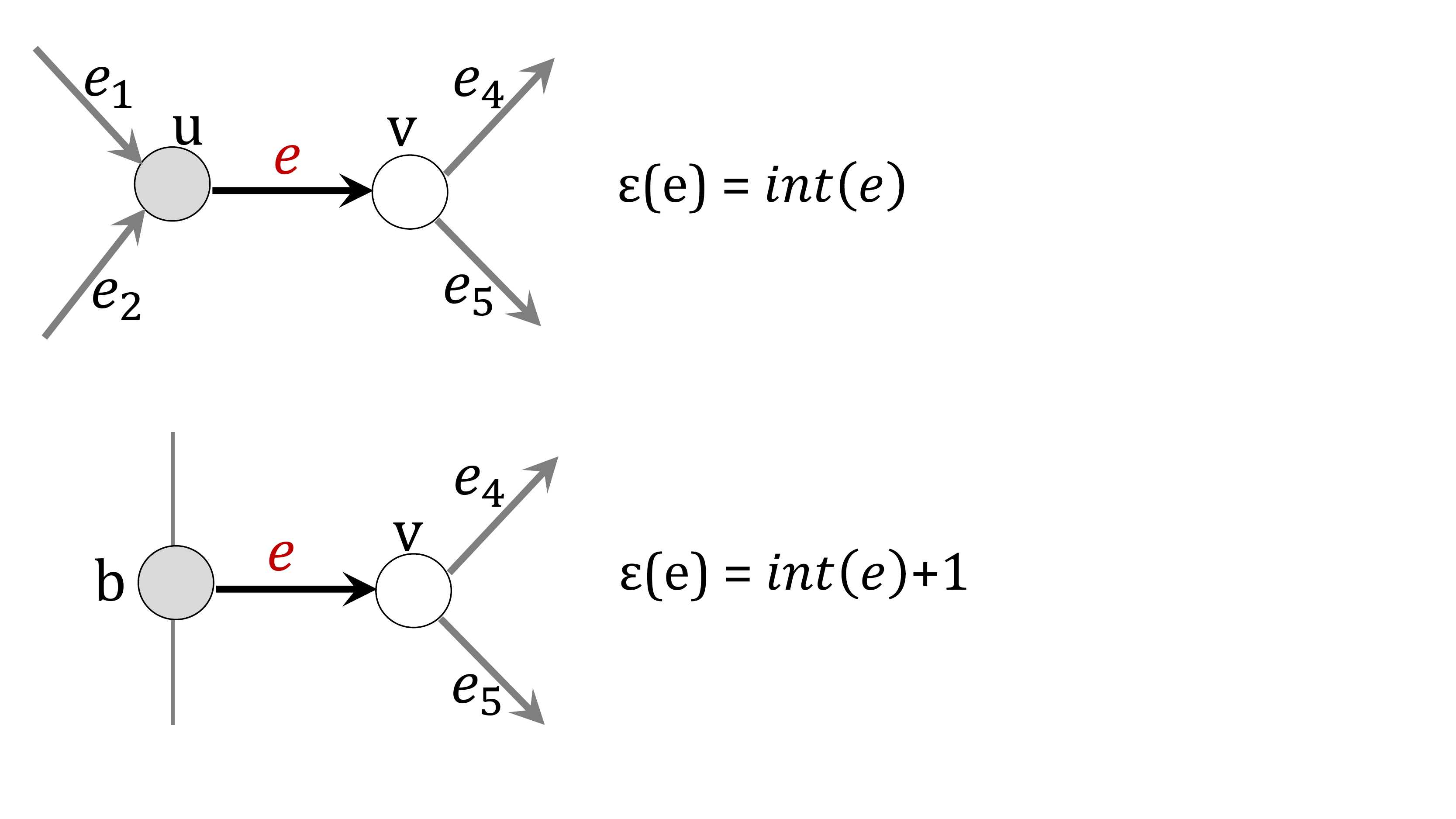}}
\hspace{.5 truecm}
{\includegraphics[width=0.4\textwidth]{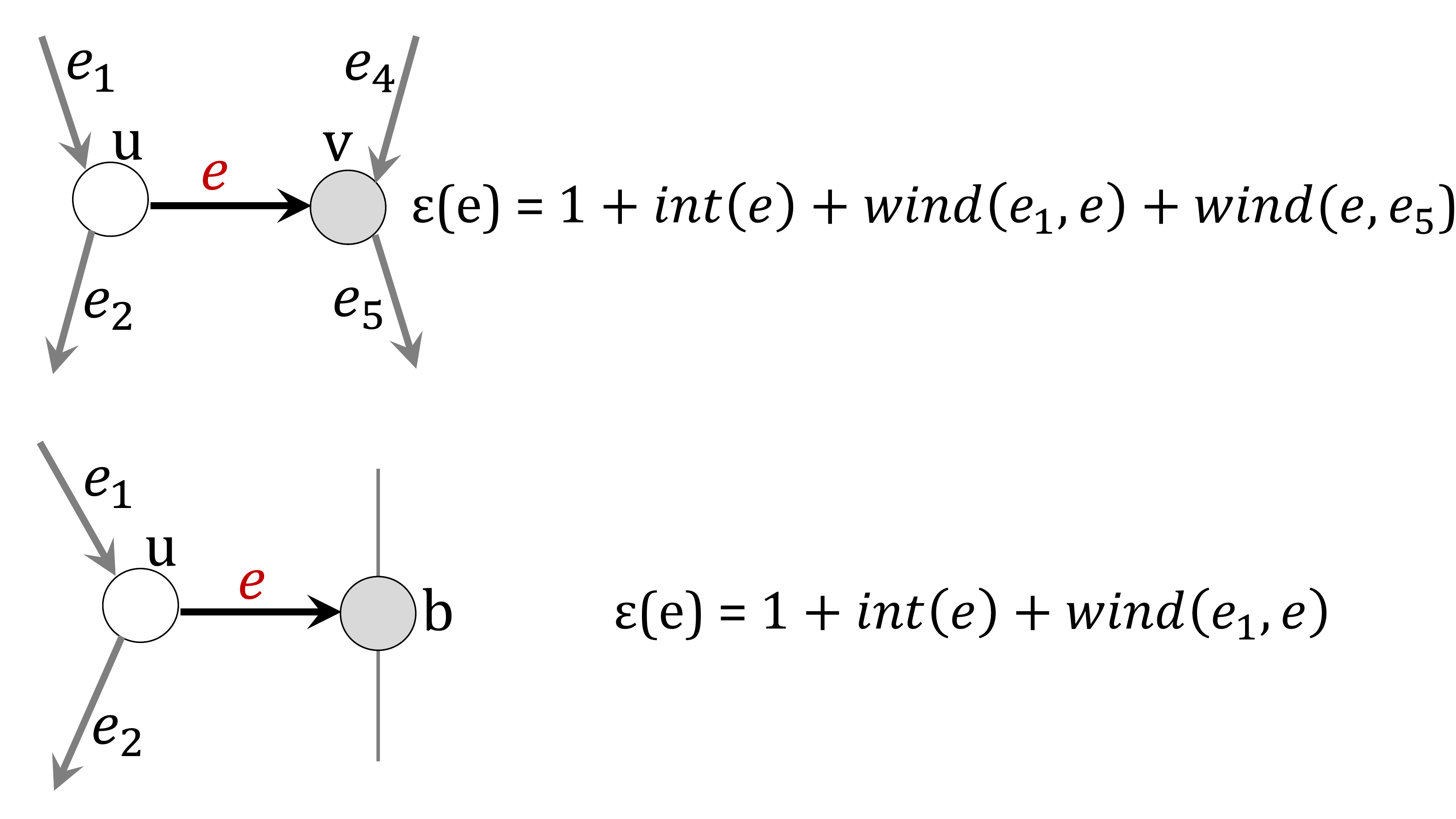}}
\vspace{-.5 truecm}
\caption{\small{\sl The geometric signature at the edges of an oriented bipartite graph with black boundary vertices.}\label{fig:lin_lam1}}
\end{figure}

In \cite{AG4} it is investigated the dependence of the geometric signature on the many gauge freedoms of the graph. 

\begin{theorem}\textbf{The effect of the graph transformations on $\epsilon^{(g)}$}\label{theo:z_orient_gauge} \cite{AG4} 
Let $\mathcal G$ be a reduced planar bipartite graph in the disk. Then the following elementary transformations: change of orientation along a cycle, change of orientation along a simple directed path $\mathcal P$ from the boundary source $b_i$ to the boundary sink $b_j$, change of gauge ray direction or internal vertex position change (which locally modifies winding and intersection numbers) act on the geometric signature as gauge equivalence transformations in the sense of Definition \ref{def:admiss_sign_gauge}.
\end{theorem}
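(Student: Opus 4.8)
The plan is to show, transformation by transformation, that the change produced in the edge signature has exactly the form prescribed by Definition~\ref{def:admiss_sign_gauge}, that is, it equals the mod~$2$ sum of a vertex function $\eta:\mathcal V_{\mathrm{int}}\mapsto\{0,1\}$ evaluated at the endpoints of each edge. The organizing principle is the elementary observation that an edge-signature change is a gauge transformation if and only if it preserves the face signature $\epsilon^{(g)}(\Omega)=\sum_{e\in\partial\Omega}\epsilon^{(g)}(e)\pmod 2$ on every internal face $\Omega$: indeed, summing $\eta(u)+\eta(v)$ around a closed internal cycle each internal vertex is met twice and cancels modulo~$2$, so face signatures are gauge invariant; conversely, since the boundaries of the internal faces generate the cycle space over $\mathbb{Z}/2$ of a graph embedded in the disk, equality of all internal face signatures lets one reconstruct $\eta$ by fixing it on a spanning tree and propagating. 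Hence for the orientation-preserving moves it suffices to verify invariance of all internal face signatures, and for the orientation-reversing moves one additionally checks compatibility with the edge-reversal rule of the Corollary following Proposition~\ref{prop:kas_lam_form}.

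First I treat the change of gauge ray direction and the internal vertex position change, which are continuous deformations of $(\mathcal G,\mathfrak l)$. By transversality it suffices to analyse the elementary events in which a single crossing occurs: the ray direction $\mathfrak l$ becomes momentarily parallel to one internal edge, a gauge ray sweeps across one internal vertex, or a moving vertex crosses a single ray or a single edge. At each such event only the winding numbers $\mbox{wind}(e_k,e_{k+1})$ and intersection numbers $\mbox{int}(e)$ attached to one internal vertex $v$ change, and a direct inspection of the four cases of (\ref{eq:geo_sign}) shows that the combined parity flip falls exactly on the edges incident to $v$; this is precisely the gauge transformation with $\eta(v)=1$ and $\eta\equiv 0$ elsewhere. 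Equivalently, each event leaves every internal face signature unchanged, since the crossing alters the intersection pattern only at $v$ and thereby crosses the boundary of any face an even number of additional times, while the accompanying winding increments cancel in the closed sum around $\partial\Omega$. Alternatively one may argue globally: by Theorem~\ref{theo:null} the boundary measurement matrix $A$, hence the source edge vectors $E_{b_{i_r}}=A[r]-E_{i_r}$, are manifestly independent of $\mathfrak l$ and of the embedding, so the full-rank relations (\ref{eq:triv_white}) force the internal signature to change only by gauge.

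Next I treat the reversal of an internal cycle $\gamma$. Since $\gamma$ does not meet the boundary the source and sink sets are unchanged, hence the family of gauge rays is unchanged; reversing an edge $e$ only flips the sign of each of its signed intersection contributions, which is invisible modulo~$2$, so every $\mbox{int}(e)$ is preserved mod~$2$. Only the local winding numbers at the pairs of consecutive edges running along $\gamma$ are altered, and one reads off from (\ref{eq:geo_sign}) a vertex function $\eta$ supported on the vertices of $\gamma$ that realizes the change; the winding increments accumulated along $\gamma$ sum to an even number because the total turning around a closed cycle is an integer multiple of a full turn, which guarantees that $\eta$ is globally consistent. The same conclusion follows from the reversal rule of the Corollary after Proposition~\ref{prop:kas_lam_form}, which preserves the edge value $\epsilon$ while inverting the weight, combined with the face-signature criterion above.

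Finally, the reversal of a simple directed path $\mathcal P$ from the boundary source $b_i$ to the boundary sink $b_j$ is the delicate case, and I expect it to be the \emph{main obstacle}, because it changes three things at once: the orientation of the edges of $\mathcal P$ (affecting winding numbers), the status of $b_i$ and $b_j$ (exchanging the special boundary-source and boundary-sink cases in (\ref{eq:geo_sign})), and, crucially, the source set, which passes from $I$ to $I'=I\setminus\{i\}\cup\{j\}$, so that the gauge ray emanating from $b_i$ is deleted and one from $b_j$ is created, altering intersection numbers throughout the strip swept between the two rays. The key that reconciles all of these into a single vertex gauge is the parity identity $\mbox{Wind}(P)+\sigma(i_r,j)\equiv\mbox{wind}(P)+\mbox{int}(P)\pmod 2$ recalled just before Theorem~\ref{theo:null}: it shows that the total sign carried by any complete source-to-sink path depends only on its topological winding and on the number of intervening sources, both of which transform predictably when $I$ is replaced by $I'$ and $\mathcal P$ is reversed. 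Tracking these two gauge-invariant quantities in the reversed configuration and subtracting the original signature edge by edge yields a vertex function $\eta$, supported on the internal vertices of $\mathcal P$ together with those separated from the boundary by the shifted ray, whose endpoint sums reproduce the edge-by-edge parity change; this is precisely a gauge transformation in the sense of Definition~\ref{def:admiss_sign_gauge}, completing the proof.
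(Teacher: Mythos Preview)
The paper does not prove this theorem: it is stated with the citation \cite{AG4} and no proof is supplied in the present text. Hence there is no in-paper argument to compare your proposal against; the result is imported from \cite{AG4}, where the transformations are analysed case by case on the explicit formula~(\ref{eq:geo_sign}).

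As a standalone argument, your outline has the right shape but two points need tightening. First, your organizing principle --- that a signature change is a gauge transformation in the sense of Definition~\ref{def:admiss_sign_gauge} if and only if it preserves the signature of every \emph{internal} face --- is not quite the correct criterion. Preservation of all finite face signatures guarantees that the difference is a coboundary $\eta(u)+\eta(v)$ for \emph{some} vertex function $\eta$, but Definition~\ref{def:admiss_sign_gauge} requires $\eta$ to live only on internal vertices (equivalently, to be constant on boundary vertices). That extra constraint amounts to the vanishing mod~$2$ of the signature difference summed along any boundary-to-boundary path, and it does not follow from face invariance alone. For the gauge-ray and vertex-position moves this is harmless because your elementary-event analysis already produces an explicit $\eta$ supported at a single internal vertex; but for the path reversal you invoke only the face criterion and then assert the existence of $\eta$, so the boundary compatibility is left unchecked. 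Second, in the path-reversal case your final paragraph is a plan rather than a proof: you name the three simultaneous effects (edge reversal, swap of source/sink roles in~(\ref{eq:geo_sign}), and the replacement of the ray $\mathfrak l_i$ by $\mathfrak l_j$), and you point to the parity identity preceding Theorem~\ref{theo:null} as the key, but you do not actually carry out the edge-by-edge subtraction nor verify that the resulting $\eta$ is well defined. This is exactly the step where the argument in \cite{AG4} does real work, and it is where your sketch would need to be completed.
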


Therefore the equivalence class of $\epsilon^{(g)}$, the geometric signature of Definition \ref{def:geo_sign_expl}, depends only on the graph. Let us denote $\epsilon^{(g)}(\Omega)$ the total contribution of $\epsilon^{(g)}$ at the edges $e$ bounding the face $\Omega$:
\begin{equation}\label{eq:eps_tot_1}
\epsilon^{(g)}(\Omega) = \sum_{e\in\partial\Omega} \epsilon^{(g)}(e).
\end{equation}

\begin{corollary}\textbf{Equivalence class of $\epsilon^{(g)}$ depends only on the graph}\label{cor:equiv} \cite{AG4}
Let $\mathcal G$ be a planar reduced bipartite graph in the disk representing the positroid cell $\S$. 
Let $\epsilon^{(g)}_1$, $\epsilon^{(g)}_2$ be two geometric signatures respectively on $(\mathcal G, \mathcal O_1,\mathfrak l_1)$ and $(\mathcal G, \mathcal O_2,\mathfrak l_2)$, where $\mathcal O_i$ and  $\mathfrak l_i$, $i\in [2]$, respectivaly are perfect orientations of $\mathcal G$ and gauge ray directions. 
Then at each face $\Omega$
\[
\epsilon^{(g)}_2 (\Omega) = \epsilon^{(g)}_1 (\Omega) \mod 2.
\]

Therefore there exists a unique geometric signature $\epsilon^{(g)}$ on $\mathcal G$ modulo the gauge equivalence described in Definition \ref{def:admiss_sign_gauge}. 
\end{corollary}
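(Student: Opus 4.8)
The plan is to reduce the statement to two ingredients: that the two data $(\mathcal O_1,\mathfrak l_1)$ and $(\mathcal O_2,\mathfrak l_2)$ are joined by a finite sequence of the elementary transformations listed in Theorem \ref{theo:z_orient_gauge}, and that the total face quantity $\epsilon^{(g)}(\Omega)$ is invariant $\bmod 2$ under the gauge equivalence of Definition \ref{def:admiss_sign_gauge}. Since Theorem \ref{theo:z_orient_gauge} already identifies each elementary transformation with such a gauge equivalence, the second ingredient immediately promotes to invariance of $\epsilon^{(g)}(\Omega)$ under every one of those transformations, and the first ingredient then chains $(\mathcal O_1,\mathfrak l_1)$ to $(\mathcal O_2,\mathfrak l_2)$.

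First I would prove gauge-invariance of the face sum. If $\epsilon^{(2)}(e)=\epsilon^{(1)}(e)+\eta(u)+\eta(v)\bmod 2$ on each internal edge $e=\overrightarrow{uv}$, the boundary-edge variants contributing only the internal endpoint as in Definition \ref{def:admiss_sign_gauge}, then summing over $e\in\partial\Omega$ gives
\[
\epsilon^{(g)}_2(\Omega)-\epsilon^{(g)}_1(\Omega)\equiv \sum_{v\ \mathrm{internal},\ v\in\partial\Omega} m_\Omega(v)\,\eta(v)\pmod 2,
\]
where $m_\Omega(v)$ denotes the number of edge-ends of $\partial\Omega$ incident to $v$. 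Because the boundary of any face is a closed walk, it enters and leaves each internal vertex the same number of times, so $m_\Omega(v)$ is even for every internal $v$, while the degree-one boundary vertices carry no $\eta$ and contribute nothing. Hence each term vanishes $\bmod 2$, and $\epsilon^{(g)}(\Omega)$ is gauge-invariant. Combined with Theorem \ref{theo:z_orient_gauge}, every change of orientation along a cycle or along a source-to-sink path, every change of gauge ray direction, and every admissible vertex-position change leaves $\epsilon^{(g)}(\Omega)\bmod 2$ unchanged.

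Next I would establish connectivity. The gauge ray direction $\mathfrak l_1$ can be rotated continuously to $\mathfrak l_2$; the winding and intersection data change only as the ray sweeps through the finitely many configurations excluded in Definition \ref{def:gauge_ray}, and each such passage is exactly the gauge-ray change handled by Theorem \ref{theo:z_orient_gauge}. For the orientations, I would consider the set $D$ of edges oriented oppositely by $\mathcal O_1$ and $\mathcal O_2$ and orient $D$ according to $\mathcal O_2$: since each internal black (resp.\ white) vertex has a unique outgoing (resp.\ incoming) edge in both orientations, every internal vertex is balanced in $D$, so $D$ decomposes into directed cycles and directed paths joining boundary vertices. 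Reversing these one at a time carries $\mathcal O_1$ to $\mathcal O_2$ through the cycle- and path-reversals of Theorem \ref{theo:z_orient_gauge}. Chaining the gauge-ray and orientation sequences yields $\epsilon^{(g)}_2(\Omega)=\epsilon^{(g)}_1(\Omega)\bmod 2$ at every face; hence the two signatures lie in a single equivalence class and the geometric signature on $\mathcal G$ is unique modulo Definition \ref{def:admiss_sign_gauge}.

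Since the genuinely hard input, namely that the elementary moves act as gauge equivalences, is already supplied by Theorem \ref{theo:z_orient_gauge}, the residual work is bookkeeping, and I expect the main obstacle to be the parity count at the external faces. There the closed boundary walk alternates graph edges with arcs of the disk boundary and visits degree-one boundary vertices, so one must verify carefully that $m_\Omega(v)$ remains even at every internal vertex the walk may pinch and that boundary vertices truly contribute nothing. A secondary point requiring care is matching the endpoints of the directed paths in the decomposition of $D$ to a genuine boundary-source-to-boundary-sink pairing, so that each reversal is precisely the path-reversal move permitted by Theorem \ref{theo:z_orient_gauge}.
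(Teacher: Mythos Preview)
Your proposal is correct and follows the route the paper intends: the corollary is stated immediately after Theorem \ref{theo:z_orient_gauge} (both cited from \cite{AG4}) precisely because it is meant to be read as the direct consequence you spell out, namely that the elementary moves of Theorem \ref{theo:z_orient_gauge} act as gauge equivalences, gauge equivalence preserves $\epsilon^{(g)}(\Omega)\bmod 2$, and any two pairs $(\mathcal O_i,\mathfrak l_i)$ are connected by such moves. The paper gives no further argument, so you are supplying exactly the bookkeeping it leaves implicit; your flags on the external-face parity count and on the source-to-sink nature of the path components of $D$ are the right places to be careful, and both checks go through as you outline.
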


The solution to Lam system of relations for the geometric signature provides a representation of the edge vectors $E_b$ equivalent to that in Theorem \ref{theo:null} and induces Postnikov boundary measurement map for the class of graphs studied in \cite{AG4}. Below we restate such Theorem in the present setting.

\begin{theorem}\textbf{Lam system for the signature $\epsilon^{(g)}$}\label{theo:lam_exist} \cite{AG4}
Let $\mathcal G$ be a reduced bipartite graph with black boundary vertices representing the positroid cell $\S \subset \GTNN$. Let $ \mathcal O =\mathcal O(I)$ be a perfect orientation for the base $I=\{1\le i_1 < i_2 < \cdots < i_k \le n\}$ and let $\mathfrak{l}$ be a gauge ray direction. Let $\epsilon^{(g)}: \mathcal E\to \{0,1\}$ be the geometric signature defined in (\ref{eq:geo_sign}) for the triple $(\mathcal G, \mathcal O, \mathfrak l)$. Let $z$ be variables in $\mathbb{R}^n$.
Then 
\begin{enumerate}
\item Lam system of relations of Definition \ref{def:lam_sys} for such signature has full rank on $(\mathcal G, \mathcal O, \mathfrak{l},t_{uv})$ for any choice of real positive edge weights $t_{uv}$;
\item For any black vertex $b$, the $j$--th component of the half--edge vector $z^{(g)}_{b,e}$ coincides with the $j$--th component of the edge vector $E_b$ computed in Theorem \ref{theo:null}:
\begin{equation}\label{eq:sol_geo}
(z^{(g)}_{b,e})_j \; = \; \left(E_{b}\right)_{j}\; = \;\frac{\displaystyle\sum\limits_{F\in {\mathcal F}_{e,b_j}(\mathcal G)} \big(-1\big)^{\mbox{wind}(F)+\mbox{int}(F)}\ w(F)}{\sum\limits_{C\in {\mathcal C}(\mathcal G)} \ w(C)},
\end{equation}
\item If we assign the $j$-th basis vector $E_j$ at the half edge vector $z^{(g)}_{b_j}$ at the boundary sink $b_j$,
\begin{equation}\label{eq:rel_sink}
z^{(g)}_{b_j,e}=E_{j}, \quad\quad j\in \bar I,
\end{equation}
then the half--edge vector $z^{(g)}_{b_{i_r},e}$ at the boundary source $i_r\in I$ is
\begin{equation}\label{eq:rel_sou}
z^{(g)}_{b_{i_r},e} = E_{i_r}-A[r],
\end{equation}
where $A[r]$ is the $r$--th row of the boundary measurement matrix represented by the network $(\mathcal G, I,t_{uv})$, and $E_{i_r}$ is the $i_r$--th vector of the canonical basis;
\item If ${\tilde t}_{uv}^{\prime}$ is an edge weighting equivalent to $t_{uv}$ on $(\mathcal G, \mathcal O, \mathfrak{l})$ and ${\tilde z}^{(g)}_{b,e}$ denotes the solution of Lam system of relation for the same signature $\epsilon^{(g)}$ and identical boundary conditions at the boundary sinks, then the solutions of the two systems coincide at the boundary sources $i_r\in I$:
\begin{equation}\label{eq:rel_sou_2}
{\tilde z}^{(g)}_{b_{i_r},e} = z^{(g)}_{b_{i_r},e};
\end{equation}
\item If ${\tilde \epsilon}$ is gauge equivalent to $\epsilon^{(g)}$ and ${\tilde z}_{b,e}$ denotes the solution of Lam system of relation for the new signature ${\tilde \epsilon}$, an equivalent edge weighting to $t_{uv}$ and identical boundary conditions at the boundary sinks on $(\mathcal G, \mathcal O, \mathfrak{l})$, then the solutions of the two systems coincide at the boundary sources $i_r\in I$:
\begin{equation}\label{eq:rel_sou_3}
{\tilde z}_{b_{i_r},e} = z^{(g)}_{b_{i_r},e}.
\end{equation}
\end{enumerate}
\end{theorem}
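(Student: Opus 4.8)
The plan is to read this statement as the specialization, to reduced bipartite graphs with black boundary vertices, of the corresponding result in \cite{AG4}, and to reduce the five claims to the single structural fact that the edge vectors $E_b$ of Definition \ref{def:edge_vector} solve Lam system of relations for the signature $\epsilon^{(g)}$. Concretely, I would first set $z^{(g)}_{b,e}=E_b$ at every black vertex $b$ (with $e$ its unique outgoing edge in $\mathcal O$) and propagate to the white half--edges by the edge rule of Definition \ref{def:lam_sys}, and then check that the three families of relations (edge, black--vertex, white--vertex) hold automatically. The black--vertex relation $z_{v,e_i}=z_{v,e_j}$ is immediate, since in a perfectly oriented bipartite graph each black vertex carries a unique outgoing edge and hence a single well--defined vector $E_b$; the edge relation holds by the very definition of $\epsilon^{(g)}$ in (\ref{eq:geo_sign}), which is designed to record exactly the weight and the $(-1)$--factor accumulated along one oriented edge.

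The core of the argument is the white--vertex relation, which I would obtain directly from the path--summation definition (\ref{eq:sum}) of $E_b$. At a white vertex $w$ every walk issuing from the incoming edge $e_3=\overrightarrow{b_3w}$ must continue along one of the outgoing edges $e_{3i}=\overrightarrow{wb_i}$, so the defining sum for $E_{b_3}$ splits, according to the first outgoing step, into the sums defining the $E_{b_i}$. Collecting the resulting winding and intersection contributions reproduces precisely the geometric linear relation (\ref{eq:triv_white}) (and (\ref{eq:triv_white_b}) when $b_3$ is a boundary source), whose coefficients are, by (\ref{eq:geo_sign}), the factors $(-1)^{\epsilon^{(g)}(e)}t_e$. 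This identifies (\ref{eq:triv_white}) with the white--vertex relation $\sum_i z^{(g)}_{w,e_i}=0$, proving claim (2) for a trivalent $w$; the general valence follows by the same bookkeeping applied to each outgoing edge, or by reducing higher--valent vertices to trivalent ones through repeated expansion moves. Once (2) is in place, the rational closed form (\ref{eq:tal_formula}) of Theorem \ref{theo:null} shows that the half--edge vectors are nonzero rational functions of the weights, and claim (3) is exactly (\ref{eq:source_vec}): assigning $E_j$ to the sinks forces $z^{(g)}_{b_{i_r},e}=E_{i_r}-A[r]$.

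For claim (1), full rank follows once the explicit solution of (2)--(3) is available: the source vectors $E_{i_r}-A[r]$ are the rows of a representative of $[A]\in\S$ and are therefore linearly independent, while Theorem \ref{theo:null} guarantees that no internal component $\left(E_b\right)_j$ vanishes spuriously; together these force the system of internal relations to have maximal rank, exactly as in \cite{AG4}. Claims (4) and (5) are gauge--invariance statements: by Theorem \ref{theo:z_orient_gauge} and Corollary \ref{cor:equiv}, changing the orientation, the gauge ray direction, or an internal vertex position, as well as passing to a gauge--equivalent signature, acts on $\epsilon^{(g)}$ only through the equivalence of Definition \ref{def:admiss_sign_gauge}, under which the boundary solution of a full--rank Lam system is unchanged; likewise invariance under a weight--gauge equivalence of $t_{uv}$ reduces to the weight--gauge invariance of the boundary measurement matrix, which determines the source vectors through (\ref{eq:source_vec}).

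The step I expect to be the real obstacle is the sign accounting in the white--vertex relation. One must check that the winding numbers $\mbox{wind}(e_3,e_{3i})$ and $\mbox{wind}(e_{3i},e_i)$ at the turn through $w$, together with the intersection number $\mbox{int}(e_{3i})$ of each outgoing edge, combine with the signs already carried by $E_{b_i}$ to match the factor $(-1)^{\epsilon^{(g)}(e)}$ prescribed by (\ref{eq:geo_sign}) on every outgoing edge \emph{simultaneously}. This is precisely the content of the normalization chosen in Definition \ref{def:geo_sign_expl}, and it is here that the case distinction of (\ref{eq:geo_sign}) (internal white, boundary source, internal black, boundary sink) is used in full. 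Everything else is either a direct appeal to Theorem \ref{theo:null} or a transcription of the gauge--invariance results already established in \cite{AG4}.
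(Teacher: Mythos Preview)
Your proposal is correct and follows precisely the approach the paper sketches in Section \ref{sec:geom} (the full proof being imported from \cite{AG4}): set $z^{(g)}_{b,e}=E_b$, verify the white--vertex relation by decomposing each walk at $w$ according to its first outgoing edge (this is exactly what (\ref{eq:triv_white})--(\ref{eq:triv_white_b}) record), and then read off (3)--(5) from Theorem \ref{theo:null} and the gauge results of Theorem \ref{theo:z_orient_gauge}. One minor slip: the vectors $E_{i_r}-A[r]$ are \emph{not} rows of a representative of $[A]$ (their $i_r$-th entry is $0$, not $1$, and they are supported on $\bar I$), so the linear--independence clause in your full--rank argument is misphrased; but since you ultimately defer claim (1) to \cite{AG4} this does not damage the overall outline.
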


\begin{remark}
Theorems \ref{theo:lam_exist} and \ref{theo:sol_kas_sys_2} look evidently related (see also Proposition \ref{prop:kas_lam_form}) except for the fact that Kasteleyn signature is defined on an undirected graph whereas the geometric signature is constructed on the same graph but perfectly oriented.
In Theorem \ref{theo:main} we verify that the geometric signature is equivalent to the Kasteleyn signature,
and in Theorem \ref{theo:kas_geo_sys} we provide the relation between $z^{(k)}_{b,u}$ and $z^{(g)}_{b,u}$ at the internal vertices.  
\end{remark}

\begin{figure}
\centering
{\includegraphics[width=0.37\textwidth]{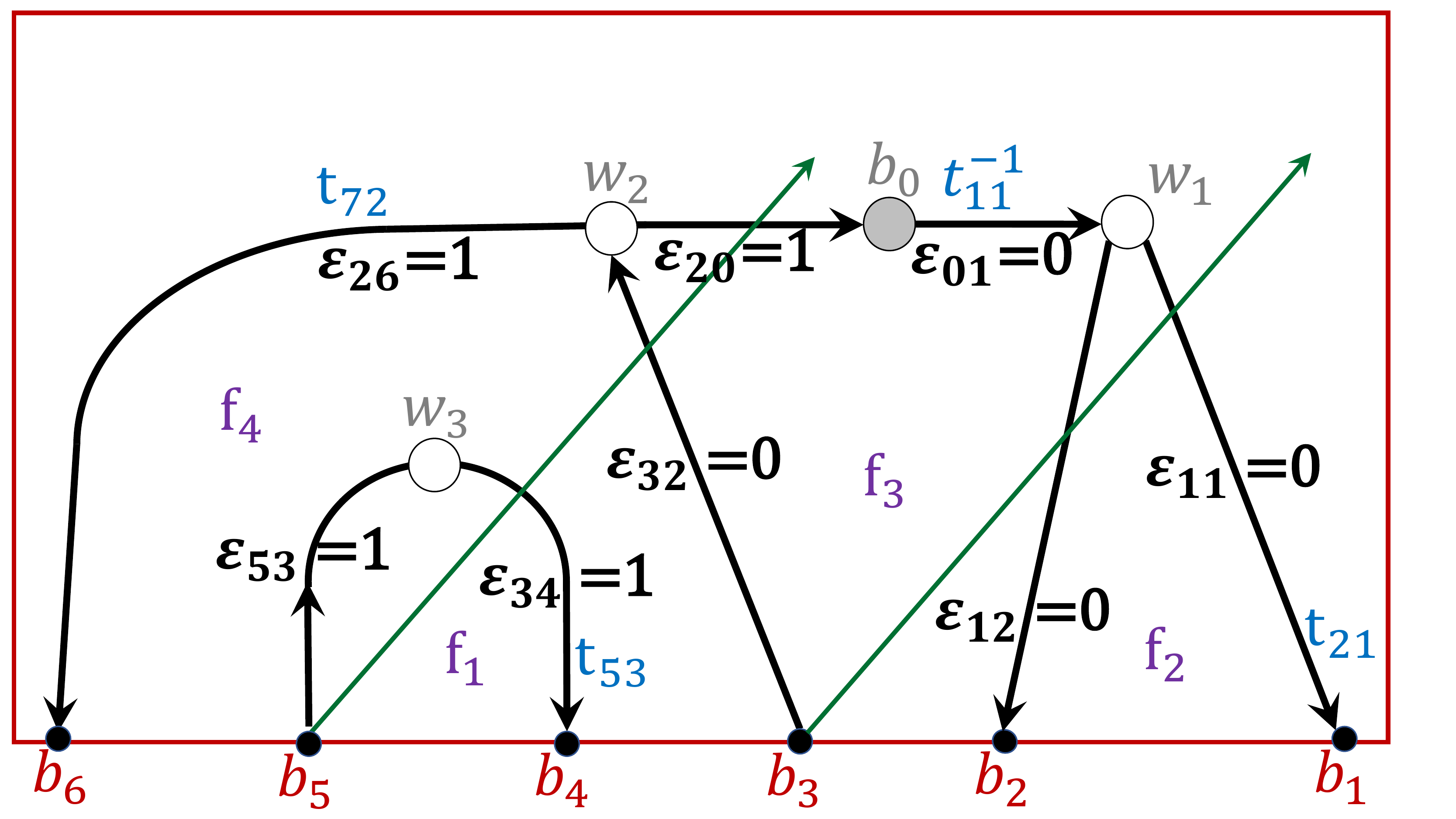}}
\vspace{-.3 truecm}
\caption{\small{\sl The geometric signature of Definition(\ref{def:geo_sign_expl}) for the directed graph of Figure \ref{fig:wind_int}. The directed network represents the same point in $Gr(2,6)$ of Figure \ref{fig:ex_Gr26}.}\label{fig:lam_sign_ex}}
\end{figure}

We end this Section illustrating Theorem \ref{theo:lam_exist} for the example in Figure \ref{fig:lam_sign_ex}.

\begin{example}\label{ex:4}
The network in Figure \ref{fig:lam_sign_ex} is oriented with respect to the base $I=\{3,5\}$ and equivalent to the undirected network of Figure \ref{fig:ex_Gr26} and Example \ref{ex:1}. Lam system of relations for the geometric signature of Definition \ref{def:geo_sign_expl} takes the following form: 
\begin{enumerate}
\item At the oriented edges, the following relations hold:
\[
\begin{array}{ll}
z^{(g)}_{w_1,e_{11}} = (-1)^{\epsilon_{11}}t_{21} z^{(g)}_{b_1,e_{11}} = t_{21} z^{(g)}_{b_1,e_{11}}, 
& \quad z^{(g)}_{w_1,e_{12}} = (-1)^{\epsilon_{12}} z^{(g)}_{b_2,e_{12}} = z^{(g)}_{b_2,e_{12}}, \\	
\noalign{\medskip}
\displaystyle  z^{(g)}_{b_0,e_{01}} = \frac{(-1)^{\epsilon_{01}}}{t_{11}} z^{(g)}_{w_1,e_{01}} = \frac{z^{(g)}_{w_1,e_{01}}}{t_{11}},
& \quad z^{(g)}_{w_2,e_{20}} = (-1)^{\epsilon_{20}} z^{(g)}_{b_0,e_{20}} = -z^{(g)}_{b_1,e_{20}}, \\
\noalign{\medskip}
z^{(g)}_{b_3,e_{32}} = (-1)^{\epsilon_{32}} z^{(g)}_{w_2,e_{32}} = z^{(g)}_{w_2,e_{32}},  
& \quad z^{(g)}_{w_2,e_{26}} = (-1)^{\epsilon_{26}}t_{72} z^{(g)}_{b_6,e_{26}} = -t_{72} \, z^{(g)}_{b_6,e_{26}},\\
\noalign{\medskip}
\displaystyle  z^{(g)}_{w_3,e_{34}} = \frac{(-1)^{\epsilon_{34}}}{t_{53}} z^{(g)}_{b_4,e_{34}} = -\frac{z^{(g)}_{b_4,e_{34}}}{t_{53}}, 
& \quad z^{(g)}_{b_5,e_{53}} = (-1)^{\epsilon_{53}} z^{(g)}_{w_3,e_{53}} = -z^{(g)}_{w_3,e_{53}};
\end{array}
\]
\item At the internal black vertex $b_0$, we have the relation
\[
z^{(g)}_{b_0,e_{20}}= z^{(g)}_{b_0,e_{01}};
\]
\item At the internal white vertices $w_j$, $j\in [3]$, we have the relations
\[
\begin{array}{lll}
z^{(g)}_{w_1,e_{01}}+z^{(g)}_{w_1,e_{11}}+z^{(g)}_{w_1,e_{12}}=0, & \quad z^{(g)}_{w_2,e_{32}}+z^{(g)}_{w_2,e_{20}}+z^{(g)}_{w_2,e_{26}}=0, & \quad z^{(g)}_{w_3,e_{53}}+z^{(g)}_{w_3,e_{34}}=0.
\end{array}
\]
\end{enumerate}
If we assign the canonical basis vectors to the half-edges at the boundary sinks
$z^{(g)}_{{b_1},e_{11}} =(1,0,0,0,0,0)$, $z^{(g)}_{{b_2},e_{12}} =(0,1,0,0,0,0)$, $z^{(g)}_{{b_4},e_{34}} =(0,0,0,1,0,0)$, $z^{(g)}_{{b_6},e_{26}} =(0,0,0,0,0,1)$, then the half--edge vectors at the boundary sources are as expected
\[
\begin{array}{c}
z^{(g)}_{{b_3},e_{32}} = (-t_{21}t_{11}^{-1}, -t_{11}^{-1},0,0,0,t_{72}) = (0,0,1,0,0,0)- A[1], \\
\\
z^{(g)}_{{b_5},e_{53}} = (0,0,0,-t_{53},0,0) =(0,0,0,0,1,0)- A[2],
\end{array}
\]
where
\[
A = \left(
\begin{array}{cccccc}
t_{21}t_{11}^{-1} & t_{11}^{-1} & 1 & 0  &0 &-t_{72}\\
0 & 0 & 0 & t_{53} & 1 & 0
\end{array}
\right)
\]
is Postnikov boundary measurement matrix for the directed network in Figure \ref{fig:lam_sign_ex} with respect to the base $I= \{3, 5\}$.
If we compare this solution to that of Kasteleyn system of relations for the equivalent undirected network (see Example \ref{ex:6} (1)), for the same boundary conditions at the boundary vertices $j\in \bar I$,
$v^{(k)}_{b_1} =z^{(g)}_{{b_1},e_{11}}$, $v^{(k)}_{b_2} =z^{(g)}_{{b_2},e_{12}} $, $v^{(k)}_{b_4} =z^{(g)}_{{b_4},e_{34}}$, $v^{(k)}_{b_6} =z^{(g)}_{{b_6},e_{26}}$, it is straightforward to check that at the internal black vertex 
$b_0$,
\[
v^{(k)}_{b_0} = (-t_{21}t_{11}^{-1}, -t_{11}^{-1},0,0,0,0)=  z^{(g)}_{{b_0},e_{01}}=  z^{(g)}_{{b_0},e_{20}},
\]
and at the boundary vertices $b_i$, $i\in I$
\[
v^{(k)}_{b_3} = z^{(g)}_{{b_3},e_{32}}, \quad\quad
v^{(k)}_{b_5} = z^{(g)}_{{b_5},e_{53}}.
\]
In Theorem \ref{theo:kas_geo_sys} we indeed show that the above relations hold in general.
\end{example}

\subsection{Geometric signatures are Kasteleyn}\label{sec:inv_geo} 

From the characterization of Kasteleyn signatures on reduced bipartite graphs presented in this paper and Theorem \ref{theo:sign_face} in \cite{AG4} there follows the equivalence between geometric and Kasteleyn signatures(Theorem \ref{theo:main}). Therefore explicit solutions of Kasteleyn systems of relations are expressed in terms of flows (Theorem \ref{theo:kas_geo_sys}). In particular,
the equivalence between Postnikov parametrization of positroid cells via the boundary measurement map and Speyer parametrization 
via maximal minors of Kasteleyn weighted matrices (see \cite{Sp} and Theorem \ref{cor:equiv_par}) follows also from the geometric characterization of Kasteleyn signatures (Corollary \ref{cor:pos_sp_equiv}).

\begin{theorem}\textbf{The total geometric signature at faces}\label{theo:sign_face} \cite{AG4}
Let $\mathcal G$ be a planar bicolored graph representing the irreducible positroid cell $\S\subset \GTNN$, and such that, upon fixing a perfect orientation, for any edge of $\mathcal G$ there is a directed path from boundary to boundary containing it. Let $\epsilon^{(g)}$ be a geometric signature of $\mathcal G$. Let $\epsilon^{(g)}(\Omega)= \sum_{e\in\partial\Omega} \epsilon^{(g)}(e)$ be the geometric signature of the face $\Omega$, and let $n_w(\Omega)$ denote the number of internal white vertices bounding $\Omega$. Then
\begin{equation}\label{eq:sign_face_geo}
\epsilon^{(g)}(\Omega) = 
\left\{ \begin{array}{ll}
\displaystyle n_w(\Omega) +1 \quad \mod 2, & \quad \mbox{if } \Omega \mbox{ is a finite face}; \\
\noalign{\medskip}
\displaystyle n_w(\Omega) +k \quad \mod 2, & \quad \mbox{if } \Omega \mbox{ is the infinite face}.
\end{array}
\right.
\end{equation}
\end{theorem}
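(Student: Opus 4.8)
The plan is to expand $\epsilon^{(g)}(\Omega)=\sum_{e\in\partial\Omega}\epsilon^{(g)}(e)$ according to Definition \ref{def:geo_sign_expl} and to regroup the terms, modulo $2$, into three packets: the constant ``$+1$'' contributions, the intersection numbers $\mathrm{int}(e)$, and the local winding numbers. Since by Corollary \ref{cor:equiv} the quantity $\epsilon^{(g)}(\Omega)\bmod 2$ is independent of the perfect orientation $\mathcal O$ and of the gauge ray direction $\mathfrak l$, I am free to compute each packet with whatever orientation and ray is locally convenient, and only the total must be intrinsic. First I would classify each \emph{internal} vertex bounding $\Omega$ as a \emph{source}, \emph{sink} or \emph{through} vertex of the closed boundary walk $\partial\Omega$ oriented by $\mathcal O$; since a black internal vertex has a unique outgoing edge and a white one a unique incoming edge, black vertices on $\partial\Omega$ are never sources and white ones never sinks, so all sources are white and all sinks are black and their numbers coincide, say $s$. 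Boundary vertices on external faces are treated separately, entering through cases two and four of \eqref{eq:geo_sign}.

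For the intersection packet, working modulo $2$ the sign attached to each crossing in Definition \ref{def:int_number} is irrelevant, so $\sum_{e\in\partial\Omega}\mathrm{int}(e)$ reduces to the total number of transversal crossings of the gauge rays $\mathfrak l_{i_1},\dots,\mathfrak l_{i_k}$ with the closed curve $\partial\Omega$. Each ray has its starting point on the boundary of the disk; for a finite face neither endpoint of a ray lies in the bounded region, so every ray meets $\partial\Omega$ an even number of times and this packet vanishes. For the infinite face the $k$ source endpoints lie on the boundary arc bordering $\Omega_0$, which shifts the crossing parity of each source ray; carefully tracking these endpoint parities is exactly what converts the ``$+1$'' of finite faces into the ``$+k$''.

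The winding packet is the geometric heart of the argument. Using that the tail of a head-black edge is white with a unique incoming edge $e_1$, and the head of an internal head-black edge is black with a unique outgoing edge $e_5$, I would show that at every \emph{through} vertex the winding term occurring in $\epsilon^{(g)}$ is exactly the local turning $\mathrm{wind}(e_{\mathrm{in}},e_{\mathrm{out}})$ of the boundary walk itself, whereas at a source or a sink it instead involves an off-face edge ($e_1$ at a white source, $e_5$ at a black sink). Summed over the through vertices these turnings give the total turning of the simple closed curve $\partial\Omega$ relative to $\mathfrak l$, which equals $\pm1$ and hence contributes $1\bmod 2$; at each source and sink I would invoke the elementary three-edge winding identity for the triple emanating from the common vertex to rewrite the off-face term as the corresponding turning plus a correction. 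The main obstacle, and the step demanding the most care, is precisely this mod-$2$ bookkeeping: one must check that the $s$ source/sink corrections exactly compensate the deficit of the constant packet, which a direct count shows equals $n_w(\Omega)-s$ for a finite face (each head-black edge contributing $1$, each head-white non-source edge contributing $0$). Collecting the three packets then yields $\epsilon^{(g)}(\Omega)\equiv (n_w(\Omega)-s)+(1+s)+0=n_w(\Omega)+1$ for a finite face, and the analogous identity carrying the extra $k-1$ from the intersection packet for the infinite face. Finally I would observe that passing from the bipartite picture to the bicolored graphs of the statement is harmless, since the hypothesis that every edge lies on a boundary-to-boundary directed path guarantees, after fixing $\mathcal O$, the unique-in/unique-out structure used at each relevant vertex.
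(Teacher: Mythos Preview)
The paper does not give its own proof of this theorem: it is quoted verbatim from \cite{AG4} and used as input for Theorem \ref{theo:main}. So there is no in-paper argument to compare against; what you have written is an attempted reconstruction of (presumably) the argument in \cite{AG4}.

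Your overall architecture---splitting $\epsilon^{(g)}(\Omega)$ into a constant packet, an intersection packet, and a winding packet, and using Corollary \ref{cor:equiv} to compute each modulo $2$---is the natural one and is almost certainly the skeleton of the proof in \cite{AG4}. The intersection packet analysis via a Jordan-curve parity argument is correct for internal faces, and your source/sink classification of the vertices on $\partial\Omega$ is right. Note however that your count of the constant packet is off by a sign that happens not to matter mod $2$: the head-black edges, counted at black vertices (through vertices contribute one incoming edge, sinks two), total $n_b(\Omega)+s=n_w(\Omega)+s$, not $n_w(\Omega)-s$.

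The genuine gap is in the winding packet, which you yourself flag as ``the step demanding the most care'' and then do not carry out. You need to state precisely the three-edge identity you intend to use at each white source and black sink (for three half-edges $a,b,c$ at a common vertex, what exactly is $\mathrm{wind}(a,b)+\mathrm{wind}(a,c)$ modulo $2$ in terms of the turning you want?), and then verify that the $2s$ off-face winding terms at the $s$ sources and $s$ sinks combine with the on-face turns along the $2s$ directed arcs of $\partial\Omega$ to give $1+s\bmod 2$. This is the whole content of the theorem and cannot be left as a heuristic. Similarly, your treatment of external faces is incomplete: for finite external faces the second and fourth cases of \eqref{eq:geo_sign} (boundary sources and boundary sinks) enter the constant and winding packets in a way you have not accounted for, and for the infinite face your claim that the endpoint parities ``convert the $+1$ into the $+k$'' needs an explicit computation rather than an assertion. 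As written, the proposal is a correct outline with the key technical lemma missing.
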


If $\mathcal G$ is reduced bipartite with black boundary vertices,
\[
n_w(\Omega) = \frac{|\Omega|}{2},
\]
where $|\Omega|$ is the number of edges bounding the face $\Omega$. Therefore for any given geometric signature $\epsilon^{(g)}(e)$ on the reduced bipartite graph $\mathcal G$, $(-1)^{\epsilon^{(g)}(e)}$ is a Kasteleyn signature. Vice versa for any given Kasteleyn signature
$\sigma(e)$ on $\mathcal G$, then
\[
\epsilon (e) = \left\{ \begin{array}{ll}
0, & \mbox{ if } \sigma(e) =1,\\
1, & \mbox{ if } \sigma(e) =-1,
\end{array}\right.
\]
is an element in the equivalence class of $\epsilon^{(g)}$ since it satisfies (\ref{eq:sign_face_geo}).

\begin{theorem}\label{theo:main}\textbf{Equivalence between Kasteleyn and geometric signatures}
Let $\mathcal G = (\mathcal B\cup \mathcal W, \mathcal E)$ be a reduced planar bipartite graph in the disk with black boundary vertices representing the positroid cell $\S\subset \GTNN$. Let $\sigma  :  \mathcal E \mapsto \{ \pm 1\}$ and $\epsilon :  \mathcal E\mapsto \{ 0,1 \}$ be such that 
\begin{equation}\label{eq:sign_geo_kas}
\sigma (e) =  (-1)^{\epsilon (e)}.
\end{equation}
Then $\sigma$ is a Kasteleyn signature on $\mathcal G$ if and only if $\epsilon$ is a geometric signature on $\mathcal G$.

Moreover in such case, for any finite face $\Omega$, its Kasteleyn signature, $\sigma(\Omega)= \prod_{e \in \partial \Omega} \sigma(e)$, and its geometric signature $\epsilon (\Omega)= \sum_{e\in\partial\Omega} \epsilon (e),$ are related as follows:
\begin{equation}\label{eq:kas_vs_geo}
\sigma(\Omega) = (-1)^{\epsilon (\Omega)}.
\end{equation}
\end{theorem}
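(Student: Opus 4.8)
The plan is to reduce the whole statement to a comparison of face values, since the pointwise relation $\sigma(e)=(-1)^{\epsilon(e)}$ propagates automatically to faces. Indeed, for any pair $\sigma,\epsilon$ satisfying \eqref{eq:sign_geo_kas} and any finite face $\Omega$ one has
\[
\sigma(\Omega)=\prod_{e\in\partial\Omega}\sigma(e)=\prod_{e\in\partial\Omega}(-1)^{\epsilon(e)}=(-1)^{\sum_{e\in\partial\Omega}\epsilon(e)}=(-1)^{\epsilon(\Omega)},
\]
so the ``Moreover'' relation \eqref{eq:kas_vs_geo} holds unconditionally; only the equivalence ``$\sigma$ Kasteleyn $\iff$ $\epsilon$ geometric'' requires work. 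The bridge between the two sides is the combinatorial identity $n_w(\Omega)=|\Omega|/2$, valid for every finite face of a reduced bipartite graph with black boundary vertices: because edges alternate colors and all boundary vertices are black, an internal face is bounded by a cycle $b$--$w$--$b$--$\cdots$ with equally many internal white and black vertices, while an external finite face is bounded by an even--length path between black boundary vertices; in both cases exactly half of the $|\Omega|$ incident vertices are internal white.

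For the direction ``geometric $\Rightarrow$ Kasteleyn'' I would feed this identity into Theorem \ref{theo:sign_face}: if $\epsilon$ is geometric, then on every finite face $\epsilon(\Omega)\equiv n_w(\Omega)+1=|\Omega|/2+1 \pmod 2$, whence by the displayed face relation $\sigma(\Omega)=(-1)^{|\Omega|/2+1}$, which is exactly the Kasteleyn condition \eqref{eq:sign_cond}. Here Theorem \ref{theo:sign_face} applies because a reduced graph representing an irreducible cell enjoys the boundary--to--boundary path property required in its hypotheses.

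For the converse ``Kasteleyn $\Rightarrow$ geometric'' I would run the computation backwards. The Kasteleyn condition gives $\epsilon(\Omega)\equiv|\Omega|/2+1\equiv n_w(\Omega)+1\pmod 2$ on each finite face, so $\epsilon$ and the reference geometric signature $\epsilon^{(g)}$ share the same value at every finite face. Summing the incidence relation ``each edge bounds exactly two faces'' yields $\sum_{\Omega}\epsilon(\Omega)\equiv 0$ over all faces, so the two signatures also agree on the infinite face, and hence on all faces. It then remains to promote ``equal face values'' to ``gauge equivalent in the sense of Definition \ref{def:admiss_sign_gauge}'', i.e. to know that the face values form a \emph{complete} set of gauge invariants and not merely invariants (the latter being Corollary \ref{cor:equiv}).

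This completeness is the main obstacle, and I would settle it with a dimension count. The coboundary map sending an internal--vertex function $\eta$ to the edge function $e=\overline{uv}\mapsto\eta(u)+\eta(v)$ is injective: by the reduced condition every component contains a degree--one boundary vertex, whose incident edge forces $\eta$ to vanish at the adjacent internal vertex and hence, by connectivity, on the whole component; thus the space of coboundaries has dimension $|\mathcal V_{\mathrm{int}}|$. The Euler characteristic of the disk (vertices, graph edges and boundary arcs, finite and infinite faces) gives $|\mathcal E|-|\mathcal V_{\mathrm{int}}|=g$, and the $g$ incidence vectors of the finite faces are linearly independent. Together these force the kernel of the face--value map to have dimension $|\mathcal V_{\mathrm{int}}|$, so it coincides with the coboundaries. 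Consequently $\epsilon-\epsilon^{(g)}$, being face--trivial, is a coboundary, i.e. $\epsilon$ is geometric. The two delicate points resolved by this count are the asymmetric treatment of boundary vertices in the gauge group and the fact that external finite faces are bounded by open paths rather than cycles; both are absorbed into the Euler identity $|\mathcal E|-|\mathcal V_{\mathrm{int}}|=g$.
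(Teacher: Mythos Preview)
Your proof is correct and follows the same route as the paper. The paper's argument, placed just before the theorem, also reduces everything to the identity $n_w(\Omega)=|\Omega|/2$ and Theorem~\ref{theo:sign_face}; for the converse it simply asserts that the $\epsilon$ built from a Kasteleyn $\sigma$ ``is an element in the equivalence class of $\epsilon^{(g)}$ since it satisfies \eqref{eq:sign_face_geo}'', implicitly invoking the fact (established for Kasteleyn signatures in Proposition~\ref{prop:equiv_sign} via Kenyon's argument of identifying boundary vertices) that face values are a complete set of gauge invariants. Your cohomological dimension count makes that step self-contained rather than by citation, but it is the same mechanism, not a different strategy. One minor remark: once you know the kernel of the finite-face map already has dimension $|\mathcal V_{\mathrm{int}}|$, the separate check at the infinite face is redundant.
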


\begin{conjecture}\label{conj:conj}
In the more general setting of Theorem \ref{theo:sign_face}, the number of internal white vertices bounding $\Omega$, $n_w(\Omega)$, represents the number of relations involving the edges bounding $\Omega$. In such case, formula (\ref{eq:sign_geo_kas}) defines the candidates for Kasteleyn signatures for the graphs considered in \cite{AG4}: A function $\sigma \, : \, \mathcal E \mapsto \{ \pm 1 \}$ is a Kasteleyn signature in the class of graphs defined in \cite{AG4} if $\sigma (\Omega)$, the total signature of the face, fulfils
\begin{equation}\label{eq:kas_new}
\sigma(\Omega) \equiv  \prod_{e \in \partial \Omega} \sigma(e)= (-1)^{n_w(\Omega)+1}, \quad \quad \mbox{ for any finite face } \Omega.
\end{equation}
We conjecture that a signature satisfying (\ref{eq:kas_new}) realizes the variant of Kasteleyn theorem for planar non--bipartite graphs in the disk in \cite{Sp}, {\sl i.e.} pfaffians of the minors of the sign matrix count the number of dimer configurations in the graph involving all internal vertices exactly once with prescribed boundary conditions. We plan to discuss this issue in a different paper.
\end{conjecture}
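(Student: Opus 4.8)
The plan is to build the skew--symmetric Kasteleyn matrix attached to $\sigma$ and to isolate the combinatorial heart of Speyer's Pfaffian statement as a sign--consistency lemma. Fix a linear order on $\mathcal V=\mathcal B\cup\mathcal W$ and let $K^{\sigma}$ be the $|\mathcal V|\times|\mathcal V|$ skew--symmetric matrix with $K^{\sigma}_{uv}=\sigma(\overline{uv})\,t_{uv}$ for $u<v$ with $\overline{uv}\in\mathcal E$, extended by antisymmetry and by zero elsewhere. For a boundary condition $I$ let $K^{\sigma}_I$ be the principal submatrix indexed by all internal vertices together with the boundary vertices recorded by $I$; this index set is even precisely when an almost perfect matching with $\partial M=I$ exists. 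The Pfaffian expansion then reads
\[
\mathrm{Pf}\,(K^{\sigma}_I)=\sum_{M\,:\,\partial M=I}\mathrm{sign}(M)\prod_{e\in M}t_e,
\]
where $\mathrm{sign}(M)\in\{\pm1\}$ is the permutation sign of the fixed--point--free involution $M$. First I would record this identity, valid for every signature, and note that the existence of a $\sigma$ satisfying~(\ref{eq:kas_new}) is already furnished by Theorem~\ref{theo:sign_face}: the geometric signature $\epsilon^{(g)}$ has face sum $n_w(\Omega)+1$, so $\sigma=(-1)^{\epsilon^{(g)}}$ works. The whole content is therefore that $\mathrm{sign}(M)$ is constant on matchings with a fixed boundary, and that the resulting common sign agrees across all admissible $I$.

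For the first assertion I would take two matchings $M,M'$ with $\partial M=\partial M'=I$. Since they use the same boundary vertices, $M\triangle M'$ contains no open alternating path reaching the boundary and is a disjoint union of double edges and \emph{even} alternating cycles. Writing $\mathrm{sign}(M)\,\mathrm{sign}(M')$ as a product of local contributions over these cycles, each cycle $C$ of length $2\ell$ contributes a factor governed by $\ell$, by $\prod_{e\in C}\sigma(e)$, and by the parity of the number of internal vertices $C$ strictly encloses, exactly as in the classical Kasteleyn lemma \cite{Kas2, Ken1}. I would then telescope $\prod_{e\in C}\sigma(e)$ over the finite faces enclosed by $C$: interior edges are shared by two enclosed faces and cancel, so by~(\ref{eq:kas_new})
\[
\prod_{e\in C}\sigma(e)=\prod_{\Omega\subset\mathrm{int}(C)}\sigma(\Omega)=(-1)^{\,\sum_{\Omega}\left(n_w(\Omega)+1\right)}.
\]
The remaining task is an Euler--characteristic bookkeeping on the closed disk bounded by $C$, relating $\sum_{\Omega}(n_w(\Omega)+1)$, the number $F$ of enclosed faces, and the number of enclosed white and black vertices (each interior vertex counted with its degree, each vertex on $C$ with its number of enclosed incident faces), so as to cancel the factor coming from $\ell$ and the enclosed--vertex parity and obtain a trivial total. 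The fact that $C$ is a nice cycle --- $\mathcal G$ minus $V(C)$ is perfectly matched by $M|_{\mathcal G\setminus C}$ --- guarantees that the enclosed vertex set is even, which is what makes the bookkeeping close.

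For the cross--$I$ consistency I would adapt the argument of Theorem~\ref{theo:kast}(1): for $I$ and $J=I\setminus\{i\}\cup\{j\}$, an almost perfect matching for $I$ together with one for $J$ forms a Temperley--Lieb subgraph, a union of cycles, double edges and a single path from $b_i$ to $b_j$; identifying these two boundary endpoints reduces the sign comparison to the nice--cycle case already handled. An independent check, and an alternative to the whole cycle computation, is the reduction to the bipartite model: contracting all monochromatic edges turns $\mathcal G$ into a bipartite graph, a black--black contraction leaves $n_w(\Omega)$ unchanged while lowering $|\Omega|$ by one on the two adjacent faces (and symmetrically for white--white), so~(\ref{eq:kas_new}) is preserved and becomes~(\ref{eq:sign_cond}); on a bipartite graph $K^{\sigma}$ acquires the block form $\left(\begin{smallmatrix}0&K\\-K^{T}&0\end{smallmatrix}\right)$ with $K$ the rectangular matrix of Definition~\ref{def:kas_mat}, whence $\mathrm{Pf}\,(K^{\sigma}_I)=\pm\det K_I$ and the statement collapses to Theorem~\ref{theo:kast}.

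The main obstacle I expect is precisely the Euler--characteristic identity that singles out $n_w(\Omega)$, rather than $\tfrac12|\Omega|$ or the total vertex count, as the correct face weight: one must reproduce inside an arbitrary enclosed region the counting that Theorem~\ref{theo:sign_face} performs for a single face, and verify that the asymmetry between black and white vertices (inherited from the perfect--orientation convention) matches the asymmetry in the definition of $\partial M$. In the reduction route the corresponding difficulty is to show that the Pfaffian transforms under each monochromatic contraction by a matching--independent factor, so that sign constancy is genuinely transported back from the bipartite model; I would expect the cleanest writeup to combine both routes, using the bipartite reduction to fix the normalisation and the cycle lemma to handle the genuinely non--bipartite configurations.
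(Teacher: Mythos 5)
There is no proof in the paper to compare against: the statement you are addressing is Conjecture \ref{conj:conj}, which the author explicitly leaves open ("We plan to discuss this issue in a different paper"), with only Theorem \ref{theo:sign_face} as supporting evidence that the candidate condition (\ref{eq:kas_new}) is realized by geometric signatures. Your proposal must therefore stand on its own, and at present it does not: it is the standard Kasteleyn strategy (skew--symmetric matrix from a vertex ordering, Pfaffian expansion, sign constancy reduced to nice alternating cycles, telescoping the face condition over the enclosed region), but the decisive step --- the Euler--characteristic lemma showing that the face weight $n_w(\Omega)+1$, as opposed to $\tfrac12|\Omega|+1$ or a total vertex count, forces every nice cycle to be sign--trivial --- is precisely what you yourself flag as "the main obstacle" and leave undone. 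That lemma \emph{is} the mathematical content of the conjecture; a writeup that records the Pfaffian expansion, quotes Theorem \ref{theo:sign_face} for existence, and defers the cycle lemma has reproduced the conjecture, not proved it.

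Two steps would fail as currently stated. First, with a bare signature $\sigma$ and an arbitrary linear order on $\mathcal V$, the contribution of an alternating cycle $C$ to $\mathrm{sign}(M)\,\mathrm{sign}(M')$ is $\prod_{e\in C}\sigma(e)$ times $(-1)$ raised to the number of label--descents encountered traversing $C$, and the descent parity of an even cycle is \emph{not} a function of face data (compare the cyclic label orders $1,2,3,4$, with one descent, and $1,3,2,4$, with two). So your claim that the cycle factor is "governed by $\ell$, by $\prod_{e\in C}\sigma(e)$, and by the parity of the number of enclosed internal vertices" is unjustified: one must pass to the orientation induced by $(\sigma,\,\mbox{ordering})$ and prove it has the clockwise--odd (Pfaffian) property on faces, which is exactly where condition (\ref{eq:kas_new}) has to be discriminated from the naive $(-1)^{|\Omega|/2+1}$. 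Second, the telescoping bookkeeping does not close: in $\sum_{\Omega\subset\mathrm{int}(C)}n_w(\Omega)$ a strictly enclosed white vertex is counted once per incident face, i.e.\ with multiplicity equal to its degree, and this is even for even--valent white vertices; since the class of graphs in \cite{AG4} allows arbitrary valences, the sum mod $2$ is not the number of enclosed white vertices, and the cancellation you need must come from additional structure (plausibly the perfect--orientability hypothesis, under which each internal white vertex has exactly one incoming edge --- presumably the true source of the black/white asymmetry in $n_w$ --- but that requires an argument you have not given). The bipartite--contraction route has the analogous unfilled holes you acknowledge: gauge--fixing $\sigma=+1$ on contracted monochromatic edges and proving the Pfaffian changes by a matching--independent factor under each contraction.
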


Finally the following relation holds between geometric and Kasteleyn systems of relation on a given network.

\begin{theorem}\label{theo:kas_geo_sys}\textbf{The solution to Kasteleyn system of relations}
Let $\mathcal N = (\mathcal G, f)$ be a network representing $[A]\in \S\subset \GTNN$, where $\mathcal G=(\mathcal B\cup \mathcal W, \mathcal E)$ is a reduced bipartite graph with black boundary vertices representing the irreducible positroid cell $\S$. Let $I\in \mathcal M$ be fixed.

Let $\epsilon^{(g)} : \mathcal E \mapsto \{ 0,1\}$ be a geometric signature on the directed graph $(\mathcal G, \mathcal O (I))$ and
let $\sigma : \mathcal E\mapsto \{ \pm 1\}$ be the Kasteleyn signature defined by (\ref{eq:sign_geo_kas})
\[
\sigma (e) =  (-1)^{\epsilon^{(g)} (e)}, \quad\quad \forall e\in \mathcal E.
\]
Let $t_{bw} : \mathcal E \mapsto \mathbb{R}^+$ be an edge weighting on the undirected graph $\mathcal G$ representing $\mathcal N$. Let $t_{\overrightarrow{uv}} : \mathcal E \mapsto \mathbb{R}^+$ be the corresponding edge weighting on the directed graph $(\mathcal G, \mathcal O(I))$ fulfilling (\ref{eq:edge_weights})
\[
t_{\overrightarrow{uv}} = \left\{ \begin{array}{ll}
t_{bw}, & \mbox{ if } u=w, \; v=b;\\
t_{bw}^{-1},& \mbox{ if } u=b, \; v=w.
\end{array}
\right.
\]
Let $(z_{b,e},z_{w,e})$ and $(z^{(g)}_{b,e},z^{(g)}_{w,e})$ respectively be Lam system of relations for the Kasteleyn signature
on the undirected network $(\mathcal G, t_{bw})$, and that for the geometric signature on the perfectly oriented network $(\mathcal G, \mathcal O(I), t_{\overrightarrow{uv}})$. Moreover let $(v^{(k)}_b, R_w)$ be the Kasteleyn system of relations associated to $(z_{b,e},z_{w,e})$ in Proposition \ref{prop:kas_lam_form}.

Then the three system of relations have full rank and are equivalent: if we assign the same quantities  $v_j$ at the boundary sinks $b_j$, $j\in \bar I$, to the three systems,
\[
v^{(k)}_{b_j} \, = \, z_{b_j,e}\, =\, z^{(g)}_{b_j,e}\, = \, v_j \, \in \mathbb{R}^n,
\]
then the solution of the system at the black vertices is expressed in terms of edge and conservative flows on the perfectly oriented network $(\mathcal G, \mathcal O(I), t_{\overrightarrow{uv}})$ as in (\ref{eq:sol_geo}):
\begin{equation}\label{eq:expl_sol}
v^{(k)}_{b} \, =\,  z_{b,e} \, =\,  z^{(g)}_{b,e} = \;\frac{\displaystyle\sum\limits_{j\in \bar I} v_j \sum\limits_{F\in {\mathcal F}_{e,b_j}(\mathcal G)} \big(-1\big)^{\mbox{wind}(F)+\mbox{int}(F)}\ w(F)}{\sum\limits_{C\in {\mathcal C}(\mathcal G)} \ w(C)}.
\end{equation}

In particular, if $v_j= E_{j}\in \mathbb R^n$, $j\in \bar I$, are the canonical basis vectors then, at the boundary sources 
\[
v^{(k)}_{b_{i_r}}\, = \, z_{b_{i_r},e}\,= \, z^{(g)}_{b_{i_r},e}\,  = \, E_{i_r} -A[r] , \quad\quad i_r \in I,
\]
where $A[r]$ is the $r$-th row of the matrix in reduced row echelon form with respect to the base $I$ representing $[A]$, and $E_{i_r}$ is the $i_r$--th vector of the canonical basis in $\mathbb R^n$.
\end{theorem}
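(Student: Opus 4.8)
The plan is to reduce all three systems to a single Lam system for the geometric signature and then invoke Theorem \ref{theo:lam_exist}. First I would observe that in the present setup the Kasteleyn signature is \emph{defined} by $\sigma(e) = (-1)^{\epsilon^{(g)}(e)}$, so that Theorem \ref{theo:main} guarantees $\sigma$ is genuinely Kasteleyn and the Kasteleyn system $(v^{(k)}_b, R_w)$ is well posed. Passing through Proposition \ref{prop:kas_lam_form}, this system is reformulated as the Lam system $(z_{b,e}, z_{w,e})$ for the signature $\epsilon^{(k)}$ obtained from $\sigma$ by setting $\epsilon^{(k)}(e)=0$ if $\sigma(e)=1$ and $\epsilon^{(k)}(e)=1$ if $\sigma(e)=-1$. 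The crucial elementary remark is that this numerical conversion is exactly the inverse of $\sigma(e)=(-1)^{\epsilon^{(g)}(e)}$, so $\epsilon^{(k)}(e)=\epsilon^{(g)}(e)$ for every edge. Since both Lam systems use the same perfect orientation $\mathcal O(I)$ and the same directed weights $t_{\overrightarrow{uv}}$ produced from $t_{bw}$ via (\ref{eq:edge_weights}), the systems $(z_{b,e}, z_{w,e})$ and $(z^{(g)}_{b,e}, z^{(g)}_{w,e})$ are literally the same homogeneous linear system; hence they have identical solutions for identical boundary data.

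Next I would invoke Theorem \ref{theo:lam_exist} for this common system. Its part (1) gives full rank for any positive edge weighting, which immediately yields full rank (equivalently, unique solvability given the sink data) for all three systems. Its part (2) supplies the explicit solution at every black vertex as the edge vector $E_b$, i.e. the flow ratio (\ref{eq:sol_geo}) when canonical basis vectors are assigned at the sinks. Combining this with the correspondence $z_{b,e}=v^{(k)}_b$ at black vertices furnished by Proposition \ref{prop:kas_lam_form}, I obtain the chain $v^{(k)}_b = z_{b,e} = z^{(g)}_{b,e}$ vertex by vertex.

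To pass from canonical boundary data to arbitrary $v_j\in\mathbb{R}^n$, I would use linearity: because the coefficients appearing in the relations $R_w$ are scalars, the solution at an interior black vertex $b$ is the scalar-linear combination $\sum_{j\in\bar I}(E_b)_j\, v_j$ of the sink data, with $(E_b)_j$ the flow ratio of (\ref{eq:sol_geo}); substituting this gives precisely (\ref{eq:expl_sol}). The specialization $v_j=E_j$ and the resulting boundary-source identity $v^{(k)}_{b_{i_r}} = E_{i_r}-A[r]$ then follow directly from part (3) of Theorem \ref{theo:lam_exist} (equation (\ref{eq:rel_sou})), in agreement with Theorem \ref{theo:sol_kas_sys_2}.

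The only genuinely delicate point, and the one I would check explicitly, is the bookkeeping that forces the two Lam signatures to coincide on the nose rather than merely up to gauge equivalence: one must confirm that Proposition \ref{prop:kas_lam_form} and Theorem \ref{theo:main} employ the identical dictionary between $\{\pm 1\}$ and $\{0,1\}$, and that the orientation and directed weights are fixed consistently in both constructions. Once this identification is secured, no further computation is needed, since the uniqueness guaranteed by full rank forces the three solutions to agree at every vertex; had the signatures only agreed up to the gauge equivalence of Definition \ref{def:admiss_sign_gauge}, one would instead have to rely on parts (4)--(5) of Theorem \ref{theo:lam_exist}, which only secure agreement at the boundary sources, and a separate argument would be required at the internal black vertices.
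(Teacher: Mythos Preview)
Your proposal is correct and matches the approach the paper intends: the paper does not give an explicit proof of this theorem but treats it as an immediate consequence of Proposition~\ref{prop:kas_lam_form}, Theorem~\ref{theo:main}, and Theorem~\ref{theo:lam_exist}, exactly as you lay out. Your identification of the key bookkeeping point---that the dictionary $\sigma(e)=(-1)^{\epsilon^{(g)}(e)}$ in the hypothesis coincides with the inverse of (\ref{eq:sign_lam_kas}), forcing $\epsilon^{(k)}=\epsilon^{(g)}$ edge by edge rather than merely up to gauge---is precisely the observation that makes the equality $v^{(k)}_b=z_{b,e}=z^{(g)}_{b,e}$ hold at every internal black vertex and not only at the boundary.
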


A consequence of Theorems \ref{theo:lam_exist}, \ref{theo:main} and \ref{theo:kas_geo_sys} is an independent proof that Speyer parametrization of positroid cells using Kasteleyn sign matrices is equivalent to Postnikov boundary measurement map \cite{Sp}:

\begin{corollary} \textbf{Parametrization of positroid cells via Kasteleyn weighted matrices}\label{cor:pos_sp_equiv} 
Let $\S\subset \GTNN$ be given and let $\mathcal G$ be a reduced planar bipartite graph with black boundary vertices representing $\S$. Let $f: {\mathcal G}^* \to \mathbb{R}^+$ be a positive face weighting of $\mathcal G$. Let $\mathcal N =(	\mathcal G, f)$ be the corresponding network and let $[A^{bmm}]\in \S$ be the value of Postnikov boundary measurement map for the network $(G,f)$.

Let $\sigma$ be a Kasteleyn signature for $\mathcal G$, and let $K^{\sigma, wt}$ be a weighted Kasteleyn matrix representing $\mathcal N$. Let $A^{kas}$ be such that for any $k$--element subset $I	\subset [n]$
\[
\det  (A^{kas})_I  = \det K^{\sigma, wt}_I.
\]
Then 
\[
[A^{kas}] = [A^{bmm}], 
\]
and $(K^{\sigma, wt})^T$, the transpose of the Kasteleyn matrix, may be put in block form
\[
\renewcommand{\arraystretch}{1.5}
  \begin{blockarray}{ccc}
	& N & n\\
    \begin{block}{c(c|c)}
N\,\, & \,\mbox{Id}_N \, & *\,\, \\
\cline{2-3}		
k\,\, & 0          & \, A^{kas} \,\,\\
    \end{block}
  \end{blockarray}.
\]
\end{corollary}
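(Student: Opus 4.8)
The plan is to compute the unique solution of one and the same Kasteleyn system of relations in two independent ways and to read off the identity $[A^{kas}] = [A^{bmm}]$ from the uniqueness of that solution, avoiding the topological argument of \cite{Sp}. First I would fix a base $I = \{ 1 \le i_1 < \cdots < i_k \le n \} \in \mathcal M$ with perfect orientation $\mathcal O(I)$ and set up the Kasteleyn system of Definition \ref{def:kas_sys_rel_new} on $V = \mathbb{R}^n$ with the boundary conditions $v^{(k)}_{b_j} = E_j$, $j \in \bar I$. By Theorem \ref{theo:main} the function $\epsilon$ determined by $\sigma = (-1)^{\epsilon}$ is a geometric signature on $(\mathcal G, \mathcal O(I))$, so by Theorem \ref{theo:kas_geo_sys} the system is of full rank and its solution at every black vertex is given by the flow formula (\ref{eq:expl_sol}). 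By Theorem \ref{theo:lam_exist} these flow sums are exactly the entries of Postnikov boundary measurement matrix, so that at the boundary sources
\[
v^{(k)}_{b_{i_r}} = E_{i_r} - A^{bmm}[r], \qquad r \in [k],
\]
where $A^{bmm}[r]$ is the $r$-th row of the boundary measurement matrix of $(\mathcal G, \mathcal O(I), f)$ relative to $I$.

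Next I would read off the same solution algebraically. Since $K^{\sigma, wt}$ has full rank $|\mathcal W| = N + k$ (Theorem \ref{theo:kast}), row operations bring its transpose to the block form (\ref{eq:mat_A}), with $\mbox{Id}_N$ in the NW corner and a $k \times n$ SE block $A^{kas}$ satisfying $\det A^{kas}_I = \det K^{\sigma, wt}_I$ for every $k$-subset $I$; this establishes the second assertion of the corollary. Substituting the boundary conditions $v^{(k)}_{b_j} = E_j$, $j \in \bar I$, into this block form exactly as in Theorem \ref{theo:sol_kas_sys_2} gives
\[
v^{(k)}_{b_{i_r}} = E_{i_r} - A^{kas}[r], \qquad r \in [k],
\]
with $A^{kas}$ in reduced row echelon form relative to $I$. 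As the full-rank system determines $v^{(k)}_{b_{i_r}}$ uniquely, the two expressions agree, so $A^{kas}[r] = A^{bmm}[r]$ for all $r$; both being reduced row echelon representatives with respect to the same base $I$, I conclude $A^{kas} = A^{bmm}$ and hence $[A^{kas}] = [A^{bmm}]$.

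The hard part is not the final linear-algebra comparison but the justification that the flow representation (\ref{eq:expl_sol}) of the Kasteleyn solution literally equals Postnikov boundary measurement matrix, and not merely some representative of the same cell. This is the content of Theorem \ref{theo:kas_geo_sys}, which rests on the signature identification of Theorem \ref{theo:main} together with the gauge-, orientation- and ray-invariance of the geometric solution at the boundary established in Theorem \ref{theo:lam_exist}; these are the genuinely new inputs that replace the topological reasoning of \cite{Sp}. A secondary point to verify is that the two matrices are compared in the same normalization; this is automatic since the boundary measurement matrix and the SE block produced by Theorem \ref{theo:sol_kas_sys_2} are each in reduced row echelon form with respect to $I$.
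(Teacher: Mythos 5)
Your proposal is correct and takes essentially the same route as the paper, which obtains Corollary \ref{cor:pos_sp_equiv} exactly by combining Theorem \ref{theo:main} (Kasteleyn signatures are geometric), Theorem \ref{theo:kas_geo_sys} (the Kasteleyn, Lam and geometric systems have full rank, share the flow solution (\ref{eq:expl_sol}), and give $v^{(k)}_{b_{i_r}} = E_{i_r}-A[r]$ at the boundary sources) and Theorem \ref{theo:lam_exist} (those flow sums are the entries of Postnikov's boundary measurement matrix), with the block form following from row reduction of the full-rank matrix $(K^{\sigma,wt})^T$ as in (\ref{eq:mat_A}). The only minor imprecision is your claim that the SE block of (\ref{eq:mat_A}) is itself in reduced row echelon form with respect to $I$ --- in general it is not (cf.\ Example \ref{ex:1}, where the RREF is computed separately) --- but since its maximal minors equal those of $K^{\sigma,wt}$, its normalization with respect to the base $I$ is the representative produced by Theorem \ref{theo:sol_kas_sys_2}, so your uniqueness comparison goes through unchanged.
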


\section{Construction of real regular KP divisors using Kasteleyn system of relations}\label{sec:comb} 

The KP hierarchy is the most relevant integrable hierarchy \cite{D, DKN, H, MJD, S}, and contains as special reductions other relevant integrable hierarchies such as the Korteweg de Vries (KdV) and Boussinesq ones. 
In this Section we use Kasteleyn system of relations to construct real regular KP divisors on rational degenerations of $\mathtt M$--curves for KP-II real regular multi--line soliton solutions. The present construction on undirected bipartite graphs has two main advantages: first it unveils the reason why the graph is dual to the spectral curve, second the invariance of the KP divisor is for free. Naturally, the relation between Kasteleyn and geometric signatures implies that, upon fixing the network representing the soliton data, the present construction provides the same algebraic geometric data as in \cite{AG5} where directed graphs and geometric signatures were used. We remark that the class of graphs used in \cite{AG5} is more ample than the present one: therefore, if Conjecture \ref{conj:conj} holds true, a direct invariant construction fo the KP divisor would hold also in the more general case.

\smallskip

The real regular multi--line KP solitons studied in \cite{Mat, Mal, BK, BPPP,CK,KW1,KW2014, A1, AG1,AG2, AG3,AG5} are a family of solutions to the KP-II equation \cite{KP}
\[
(-4u_t+6uu_x+u_{xxx})_x+3u_{yy}=0,
\]
which is the first non trivial member of the KP hierarchy \cite{ZS}.  Before continuing, we recall that there exists another representation of this equation, called KP-I equation. In the following, we always refer to the KP--II representation since the behavior of the real solutions of the two equations is different \cite{DN}. 
Moreover, we use the notation $\vec x$ to denote a finite sequence of KP times $x_j$ where the first three are the independent variables appearing in the KP--II equation:
\[
\vec x = (x_1=x,x_2=y,x_3=t,x_4,x_5,\dots).
\]

Real regular multi--line KP solitons correspond to a well defined reduction of the Sato Grassmannian \cite{S}, and they are parametrized by pairs $(\mathcal K, [A])$, where $\mathcal K$ is a set of $n$ ordered phases 
\[
\mathcal K =\{ \kappa_1<\kappa_2 <\cdots <\kappa_n\},
\]
and $[A]$ is a point in an irreducible positroid cell $[A]\in \S \subset Gr^{\mbox{\tiny TNN}}(k,n)$. In particular a KP-II multi--line soliton solution is real and regular for all real times $(x,y,t)$ if and only if its data are in the totally non--negative part of a real Grassmannian \cite{KW1}. Moreover the combinatorics of totally non--negative Grassmannians has been successfully used in \cite{CK, KW2014} to classify the asymptotic behavior (tropical limit) of this class of KP--II solutions.

The real regular multi--line KP--II soliton solutions are also degenerate finite--gap KP solutions. Krichever
\cite{Kr1,Kr2} showed that finite-gap KP solutions correspond to non special divisors on arbitrary algebraic curves. Dubrovin and Natanzon \cite{DN} then proved that real regular KP--II finite gap solutions correspond to non--special divisors on smooth $\mathtt M$--curves satisfying natural reality and regularity conditions: the degree of the divisor equals the genus of the curve, the essential singularity of the KP wave function belongs to one of the ovals (called infinite) and all other ovals (called finite) contain exactly one divisor point. In \cite{Kr4} Krichever developed, in particular, the direct scattering transform for the real regular parabolic operators associated with the KP spectral problem and proved that the corresponding spectral curves are always $M$-curves, and the divisor points are located in the ovals as in \cite{DN}. In \cite{AG1,AG3,AG5} it was proven that the real regular multi--line KP--II soliton solutions are degenerate finite--gap real regular KP--II solutions by providing an explicit construction of their algebraic geometric data on rational degenerations of $\mathtt M$--curves, and by showing that they satisfy the reality and regularity conditions settled in \cite{DN}.

\smallskip

In the direct spectral problem real regular KP--II multi--line soliton solutions are parametrized by divisors on rational curves. To the solution represented by $(\mathcal K, [A])$ there is associated \cite{Mal}:
\begin{enumerate}
\item A rational spectral curve $\Gamma_0$, with a marked point $P_0$ (essential singularity of the wave function), and a coordinate $\zeta$ such that $\zeta^{-1} (P_0)=0$;
\item $k$ simple poles $\DS\equiv \DS (\vec x_0)=\{ P^{(S)}_r,\  r\in [k] \}$, whose $\zeta$--coordinates are real and bounded, 
\[
\gamma^{(S)}_r \equiv \zeta(P^{(S)}_r)\in [\kappa_1,\kappa_n], \quad\quad r\in [k].
\] 
\end{enumerate}
The following two normalizations for the KP wave function are commonly used in literature: in the Sato normalization the wave function $\tilde \psi(P, \vec x)$ has degree $k$ pole divisor at $P_0$ and zero divisor $\DS (\vec x)$, whereas in Krichever normalization the wave function $\hat \psi (P , \vec x)$ has degree $k$ pole divisor ${\mathcal D}$ such that there is a time $\vec x_0$ and $\mathcal D = \DS (\vec x_0)$. Therefore
one may pass from one normalization to the other through the following relation
\[
\hat \psi (P , \vec x) = \frac{ \tilde \psi (P , \vec x)}{ \tilde \psi (P, \vec x_0)}.
\]

On $\Gamma_0$ the divisor is defined through a Sato dressing transformation
of the vacuum wave function \cite{S}. For this reason $\DS$ was called the Sato divisor in \cite{AG1}. 
For the special class of multi--line solitons, such transformation is represented by a linear differential operator 
\[
\mathfrak D = \partial_x^k -{\mathfrak w}_1 (\vec x) \partial_x^{k-1} -\cdots {\mathfrak w}_k (\vec x),
\]
where $\partial_x \equiv \partial_{x_1} = \frac{\partial}{\partial_{x_1}}$. ${\mathfrak w}_j (\vec x)$, $j\in [k]$, are analytic in the KP-times $\vec t$, fulfil Sato equations, and are such that the kernel of $\mathfrak D$ are $k$ linearly independent solutions to the heat hierarchy of the following form
\[
f_i (\vec x) = \sum_{j=1}^n A^i_j \exp (\theta_j (\vec x)), \quad\quad i \in [k],
\]
where $A= (A^i_j)$ is a representative matrix of $[A]\in \S$ and 
\[
\exp (\theta_j (\vec x)) = \kappa_j x + \kappa_j^2 y + \kappa_j^3 t + \kappa_j^4 x_4 +\cdots.
\]
The multi-line KP soliton solution then takes the form
\[
u(\vec t) = 2 \partial_x {\mathfrak w}_1 (\vec t).
\]
Then the Sato divisor at time $\vec x$ on $\Gamma_0$ is the $k$--tuple
\begin{equation}\label{eq:Sato}
\DS (\vec x) =\{ P^{(S)}_1 (\vec x),\dots , P^{(S)}_k (\vec x)\},
\end{equation} 
whose local coordinates $\zeta (P^{(S)}_l (\vec x) ) = \gamma^{(S)}_l (\vec x)\in [\kappa_1, \kappa_n]$, $l\in [k]$, are the solutions to the characteristic equation $P(\zeta, \vec x)=0$ at time $\vec x$:
\begin{equation}\label{eq:char_eq}
P(\zeta, \vec x) = \zeta^k -{\mathfrak w}_1 (\vec x) \zeta^{k-1} -\cdots {\mathfrak w}_k (\vec x) = \prod_{l=1}^k (\zeta-\gamma^{(S)}_l (\vec x)).
\end{equation}

In the same local coordinates the Sato wave function takes the form
\[
\tilde \psi (\zeta(P), \vec x) = \left( 1 - \frac{{\mathfrak w}_1 (\vec x)}{\zeta} -\cdots -\frac{{\mathfrak w}_k (\vec x)}{\zeta^k}\right) \exp( \zeta x + \zeta^2 y + \zeta^3 t + \zeta^4 x_4 + \cdots),\quad\quad P\in \Gamma_0\backslash \{P_0\}.
\]
In the following we use Krichever normalization and an auxiliary third normalization: $\psi (\zeta , \vec x) = \zeta^k \tilde \psi(\zeta, \vec x)$, so that
\begin{equation}\label{eq:KP_norm_rel}
\hat \psi (\zeta, \vec x) = \frac{\psi (\zeta , \vec x)}{\psi (\zeta , \vec x_0)} = \frac{\zeta^k -{\mathfrak w}_1 (\vec x)\zeta^{k-1} -\cdots -{\mathfrak w}_k (\vec x)}{\zeta^k -{\mathfrak w}_1 (\vec x_0)\zeta^{k-1} -\cdots -{\mathfrak w}_k (\vec x_0) }\; e^{ \displaystyle\zeta (x-x_0) + \zeta^2 (y-y_0) + \zeta^3 (t-t_0) + \cdots},
\end{equation}
and
\begin{equation}\label{eq:tilde_psi_kappa_j}
\psi (\kappa_j, \vec x) = \prod_{l=1}^k (\kappa_j-\gamma^{(S)}_l (\vec x)) \, \exp (\theta_j (\vec x)),\quad\quad j\in [n].
\end{equation}
Therefore
\begin{equation}\label{eq:orthog_2}
0\equiv \mathfrak D f_i (\vec x) \, \equiv \, \sum_{j=1}^n A^i_j \, \mathfrak D e^{\theta_j (\vec x)} \, \equiv \, \sum_{j=1}^n A^i_j  \prod_{l=1}^k (\kappa_j-\gamma^{(S)}_l (\vec x)) \, \exp (\theta_j (\vec x))\, \equiv \, \sum_{j=1}^n A^i_j \, { \psi } (\kappa_j, \vec x), \quad \quad i\in [k], \;\; \forall \vec x,
\end{equation}
that is the vector $(\psi (\kappa_1, \vec x), \dots,\psi(\kappa_n, \vec x) )$ defines a flow in the plane orthogonal to $[A]$ as times $\vec x$ evolve. Then, in view of (\ref{eq:orthog}), such vector is a natural boundary condition for Kasteleyn systems of relations on any network $(\mathcal G, t_{b,w})$ representing $[A]$.

\begin{lemma}\textbf{The wave function at the marked points and Grassmann duality}\label{lem:psi_kappa}
Let $(\mathcal K, [A])$ be real regular soliton data with $[A] \in \S \subset \GTNN$, where $\S$ is an irreducible positroid cell, and let $ \psi(\kappa_j, \vec x)$ be as in (\ref{eq:tilde_psi_kappa_j}), {\sl i.e.} the value of the KP--II wave function for the given soliton data at the phase $\kappa_j$, $j \in [n]$, and at the KP time $\vec x$.
Let $[\bar A^o ] \in \Pi_{\overline{\mathcal M}} \subset Gr(n-k,n)$ be the dual point to $[A]$ satisfying (\ref{eq:orthog}).
Then there exist untrivial analytic functions $c_r (\vec x)$, $r\in [n-k]$, such that
\begin{equation}\label{eq:wave_orth}
\psi (\kappa_j , \vec x) = \sum_{r=1}^{n-k} c_r (\vec x) (\bar A^o)^r_j,\quad\quad  j\in [n], \;\; \forall \vec x.
\end{equation}
\end{lemma}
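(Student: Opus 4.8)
The plan is to recognize Lemma~\ref{lem:psi_kappa} as an immediate consequence of the orthogonality relation (\ref{eq:orthog_2}) together with the duality identity (\ref{eq:orthog}) defining $[\bar A^o]$, reduced to a statement in linear algebra with analytic coefficients. First I would collect the wave--function values into the row vector $\psi(\vec x) = (\psi(\kappa_1,\vec x),\dots,\psi(\kappa_n,\vec x))$ and observe that (\ref{eq:orthog_2}) says precisely $A\,\psi(\vec x)^T = 0$ for every $\vec x$; that is, $\psi(\vec x)$ lies in the kernel of $A$ viewed as a linear map $\mathbb{R}^n\to\mathbb{R}^k$, a subspace of dimension $n-k$ since $A$ has rank $k$.

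Next I would identify this kernel with the row span of $\bar A^o$. The defining relation (\ref{eq:orthog}), $A\cdot(\bar A^o)^T = 0$, states that the $n-k$ rows of $\bar A^o$ are orthogonal to all $k$ rows of $A$; because $\bar A^o$ represents a point of $Gr(n-k,n)$ its rows are linearly independent, and $(n-k)+k=n$ forces their span to be the full orthogonal complement of the row span of $A$, which is exactly $\ker A$. Hence for each fixed $\vec x$ the vector $\psi(\vec x)$ is a linear combination of the rows of $\bar A^o$, yielding coefficients $c_r(\vec x)$ with $\psi(\kappa_j,\vec x)=\sum_{r=1}^{n-k} c_r(\vec x)(\bar A^o)^r_j$; uniqueness of the $c_r$ follows from linear independence of the rows.

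The only content beyond this linear--algebra observation is analyticity and nontriviality of the $c_r$. For analyticity I would pick an $(n-k)\times(n-k)$ invertible submatrix of $\bar A^o$ (possible since $[\bar A^o]$ has full rank) and solve the resulting constant--coefficient linear system for the $c_r$ in terms of the corresponding components of $\psi(\vec x)$. Each $\psi(\kappa_j,\vec x)$ is analytic in the KP times, since by (\ref{eq:tilde_psi_kappa_j}) it equals $\prod_{l=1}^k(\kappa_j-\gamma^{(S)}_l(\vec x))\exp(\theta_j(\vec x))$ and the $\mathfrak w_j$, hence the Sato roots $\gamma^{(S)}_l$, are analytic; as the inverse submatrix is constant in $\vec x$, the $c_r(\vec x)$ are analytic. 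Nontriviality is clear because $\psi(\vec x)$ is not identically zero: for any phase $\kappa_j$ distinct from the at most $k$ Sato roots $\gamma^{(S)}_l(\vec x)$ both $\prod_{l}(\kappa_j-\gamma^{(S)}_l)$ and $\exp(\theta_j)$ are nonzero, so the $c_r$ cannot all vanish identically.

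I do not expect a serious obstacle: the heart of the argument is the dimension count identifying $\ker A$ with the row span of $\bar A^o$, which is forced by (\ref{eq:orthog}). The one point deserving care is that the coefficient functions inherit analyticity; this is automatic once one notes that the invertible submatrix of $\bar A^o$ can be chosen once and for all, independently of $\vec x$, since $\bar A^o$ does not depend on the KP times.
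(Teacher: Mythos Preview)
Your proposal is correct and follows essentially the same approach as the paper: the lemma is presented there as an immediate consequence of the orthogonality relation (\ref{eq:orthog_2}), which puts the vector $(\psi(\kappa_1,\vec x),\dots,\psi(\kappa_n,\vec x))$ in $\ker A$, together with the duality identity (\ref{eq:orthog}) identifying that kernel with the row span of $\bar A^o$. The paper does not spell out the analyticity and nontriviality details that you supply, so your write-up is if anything more complete.
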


The relevance of Lemma \ref{lem:psi_kappa} becomes manifest in connection with the solution to the inverse spectral problem, which consists in the reconstruction of the KP-II soliton solution from its divisor on the spectral curve at a fixed time $\vec x_0$. Indeed, the mismatch between the dimension of $Gr^{\mbox{\tiny TNN}}(k,n)$ and that of the variety of Sato divisors implies that generically the Sato divisor is not sufficient to determine the corresponding KP-II solution. 

In \cite{AG1, AG3,AG5} a completion of the Sato algebraic--geometric data has been proposed based on the degenerate finite gap theory on reducible curves introduced in \cite{Kr3}. More precisely:
\begin{enumerate}
\item In \cite{AG1,AG3} the dual graph of the reducible spectral curve $\Gamma$ is the Le--graph representing the soliton data, $\Gamma_0$ is identified with the boundary of the disk, and the divisor on $\Gamma$ is constructed through a recursion. In \cite{AG5}, a larger class of networks representing the given soliton solution is used and the real regular divisor is constructed using Lam system of relations \cite{Lam2} for the geometric signatures introduced in \cite{AG4};
\item In \cite{AG3,AG5}, the reality and regularity properties of the KP--II divisor settled in \cite{DN} follow from the combinatorics of $\GTNN$, whereas in \cite{AG1} classical total positivity was used for soliton data in $\GTP$;
\item The independence of the divisor from the gauge freedoms of the chosen network (perfect orientation, gauge ray direction, weight gauge, gauge freedom of the position of internal vertices) follows from the transformation properties of geometric signatures and was proven in \cite{AG5}.
\end{enumerate}
In the present setting, the comparison between Theorem \ref{theo:sol_kas_sys_1} and Lemma \ref{lem:psi_kappa} makes evident that it is just natural to choose a reducible spectral curve whose dual graph represents the soliton data $[A]$ and use Kasteleyn system of relations to extend $ \psi$ to such augmented curve. 
Moreover, the invariance of the KP divisor is automatically guaranteed by the properties of Kasteleyn system of relations on undirected graphs.

\begin{figure}
  \centering
  \includegraphics[width=0.37\textwidth]{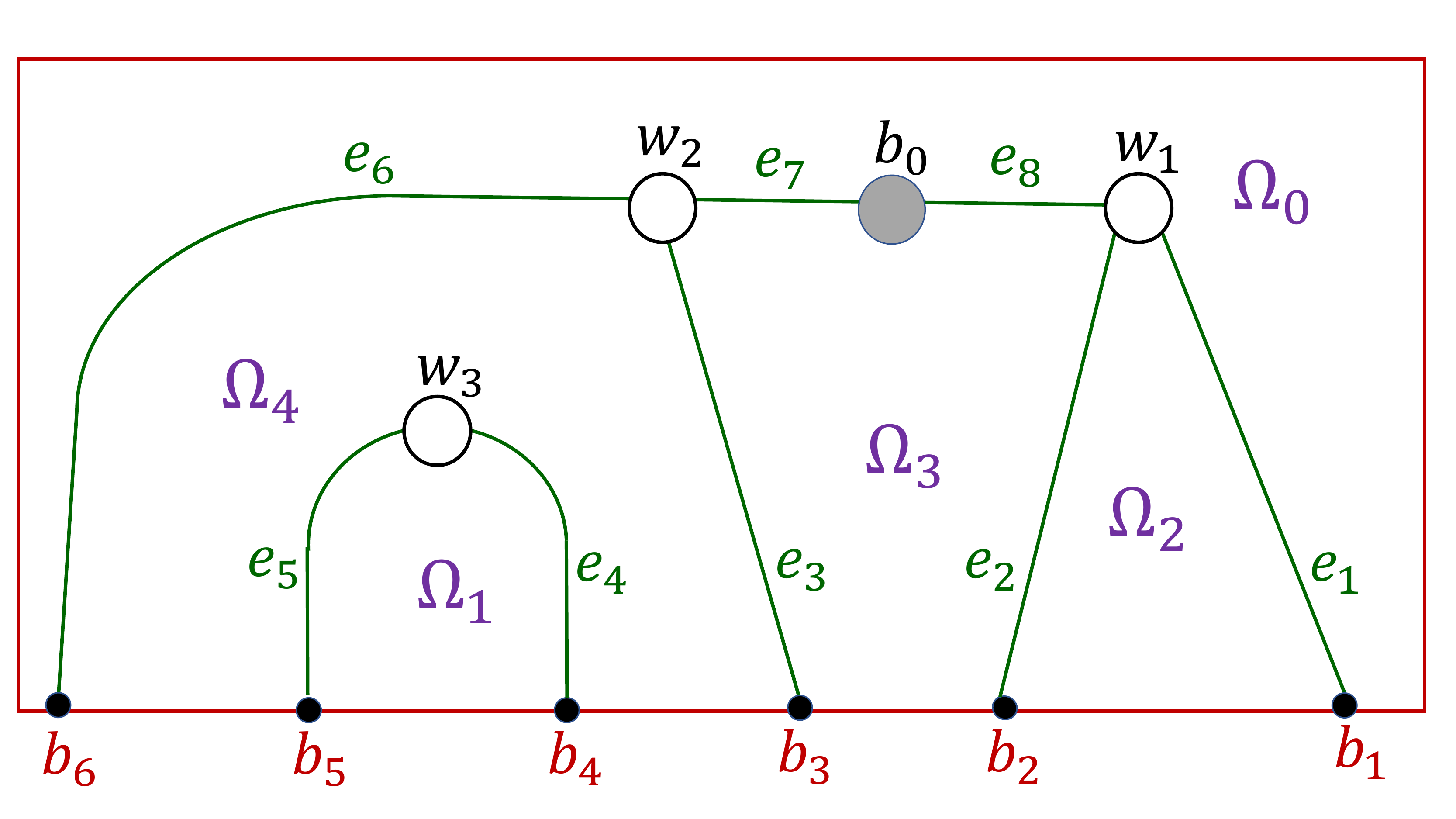}
	\hspace{.5 truecm}
\includegraphics[width=0.37\textwidth]{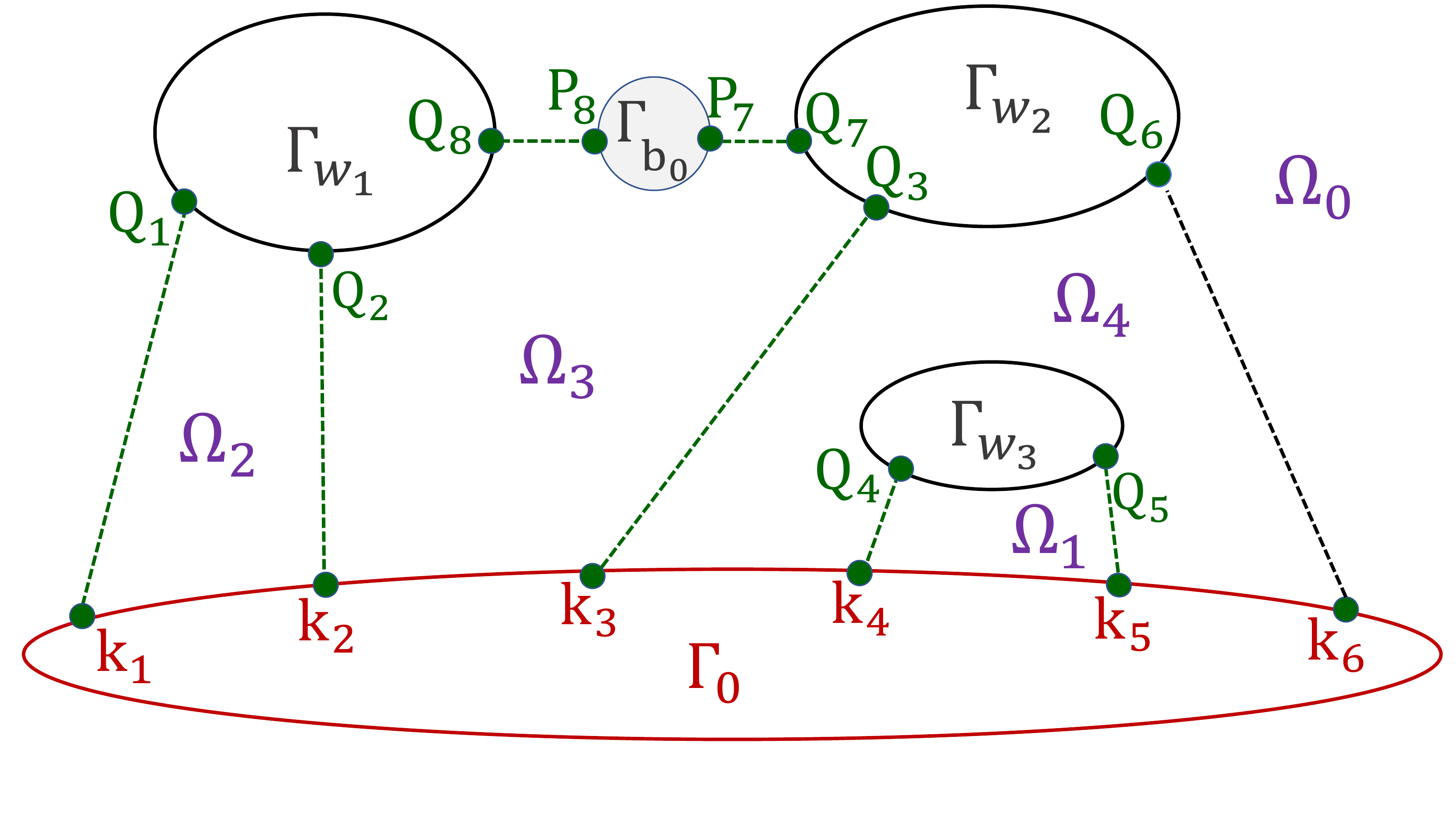}
\vspace{-.3 truecm}
  \caption{\small{\sl The correspondence between graphs [left] and the real part of $\mathtt M$--curves [right] under the assumption that the curve is constructed reflecting the graph w.r.t. a vertical ray. Objects paired by the duality relation between the graph and the curve share the same color: internal vertices $w_i, b_l$ correspond to rational components $\Gamma_{w_i}, \Gamma_{b_l}$; the boundary of the disk is the rational component $\Gamma_0$; faces $\Omega_r$ correspond to ovals denoted with the same symbol; edges $e_s$ joining internal vertices to double points $P_s,Q_s$ (dotted lines mark the gluing in the Figure on the right); and edges $e_s$ joining internal vertices to the boundary vertices $b_j$ are double points $Q_j, \kappa_j$.}}
	\label{fig:curve_graph}
\end{figure}

\textbf{The reducible spectral curve $\Gamma$}: Given the soliton data $(\mathcal K, [A])$, $[A]\in \S \subset \GTNN$, one fixes $\mathcal G$, a reduced planar bipartite graph in the disk with black boundary vertices representing $\S$. To obtain a universal curve, \textbf{we assume that internal vertices are either bivalent or trivalent}. As in \cite{AG3,AG5}, $\Gamma$ is then the reducible curve with dual graph $\mathcal G$: $\Gamma_0$ is the rational component represented by the boundary of the disk, the boundary vertices $b_j$ correspond to the phases $\kappa_j$, $j\in [n]$, the internal vertices are rational components, the edges are the double points at which the rational components of $\Gamma$ are connected, whereas the faces are the ovals of the $\mathtt M$--curve (see Table \ref{table:LeG} and Figure \ref{fig:curve_graph}). 
\smallskip
\begin{table}
\caption{\small The graph ${\mathcal G}$  vs the reducible rational curve $\Gamma$} 
\centering
\begin{tabular}{|c|c|}
\hline\hline
$\mathcal G$ & $\Gamma$ \\[0.5ex]
\hline
Boundary of disk & Copy of $\mathbb{CP}^1$ denoted $\Gamma_0$ \\
Boundary vertex $b_l$             & Marked point $\kappa_l$ on  $\Gamma_0$\\
Black vertex   $b$                & Copy of $\mathbb{CP}^1$ denoted $\Gamma_{b}$\\
White vertex   $w$                & Copy of $\mathbb{CP}^1$ denoted $\Gamma_{w}$\\
Internal Edge                     & Double point\\
Face                              & Oval\\
Infinite face                     & Infinite oval $\Omega_0$ \\ [1ex]
\hline
\end{tabular}
\label{table:LeG}
\end{table}
\begin{proposition} \cite{AG3} $\Gamma$ is a rational degeneration of a smooth $\mathtt M$--curve of topological genus equal to the dimension of the positroid cell $\S$.  
\end{proposition}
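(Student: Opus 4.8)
The plan is to separate the statement into two independent assertions: a purely combinatorial genus count, and a real-algebraic smoothing argument. Both rest on reading off the \emph{dual graph of the curve} $\widetilde{\mathcal G}$ from the data in Table~\ref{table:LeG}: its vertices are the rational components of $\Gamma$, namely $\Gamma_0$ together with one $\mathbb{CP}^1$ per internal vertex of $\mathcal G$, and its edges are the nodes of $\Gamma$, one per edge of $\mathcal G$ (each internal edge glues two internal components, while each of the $n$ degree-one boundary edges glues the adjacent internal component to $\Gamma_0$ at the marked point $\kappa_j$). Since every connected component of $\mathcal G$ contains a boundary vertex by reducedness, every internal component is joined to $\Gamma_0$ through a chain of nodes, so $\widetilde{\mathcal G}$, and hence $\Gamma$, is connected.

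First I would compute the arithmetic (topological) genus. For a connected nodal curve all of whose components are rational, $p_a(\Gamma)$ equals the first Betti number of $\widetilde{\mathcal G}$, that is $p_a(\Gamma) = \delta - m + 1$, where $\delta$ is the number of nodes and $m$ the number of components. Writing $V_{\mathrm{int}}$ for the number of internal vertices of $\mathcal G$ one has $m = 1 + V_{\mathrm{int}}$ and $\delta = |\mathcal E|$, whence $p_a(\Gamma) = |\mathcal E| - V_{\mathrm{int}}$. To identify this with $\dim\S = g$, the number of finite faces of $\mathcal G$, I would close the boundary of the disk by adjoining the $n$ boundary arcs $\overline{b_jb_{j+1}}$ as edges, obtaining a graph embedded in the sphere with $V_{\mathrm{int}}+n$ vertices, $|\mathcal E|+n$ edges, and $(g+1)+1$ faces (the $g+1$ faces of $\mathcal G$ inside the disk together with the exterior region). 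Euler's formula $V-E+F=2$ then reads $(V_{\mathrm{int}}+n)-(|\mathcal E|+n)+(g+2)=2$, i.e. $|\mathcal E|-V_{\mathrm{int}}=g$. Hence $p_a(\Gamma)=g=\dim\S$, which is also the genus of any smooth curve to which $\Gamma$ flatly deforms, since the arithmetic genus is constant in flat families.

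It remains to exhibit $\Gamma$ as a degeneration of a \emph{smooth} $\mathtt M$--curve of that genus. Here I would use that each component is a real $\mathbb{CP}^1$ and each node is a transverse crossing of two real branches, so the real locus $\mathbb{R}\Gamma$ is a union of $m$ real circles $\mathbb{RP}^1$ meeting pairwise at the $\delta$ real crossing nodes. A nodal curve with rational components always admits a smoothing $X_t$ (locally $xy=t$ at each node), and this one-parameter family can be taken real; the content is that one may smooth \emph{every} node by the rerouting compatible with the planar embedding of $\mathcal G$, so that near a node the local real crossing is replaced by the pair of arcs separating the two faces meeting there. Performed at all nodes simultaneously, $\mathbb{R}X_t$ becomes the disjoint union of the boundary circles of the faces of $\mathcal G$, of which there are exactly $g+1$. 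A smooth real genus-$g$ curve with $g+1$ real ovals is an $\mathtt M$--curve, so $\Gamma$ is a rational degeneration of such a curve.

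The routine parts are the two Euler/Betti-number counts; the delicate point, where I expect the real work to lie, is the last step: producing an actual real one-parameter smoothing in which all nodes are resolved \emph{simultaneously} in the face-preserving way, and proving that the resulting real locus has exactly $g+1$ components rather than fewer (each ``wrong'' local resolution would merge two ovals into one). This amounts to checking that the planar duality between the faces of $\mathcal G$ and the ovals of the smoothed curve is realized by a globally consistent, reality-preserving deformation, precisely the construction carried out in \cite{AG3} to which the $\mathtt M$--property is ultimately reduced.
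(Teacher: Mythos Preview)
The paper does not prove this proposition: it is stated with a citation to \cite{AG3} and no argument is given here. Your outline is correct and is essentially the argument one finds in that reference: the arithmetic-genus count via the Betti number of the dual graph combined with Euler's formula for $\mathcal G$ in the disk gives $p_a(\Gamma)=|\mathcal E|-V_{\mathrm{int}}=g=\dim\S$, and the $\mathtt M$--property is obtained by a real smoothing that resolves each node in the way compatible with the planar embedding, so that the ovals of the smoothed curve are in bijection with the $g+1$ faces of $\mathcal G$. You have also correctly located the only non-elementary step, namely the existence of a global real one-parameter smoothing realizing all the face-preserving local resolutions simultaneously; this is precisely what is carried out in \cite{AG3}, and your deferral to that reference at that point is appropriate.
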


We then extend the wave function $\hat \psi$ from $\Gamma_0$ meromorphically to $\Gamma\backslash \{P_0\}$ with the constraint that it takes equal values at each pair of double points for all times $\vec x$.
Lemma \ref{lem:psi_kappa} suggests a natural way to extend $\hat \psi$ to the double points of $\Gamma$, by fixing a Kasteleyn signature on $\mathcal G$, defining
$v^{(k)}_{b_j} (\vec x)\equiv \psi (\kappa_j, \vec x)$ at the boundary vertices, and assigning the solution to Kasteleyn system of relations of Definition (\ref{def:kas_sys_rel_new}) to the corresponding double points in $\Gamma$. 
Then the relations at the internal white vertices set the degree of the meromorphic extension of the normalized wave function on the corresponding rational components of $\Gamma$ and fully characterize its divisor structure. 

\begin{proposition}\label{prop:KP_double}
Let $(\mathcal K, [A])$ be given soliton data where $\mathcal K =\{ \kappa_1 < \cdots < \kappa_n\}$ and $[A]\in \S$. Let $\psi (\kappa_j, \vec x)$, $j\in [n]$, be as in (\ref{eq:tilde_psi_kappa_j}).
Let $\mathcal G=(\mathcal B\cup \mathcal W, \mathcal E)$ be a given planar bipartite reduced graph in the disk with black boundary vertices representing the irreducible positroid cell $\S\subset \GTNN$. Moreover assume that the internal vertices of $\mathcal G$ are either bivalent or trivalent. Let $t_{bw}$ be an edge weighting such that the network $(\mathcal G, t_{bw})$ represents $[A]$. 
Let $\sigma$ be a Kasteleyn signature on $\mathcal G$,  and let $K^{\sigma,wt}$ be the corresponding Kasteleyn matrix.

Then there exists a unique solution to Kasteleyn system of relations $(v^{(k)}_{b} (\vec x), R_w)$ such that for all $\vec x$
\begin{equation}\label{eq:bound_cond}
v^{(k)}_{b_j} (\vec x)\equiv \psi (\kappa_j, \vec x), \quad j\in [n].
\end{equation}
Such solution has the following properties:
\begin{enumerate}
\item For any $b\in \mathcal B$, $v^{(k)}_{b} (\vec x)$ is an untrivial analytic function in $\vec x$, where we assume that only a finite number of times vary;
\item There exists $\vec x_0$ such that $v^{(k)}_{b} (\vec x_0)\not =0$ for all $b\in \mathcal B$.
\end{enumerate}
Moreover, if ${\tilde t}_{b,w}$ is an edge weighting equivalent to $t_{bw}$ on $\mathcal G$, then the corresponding solution ${\tilde v}^{(k)}_{b} (\vec x)$ to the Kasteleyn system for the new weights differs from  $v^{(k)}_{b} (\vec x)$ by a non--zero multiplicative constant $c_b$ independent of $\vec x$  at any given internal black vertex $b$: 
\begin{equation}\label{eq:wave_change_weight}
{\tilde v}^{(k)}_{b} (\vec x) = c_b v^{(k)}_{b} (\vec x).
\end{equation}
Finally if $\sigma^{\prime}$ is another Kasteleyn signature on $(\mathcal G, t_{bw})$, then its Kasteleyn system of relations $(u^{(k)}_{b} (\vec x), R^{\prime}_w)$ for the boundary conditions as in (\ref{eq:bound_cond}) 
\[
u^{(k)}_{b_j} (\vec x)\equiv \psi(\kappa_j, \vec x), \quad j\in [n], \; \forall \vec x,
\]
fulfils
\begin{equation}\label{eq:wave_change_sign}
|u^{(k)}_{b} (\vec x) |= | v^{(k)}_{b} (\vec x)|, \quad \forall b\in\mathcal B, \forall \vec x.
\end{equation}
\end{proposition}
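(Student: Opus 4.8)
The plan is to treat the relations $R_w(v^{(k)})=0$, $w\in\mathcal W$, as a single linear system $(K^{\sigma,wt})^T v=0$ in the $|\mathcal B|=N+n$ unknowns $v=(v^{(k)}_b)_{b\in\mathcal B}$, and to solve it for each fixed $\vec x$ over $\mathbb{C}$ (applying it componentwise in $V$). First I would record that, by Theorem \ref{theo:kast}, $K^{\sigma,wt}$ has full rank $|\mathcal W|=N+k$, so the solution space of $(K^{\sigma,wt})^T v=0$ has dimension $(N+n)-(N+k)=n-k$. By Theorem \ref{theo:sol_kas_sys_1}, the boundary--trace map sending a solution to $(v^{(k)}_{b_1},\dots,v^{(k)}_{b_n})$ has image equal to the row span of $\bar A^o$, that is the orthogonal complement of $[A]$, which has dimension $n-k$. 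Since domain and image have the same dimension $n-k$, this boundary--trace map is a linear isomorphism onto the orthogonal complement of $[A]$. Now $(\ref{eq:orthog_2})$ says precisely that the vector $(\psi(\kappa_1,\vec x),\dots,\psi(\kappa_n,\vec x))$ lies in that orthogonal complement for every $\vec x$; hence for each $\vec x$ there is a unique solution with boundary trace $(\psi(\kappa_j,\vec x))_j$. This gives existence and uniqueness of $(v^{(k)}_b(\vec x),R_w)$ with the boundary data $(\ref{eq:bound_cond})$.

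For the analytic structure I would observe that inverting the boundary--trace isomorphism amounts to applying a fixed linear map with $\vec x$--independent coefficients (rational in the edge weights), so by Theorem \ref{theo:kas_geo_sys} each $v^{(k)}_b(\vec x)$ equals the flow expression $(\ref{eq:expl_sol})$ evaluated on $v_j=\psi(\kappa_j,\vec x)$; concretely $v^{(k)}_b(\vec x)=\sum_{j\in\bar I}C_{b,j}\,\psi(\kappa_j,\vec x)$ with constants $C_{b,j}$. Since each $\psi(\kappa_j,\vec x)$ is analytic in $\vec x$, so is every $v^{(k)}_b$, proving property $(1)$. For nontriviality I would use that the functions $\psi(\kappa_1,\vec x),\dots,\psi(\kappa_n,\vec x)$ are linearly independent (distinct phases $\kappa_j$ produce independent exponentials $\exp(\theta_j(\vec x))$) together with the irreducibility of $\S$: for every internal black vertex $b$ there is a sink $b_j$, $j\in\bar I$, reachable by an edge flow, so ${\mathcal F}_{e,b_j}(\mathcal G)\neq\emptyset$ and by Theorem \ref{theo:null} the coefficient $C_{b,j}$ is nonzero, whence $v^{(k)}_b\not\equiv 0$. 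Property $(2)$ then follows because each of the finitely many $v^{(k)}_b$ is a nontrivial analytic function, so the union of their zero loci is a proper, measure--zero subset and a common nonvanishing point $\vec x_0$ exists.

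For the dependence on the chosen data I would argue directly on the linear system. Under the weight gauge $(\ref{eq:weight_gauge_und})$ one has $\tilde t_{bw}=c_bc_w t_{bw}$ with $c_b=1$ at boundary vertices, hence $\tilde K^{\sigma,wt}_{bw}=c_bc_w K^{\sigma,wt}_{bw}$; the column factors $c_w$ only rescale the relations $R_w$, while $\tilde v^{(k)}_b=c_b^{-1}v^{(k)}_b$ solves the transformed system and preserves the boundary data since $c_{b_j}=1$. By uniqueness this is the solution, giving $(\ref{eq:wave_change_weight})$ with $\vec x$--independent constant $c_b^{-1}$. Similarly, by Proposition \ref{prop:equiv_sign} a second Kasteleyn signature satisfies $\sigma'(\overline{bw})=\alpha(b)\sigma(\overline{bw})\alpha(w)$ for a gauge $\alpha$ constant on the boundary; then $u^{(k)}_b=\alpha(b_1)\alpha(b)\,v^{(k)}_b$ solves the $\sigma'$--system and, because $\alpha$ is constant on the boundary vertices and $\alpha(b_1)^2=1$, it reproduces the boundary values $\psi(\kappa_j,\vec x)$. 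Uniqueness then forces $u^{(k)}_b=\pm v^{(k)}_b$, and $(\ref{eq:wave_change_sign})$ follows.

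The hard part is the sharp existence--uniqueness step: one must verify that the boundary trace is exactly a bijection onto the orthogonal complement of $[A]$ and that the soliton boundary data lands precisely in this subspace. Both ingredients are already available — the dimension count from Theorem \ref{theo:kast} and the identification of the boundary trace with $[\bar A^o]$ from Theorem \ref{theo:sol_kas_sys_1}, matched against the orthogonality $(\ref{eq:orthog_2})$ and Lemma \ref{lem:psi_kappa} — so the difficulty is conceptual bookkeeping rather than a new estimate. The remaining delicate point is establishing nontriviality at every internal vertex, for which the irreducibility of $\S$ and the subtraction--free flow representation $(\ref{eq:tal_formula})$–$(\ref{eq:expl_sol})$ are the decisive inputs.
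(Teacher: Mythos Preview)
Your proposal is correct and follows essentially the same route as the paper's proof: existence and uniqueness via Lemma~\ref{lem:psi_kappa} and Theorem~\ref{theo:sol_kas_sys_1}, the explicit flow representation from Theorem~\ref{theo:kas_geo_sys} to write $v^{(k)}_b(\vec x)$ as a fixed linear combination of the $\psi(\kappa_j,\vec x)$, nontriviality from the nonvanishing of at least one flow coefficient together with the linear independence of the exponentials, and the gauge behaviour from Proposition~\ref{prop:equiv_sign} and the weight gauge~(\ref{eq:weight_gauge_und}). Your treatment is in fact more explicit than the paper's on two points---the bijectivity of the boundary--trace map and the concrete formulas $\tilde v^{(k)}_b=c_b^{-1}v^{(k)}_b$ and $u^{(k)}_b=\alpha(b_1)\alpha(b)\,v^{(k)}_b$---where the paper simply invokes ``the properties of Kasteleyn systems of relations''.
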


\begin{proof}
The existence, uniqueness and analyticity of $(v^{(k)}_{b} (\vec x), R_w)$ follow from Lemma \ref{lem:psi_kappa} and Theorem \ref{theo:sol_kas_sys_1}. Let $E_b= (E_{b,1}, \dots E_{b,n})\in \mathbb{R}^n$, $b\in \mathcal B$, be the solution of Kasteleyn system of relations described in Theorem \ref{theo:sol_kas_sys_2}, and let $c(\vec x)=(c_1(\vec x),\dots, c_{n-k} (\vec x))$ be as in Lemma \ref{lem:psi_kappa}. Then $v^{(k)}_{b} (\vec x)$ satisfies
\begin{equation}\label{eq:vbk}
v^{(k)}_{b} (\vec x) \, = \, \sum_{j=1}^n \, E_{b,j} \, \psi (\kappa_j,\vec x)\, =\,  \prec \,c(\vec x) , \, \bar A^o \,E_{b} \succ,
\end{equation}
where  we have used (\ref{eq:wave_orth}), $\prec \cdot, \cdot \succ$ denotes the usual inner product in $\mathbb R^{n-k}$, and $\bar A^o$ is the matrix orthogonal to $A$ in (\ref{eq:orthog}). By Theorem \ref{theo:kas_geo_sys}, there exist a base $I\in \mathcal M$ and a non-zero vector $\alpha_b =( \alpha_{b,1},\dots, \alpha_{b,n-k})$ such that
$E_b \, = \, \sum_{j\in \bar I}\alpha_{b,j} \, E_j$,
where $E_j$ is the $j$--th canonical basis vector in $\mathbb R^n$. Then the right hand side of (\ref{eq:vbk}) is not identically zero for any given internal vertex $b\in \mathcal B$. Therefore, for any choice of the soliton data there exists $\vec x_0$ such that $v^{(k)}_{b} (\vec x_0)\not =0$, for all $b\in \mathcal B$.

Finally, the relations between solutions of systems of relations for equivalent weightings and for equivalent Kasteleyn signatures follows from the properties of Kasteleyn systems of relations.
\end{proof}

\begin{corollary}\label{cor:norm_v}
Let the soliton data $(\mathcal K, [A])$ and the graph $\mathcal G$ be given. Let $(\mathcal G, f)$ be the reduced bipartite network in the disk representing $[A]$ in Proposition \ref{prop:KP_double}. Let
$\vec x_0$ be such that $v^{(k)}_{b} (\vec x_0) \not =0$ for all $b\in \mathcal B$, where $v^{(k)}_{b} (\vec x)$ is the solution to Kasteleyn system of relations of Proposition \ref{prop:KP_double}. Then
\[
\hat v^{(k)}_{b} (\vec x) = \frac{v^{(k)}_{b} (\vec x)}{v^{(k)}_{b} (\vec x_0)}, \quad b\in{\mathcal B}
\]
is an untrivial analytic function in $\vec x$ and its value is independent of the choice of Kasteleyn signature on $\Gamma$ and of the edge weighting in the equivalence class of the network.
\end{corollary}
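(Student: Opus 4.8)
The plan is to read off all three assertions directly from Proposition \ref{prop:KP_double}, exploiting the fact that the multiplicative ambiguities appearing there are constants in $\vec x$ and hence cancel in the ratio defining $\hat v^{(k)}_b$. First I would record that, by Proposition \ref{prop:KP_double}(1), each $v^{(k)}_b(\vec x)$ is a nontrivial analytic function of $\vec x$, while $v^{(k)}_b(\vec x_0)$ is a fixed nonzero scalar by the choice of $\vec x_0$. Dividing an analytic function by a nonzero constant preserves analyticity and nontriviality, and moreover $\hat v^{(k)}_b(\vec x_0)=1$; hence $\hat v^{(k)}_b$ is a nontrivial analytic function, which disposes of the first claim.

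Next I would treat the independence of the edge weighting. If ${\tilde t}_{bw}$ is equivalent to $t_{bw}$ on $\mathcal G$, then by (\ref{eq:wave_change_weight}) the corresponding solution satisfies ${\tilde v}^{(k)}_b(\vec x)=c_b\, v^{(k)}_b(\vec x)$ with $c_b\neq 0$ independent of $\vec x$. In particular ${\tilde v}^{(k)}_b(\vec x_0)=c_b\, v^{(k)}_b(\vec x_0)\neq 0$, so the same $\vec x_0$ normalizes both families, and
\[
\frac{{\tilde v}^{(k)}_b(\vec x)}{{\tilde v}^{(k)}_b(\vec x_0)}=\frac{c_b\,v^{(k)}_b(\vec x)}{c_b\,v^{(k)}_b(\vec x_0)}=\hat v^{(k)}_b(\vec x),
\]
the constant $c_b$ cancelling.

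For the independence of the Kasteleyn signature I would upgrade the absolute--value statement (\ref{eq:wave_change_sign}) to a signed relation. By Proposition \ref{prop:equiv_sign}, any other Kasteleyn signature has the form $\sigma'(\overline{bw})=\alpha(b)\sigma(\overline{bw})\alpha(w)$ for a gauge $\alpha:\mathcal V\mapsto\{\pm1\}$ that is constant on the boundary vertices. Writing the relation $R'_w$ for $\sigma'$ and dividing through by $\alpha(w)\in\{\pm1\}$ shows that $\alpha(b)\,u^{(k)}_b(\vec x)$ solves the system for $\sigma$ with boundary data $\alpha(b_1)\,\psi(\kappa_j,\vec x)$ (using $\alpha(b_j)=\alpha(b_1)$ and the boundary conditions (\ref{eq:bound_cond})). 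By the linearity of the relations and the uniqueness asserted in Proposition \ref{prop:KP_double}, this forces $\alpha(b)\,u^{(k)}_b(\vec x)=\alpha(b_1)\,v^{(k)}_b(\vec x)$, i.e. $u^{(k)}_b(\vec x)=\epsilon_b\,v^{(k)}_b(\vec x)$ with $\epsilon_b=\alpha(b)\alpha(b_1)\in\{\pm1\}$ independent of $\vec x$. Since $\epsilon_b$ cancels in the ratio exactly as $c_b$ did, $\hat u^{(k)}_b=\hat v^{(k)}_b$, and again the same $\vec x_0$ serves because $u^{(k)}_b(\vec x_0)=\epsilon_b\,v^{(k)}_b(\vec x_0)\neq 0$.

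The only genuinely delicate point I anticipate is the signature case: (\ref{eq:wave_change_sign}) alone gives merely $|u^{(k)}_b|=|v^{(k)}_b|$, which would not suffice if the sign of $u^{(k)}_b/v^{(k)}_b$ could vary with $\vec x$ (the vanishing locus of $v^{(k)}_b$ need not be connected). The gauge--plus--uniqueness argument above is precisely what guarantees that this sign is a genuine $\vec x$--independent constant $\epsilon_b$, making the cancellation in the normalization legitimate. Everything else is routine bookkeeping once one observes that both ambiguities, $c_b$ for the weights and $\epsilon_b$ for the signatures, are nonzero scalars independent of $\vec x$.
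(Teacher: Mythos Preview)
Your proof is correct and follows the route the paper intends: the corollary is stated without proof, as an immediate consequence of Proposition~\ref{prop:KP_double}, and you simply unpack that consequence. Your extra care in the signature case---observing that the absolute-value identity (\ref{eq:wave_change_sign}) alone does not exclude an $\vec x$-dependent sign, and resolving this via the explicit gauge form of Proposition~\ref{prop:equiv_sign} together with uniqueness---is a genuine improvement in rigor over what the paper makes explicit, but it is exactly the mechanism underlying the paper's terse remark that ``the relations between solutions of systems of relations \dots\ for equivalent Kasteleyn signatures follows from the properties of Kasteleyn systems of relations.''
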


Corollary \ref{cor:norm_v} implies that $\hat v^{(k)}_{b} (\vec x)$ depends only on the given soliton data $(\mathcal K, [A])$ and the initial time $\vec x_0$. Therefore we assign the value $\hat v^{(k)}_{b} (\vec x)$ to the normalized KP wave function $\hat \psi$  at the corresponding double points of the spectral curve $\Gamma$ whose dual graph is $\mathcal G$:
\begin{enumerate}
\item If the edge $e_j$ joins the boundary vertex $b_j$ to the vertex $w$, we denote $Q_j$ the point in $\Gamma_w$ glued to $\kappa_j$ and  we assign the value $\psi (\kappa_j, \vec x) $ to the KP--II wave function at $Q_j$ at time $\vec x$:  
\begin{equation}\label{eq:bQ}
\psi (Q_j, \vec x) \equiv  \psi (\kappa_j, \vec x),
\end{equation}
and the value $\hat \psi (\kappa_j, \vec x) $ to the normalized KP--II wave function at $Q_j$ at time $\vec x$:  
\begin{equation}\label{eq:bQ_1}
\hat \psi (Q_j, \vec x) \equiv \hat \psi (\kappa_j, \vec x).
\end{equation}
\item If $\Gamma_w, \Gamma_b$ are the rational components corresponding to the vertices $w$, $b$ joined by the edge $e_b$, we denote $Q_b\in \Gamma_w$, $P_b\in \Gamma_b$ the points where we glue these components; and we assign the value $v^{(k)}_{b} (\vec x)$ to the KP--II wave function at both $P_b$ and $Q_b$ at the time $\vec x$:  
\begin{equation}\label{eq:psi_PQ}
\psi (P_b, \vec x) \equiv \psi (Q_b, \vec x) = v^{(k)}_{b} (\vec x),
\end{equation}
and the value $\hat v^{(k)}_{b} (\vec x)$ to the normalized KP--II wave function at both $P_b$ and $Q_b$ at the time $\vec x$:  
\begin{equation}\label{eq:psi_PQ_1}
\hat \psi (P_b, \vec x) \equiv \hat \psi (Q_b, \vec x) = \hat v^{(k)}_{b} (\vec x).
\end{equation}
\end{enumerate}

Finally we extend the normalized KP wave function on each component $\Gamma_w, \Gamma_b$ as follows: 
\begin{enumerate}
\item we extend it to a constant function with respect to the spectral parameter on each rational component corresponding either to a black vertex or to a bivalent white vertex;
\item we extend it to a degree one meromorphic function in the spectral parameter on each rational component $\Gamma_w$ corresponding to a trivalent white vertex $w$. 
\end{enumerate}
By construction we obtain the desired KP divisor which is contained in the union of the ovals and is given by the union of the Sato divisor and  of the pole divisor at the components $\Gamma_w$. More precisely:
 
\begin{construction}\textbf{The KP wave function on $\Gamma$}\label{constr:KP}
Let $(\mathcal K, [A])$ be the given soliton data where $\mathcal K =\{ \kappa_1 < \cdots < \kappa_n\}$ and $[A]\in \S$. Let $\gamma^{(S)}_l (\vec x)$, $l\in [k]$, be the local coordinates of the Sato divisor at time $\vec x$, {\sl i.e.} the solutions to the characteristic equation $P(\zeta, \vec x)=0$, with $P$ as in (\ref{eq:char_eq}). 
Let $\Gamma\equiv \Gamma (\mathcal G)$ be the reducible $\mathtt M$--curve whose dual graph $\mathcal G=(\mathcal B\cup \mathcal W, \mathcal E)$ is a reduced trivalent planar bipartite graph in the disk with black boundary vertices representing the irreducible positroid cell $\S\subset \GTNN$. 
Let $\vec x_0$, $v_b (\vec x)$ and $\hat v_b (\vec x)$ be as in Corollary \ref{cor:norm_v},
and let $ \psi$ and $\hat \psi$ be defined as in (\ref{eq:bQ})--(\ref{eq:psi_PQ_1}) at the double points of $\Gamma$.

We then define the normalized KP--II wave function on $\Gamma$ as follows:
\begin{enumerate}
\item On $\Gamma_0$ it coincides with the normalized wave function in (\ref{eq:KP_norm_rel}), that is for $P\in\Gamma_0\backslash\{ P_0\}$ and for any $\vec x$, in the natural local coordinate $\zeta$ such that $\zeta^{-1} (P_0)=0$,
\[
\hat \psi (\zeta(P) , \vec x) = \frac{\prod_{l=1}^k (\zeta-\gamma^{(S)}_l (\vec x))}{\prod_{l=1}^k (\zeta-\gamma^{(S)}_l (\vec x_0))}\exp( \zeta (x-x_0) + \zeta^2 (y-y_0) + \zeta^3 (t-t_0) + \zeta^4 (x_{4}-x_{4,0}) + \cdots);
\]
\item On each rational component $\Gamma_b$ of $\Gamma$ represented by an internal black vertex $b$, the normalized wave function $\hat \psi$ takes the same value at all marked points $P_l\in \Gamma_b$. Thus on $\Gamma_b$ we extend the normalized wave function $\hat \psi$ to a function constant with respect to the spectral parameter:
\[
\hat \psi (P, \vec x) \equiv \hat \psi (P_l, \vec x), \quad\quad \forall P\in \Gamma_b,\;\; \forall \vec x;
\]
\item On each rational component $\Gamma_w$ of $\Gamma$ represented by a bivalent white vertex $w$, let $Q_i\in \Gamma_w$, $i=1,2$, be the marked points. Then $\hat \psi (Q_2, \vec x) = \hat \psi (Q_1, \vec x)$, for all $\vec x$.
Thus on $\Gamma_w$ we extend the normalized wave function $\hat \psi$ to a function constant with respect to the spectral parameter:
\[
\hat \psi (P, \vec x) \equiv \hat \psi (Q_1, \vec x), \quad\quad \forall P\in \Gamma_w,\;\; \forall \vec x;
\]
\item On each rational component $\Gamma_w$ of $\Gamma$ represented by a trivalent white vertex $w$, let $Q_i\in \Gamma_w$, $i\in [3]$, be its marked points, where we label $Q_{i}$ in increasing order clockwise. Then there is a unique point $P_w\in \Gamma_w$ such that $\hat \psi$ is extended to a degree one meromorphic function in the spectral parameter on $\Gamma_w$ with pole divisor contained in $\{ P_w \}$. A representation of $\hat \psi$ is 
\begin{equation}\label{eq:wave_white}
\hat \psi (\zeta (P), \vec x) = \frac{\hat \psi (Q_3, \vec x) \, \zeta \, -\, \hat \psi (Q_1, \vec x)\,  \gamma_w }{\zeta \, -\, \gamma_w},\quad\quad \forall P\in \Gamma_w, \;\; \forall \vec x,
\end{equation}
where $\zeta$ is the coordinate on $\Gamma_w$ such that $\zeta(Q_1)=0$, $\zeta(Q_2)=1$ and $\zeta(Q_3)=\infty$, $\gamma_w $ is the local coordinate of the divisor point $P_w$: $\zeta (P_w)= \gamma_w$, and 
\[
\gamma_w \, \hat \psi (Q_1, \vec x) \, + \, (1 \, -\, \gamma_w) \, \hat \psi (Q_2, \vec x) \, -\, \hat \psi (Q_3, \vec x) \, =\, 0, \quad\quad \forall \vec x.
\]
\end{enumerate}
\end{construction}

\begin{lemma}\textbf{Explicit expression of the divisor coordinates using Kasteleyn system of relations}
None of the divisor points in Contruction \ref{constr:KP} coincides with a double point of $\Gamma$.
Moreover, the local coordinate $\gamma_w$ of the divisor point $P_w\in \Gamma_w$ in (\ref{eq:wave_white}) takes the value
\begin{equation}\label{eq:KP_div_white}
\gamma_w \equiv \zeta( P_w)	\, = \, \frac{K^{\sigma,wt}_{b_1 w} \,  \psi (Q_1, \vec x_0)}{K^{\sigma,wt}_{b_1 w} \,  \psi (Q_1, \vec x_0)\, +\, K^{\sigma,wt}_{b_2 w}  \psi (Q_2, \vec x_0)},
\end{equation}
where $K^{\sigma,wt}_{b_i w}$, $i\in [3]$, are the three non zero entries of the Kasteleyn matrix at the $w$-th column used in Proposition \ref{prop:KP_double} to contruct the wave function $\psi$, that is 
\begin{equation}\label{eq:equ}
R_w (v^{(k)})\,  \equiv \, \sum_{i=1}^3\,  K^{\sigma,wt}_{b_i w} \, v^{(k)}_{b_i} (\vec x) \, = 	\, \sum_{i=1}^3 \, K^{\sigma,wt}_{b_i w}	\, 
 \psi (Q_i, \vec x)  \, =		\, 0.
\end{equation}
\end{lemma}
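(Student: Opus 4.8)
The plan is to verify formula (\ref{eq:KP_div_white}) by substituting a candidate value of $\gamma_w$ into the compatibility identity that closes Construction \ref{constr:KP}(4), using the Kasteleyn relation (\ref{eq:equ}) at the white vertex $w$, and then to exclude coincidences with double points through nonvanishing arguments. Throughout I fix a trivalent white vertex $w$ with adjacent black vertices $b_1,b_2,b_3$ and associated marked points $Q_1,Q_2,Q_3\in\Gamma_w$ of local coordinates $\zeta(Q_1)=0$, $\zeta(Q_2)=1$, $\zeta(Q_3)=\infty$; I abbreviate $\alpha_i=K^{\sigma,wt}_{b_iw}$ and recall that by (\ref{eq:psi_PQ}) one has $\psi(Q_i,\vec x)=v^{(k)}_{b_i}(\vec x)$, while $\hat\psi(Q_i,\vec x)=\psi(Q_i,\vec x)/\psi(Q_i,\vec x_0)$ by (\ref{eq:psi_PQ_1}) and Corollary \ref{cor:norm_v}.

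First I would recall the identity closing Construction \ref{constr:KP}(4), namely $\gamma_w\,\hat\psi(Q_1,\vec x)+(1-\gamma_w)\hat\psi(Q_2,\vec x)-\hat\psi(Q_3,\vec x)=0$ for all $\vec x$, which is precisely the requirement that the degree one function (\ref{eq:wave_white}) interpolate the prescribed values at $Q_1,Q_2,Q_3$. Rather than solving this for $\gamma_w$ and separately arguing $\vec x$-independence, I would take $\gamma_w$ to be the right-hand side of (\ref{eq:KP_div_white}) and check the identity directly. Evaluating (\ref{eq:equ}) at $\vec x_0$ gives $\alpha_1\psi(Q_1,\vec x_0)+\alpha_2\psi(Q_2,\vec x_0)=-\alpha_3\psi(Q_3,\vec x_0)$, whence $\gamma_w=-\alpha_1\psi(Q_1,\vec x_0)/(\alpha_3\psi(Q_3,\vec x_0))$ and $1-\gamma_w=-\alpha_2\psi(Q_2,\vec x_0)/(\alpha_3\psi(Q_3,\vec x_0))$. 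Substituting these, together with $\hat\psi(Q_i,\vec x)=\psi(Q_i,\vec x)/\psi(Q_i,\vec x_0)$, into the left-hand side of the compatibility identity, the normalization factors $\psi(Q_i,\vec x_0)$ cancel and the expression collapses to $-\big(\alpha_1\psi(Q_1,\vec x)+\alpha_2\psi(Q_2,\vec x)+\alpha_3\psi(Q_3,\vec x)\big)/(\alpha_3\psi(Q_3,\vec x_0))$, which vanishes for every $\vec x$ exactly by the Kasteleyn relation (\ref{eq:equ}). By the uniqueness of the degree one extension asserted in Construction \ref{constr:KP}(4), this value is the divisor coordinate, which proves (\ref{eq:KP_div_white}).

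It then remains to show that no divisor point meets a double point. On $\Gamma_w$ the pole sits at $\gamma_w$, while the double points are at $\zeta=0,1,\infty$. Since each edge at $w$ carries a nonzero weight, $\alpha_i=K^{\sigma,wt}_{b_iw}\neq 0$, and since $\vec x_0$ is chosen so that $v^{(k)}_b(\vec x_0)\neq0$ for every $b\in\mathcal B$ (Proposition \ref{prop:KP_double}), all three values $\psi(Q_i,\vec x_0)=v^{(k)}_{b_i}(\vec x_0)$ are nonzero. Hence $\alpha_1\psi(Q_1,\vec x_0)\neq0$ forces $\gamma_w\neq0$; $\alpha_2\psi(Q_2,\vec x_0)\neq0$ forces $\gamma_w\neq1$; and $\alpha_1\psi(Q_1,\vec x_0)+\alpha_2\psi(Q_2,\vec x_0)=-\alpha_3\psi(Q_3,\vec x_0)\neq0$ forces $\gamma_w\neq\infty$, so $P_w$ is distinct from $Q_1,Q_2,Q_3$. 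No divisor points arise on black or bivalent white components, since $\hat\psi$ is constant there by Construction \ref{constr:KP}(2)--(3).

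Finally, for the Sato part of the divisor, the double points on $\Gamma_0$ are exactly the marked points $\kappa_1,\dots,\kappa_n$ glued to the white components, whereas the divisor points $\gamma^{(S)}_l(\vec x_0)\in[\kappa_1,\kappa_n]$ are the roots of (\ref{eq:char_eq}). Because $\psi(\kappa_j,\vec x_0)=v^{(k)}_{b_j}(\vec x_0)\neq0$ and, by (\ref{eq:tilde_psi_kappa_j}), $\psi(\kappa_j,\vec x_0)=\prod_{l=1}^k(\kappa_j-\gamma^{(S)}_l(\vec x_0))\,e^{\theta_j(\vec x_0)}$, no root $\gamma^{(S)}_l(\vec x_0)$ equals any $\kappa_j$, so the Sato divisor also avoids the nodes. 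I expect the main obstacle to be a bookkeeping one rather than a computational one: I must ensure that the genericity of $\vec x_0$ secured in Proposition \ref{prop:KP_double} simultaneously controls the nonvanishing at the boundary vertices (yielding the Sato statement) and at the internal vertices (yielding $\gamma_w\notin\{0,1,\infty\}$), so that a single choice of normalization time renders the whole divisor disjoint from the double points; the algebraic verification of (\ref{eq:KP_div_white}) is then immediate once the Kasteleyn relation is invoked.
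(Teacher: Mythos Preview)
Your proof is correct and follows essentially the same approach as the paper: substitute the candidate value (\ref{eq:KP_div_white}) for $\gamma_w$ into the interpolation identity of Construction \ref{constr:KP}(4) and observe that it reduces to the Kasteleyn relation (\ref{eq:equ}). Your treatment is in fact more detailed than the paper's one-line justification, and your handling of the non-coincidence claim (checking $\gamma_w\notin\{0,1,\infty\}$ via the nonvanishing of each $\alpha_i\psi(Q_i,\vec x_0)$, and separately treating the Sato divisor on $\Gamma_0$) makes explicit what the paper only records as a consequence of $\psi(P,\vec x_0)\neq 0$ at the double points.
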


The proof is straightforward since one obtains (\ref{eq:equ}) substituting (\ref{eq:KP_div_white}) into (\ref{eq:wave_white}) for $\zeta(Q)=\zeta(Q_2)=1$.

Next the KP divisor is defined as the union of the Sato divisor and of the divisor points on the rational components represented by the trivalent white vertices of the graph at the normalization time $\vec x_0$.

\begin{definition}\label{def:DKP}{\textbf{The KP-II divisor on  $\Gamma$.}} 
Let the soliton data be $(\mathcal K, [A])$, $[A]\in \S$, with $\S\subset \GTNN$ a $g$--dimensional irreducible positroid cell, and let $\mathcal G$ be a reduced planar bipartite graph in the disk representing $\S$. Let $\Gamma$ be the reducible curve whose dual graph is $\mathcal G$ and let $\vec x_0$ be as in the above construction. Then the KP-II divisor $\DKP$ is the sum of the following $g$ simple poles,
\begin{enumerate}
\item The $k$ poles on $\Gamma_0$ coinciding with the Sato divisor  at $\vec x= \vec x_0$: $\DS =\{ P^{(S)}_1 (\vec x_0),\dots , P^{(S)}_k(\vec x_0)\}$;
\item The $g-k$ poles $P_{w_l}\equiv P_{w_l}(\vec x_0)\in \Gamma_{w_l}$ uniquely identified by the condition that, in the local coordinate defined above $\zeta(P_{w_l}) =\gamma_{w_l} (\vec x_0)$,
where $w_l$, $l\in [g-k]$, are the trivalent white vertices of $\mathcal G$ and $\gamma_{w_l}$ is as in (\ref{eq:KP_div_white}). 
\end{enumerate}
\end{definition} 

\smallskip

Then $\hat \psi$ is the desired extension to  the reducible spectral curve $\Gamma$ of the wave function on $\Gamma_0$ arising in the spectral problem for the KP-II soliton data $(\mathcal K, [A])$ .

\begin{theorem}\label{theo:KPeffvac} 
Let the data $(\mathcal K, [A], \mathcal G, \vec x_0)$, and $\hat \psi$ and $\DKP$ on $\Gamma$ be as in Construction \ref{constr:KP} and Definition \ref{def:DKP}, where we assume that a finite number of KP time variables may change. Then $\hat \psi$ is the KP wave function associated to the soliton data $(\mathcal K, [A])$ which extends $\hat \psi$ in (\ref{eq:KP_norm_rel}) from $\Gamma_0$ to $\Gamma=\Gamma(\mathcal G)$ and is uniquely identified by the normalization condition $\hat \psi (P, \vec x_0)=1$ at all points $P\in \Gamma\backslash \{P_0\} $. Moreover $\hat \psi$ has the following properties on $\Gamma\backslash\{P_0\}$:
\begin{enumerate}
\item $\hat \psi$ is analytic in $\vec x$;
\item $\hat \psi$ takes the same value at pairs of glued points $P,Q\in \Gamma$, $\hat \psi(P, \vec x) = \hat \psi(Q, \vec x)$, for all $\vec x$;
\item $\hat \psi$ is meromorphic in $P\in \Gamma \backslash \{P_0\}$. More precisely, $\hat \psi(\zeta, \vec x)$ is either constant or meromorphic of degree one w.r.t. to the spectral parameter on each rational component of $\Gamma$ corresponding to a trivalent white vertex of $\mathcal G$. $\hat \psi(\zeta, \vec x)$ is constant w.r.t. to the spectral coordinate $\zeta$ on each other component corresponding to an internal vertex;
\item ${\hat \psi} (\zeta(P), \vec x)$ is real for real values of both the spectral coordinate $\zeta$ and the KP--times $\vec x$ on $\Gamma$;
\item $\DKP$ is the pole divisor of $\hat \psi$ for all $\vec x$: $\DKP+(\hat\psi(P,\vec x))\ge 0$ for all $\vec x$;
\item The KP--II divisor $\DKP$ is contained in the union of the ovals of $\Gamma=\Gamma(\mathcal G)$ and depends only on the soliton data and the normalization time $\vec x_0$;
\item\label{item:6} None of the divisor points in $\DKP$ coincides with any of the double points of the curve $\Gamma$.
\end{enumerate}
\end{theorem}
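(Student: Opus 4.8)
The plan is to read Theorem \ref{theo:KPeffvac} as a verification that Construction \ref{constr:KP} glues consistently into a single meromorphic object with the stated analytic, reality and divisor properties, so that almost all of the work has already been done in the preceding results. The building blocks are in place: by Proposition \ref{prop:KP_double} the boundary data $v^{(k)}_{b_j}(\vec x)=\psi(\kappa_j,\vec x)$ extend to a unique solution $v^{(k)}_b(\vec x)$ of the Kasteleyn system which is analytic in $\vec x$ and non-vanishing at a suitable $\vec x_0$ for every $b\in\mathcal B$, and by Corollary \ref{cor:norm_v} the normalized quantities $\hat v^{(k)}_b(\vec x)=v^{(k)}_b(\vec x)/v^{(k)}_b(\vec x_0)$ are analytic, non-trivial and independent of the Kasteleyn signature and of the weight gauge. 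I would fix these as the values of $\hat\psi$ at the double points through (\ref{eq:psi_PQ_1}) and then check the seven assertions in turn.

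First I would verify the gluing (item 2), the meromorphic structure (item 3), the normalization and the identification with the KP wave function. On $\Gamma_0$ the function $\hat\psi$ is by definition the normalized wave function (\ref{eq:KP_norm_rel}); on black and bivalent white components it is constant, equal to the common value of $\hat\psi$ at the adjacent double points, so equality at glued pairs is automatic there. The only non-trivial extension is on a trivalent white component $\Gamma_w$, where the three prescribed values $\hat\psi(Q_i,\vec x)=\hat v^{(k)}_{b_i}(\vec x)$ must be interpolated by a degree-one meromorphic function, and the key point is that the relation $R_w$ supplies exactly the datum fixing the interpolating pole. Writing $R_w$ in normalized form, the three values obey the single linear identity $\sum_{i=1}^{3}c_i\,\hat\psi(Q_i,\vec x)=0$ with $c_i=K^{\sigma,wt}_{b_iw}\,\psi(Q_i,\vec x_0)$ constant in $\vec x$; since $\hat\psi(Q_i,\vec x_0)=1$ forces $c_1+c_2+c_3=0$, solving for the pole of (\ref{eq:wave_white}) gives $\gamma_w=c_1/(c_1+c_2)$, which is precisely (\ref{eq:KP_div_white}) and, crucially, is \emph{independent} of $\vec x$. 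This $\vec x$-independence is the conceptual heart of the argument and the step I expect to be the main obstacle to present cleanly: it is what turns the locus of poles into a fixed divisor rather than a moving one, and it rests on $R_w$ holding identically in $\vec x$, i.e. on Lemma \ref{lem:psi_kappa}. Analyticity in $\vec x$ (item 1) then follows, since on every component $\hat\psi$ is an analytic function of $\vec x$ with fixed poles, and the normalization $\hat\psi(P,\vec x_0)\equiv1$ follows because at $\vec x_0$ each $\hat v^{(k)}_b$ equals $1$ and a degree-one function equal to $1$ at three points is constant.

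Next I would treat the pole divisor (item 5), the avoidance of double points (item 7) and reality (item 4). As $\gamma_w$ is $\vec x$-independent and the $k$ poles on $\Gamma_0$ sit at the fixed zeros $\gamma^{(S)}_l(\vec x_0)$ of the denominator in (\ref{eq:KP_norm_rel}), the pole divisor of $\hat\psi$ equals $\DKP$ of Definition \ref{def:DKP} for all $\vec x$, so $\DKP+(\hat\psi(P,\vec x))\ge0$. From (\ref{eq:KP_div_white}), the non-vanishing $v^{(k)}_{b_i}(\vec x_0)=\psi(Q_i,\vec x_0)\neq0$ (Proposition \ref{prop:KP_double}) together with $K^{\sigma,wt}_{b_iw}\neq0$ makes each of $\gamma_w\in\{0,1,\infty\}=\{\zeta(Q_1),\zeta(Q_2),\zeta(Q_3)\}$ impossible, so no $P_w$ meets a double point; at the boundary vertices the same non-vanishing reads $\psi(\kappa_j,\vec x_0)\neq0$, which excludes $\gamma^{(S)}_l(\vec x_0)=\kappa_j$ and completes item 7. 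Reality (item 4) is immediate, since the Kasteleyn data and the $\psi(\kappa_j,\vec x)$ are real, whence all $v^{(k)}_b(\vec x)$, all $\gamma_w$ and the coefficients in (\ref{eq:wave_white}) are real and $\hat\psi$ is real for real $\zeta$ and real $\vec x$ on every component.

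Finally, for item 6 I would note that invariance under the choice of signature, weight gauge and network representative is already built into $\hat v^{(k)}_b$ via (\ref{eq:wave_change_weight}) and Corollary \ref{cor:norm_v}, hence into $\gamma_w$, so $\DKP$ depends only on $(\mathcal K,[A])$ and $\vec x_0$. Containment in the union of the ovals then follows by combining reality (item 4) with non-coincidence (item 7): a real point of a rational component of the $\mathtt M$-curve $\Gamma$ that is not one of its double points lies in the interior of one of the real arcs, i.e. in an oval. I would flag that the adjacent \emph{sharp} Dubrovin--Natanzon count of exactly one divisor point per finite oval is the genuinely hard analytic content of the broader program; in the present framework it does not need to be redone, since the equivalence of Kasteleyn and geometric signatures (Theorem \ref{theo:main}) and of their systems of relations (Theorem \ref{theo:kas_geo_sys}) identifies this divisor with the one built in \cite{AG5}, where the positivity estimates on flows localize the divisor points. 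Thus the cleanest route for that sharper statement is to invoke the equivalence rather than re-derive the sign analysis.
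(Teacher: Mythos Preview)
Your proposal is correct and matches the paper's approach: the theorem is meant as a direct verification that Construction \ref{constr:KP} assembles into a well-defined normalized wave function with the listed properties, and indeed the paper gives no proof beyond the one-line remark that item \ref{item:6} follows from $\psi(P,\vec x_0)\neq 0$ at the double points. Your write-up simply spells out the checks the paper leaves to the reader; your observation that $c_1+c_2+c_3=0$ (from $\hat\psi(Q_i,\vec x_0)=1$) is exactly what makes (\ref{eq:KP_div_white}) $\vec x$-independent, and your closing remark correctly distinguishes the mere containment in the ovals (item 6 here) from the sharp one-per-oval count, which is the separate Theorem \ref{theo:pos_div}.
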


\begin{figure}
  \centering
  \includegraphics[width=0.4\textwidth]{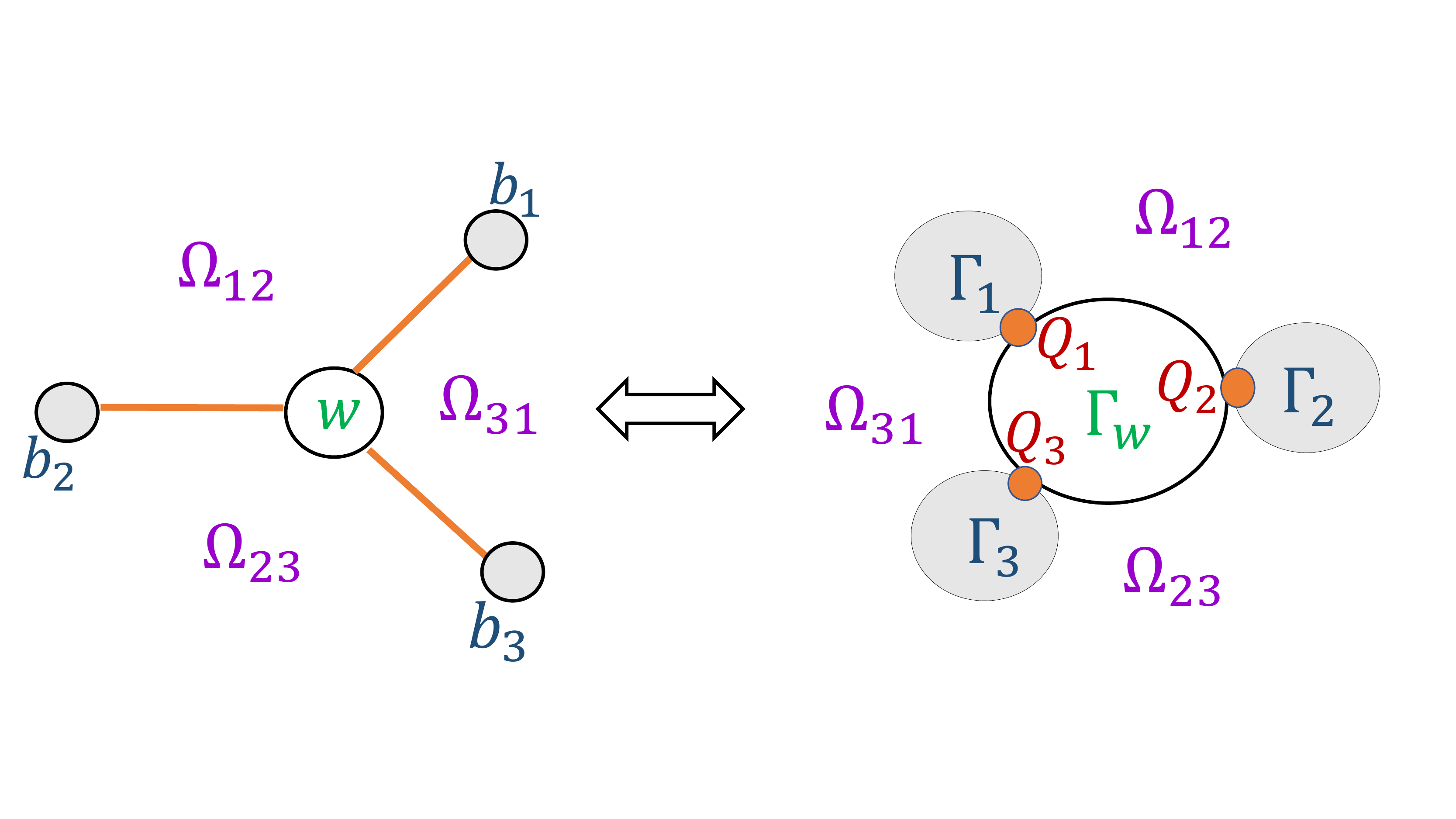}
	\vspace{-1. truecm}
  \caption{\small{\sl The correspondence between faces at the white vertex $w$ (left) and ovals bounded by $\Gamma_w$ (right) under the assumption that the curve is constructed reflecting the graph w.r.t. a vertical ray. Objects paired by the duality relation between the graph and the curve share the same color: vertices correspond to rational components, edges to double points, faces to ovals.}}
	\label{fig:corr_V_G}
\end{figure}

We remark that Item (\ref{item:6}) in Theorem \ref{theo:KPeffvac} follows from the condition $\psi (P, \vec x_0) \not =0$ at the double points.

Any trivalent white vertex $w$  bounds three faces; therefore the corresponding rational component $\Gamma_w$ bounds three ovals (see Figure \ref{fig:corr_V_G}). Next Lemma provides a simple criterion to detect the oval containing the divisor point $P_w\in \Gamma_w$.

\begin{lemma} \textbf{The position of the divisor points in the ovals} \label{lem:pos_gen} 
Let us define
\begin{equation}\label{eq:tilde_half}
\tilde z_{w, b_i} \equiv K^{\sigma,wt}_{b_{i} w} \,  \psi (P_{i}, \vec x_0), \quad \quad i\in [3],
\end{equation}
where $w$ is a white vertex of $\mathcal G$ and notations are as in Construction \ref{constr:KP}. Let $\Omega_{ij}$ denote both the face in $\mathcal G$ bounded by the edges $\overline{b_i w}$ and $\overline{b_j w}$, and the oval in $\Gamma=\Gamma(\mathcal G)$ to which the marked points $Q_i$ and $Q_j$ belong, $i,j \in [3]$.
Then the KP divisor point $P_w\in \Gamma_w$ belongs to $\Omega_{ij}$ if and only if $\tilde z_{w,b_i}$ and $\tilde z_{w,b_j}$ share the same sign
\[
P_w \in \Omega_{ij} \quad\quad \iff \quad\quad \tilde z_{w,b_i}\tilde z_{w,b_j}> 0.
\]
\end{lemma}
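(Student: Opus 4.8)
The plan is to reduce the statement to an elementary sign analysis of the single real number $\gamma_w = \zeta(P_w)$ on the component $\Gamma_w \cong \mathbb{CP}^1$, exploiting the white-vertex relation \eqref{eq:equ}. Write $a = \tilde z_{w,b_1}$, $b = \tilde z_{w,b_2}$, $c = \tilde z_{w,b_3}$, where $\tilde z_{w,b_i}= K^{\sigma,wt}_{b_iw}\,\psi(Q_i,\vec x_0)$ as in \eqref{eq:tilde_half}. By Item (\ref{item:6}) of Theorem \ref{theo:KPeffvac} (equivalently, $\psi(P,\vec x_0)\neq 0$ at the double points) and the fact that the Kasteleyn entries are $\pm t_e\neq 0$, each of $a,b,c$ is a nonzero real number. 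Evaluating \eqref{eq:equ} at the normalization time $\vec x_0$ yields the key constraint
\[
a+b+c \, = \, \sum_{i=1}^3 K^{\sigma,wt}_{b_iw}\,\psi(Q_i,\vec x_0)\, = \, 0.
\]
Since three nonzero reals that sum to zero cannot all share the same sign, exactly one of the pairs $\{a,b\},\{b,c\},\{a,c\}$ consists of numbers of equal sign; this already reflects the fact that $P_w$ lies in precisely one of the three ovals bounded by $\Gamma_w$.

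Next I would fix the identification between the three real arcs of $\Gamma_w$ and the ovals $\Omega_{ij}$. In the coordinate $\zeta$ with $\zeta(Q_1)=0$, $\zeta(Q_2)=1$, $\zeta(Q_3)=\infty$, the real locus $\mathbb{RP}^1\subset\Gamma_w$ is cut by $Q_1,Q_2,Q_3$ into the arcs $(0,1)$, $(1,\infty)$ and $(-\infty,0)$. By the duality between $\mathcal G$ and $\Gamma$ (see Figure \ref{fig:corr_V_G}), the clockwise cyclic order $Q_1,Q_2,Q_3$ of the marked points reproduces the clockwise cyclic order of the edges $\overline{b_1w},\overline{b_2w},\overline{b_3w}$, hence of the faces they bound; consequently the arc between $Q_i$ and $Q_j$ is exactly the portion on $\Gamma_w$ of the oval $\Omega_{ij}$. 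Thus $P_w\in\Omega_{12}\iff\gamma_w\in(0,1)$, $P_w\in\Omega_{23}\iff\gamma_w\in(1,\infty)$, and $P_w\in\Omega_{13}\iff\gamma_w\in(-\infty,0)$.

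Finally I would carry out the sign computation using \eqref{eq:KP_div_white} in the form $\gamma_w = a/(a+b)$ together with $a+b=-c$. Then
\[
\gamma_w \, = \, \frac{a}{a+b}\, = \, -\frac{a}{c},\qquad \gamma_w-1\, = \, \frac{-b}{a+b}\, = \, \frac{b}{c},
\]
so that $\gamma_w>0\iff ac<0$, $\gamma_w<1\iff bc<0$, $\gamma_w<0\iff ac>0$, and $\gamma_w>1\iff bc>0$. Combining with $a+b+c=0$: the membership $\gamma_w\in(0,1)$ amounts to $ac<0$ and $bc<0$, which (as $c$ then has sign opposite to both $a$ and $b$) is equivalent to $ab>0$; similarly $\gamma_w\in(1,\infty)\iff bc>0$ and $\gamma_w\in(-\infty,0)\iff ac>0$. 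In view of the arc--oval identification of the previous step these are exactly the three asserted equivalences $P_w\in\Omega_{ij}\iff\tilde z_{w,b_i}\tilde z_{w,b_j}>0$.

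I expect the only delicate point to be the middle step: checking that the clockwise labeling of the $Q_i$ and the reflection used to build $\Gamma$ from $\mathcal G$ genuinely send the face $\Omega_{ij}$ to the arc between $Q_i$ and $Q_j$ with the stated coordinate ranges, rather than to the complementary arc. Once this orientation bookkeeping is secured, the final sign analysis is routine and the nondegeneracy $\gamma_w\notin\{0,1,\infty\}$ (so that all inequalities are strict) is guaranteed by $a,b,c\neq 0$.
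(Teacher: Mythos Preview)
Your proof is correct and follows essentially the same approach as the paper: the paper's proof is a one-line remark that the result ``immediately follows comparing the local coordinate of the divisor point $\zeta(P_w)$ in \eqref{eq:KP_div_white} with those of the marked points $\zeta(Q_1)=0$, $\zeta(Q_2)=1$ and $\zeta(Q_3)=\infty$ on $\Gamma_w$,'' and you have spelled out exactly this comparison via the sign analysis of $\gamma_w=-a/c$ and $\gamma_w-1=b/c$ under the constraint $a+b+c=0$. The orientation bookkeeping you flag as delicate is indeed the only point the paper leaves implicit.
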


The proof immediately follows comparing the local coordinate of the divisor point $\zeta (P_w)$ in (\ref{eq:KP_div_white}) with those of the marked points, $\zeta(Q_1)=0$, $\zeta(Q_2)=1$ and $\zeta(Q_3)=\infty$ on $\Gamma_w$.

\smallskip

Finally, we prove that the divisor $\DKP$ satisfies the reality and regularity conditions settled in \cite{DN}:

\begin{theorem}\label{theo:pos_div}\textbf{Number of divisor points in the ovals}
There is exactly one divisor point in each finite oval $\Omega_s$, $s\in [g]$, and no divisor point in the infinite oval $\Omega_0$.
\end{theorem}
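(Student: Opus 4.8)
The plan is to argue by a counting/pigeonhole principle. By Construction \ref{constr:KP} the ovals of $\Gamma$ are in bijection with the faces of $\mathcal G$, so there are $g$ finite ovals $\Omega_1,\dots,\Omega_g$ and one infinite oval $\Omega_0$; by Definition \ref{def:DKP} the divisor $\DKP$ has degree $g$. Thus the theorem reduces to two claims: (i) no divisor point lies in $\Omega_0$, and (ii) no oval contains two divisor points. Granting these, the $g$ points of $\DKP$ are distributed among the $g$ finite ovals with at most one each, hence exactly one in each.

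First I would record the local position rule. Each second-type divisor point lies on a component $\Gamma_w$ over a trivalent white vertex $w$, and by Lemma \ref{lem:pos_gen} it belongs to the oval $\Omega_{ij}$ if and only if $\tilde z_{w,b_i}\tilde z_{w,b_j}>0$, where $\tilde z_{w,b_i}=K^{\sigma,wt}_{b_iw}\,\psi(P_i,\vec x_0)$ are the half-edge variables of Proposition \ref{prop:kas_lam_form}. The white-vertex relation (\ref{eq:equ}) gives $\tilde z_{w,b_1}+\tilde z_{w,b_2}+\tilde z_{w,b_3}=0$, while Proposition \ref{prop:KP_double} ensures none of them vanishes at $\vec x_0$; three nonzero reals summing to zero split two-against-one in sign, so exactly one of the three pairs has positive product and $P_w$ occupies a single, well-defined oval among the three faces meeting $w$. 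The $k$ first-type points are the Sato points on $\Gamma_0$, with coordinates $\gamma^{(S)}_l(\vec x_0)\in[\kappa_1,\kappa_n]$.

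For claim (i), the infinite oval is the one containing $P_0$, whose coordinate on $\Gamma_0$ is $\zeta=\infty$; since the Sato coordinates are bounded in $[\kappa_1,\kappa_n]$, none lies on the arc of $\Gamma_0$ bounding $\Omega_0$, so no first-type point sits in $\Omega_0$. To exclude second-type points from $\Omega_0$, and to prove claim (ii), the essential input is the global sign pattern of the half-edge values $\tilde z_{w,e}(\vec x_0)$, which is where total non-negativity enters. By Theorem \ref{theo:kas_geo_sys} the Kasteleyn half-edge variables coincide with the geometric ones and are given by the flow expansion (\ref{eq:expl_sol}); equivalently, by (\ref{eq:vbk}) and Lemma \ref{lem:psi_kappa}, $v^{(k)}_b(\vec x_0)=\sum_{j}(E_b)_j\,\psi(\kappa_j,\vec x_0)$ with $E_b$ the edge vector of Theorem \ref{theo:null}. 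Choosing an acyclic orientation, each component $(E_b)_j$ is subtraction-free in the positive edge weights and hence of a single definite sign, while the signs of $\psi(\kappa_j,\vec x_0)=\prod_{l}(\kappa_j-\gamma^{(S)}_l(\vec x_0))\,e^{\theta_j(\vec x_0)}$ are dictated by the interlacing of the ordered phases with the Sato roots. Feeding these two sign sources into the local rule determines, at every trivalent white vertex, which face is the same-sign face, and likewise which external oval each Sato point falls in.

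The hard part is the ensuing global sign bookkeeping: proving that this sign data, forced by the total non-negativity of $[A]$, assigns the $g$ divisor points to the $g$ finite ovals without collision and without ever selecting $\Omega_0$. I expect this to be the main obstacle. The cleanest route is to transport the whole question, via Theorems \ref{theo:main} and \ref{theo:kas_geo_sys}, to the geometric system of relations on the directed network $(\mathcal G,\mathcal O(I))$, where the present divisor coincides with the one built in \cite{AG5}; the required collision-free distribution is then exactly the combinatorial consequence of total non-negativity established there. With claims (i) and (ii) in hand, the equality of the two counts forces exactly one divisor point in each finite oval and none in $\Omega_0$, which is the assertion.
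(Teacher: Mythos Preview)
Your setup is correct through Lemma \ref{lem:pos_gen}, but the argument stalls exactly where it matters. You aim to show (i) no point in $\Omega_0$ and (ii) at most one point per oval, and you acknowledge that the ``global sign bookkeeping'' needed for (ii) is the obstacle; your resolution is to transport the problem to \cite{AG5} via Theorems \ref{theo:main} and \ref{theo:kas_geo_sys}. That is not a proof of the theorem in this paper --- it is an appeal to the very result that the present construction is meant to reprove directly from Kasteleyn data. (Note also that Proposition \ref{prop:KP_equiv}, which identifies the two divisors, is stated \emph{after} Theorem \ref{theo:pos_div}; the paper proves the theorem first and independently.)

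What you are missing is that the Kasteleyn face condition \eqref{eq:sign_cond} does all the work, and you never invoke it. The paper's argument is a parity count, not an ``at most one'' bound. For each finite face $\Omega$, compute $\prod_{e\in\partial\Omega}\tilde z_e$ with $\tilde z_e=K^{\sigma,wt}_{bw}\psi(P_b,\vec x_0)$. The product of the signatures contributes $\sigma(\Omega)=(-1)^{|\Omega|/2+1}$; the weights are positive; each internal black vertex contributes $\psi(P_b,\vec x_0)^2>0$. Hence the sign of the product is $(-1)^{|\Omega|/2-1}$ for internal faces (with an extra factor $\prod_{b_j\in\partial\Omega}\psi(\kappa_j,\vec x_0)$ for external finite faces). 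Going around $\partial\Omega$, sign changes of the $\tilde z_e$ can only occur at white vertices; if $c_\Omega$ is their number, then $(-1)^{c_\Omega}$ equals the sign just computed, so $c_\Omega\equiv |\Omega|/2-1\pmod 2$. By Lemma \ref{lem:pos_gen} the number of white-vertex divisor points in $\Omega$ is $\nu_\Omega=|\Omega|/2-c_\Omega$, hence odd. For finite external ovals the extra boundary factor has the same parity as the number $\rho_\Omega$ of Sato points in $\Omega$, giving $\nu_\Omega+\rho_\Omega$ odd. Thus every finite oval contains an odd (hence $\ge 1$) number of divisor points; since there are $g$ points and $g$ finite ovals, each gets exactly one and $\Omega_0$ gets none. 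No global analysis of edge-vector signs, and no appeal to \cite{AG5}, is needed.
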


\begin{proof}
To simplify notations we use the same symbol $\Omega$ to denote both the face of the graph $\mathcal G$ and the oval of the curve $\Gamma = \Gamma(\mathcal G)$. Let $|\Omega|$ denote the number of edges bounding $\Omega$ and let $\tilde z_{e}$ be as in (\ref{eq:tilde_half})
\[
\tilde z_{e} \equiv K^{\sigma,wt}_{b w} \,  \psi (P_{b}, \vec x_0), \quad \mbox{ if } e = \overline{bw}.
\]
Then
\begin{equation}\label{eq:tilde_z}
\prod_{e\in \partial \Omega} \tilde z_e = \left\{ \begin{array}{ll}
\displaystyle
(-1)^{\frac{|\Omega|}{2}-1}\; \prod_{e\in \partial \Omega} t_e \; \prod_{b\in \partial \Omega} \big(  \psi (P_{b}, \vec x_0) \big)^2,
& 	\quad \mbox{ if } \Omega \mbox{ internal};\\
\noalign{\medskip}
\displaystyle
(-1)^{\frac{|\Omega|}{2}-1} \; \prod_{e\in \partial \Omega} t_e \; \prod_{\substack{ b\in \partial \Omega\\ b \mbox{\tiny{ internal}}}} \big(  \psi (P_{b}, \vec x_0) \big)^2 \; \prod_{\substack{ b_j\in \partial \Omega\\ b_j \mbox{\tiny{ b.ry vertex }}}}  \psi (\kappa_j, \vec x_0) , & \quad \mbox{ if } \Omega \mbox{ finite external}.
\end{array}
\right.
\end{equation}

Let $\nu_\Omega$ denote the total number of divisor points in the oval $\Omega$ associated to the white vertices $w$ bounding the face $\Omega$. 
From Lemma \ref{lem:pos_gen}, the divisor point $P_w \in \Gamma_w \cap \Omega$ if and only if $\tilde z_{e_{i_1}}\tilde z_{e_{i_2}}>0$, where the edges $e_{i_1}=\overline{b_{i_1}w}$ and $e_{i_2}=\overline{b_{i_2}w}$ bound $\Omega$ at $w$.

Next, let $c_{\Omega}$ denote the number of white vertices $w\in \partial \Omega$ such that the product $\tilde z_{e_i}\tilde z_{e_{i+1}}<0$ at the consecutive edges $e_i=\overline{b_iw},e_{i+1}=\overline{b_{i+1}w}\in \partial \Omega$. Obviously
\begin{equation}\label{eq:nu_int}
\nu_{\Omega} = \frac{|\Omega|}{2} -c_{\Omega}.
\end{equation}
If $\Omega$ is an internal face, (\ref{eq:tilde_z}) implies that 
\begin{equation}\label{eq:c_int}
c_{\Omega} = \frac{|\Omega|}{2} -1, \mod 2,
\end{equation}
since edge weights are positive. Therefore, $\nu_\Omega$, the total number of divisor points belonging to the internal oval $\Omega$ is odd.

If $\Omega$ is a finite external face, then the total number of divisor points on the corresponding face equals $\nu_\Omega+\rho_{\Omega}$, where $\nu_\Omega$ is the number of divisor points belonging to $\bigcup\limits_{w\in \mathcal W}\Gamma_w\cap\Omega$, and $\rho_{\Omega}$ is the number of Sato divisor points belonging to $\Omega\cap \Gamma_0$. By definition there is an odd (respectively even) number of Sato divisor points in
$]\kappa_j, \kappa_{j+1}[$ if $ \psi (P_{b_j}, \vec x_0)  \psi (P_{b_{j+1}}, \vec x_0) \, < \, 0$ ( respectively $>0$).
Therefore $\rho_\omega$ has the same parity as $\prod_{ P_b\in \partial \Omega\cap \Gamma_0}  \psi (P_{b}, \vec x_0)$, whereas $\nu_\Omega$ satisfies (\ref{eq:nu_int}). In this case (\ref{eq:tilde_z}) implies that 
\begin{equation}\label{eq:c_int_{53}}
c_{\Omega} = \frac{|\Omega|}{2} -1 +\rho_{\Omega}, \mod 2;
\end{equation}
therefore, $\nu_{\Omega}+\rho_{\Omega}$, the total number of divisor points in the finite external oval $\Omega$, is odd.

This ends the proof since the cardinality of the KP divisor coincides with the number of finite faces of the graph.
\end{proof}

Let us apply the construction of the KP wave function and of the divisor to the example shown in Figure \ref{fig:ex_Gr26_sys_rel} (see also Example \ref{ex:6}).

\begin{figure}
  \centering
  \includegraphics[width=0.37\textwidth]{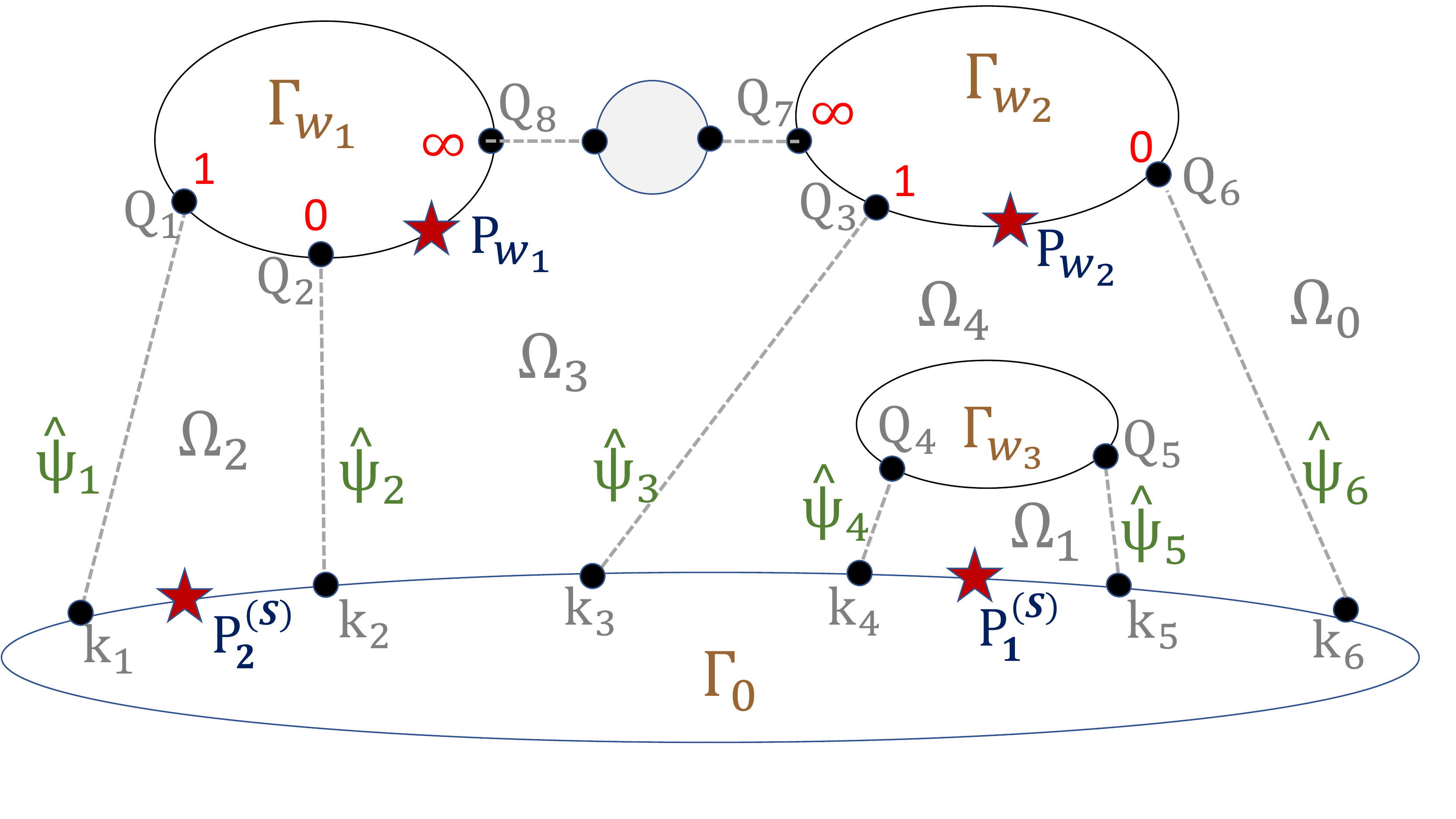}
	\hspace{.8 truecm}
	\includegraphics[width=0.37\textwidth]{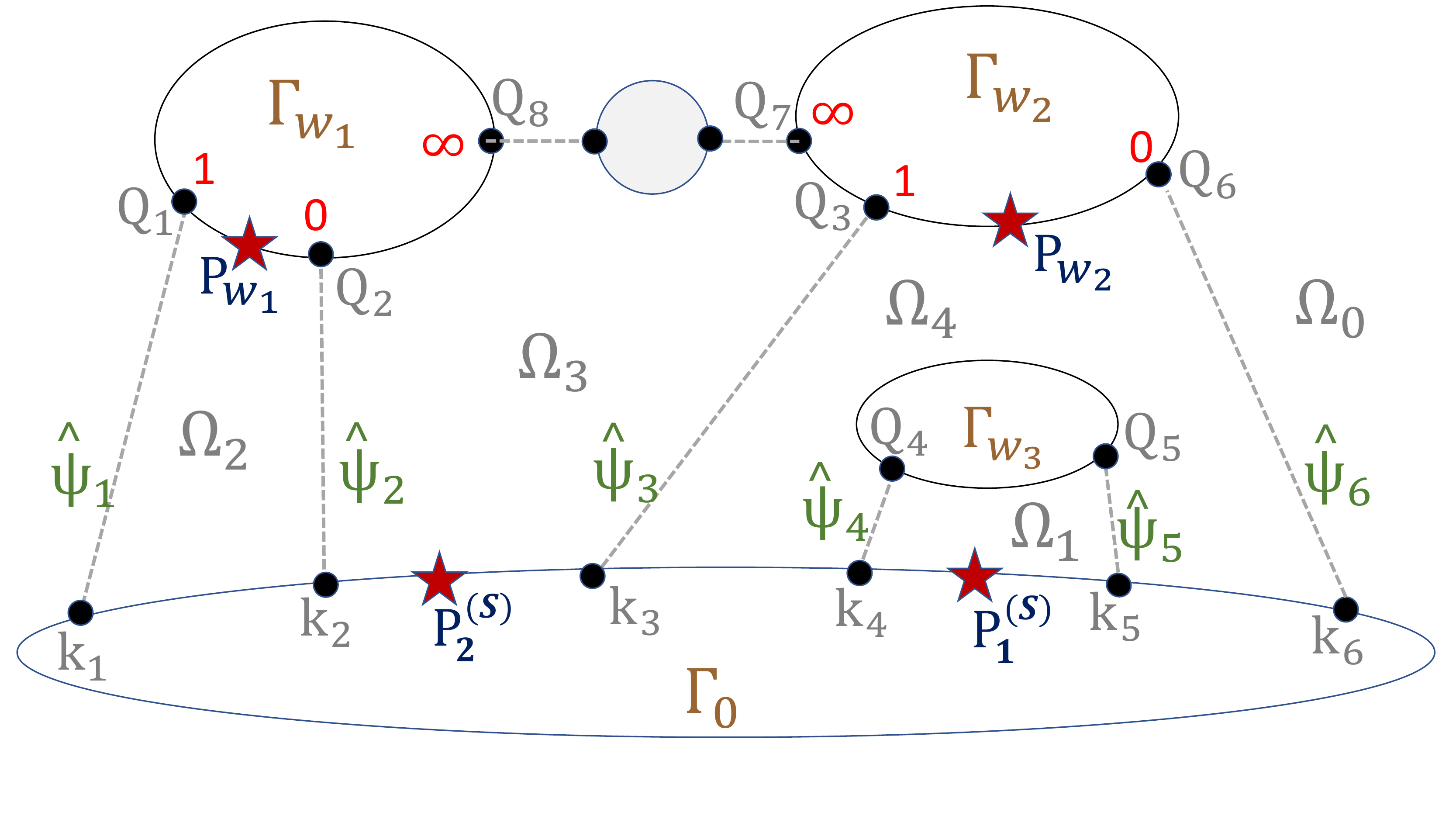}
	\includegraphics[width=0.37\textwidth]{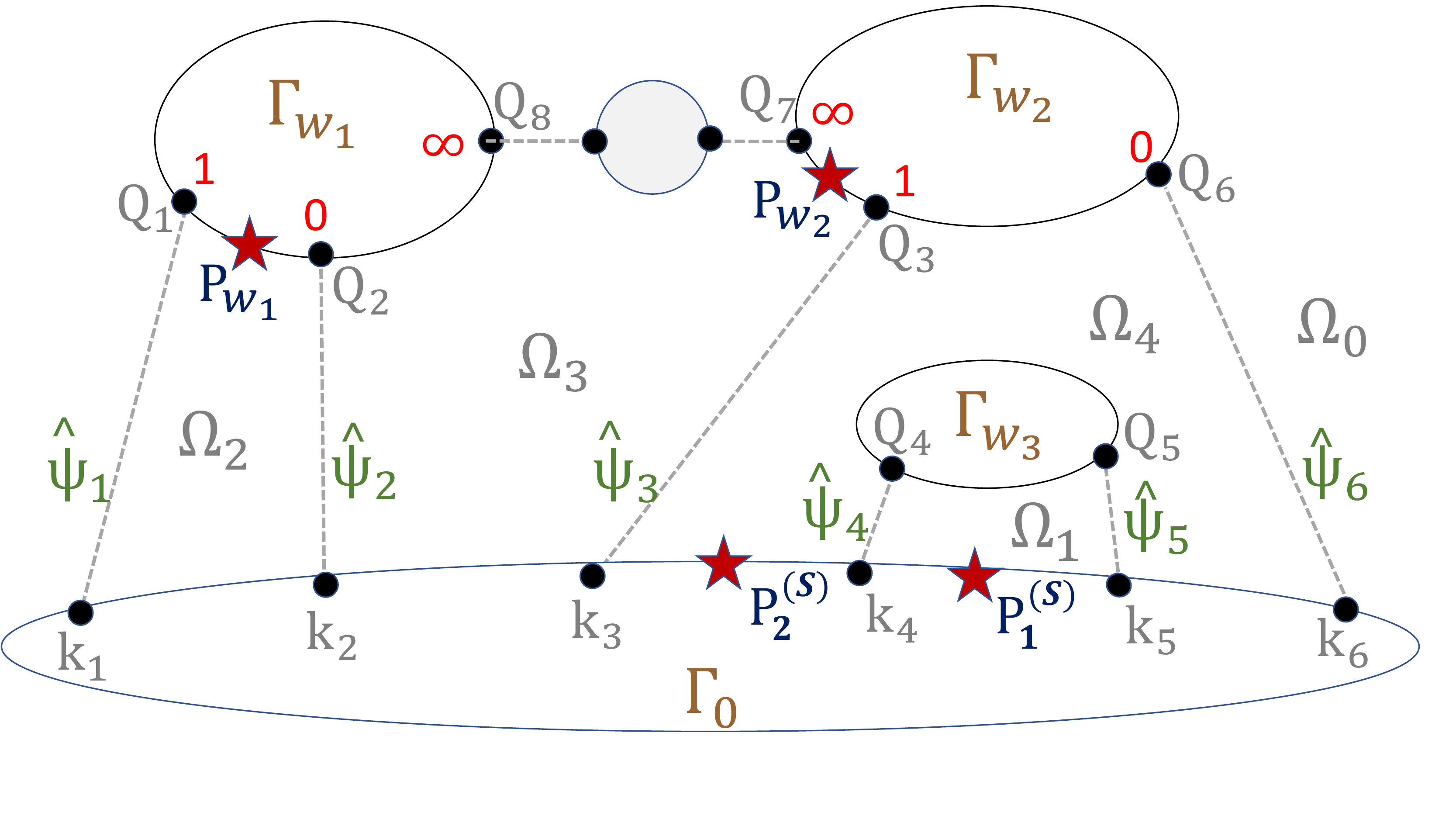}
	\hspace{.8 truecm}
	\includegraphics[width=0.37\textwidth]{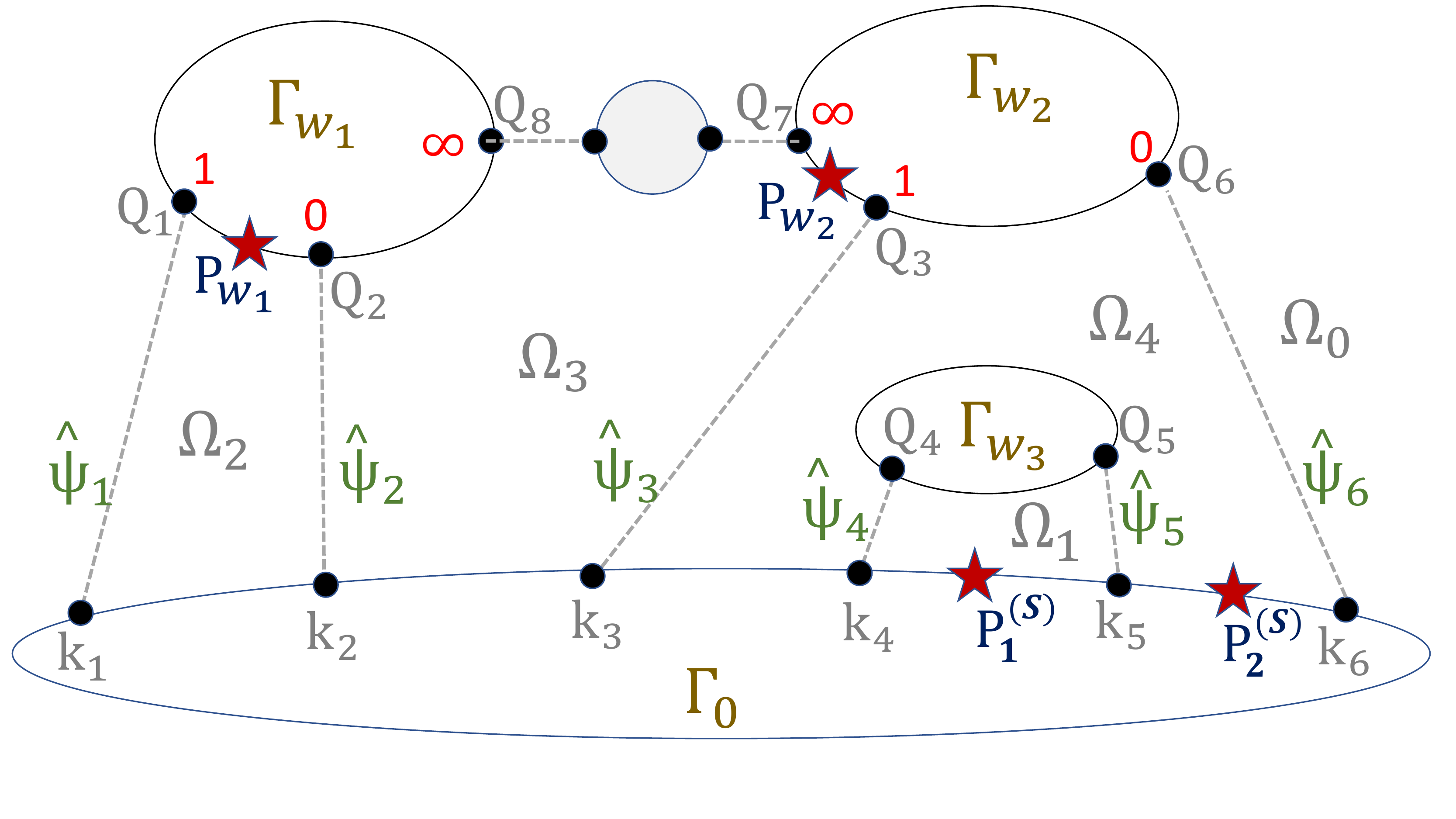}
	\vspace{-.3 truecm}
  \caption{\small{\sl The four possible  configurations of real regular KP divisors on the reducible rational curve whose dual graph is in Figure \ref{fig:curve_graph}, for the generic soliton data $({\mathcal K}, [A])$ of Example \ref{ex:7} (see also Figure \ref{fig:ex_Gr26_sys_rel}). Double points are represented by dashed lines and divisor points by stars; we mark the local coordinates at the marked points of $\Gamma_{w_1}$ and $\Gamma_{w_2}$ in red.}}
	\label{fig:config}
\end{figure}

\begin{example}\label{ex:7}
Given the soliton data $(\mathcal K = \{\kappa_1< \cdots < \kappa_6\}, [A])$ with $[A]\in \S\subset Gr^{\mbox{\tiny TNN}}(2,6)$ as in Example \ref{ex:6}, a basis of heat hierarchy solutions generating the KP--II multi--line soliton solution is
\[
f_1(\vec x) = e^{\theta_1(\vec x)}+t_{21}^{-1} e^{\theta_2(\vec x)}+t_{21}^{-1} t_{11}e^{\theta_3(\vec x)}-t_{21}^{-1} t_{11}t_{72} e^{\theta_6(\vec x)},\quad\quad
f_2(\vec x) = e^{\theta_4(\vec x)}+t_{53}^{-1}  e^{\theta_5(\vec x)}.
\]
The dressing operator
\[
\mathfrak D = \partial_x^2 -\mathfrak{w}_1 (\vec x) \partial_x -\mathfrak{w}_2 (\vec x),
\]
satisfies ${\mathfrak D} (f_i)\equiv 0$. Then the KP--II multi--line soliton solution is
\[
u(\vec x) = 2\partial_x^2 \log (\tau (\vec x)),
\]
where $\tau(\vec x) = f_1 (\vec x) \partial_x f_2 (\vec x) -f_2 (\vec x) \partial_x f_1 (\vec x)$.

The Sato divisor is $\DS = \{ P^{(S)}_1, P^{(S)}_2\}$ where $\gamma^{(S)}_i \equiv \zeta (P^{(S)}_i)$, $i\in [2]$, are the roots of the characteristic equation at time $\vec x_0$,
\[
(\zeta -\gamma^{(S)}_1) (\zeta -\gamma^{(S)}_1) \equiv \zeta^2 -\mathfrak{w}_1 (\vec x_0) \zeta -\mathfrak{w}_2 (\vec x_0).
\]
On $\Gamma_0\backslash \{P_0\}$ the auxiliary wave function $ \psi$ takes the form
\[
 \psi (\zeta, \vec x) \, = \, (\zeta^2 -\mathfrak{w}_1 (\vec x) \zeta -\mathfrak{w}_2 (\vec x)) \exp(\zeta x+\zeta^2 y +\zeta^3 t +\cdots).
\]

The relation between the graph and the reducible $\mathtt M$--curve is shown in Figure \ref{fig:curve_graph}.
By construction for all $\vec x$
\[
 \psi (Q_j, \vec x) = \psi (\kappa_j, \vec x) , \quad j\in [6], \quad\quad   \psi (Q_7, \vec x) = \psi (Q_8, \vec x).
\]
$ \psi$ takes opposite signs at the marked points $\kappa_4$ and $\kappa_5$, since at the corresponding white vertex we have
\[
 \psi (\kappa_4, \vec x) \, + 	\, t_{72} 	\,  \psi (\kappa_5, \vec x) \equiv 0, \quad\quad \forall \vec x.
\]
Therefore one of the two Sato divisor points, say $P^{(S)}_1$, has local coordinate in $[\kappa_4, \kappa_5]$ and, necessarily, $P^{(S)}_2$ has local coordinate in $[\kappa_1, \kappa_4]\cup[\kappa_5,\kappa_6]$.

At the trivalent white vertex $w_1$ we have the relation
\[
t_{11} \,  \psi (Q_8, \vec x)\,  +\,  t_{21} \,  \psi (\kappa_1, \vec x)\,  + \,  \psi (\kappa_2, \vec x) \, \equiv \, 0.
\]
If we assign local coordinates $\zeta(Q_2)=0$, $\zeta(Q_1)=1$ and $\zeta(Q_8)=\infty$ on $\Gamma_{w_1}$ (see also Figure \ref{fig:config}), the divisor point $P_{w_1}$ on such rational component has local coordinate
\[
\gamma_{w_1} \, = \, \zeta (P_{w_1}) \, = \, \frac{ \psi (\kappa_2, \vec x_0)}{t_{21} \,  \psi (\kappa_1, \vec x_0) \, + \,  \psi (\kappa_2, \vec x_0)}.
\]
Similarly, at the trivalent white vertex $w_2$ we have the relation
\[
 \psi (\kappa_3, \vec x)\, - \,  \psi (Q_7, \vec x) \, - \, t_{72} \,  \psi (\kappa_6, \vec x) \,  \equiv \, 0.
\]
If we assign local coordinates $\zeta(Q_6)=0$, $\zeta(Q_3)=1$ and $\zeta(Q_7)=\infty$ on $\Gamma_{w_2}$, the divisor point $P_{w_2}$ has local coordinate
\[
\gamma_{w_2} \, = \, \zeta (P_{w_2}) \, = \, \frac{ \psi (\kappa_3, \vec x_0)}{ \psi (\kappa_3, \vec x_0) \, -\,  t_{72} \,  \psi (\kappa_6, \vec x_0)}.
\]
Finally let us compute the position of the divisor. For simplicity we omit the time dependence, using the following abridged notation $ \psi_j = \psi (\kappa_j, \vec x_0)$, and use notation $]a,b[$ for open intervals. By construction $ \psi_1,  \psi_6\ge 0$ and $ \psi_1 +t_{21}^{-1}   \psi_2 + t_{21}^{-1} t_{11}  \psi_3 -t_{21}^{-1} t_{11}t_{72}  \psi_6=0$. Then the following divisor configurations occur for generic soliton data $(\mathcal K, [A])$:
\begin{enumerate}
\item If $\gamma^{(S)}_2 \in ]\kappa_1,\kappa_2[$, that is $P^{(S)}_2\in \Omega_2$, then $\gamma_{w_1} <0$, $\gamma_{w_2} \in ]0,1[$, {\sl i.e} $P_{w_1}\in\Omega_3$, $P_{w_2}\in\Omega_4$. This configuration is represented in Figure \ref{fig:config} [top,left];
\item If $\gamma^{(S)}_2 \in ]\kappa_2,\kappa_3[$, that is $P^{(S)}_2\in \Omega_3$, then $\gamma_{w_1}\in ]0,1[$, $\gamma_{w_2} \in ]0,1[$, {\sl i.e}  $P_{w_1}\in\Omega_2$ and $P_{w_2}\in\Omega_4$. This configuration is represented in Figure \ref{fig:config} [top,right];
\item If $\gamma^{(S)}_2 \in ]\kappa_3,\kappa_4[\cup ]\kappa_5,\kappa_6[$, that is $P^{(S)}_2\in \Omega_4$, then $\gamma_{w_1} \in ]0,1[$, $\gamma_{w_2} >1 $, {\sl i.e.} $P_{w_1}\in\Omega_2$ and $P_{w_2}\in\Omega_3$. This configuration is represented in Figure \ref{fig:config} [bottom].
\end{enumerate}
\end{example}
In \cite{AG5}, the construction of the KP--II wave function at the double points of the curve whose dual graph is $\mathcal G$ is performed solving Lam system of relations for the geometric signature $\epsilon_I$ imposing the boundary conditions
\[
z^{(g)}_{b_j} (\vec x) =  \psi (\kappa_j, \vec x), \quad\quad j\in \bar I.
\]
The un-normalized KP wave function is defined as follows for all $b \in \mathcal B$:
\[
\psi^{(g)} (P_b, \vec x) = z^{(g)}_{b} (\vec x).
\] 
From Theorem \ref{theo:kas_geo_sys} it follows that the normalized wave function 
\[
\hat \psi(P, \vec x) = \frac{{ \psi}^{(g)} (P, \vec x)}{{ \psi}^{(g)} (P, \vec x_0)}
\]
constructed in \cite{AG5} coincides with the one defined in Construction \ref{constr:KP}, and necessarily the KP--II divisors are the same.

\begin{proposition}\label{prop:KP_equiv}
The normalized wave function $\hat \psi(P,\vec x)$ and the KP--II divisor $\DKP$ in Construction \ref{constr:KP} and Definition \ref{def:DKP} coincide with those defined in \cite{AG5} using Lam system of relations for the geometric signatures on $\Gamma(\mathcal G)$ for any given soliton data $(\mathcal K, [A])$ and reference time $\vec x_0$.
\end{proposition}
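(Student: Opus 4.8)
The plan is to reduce the statement entirely to Theorem~\ref{theo:kas_geo_sys}, which already identifies the solution of Kasteleyn system of relations with that of Lam system of relations for the geometric signature at every internal black vertex. Both constructions are carried out on the \emph{same} data: the same reduced trivalent bipartite graph $\mathcal G$, the same reducible curve $\Gamma=\Gamma(\mathcal G)$ with $\Gamma_0$ identified with the boundary of the disk, and the same marked points $\kappa_j$. On $\Gamma_0\backslash\{P_0\}$ both normalized wave functions are given by the common formula (\ref{eq:KP_norm_rel}), so they trivially agree there. The content of the claim therefore lies in the values assigned at the double points, i.e.\ on the interior rational components $\Gamma_b$, $\Gamma_w$. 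Since by Corollary~\ref{cor:norm_v} the normalized value $\hat v^{(k)}_b(\vec x)$ is independent of both the edge weighting in the equivalence class and the choice of Kasteleyn signature, it suffices to establish the coincidence for one convenient choice and then invoke this independence (together with the analogous independence for the geometric construction proven in \cite{AG5}).

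For that convenient choice I would fix the perfect orientation $\mathcal O(I)$ and gauge ray direction used in \cite{AG5}, take $\epsilon^{(g)}$ to be the corresponding geometric signature, and let $\sigma=(-1)^{\epsilon^{(g)}}$ be the Kasteleyn signature it induces via Theorem~\ref{theo:main}. With these aligned choices, Theorem~\ref{theo:kas_geo_sys} applies verbatim: imposing the common sink data $v_j=\psi(\kappa_j,\vec x)$, $j\in\bar I$, in both the Kasteleyn system $(v^{(k)}_b,R_w)$ and the geometric Lam system $(z^{(g)}_{b,e},z^{(g)}_{w,e})$ yields
\[
v^{(k)}_b(\vec x)=z^{(g)}_b(\vec x),\qquad \forall b\in\mathcal B,
\]
given by the common flow expression (\ref{eq:expl_sol}). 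Hence the un-normalized wave functions coincide at every double point, $\psi(P_b,\vec x)=v^{(k)}_b(\vec x)=z^{(g)}_b(\vec x)=\psi^{(g)}(P_b,\vec x)$, and after dividing by the value at $\vec x_0$ the normalized functions agree there.

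The one point requiring care is that Construction~\ref{constr:KP} imposes the boundary value $\psi(\kappa_j,\vec x)$ at \emph{all} $n$ boundary vertices (\ref{eq:bound_cond}), whereas the geometric construction of \cite{AG5} fixes it only at the $n-k$ sinks $j\in\bar I$. I would dispose of this apparent discrepancy using Lemma~\ref{lem:psi_kappa} (equivalently (\ref{eq:orthog_2})): the vector $(\psi(\kappa_1,\vec x),\dots,\psi(\kappa_n,\vec x))$ is orthogonal to $[A]$ for all $\vec x$, so the source values $\psi(\kappa_{i_r},\vec x)$ are not free but are forced by the sink values through $\psi(\kappa_{i_r},\vec x)=-\sum_{j\in\bar I}A^r_j\,\psi(\kappa_j,\vec x)$. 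This is precisely the output of the Kasteleyn system at the boundary sources predicted by the source formula of Theorem~\ref{theo:kas_geo_sys} (cf.\ Theorem~\ref{theo:sol_kas_sys_2}); thus imposing the full boundary data is consistent with, and equivalent to, imposing only the sink data, and no over-determination occurs.

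Finally, with the double-point values matched, the extension to the whole of $\Gamma\backslash\{P_0\}$ follows the identical recipe in both constructions: constant in the spectral parameter on the components attached to black and bivalent white vertices, and the unique degree-one meromorphic function (\ref{eq:wave_white}) on each trivalent white component. Being uniquely determined by the coinciding marked-point values, these extensions agree component by component, so $\hat\psi$ is the same. In particular each divisor coordinate $\gamma_w$ is read off from the same marked-point values through (\ref{eq:KP_div_white}), and the Sato part on $\Gamma_0$ is identical by definition; hence $\DKP$ coincides with the divisor of \cite{AG5}. The main obstacle I anticipate is purely bookkeeping: aligning the orientation, gauge ray, signature and labeling conventions so that Theorem~\ref{theo:kas_geo_sys} can be quoted directly, and checking the sink-versus-all-vertices boundary conditions via Lemma~\ref{lem:psi_kappa}; once these are in place the identification is immediate.
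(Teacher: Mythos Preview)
Your proposal is correct and follows essentially the same approach as the paper: both reduce the claim to Theorem~\ref{theo:kas_geo_sys}, which identifies the solutions of the Kasteleyn and geometric Lam systems at every black vertex, whence the normalized wave functions and divisors coincide. You are more careful than the paper in explicitly reconciling the apparent discrepancy between imposing boundary data at all $n$ vertices (Construction~\ref{constr:KP}) versus only at the $n-k$ sinks (the construction in \cite{AG5}), invoking Lemma~\ref{lem:psi_kappa} to show the source values are forced; the paper simply asserts the conclusion follows from Theorem~\ref{theo:kas_geo_sys} without spelling this out.
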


\bibliographystyle{alpha}

\end{document}